\documentclass[12pt]{article}
\usepackage{graphicx}
\usepackage[utf8]{inputenc}
\usepackage{textcomp}
\usepackage{amsmath}
\usepackage[shortlabels]{enumitem}
\usepackage{chetnew}
\usepackage{tikz}
\usepackage{amssymb,amsfonts,mathrsfs,dsfont,yfonts,bbm}
\usepackage{extarrows}
\usepackage{xcolor}
\usepackage{braket}
\usepackage[normalem]{ulem}
\usepackage{import}
\usepackage{booktabs}
\usepackage{varioref}
\usepackage{bm}
\usetikzlibrary{arrows.meta,decorations.markings}
\definecolor{slabedge}{HTML}{2B3A55}
\definecolor{slabface}{HTML}{AAC4E2}
\definecolor{surf}{HTML}{12B886}
\definecolor{anyon}{HTML}{E03131}

\newcommand{\refB}{\reflectbox{$B$}}

\newcommand{\cD}{\mathcal{D}}

\newcommand{\cU}{\mathcal{U}}

\newcommand{\CR}{\mathcal{R}}

\newcommand{\beqn}{\begin{eqnarray}}
\newcommand{\eeqn}{\end{eqnarray}}

\newcommand{\be}{\begin{equation}}
\newcommand{\ee}{\end{equation}}
\newcommand{\bea}{\begin{eqnarray}}
\newcommand{\eea}{\end{eqnarray}}
\newcommand{\Rep}{{\rm Rep}}

\newcommand{\CL}{\mathcal{L}}
\newcommand{\CA}{\mathcal{A}}
\newcommand{\CF}{\mathcal{F}}

\newcommand{\CD}{\mathcal{D}}

\newcommand{\CC}{\mathcal{C}}
\newcommand{\cC}{\mathcal{C}}
\newcommand{\CZ}{\mathcal{Z}}

\newcommand{\CT}{\mathcal{T}}

\newcommand{\CM}{\mathcal{M}}

\newcommand{\Pic}{\mathsf{Pic}}

\newcommand{\URC}{\mathsf{URC}}

\newcommand{\based}{\text{based}}

\usepackage{amsmath}	% required for `\align' (yatex added)
\usepackage{tikz}
\usepackage{tikz-cd}
\newcommand{\tikzmath}[2][]
     {\vcenter{\hbox{\begin{tikzpicture}[#1]#2
                     \end{tikzpicture}}}
     }
\usetikzlibrary{calc,decorations.markings,decorations.pathmorphing,arrows.meta}

\tikzset{
	super thick/.style={line width=3pt}
}
\tikzstyle{knot}=[preaction={super thick, white, draw}]
\tikzstyle{mid>}=[decoration={markings, mark=at position 0.5 with {\arrow{>}}}, postaction={decorate}]
\tikzstyle{far>}=[decoration={markings, mark=at position 0.65 with {\arrow{>}}}, postaction={decorate}]
\tikzstyle{mid<}=[decoration={markings, mark=at position 0.5 with {\arrow{<}}}, postaction={decorate}]
\tikzstyle{far<}=[decoration={markings, mark=at position 0.35 with {\arrow{<}}}, postaction={decorate}]

\usetikzlibrary{calc}
\newcommand{\roundNbox}[6]{
	\draw[rounded corners=5pt, very thick, #1] ($#2+(-#3,-#3)+(-#4,0)$) rectangle ($#2+(#3,#3)+(#5,0)$);
	\coordinate (ZZa) at ($#2+(-#4,0)$);
	\coordinate (ZZb) at ($#2+(#5,0)$);
	\node at ($1/2*(ZZa)+1/2*(ZZb)$) {#6};
}

\renewcommand{\ket}[1]{\left\lvert#1\right\rangle}
\renewcommand{\bra}[1]{\left\langle#1\right\lvert}
\newcommand{\iprod}[2]{\left\langle#1\middle\lvert#2\right\rangle}

\newcommand{\AT}{\mathsf{AnyonTheory}}
\newcommand{\Ex}{\mathsf{Ex}}

\DeclareMathOperator{\Aut}{Aut}

\DeclareMathOperator{\End}{End}
\DeclareMathOperator{\ev}{ev}
\DeclareMathOperator{\Hom}{Hom}
\DeclareMathOperator{\Id}{Id}
\DeclareMathOperator{\id}{id}
\DeclareMathOperator{\Irr}{Irr}
\DeclareMathOperator{\Kar}{Kar}

\DeclareMathOperator{\mop}{mop}
\DeclareMathOperator{\Obj}{Obj}
\DeclareMathOperator{\op}{op}
\DeclareMathOperator{\rev}{rev}
\DeclareMathOperator{\tr}{tr}
\DeclareMathOperator{\Tr}{Tr}

\usepackage{amsmath}
\usepackage{amsthm}

\newtheorem{thm}{Theorem}
\newtheorem{lem}{Lemma}
\newtheorem{prop}{Proposition}
\newtheorem{cor}{Corollary}

\theoremstyle{definition}
\newtheorem{defn}{Definition}

\newtheorem{remark}{Remark}
\newtheorem{nota}{Notation}
\newtheorem{const}{Construction}
\newtheorem{claim}{Claim}

\newcommand{\doi}[1]{\href{https://dx.doi.org/#1}{{\small DOI:#1}}}

\title{Reality and Complexity of \texorpdfstring{$F$}{F}-symbols \texorpdfstring{\\[2mm]}{}  in~\texorpdfstring{$2+1$}{2+1}d Topological Phases}
\author{Matthew Buican\email{m.buican@qmul.ac.uk},$^{\diamondsuit}$ Peter Huston\email{P.E.Huston@leeds.ac.uk},$^{\spadesuit}$ and Jiannis K. Pachos\email{J.K.Pachos@leeds.ac.uk}$^{\heartsuit}$}

\affiliation{$^{\diamondsuit}$\smallskip Centre for Theoretical Physics and Astronomy\\
Queen Mary University of London, London E1 4NS, UK\\[1mm]$^{\spadesuit}$\smallskip School of Mathematics, University of Leeds, Leeds LS2 9JT, UK \\[1mm] $^{\heartsuit}$\smallskip School of Physics and Astronomy, University of Leeds, Leeds LS2 9JT, UK}

\abstract{The $F$-symbols of an anyon theory encode the associativity of fusion and constitute some of the theory's most fundamental and, simultaneously, subtle data. Much of the subtlety lies in the gauge-dependence of the $F$-symbols. Despite their generic complexity, many braided anyon theories admit gauges in which all $F$-symbols are real, a phenomenon for which no general organising principle has been known. We identify a physical mechanism underlying this reality. For a unitary ribbon fusion category admitting an appropriate braided charge-conjugation symmetry, we show that the complex-conjugated $F$-symbols are related to the original ones by a gauge transformation. Finding a real gauge is thereby reduced to a condition on these transformations. When the charge-conjugation symmetry is suitably ``flat,'' or equivalently when the associated ``twisted'' Frobenius-Schur (or ``generalized'' Kawanaka-Matsuyama) data respects a grading, these local transformations can be trivialised and a real gauge exists. This framework unifies a broad range of previously disparate examples, including families of Chern-Simons theories whose $F$-symbols are difficult to directly compute.
Finally, we exhibit a unitary ribbon category that realizes a novel obstruction to the existence of a real gauge and therefore has inherently complex $F$-symbols.}

\begin{document}

\setcounter{tocdepth}{2}
\maketitle
\toc

\section{Introduction}

Topological phases in $2+1$ dimensions exhibit a rich spectrum of anyonic excitations that are famously neither bosonic nor fermionic \cite{Leinaas:1977fm,Wilczek:1982wy}. The generalized statistics of these excitations can be computed from braiding processes involving anyon worldlines. The simplest such amplitudes give rise to projective representations of the modular group: the self-statistics ($T$ matrix) and the double braiding ($S$ matrix). It is natural to wonder whether $S$ and $T$, perhaps supplemented by additional braiding data, suffice to classify theories of anyons.

Recently, it has become clear that general systems of anyons cannot be characterized by braiding alone. This statement can be understood by studying properties of unitary modular tensor categories (UMTCs), the mathematical structures that capture the algebraic data of anyons \cite{Moore:1991ks,Kitaev:2005hzj}. In this language, one can show by counterexample that $S$ and $T$ do not suffice to specify a UMTC \cite{mignard2021modular}. More generally, the full braiding data of a UMTC, in the form of the $R$-symbols, is determined by the modular data \cite{ng2024recovering}.\footnote{This statement holds in the NRW gauge discussed in \cite{ng2024recovering}. Technically the NRW gauge also depends on an ordering of anyons. However, after performing a change of basis by $(\Gamma^{ab}_c)_{\mu\nu}=\sqrt{(R^{ba}_c)}_{\mu\nu}$, the $R$-symbols no longer depend on the ordering. We thank Z.~Wang for a discussion of this point.} Therefore, braiding alone is not enough to determine an anyon theory (although braiding suffices to specify UMTCs having small numbers of simple objects).

By construction, a UMTC is specified by a set of $R$-symbols describing braiding accompanied by a set of $F$-symbols that describe associativity of the fusion of anyon world lines. Physically, the $F$-symbols encode the topological operator product expansion data of the anyon theory. The above results suggest that there may be certain UMTC invariants intrinsic to the $F$-symbols. However, finding the $F$-symbols explicitly is generally rather difficult, and they are subject to complicated (generalized) cohomological equivalences arising from gauge transformations (which change the basis choice for junction spaces where anyons collide).

Therefore, to get a handle on the wild \lq\lq theory space" of $2+1$d TQFTs and their anyons,  it is important to understand if the $F$-symbols manifest a hidden simplicity. One of the most basic questions one can ask is: are there presentations (i.e., gauge choices) where the $F$-symbols take values in simple number fields (see also \cite{MR2312110,morrison2012non})?

In this paper, our particular interest is to understand when there exists a gauge such that the $F$-symbols are real. While the particular choice of such a gauge (when it exists!) amounts to fixing a basis and is therefore unphysical, one of the main upshots of our paper is that {\it the existence of a gauge in which the $F$-symbols are real is deeply physical.} Indeed, we will see this phenomenon is related to categorical properties of anyons and their symmetries. In order to make these categorical properties more intuitive, we will furnish corresponding spacetime interpretations (see Table \ref{CatST}).

\begin{table}
\begin{center}
\begin{tabular}{l|l}
\textbf{Categorical notion} & \textbf{Spacetime interpretation (if applicable)} \\\hline
$(\cdot)^{\rm op}$ & $t \to -t$ (time reflection) \\
$(\cdot)^{\rm mop}$ & $x \to -x$ (horizontal spatial reflection) \\
$(\cdot)^{\rm rev}$ & $y \to -y$ (reflection normal to $\Sigma$) \\
$(\cdot)^{\rm op,mop}$ &  $\pi$-rotation in $\Sigma$ \\
\end{tabular}
\caption{Spacetime interpretations for certain categorical notions that will be useful in the construction of real $F$-symbols. For simplicity, we will imagine a spacetime consisting of a 3-manifold $\CM=\Sigma\times I$, where $\Sigma$ is a Riemann surface, and $I$ is an interval. Reflections should be understood at the level of local coordinates (i.e., within a coordinate patch). A more detailed discussion of this table appears in subsequent sections.
}
\end{center}
\end{table}\label{CatST}

Beyond its relation to symmetries, the potential existence of a real gauge for the $F$-symbols has practical consequences for quantum simulation and experimental implementations of anyonic systems.  In this context, the lack of gauge invariance of the $F$-symbols is related to the fact that an experimenter accesses these quantities after making certain choices: once one specifies the physical operators used to create, move, fuse, and measure anyons, one has also fixed a laboratory basis for the relevant fusion spaces.  While certain closed amplitudes are gauge invariant, their decomposition into $F$-moves depends on the chosen fusion bases. In lattice realisations of quantum double or string-net models, this basis is fixed concretely by the choice of ribbon (or string) operators, which are equivalent to Wilson-lines. Thus, choosing a ribbon-operator convention is not merely notation: it determines the gauge in which the experimentally reconstructed $F$-matrix elements are expressed.

This observation makes a potential real gauge experimentally useful. If the implemented ribbon or string operators realise a gauge in which the relevant $F$-symbols are real, then the corresponding amplitudes have no imaginary components. The task of verifying an $F$-move is then reduced from reconstructing a complex matrix element to determining a real amplitude, together with its sign relative to a chosen reference convention.  In particular, one need not perform full complex-amplitude tomography for those $F$-symbols, and, in favourable protocols, the imaginary components that would otherwise require Hadamard-type interferometric measurements are absent by construction. The mathematical question of whether a real gauge exists therefore has a direct operational meaning: it tells us whether an experimental gauge can be chosen so that the associativity data of the anyons is represented by real, rather than inherently complex, amplitudes.

A concrete illustration of this operational point appears in recent work on the $D(S_3)$ quantum double model, where sequences of ribbon and projection operators were used to reconstruct the exchange and fusion-recombination data of non-Abelian $G$ anyons~\cite{BylesForbesPachos2024,byles2026}.  In that protocol, the relevant $F$-symbol data is extracted from overlaps of physically prepared ribbon states, so the choice of ribbon operators fixes the laboratory fusion basis in which the matrix elements are represented.  The construction yields norm-squared $F$-matrix elements, and, in the $D(S_3)$ example, the corresponding $F$-matrix is real, so the remaining ambiguity is reduced to a sign choice.

Our starting point for understanding conditions leading to real $F$-symbols is the following phenomenological observation: although UMTCs are inherently complex-valued (e.g., anyon statistics are complex), and the $F$-symbols are generically complex unitary matrices, braided fusion categories often admit gauges in which the $F$-symbols are real (i.e., orthogonal matrices). Indeed, a large body of examples suggests that the existence of real gauges for $F$-symbols is not accidental but rather is tied to the presence of braiding, together with an appropriate charge-conjugation symmetry. 

Abelian UMTCs provide the simplest illustration: in the Quinn gauge \cite{Quinn:1998un}, their $F$-symbols take values in $O(1)\cong\{\pm1\}$, and the same conclusion extends to braided Abelian categories with degenerate $S$-matrix. By contrast, if an Abelian fusion category has inherently complex $F$-symbols (i.e., $F$-symbols that cannot be made real in any gauge), then braiding is obstructed \cite{Paper1}. The category ${\rm Vec}_{\mathbb Z_3}^{\omega}$ with non-trivial associator class is the simplest example: its $F$-symbols cannot be made real, and consequently it admits no braiding (see Sec.~\ref{Complex}). Real gauges are also known, or strongly suggested, in several non-Abelian braided families, including $SU(2)_k$, metaplectic categories, many Chern-Simons theories, braided Tambara-Yamagami (TY) categories, and the non-symmetric braided near-group categories.

One of the main motivations of this work is to find a principle that explains and unifies the above examples of real $F$-symbols. The construction of the Quinn gauge, which gives $F\in O(1)$ for Abelian UMTCs, does not immediately generalise to non-Abelian theories. Nevertheless, the examples above suggest that two ingredients play a distinguished role: braiding and charge conjugation.

Our approach separates the problem into two logically distinct questions. First, when are the $F$-symbols gauge-equivalent to their complex conjugates? Second, when can this gauge transformation itself be trivialised by a single consistent choice of bases for all trivalent fusion spaces?

The first question has a remarkably general answer in the presence of braided charge conjugation. Indeed, suppose that $\cC$ is a unitary ribbon category admitting a unitary braided autoequivalence
\begin{equation}
    J:\cC\longrightarrow\cC~,
    \qquad
    J(a)\cong a^\vee~,
\end{equation}
on simple objects. In Sec.~\ref{sec:mechanism} we construct, on each splitting space $V_a^{bc}$, a local\footnote{Here, by local we mean that the transformation applies to the space $V_a^{bc}$ for a specific triple $a,b,c\in\Irr(\cC)$, rather than any notion of locality in spacetime.} antiunitary transformation
\begin{equation}
    A_a^{bc}=\mathsf M_a^{bc}K~,
\end{equation}
where $K$ is complex conjugation in the chosen trivalent basis and $\mathsf M_a^{bc}$ is unitary. Applying these transformations to an $F$-move gives
\begin{equation}
\label{FFbarEquiv}
    \overline{F_d^{abc}}
    =
    \mathsf M_L\,F_d^{abc}\,\mathsf M_R^\dagger~,
\end{equation}
where $\mathsf M_L$ and $\mathsf M_R$ are certain products of $\mathsf M_x^{yz}$ matrices. Thus $F$ and $\overline F$ are related by an ordinary vertex-basis gauge transformation. Importantly, this statement does not yet require the charge-conjugation operation to square to the identity.

Equation~\eqref{FFbarEquiv} reduces the problem of reality to a local one. If every matrix $\mathsf M_a^{bc}$ is symmetric,
\begin{equation}
    \bigl(\mathsf M_a^{bc}\bigr)^T
    =
    \mathsf M_a^{bc}~,
\end{equation}
then its unitarity allows a Takagi factorisation
\begin{equation}
    \mathsf M_a^{bc}
    =
    \mathsf W_a^{bc}
    \bigl(\mathsf W_a^{bc}\bigr)^T~.
\end{equation}
The corresponding local change of trivalent basis, $\Gamma_a^{bc}=(\mathsf W_a^{bc})^T$, simultaneously trivialises the gauge transformation in~\eqref{FFbarEquiv} for every $F$-move. Consequently, in the transformed gauge,
\begin{equation}
    \overline{F_d^{abc}}=F_d^{abc}~.
\end{equation}
The central question is therefore reduced to determining when the local matrices $\mathsf M_a^{bc}$ are symmetric.

Sections~\ref{DefsPrelim} and~\ref{sec:sufficient} provide the categorical explanation of when this symmetry condition occurs. The structural involutions of a unitary ribbon category give a canonical braided equivalence
\begin{equation}
    \cC\longrightarrow\overline{\cC}^{\rev}~,
\end{equation}
that acts as duality on simple objects. 
 
Composing this structural equivalence with a charge conjugation produces a comparison equivalence, $R:\cC\to\overline{\cC}^{\rev}$, that can be chosen to preserve the preferred simple-object labels. Its gauge symbol is the categorical counterpart of the $\mathsf M_a^{bc}$ matrices appearing in the argument above. Whether that symbol can be trivialised depends on the \lq\lq flatness" of charge conjugation.

This flatness property is defined in Sec.~\ref{sec:sufficient}. It requires that, for some choice of unitary dual functor, the squared comparison, $\overline{R}\circ R$, has trivial gauge symbol. The upshot is that, at the level of the local matrices, flatness gives
\begin{equation}
    \overline{\mathsf M_a^{bc}}\,
    \mathsf M_a^{bc}
    =
    \mathbbm 1~\ \Rightarrow\ (\mathsf M^{bc}_a)^T=\mathsf{M}_a^{bc}~,
\end{equation}
where the final implication follows from unitarity. Flatness therefore supplies precisely the condition required by the Takagi construction. 

The reason we have chosen to call such a charge conjugation flat is that flatness amounts to the existence of a choice of unitary dual functor for which all
\begin{equation}\label{Flatness}
    \lambda_a^{bc} =\frac{\sigma_b\sigma_c}{\sigma_a}=1~, 
    \ \ \  N_a^{bc}\neq0~,
\end{equation}
where the various $\sigma_i:=\varkappa_i/t_i$, defined precisely in Sec. \ref{sec:sufficient}, are phases built from the interaction between the charge conjugation symmetry and the pivotal data. This condition can be understood pictorially as
in Fig.~\ref{fig:flat}.
\begin{figure}[htbp]
\centering
\begin{tikzpicture}[scale=0.92,
  ed/.style  ={-{Stealth[length=1.6mm]}, line width=0.5pt, draw=black!62},
  nod/.style ={circle, fill=black!62, inner sep=0pt, minimum size=2.7pt},
  fac/.style ={fill=black!7},
  lab/.style ={font=\scriptsize, text=black!78, fill=white, inner sep=1.1pt},
  hol/.style ={font=\footnotesize, text=green!40!black, fill=white, inner sep=1.2pt},
  ttl/.style ={font=\footnotesize\itshape, text=black!70},
  capt/.style={font=\scriptsize, align=center}]

%% Left panel: a general allowed fusion vertex and its holonomy

\node[ttl] at (1.90,0.55) {one fusion vertex};

\coordinate (P) at (0.35,-2.35);
\coordinate (Q) at (1.90,-0.35);
\coordinate (R) at (3.45,-2.35);

\fill[fac] (P) -- (Q) -- (R) -- cycle;

\draw[ed] (P) -- (Q);
\draw[ed] (Q) -- (R);
\draw[ed] (P) -- (R);

\node[nod] at (P) {};
\node[nod] at (Q) {};
\node[nod] at (R) {};

\node[lab] at (1.125,-1.35) {$\sigma_b$};
\node[lab] at (2.675,-1.35) {$\sigma_c$};
\node[lab] at (1.900,-2.35) {$\sigma_a$};

\node[hol] at (1.90,-1.80) {$\lambda^{bc}_a$};

\node[capt] at (1.90,-3.24) {$b\otimes c\ni a$, \ i.e. \ $N^{bc}_a\neq0$};
\node[capt] at (1.90,-3.98) {$\lambda^{bc}_a=\dfrac{\sigma_b\,\sigma_c}{\sigma_a}$};

%%%%%%%%%%%%%%%%%%%%%%%%%%%%%%%%%%%%%%%%%%%%%%%%%%%%%%%%%%%%%%%%%%%%%%%%%%%%%%%
\draw[densely dashed, draw=black!25, line width=0.4pt] (4.75,0.55) -- (4.75,-4.10);

%% Right panel: the same vertex in the flat case.
\node[ttl] at (7.60,0.55) {flat: every fusion vertex closes};

\coordinate (P2) at (6.05,-2.35);
\coordinate (Q2) at (7.60,-0.35);
\coordinate (R2) at (9.15,-2.35);

\fill[fac] (P2) -- (Q2) -- (R2) -- cycle;

\draw[ed] (P2) -- (Q2);
\draw[ed] (Q2) -- (R2);
\draw[ed] (P2) -- (R2);

\node[nod] at (P2) {};
\node[nod] at (Q2) {};
\node[nod] at (R2) {};

\node[lab] at (6.825,-1.35) {$\sigma_b$};
\node[lab] at (8.375,-1.35) {$\sigma_c$};
\node[lab] at (7.600,-2.35) {$\sigma_a$};

\node[hol] at (7.60,-1.80) {$+1$};

\node[capt] at (7.60,-3.24) {$\sigma_b\,\sigma_c=\sigma_a$ \ whenever \ $N^{bc}_a\neq0$};
\node[capt] at (7.60,-3.98) {$\lambda\equiv1$: \ $\sigma$ grades the fusion rules};

\end{tikzpicture}
\caption{A schematic picture of flatness in terms of the $\sigma_a$ phases that describe the interaction between charge conjugation and pivotal data. Left: an allowed fusion vertex $(a;b,c)$, i.e. $b\otimes c\ni a$, drawn as a triangle whose legs carry $\sigma_b$, $\sigma_c$ and $\sigma_a$, and whose holonomy is $\lambda_a^{bc}=\sigma_b\sigma_c/\sigma_a$. Right: flatness in this context is the requirement that this holonomy be trivial at every allowed fusion vertex, which is precisely the statement that $\sigma$ grades the fusion rules.}
\label{fig:flat}
\end{figure}
Our main result can therefore be stated as follows:
\begin{claim}
\label{claim:main}
Let $\cC$ be a unitary ribbon category with a flat charge-conjugation symmetry. Then there exists a gauge in which the $F$-symbols of $\cC$ are real and its $R$-matrices are symmetric.
\end{claim}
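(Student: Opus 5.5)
The plan follows the two-step reduction outlined in the introduction. First, use the flat charge conjugation $J$ to show that $\overline F$ is gauge-equivalent to $F$ through local unitary matrices $\mathsf M_a^{bc}$. Second, use flatness to make every $\mathsf M_a^{bc}$ symmetric, and remove it by a Takagi factorisation.

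\textbf{Step 1: the comparison equivalence.} Compose the canonical braided equivalence $\cC\to\overline{\cC}^{\rev}$, coming from the unitary dagger structure and the dual functor and acting on simples by $a\mapsto a^\vee$, with the braided charge conjugation $J$. This gives a braided unitary equivalence $R:\cC\to\overline{\cC}^{\rev}$ that fixes every simple label up to a chosen isomorphism. Fix a trivalent basis of each $V_a^{bc}$. The tensorator of $R$, expressed in these bases, is a unitary matrix $\mathsf M_a^{bc}$ on $V_a^{bc}$. Because $R$ is antilinear-on-Homs, the coherence of its tensorator with the associators reads
\[
\overline{F_d^{abc}}=\mathsf M_L F_d^{abc}\mathsf M_R^\dagger ,
\]
where $\mathsf M_L$ and $\mathsf M_R$ are products of two $\mathsf M$'s. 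Compatibility with the braiding says that $\overline{R^{bc}_a}$ is obtained from the reversed braiding $(R^{cb}_a)^{-1}=(R^{cb}_a)^\dagger$ by conjugating with $\mathsf M$. Since $R$ fixes labels, both equations are ordinary vertex gauge transformations.

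\textbf{Step 2: flatness gives symmetry.} Compose $R$ with its conjugate to get $\overline R\circ R:\cC\to\cC$. Using unitarity, its gauge symbol at $(a;b,c)$ equals $\overline{\mathsf M_a^{bc}}\mathsf M_a^{bc}$ times scalars. These scalars come from the normalisations $\varkappa_i$ of $J^2\cong\Id$ on simple objects (the twisted Frobenius--Schur data) and the pivotal phases $t_i$, and they combine to exactly $\lambda_a^{bc}=\sigma_b\sigma_c/\sigma_a$. I expect this to be the main obstacle: one must track carefully how the dual functor enters $\cC\to\overline{\cC}^{\rev}$ so that no stray phases survive.

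Flatness lets us choose the dual functor so that $\lambda\equiv 1$. Then $\overline{\mathsf M}\mathsf M=\mathbbm 1$, that is $\mathsf M^{-1}=\overline{\mathsf M}$. Unitarity gives $\mathsf M^{-1}=\mathsf M^\dagger$, so $\overline{\mathsf M}=\mathsf M^\dagger$ and hence $\mathsf M^T=\mathsf M$. If flatness only yields $\lambda$ as a grading, first absorb it by rescaling the isomorphisms $R(a)\cong a$ by $\sigma_a$; this changes $\mathsf M_a^{bc}$ by $\lambda_a^{bc}$.

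\textbf{Step 3: Takagi and the change of gauge.} A symmetric unitary matrix has a factorisation $\mathsf M_a^{bc}=\mathsf W_a^{bc}(\mathsf W_a^{bc})^T$ with $\mathsf W_a^{bc}$ unitary. Change the trivalent bases by $\Gamma_a^{bc}=(\mathsf W_a^{bc})^T$. Under this change the new symbol becomes $\mathsf M'=\overline{\Gamma}\,\mathsf M\,\Gamma^{\dagger}=\mathbbm 1$, so in the new gauge $\overline F=F$, i.e.\ $F$ is real. For the $R$-matrices, substituting $\mathsf M'=\mathbbm 1$ into the braiding relation gives $\overline{R_a^{bc}}=(R_a^{cb})^\dagger$, i.e.\ $R_a^{cb}=(R_a^{bc})^T$. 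This is the claimed symmetry of the $R$-matrices.
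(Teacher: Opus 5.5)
Your main line is the paper's proof. The paper forms the comparison functor $R=\overline{J}^{\rev}\circ\overline{\refB}^{\mop}\circ\overline{\pi}^{\op}\circ(\cdot)^\dag:\cC\to\overline{\cC}^{\rev}$. It then reads off $\overline{\mathsf M}\mathsf M=\lambda\,\mathbbm 1$ from the natural isomorphism $\overline R\circ R\Rightarrow\Id_{\cC}$, uses flatness to get $\lambda\equiv1$ and hence $\mathsf M^T=\mathsf M$, and applies Takagi factorisation to reach a gauge with trivial symbol, in which $F=\overline F$ and $R_a^{cb}=(R_a^{bc})^T$. There are, however, a few things to correct.

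\textbf{The fallback is false.} You propose that, if flatness ``only yields $\lambda$ as a grading,'' $\lambda$ can be absorbed by rescaling the identifications $R(a)\cong a$. This does not work. Rescaling by phases $\beta_a$ multiplies $\mathsf M_a^{bc}$ by $\beta_b\beta_c/\beta_a$, and this phase cancels in $\overline{\mathsf M}\mathsf M$; in particular $\overline{\lambda\mathsf M}\,\lambda\mathsf M=\overline{\mathsf M}\mathsf M$. A change of trivalent basis also only conjugates $\overline{\mathsf M}\mathsf M$. So $\lambda$ is an invariant and cannot be gauged away. If it could, every braided charge conjugation would be flat, and ${\rm Rep}(\mathbb F_3^2\rtimes Q_8)$ with $J=\Id$ would admit a real gauge, contradicting Sec.~\ref{ssec:FS2ObstructionExample}. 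Drop this remark. Under the paper's definition, flatness gives $\lambda\equiv1$ directly; ``$\sigma$ grades the fusion rules'' is the same statement, so nothing needs absorbing.

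\textbf{The phases are misattributed.} In the paper, $\varkappa_x$ are the pivotal scalars of the chosen standard dual functor. The $t_x$ are the scalars of the 2-cell $\psi$ comparing $\pi\circ J$ with $J^{\op,\mop}\circ\pi$ (Lemma~\ref{lem:JPiSquare}). The normalisation of $J^2\cong\Id$ is disposed of beforehand, using the $\mathbb Z_2$-crossed lift, so that $J^2=\Id$ with trivial gauge symbol (Lemma~\ref{lem:resolveCC}). Only then does $\sigma=\varkappa/t$ hold.

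\textbf{The gauge-symbol statement is imprecise.} The gauge symbol of $\overline R\circ R$ is not ``$\overline{\mathsf M}\mathsf M$ times scalars.'' By the composition rule it is exactly $\overline{\mathsf M}\mathsf M$. The natural isomorphism $\overline R\circ R\Rightarrow\Id_{\cC}$ with components $\sigma_x$ then forces it to equal $\lambda\,\mathbbm 1$, via the naturality relation for gauge symbols.
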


Let us make a few comments about this claim. First, since we work throughout with orthonormal fusion bases and unitary gauge transformations, the $F$-matrices in such a gauge are automatically real orthogonal. Second, we formulate the result for unitary ribbon categories rather than UMTCs because none of our arguments require non-degeneracy of the braiding. Third, the abstract categorical definition of flatness has particularly transparent forms in important classes of examples. If every simple object of $\cC$ is self-dual,\footnote{We will sometimes refer to such a $\CC$ as being self-dual. When saying a category is self-dual, we refer to the fact that all simple objects are self-dual rather than referring to self-duality of the theory as an object in some monoidal higher category.} we may take $J=\Id_{\cC}$, for which $t_i=1$, and the flatness condition reduces to
\begin{equation}
    \frac{\varkappa_b\varkappa_c}{\varkappa_a}=1~,
    \qquad
    \text{whenever }N_a^{bc}\neq0~,
\end{equation}
where $\varkappa_a$ is the second Frobenius-Schur indicator. Thus, in the self-dual case, flatness is precisely the statement that the Frobenius-Schur indicator defines a $\mathbb Z_2$ grading of the fusion rules.

For representation categories of finite groups, the corresponding condition is naturally related to the twisted Frobenius-Schur indicator of Kawanaka and Matsuyama \cite{kawanaka1990twisted}. We make this relation explicit in Appendix~\ref{KMAppendix}. This motivates viewing the general flatness condition as a categorical analogue of a twisted Frobenius-Schur grading or, equivalently, of a generalized Kawanaka-Matsuyama grading.  A more systematic study of these quantities for arbitrary non-Tannakian unitary ribbon categories is left for future work.

In cases without fusion multiplicity (e.g., as in the case of Abelian anyon models or various non-Abelian cases like Ising or Fib$\cong (G_2)_1$ Chern-Simons), the condition for real $F$ symbols simplifies:
\begin{claim}
    \label{claim:trivMult}
    In a unitary ribbon category with no fusion multiplicity and a braided $\mathbb{Z}_2$ charge conjugation symmetry, there exists a gauge in which the $F$-symbols are real.
\end{claim}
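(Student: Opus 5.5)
Without fusion multiplicity every splitting space $V_a^{bc}$ with $N_a^{bc}\neq0$ is one-dimensional. The plan is therefore to run the mechanism of Eq.~\eqref{FFbarEquiv} and observe that the Takagi step becomes trivial.

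Given a braided charge conjugation $J$ with $J(a)\cong a^\vee$, the construction of Sec.~\ref{sec:mechanism} produces antiunitary maps $A_a^{bc}=\mathsf M_a^{bc}K$ on each $V_a^{bc}$. In the present setting $\mathsf M_a^{bc}$ is a $1\times1$ unitary, i.e.\ a phase $e^{i\theta_a^{bc}}$, and
\begin{equation}
\overline{F_d^{abc}}=\mathsf M_L\,F_d^{abc}\,\mathsf M_R^\dagger~,
\end{equation}
where $\mathsf M_L,\mathsf M_R$ are diagonal matrices of products of these phases, indexed by the intermediate channels of the two sides of the $F$-move.

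The key observation is that a $1\times1$ matrix is automatically symmetric, $(\mathsf M_a^{bc})^T=\mathsf M_a^{bc}$. The Takagi factorisation therefore needs no flatness input: set $\mathsf W_a^{bc}=e^{i\theta_a^{bc}/2}$ and change the trivalent basis by $\Gamma_a^{bc}=(\mathsf W_a^{bc})^T=e^{i\theta_a^{bc}/2}$. Rescaling each vertex vector by $\Gamma$ conjugates the antiunitary map to $\Gamma^{-1}\mathsf M K\Gamma=\Gamma^{-1}\mathsf M\,\overline{\Gamma}K=K$. So in the new gauge $A_a^{bc}=K$ for every vertex, and the relation above becomes $\overline{F_d^{abc}}=F_d^{abc}$.

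Concretely, the $F$-symbols are scalars, and the relation reads $\overline{F}=e^{i\Phi}F$ with $\Phi$ a signed sum of vertex phases $\theta$, namely two vertices on each side of the move. Rescaling every vertex by $e^{i\theta/2}$ shifts the phase of $F$ by $\Phi/2$, which makes $F$ real. The branch choice of the square root only affects an overall sign and is harmless.

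The step needing care is the bookkeeping of conventions: one must check that the vertex-wise phases $\theta_a^{bc}$ enter $\mathsf M_L,\mathsf M_R$ consistently, i.e.\ that the same vertex phase appears wherever $V_a^{bc}$ appears, so that a single global choice of $\Gamma$ works for all $F$-moves simultaneously. This holds because $A_a^{bc}$ is defined locally on each $V_a^{bc}$, independently of which $F$-move it sits in. The $\mathbb Z_2$ (involutive) hypothesis on $J$ is used only to guarantee the construction of Sec.~\ref{sec:mechanism} applies with $J(a)\cong a^\vee$ coherently. I would remark that even without it the scalar argument goes through, and the role of flatness in Claim~\ref{claim:main} is precisely to supply symmetry, which is automatic here.
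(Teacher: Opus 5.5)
Your proposal is correct and is essentially the paper's own argument. The multiplicity-free part of Sec.~\ref{ssec:mechanismIfSymmetric} (and, for UMTCs, the folding argument of Sec.~\ref{folding}) uses the same relation $\overline{F_d^{abc}}=\mathsf M_L F_d^{abc}\mathsf M_R^\dagger$, derived there without any involutivity or flatness input, notes that $1\times1$ unitaries are automatically symmetric, and picks the gauge $\Gamma_c^{ab}=\mathsf w_c^{ab}$ with $(\mathsf w_c^{ab})^2=\mathsf m_c^{ab}$. Your rephrasing, in which the new basis vectors are exactly those fixed by $A_a^{bc}$ (always possible, since an antiunitary on a line squares to the identity), is a convenient convention-independent way of seeing the same step.
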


As a simple corollary of Claim \ref{claim:main}, we have the following special case, which already covers a great many examples:\footnote{Note that, as in Claim \ref{claim:main}, the $F$-symbols are also orthogonal.}
\begin{cor}
\label{cor:selfDual}
In a unitary ribbon category in which every object is self-dual and the second Frobenius-Schur indicator defines a $\mathbb Z_2$ grading, there exists a gauge in which the $F$-symbols are real and the $R$-matrices are symmetric.
\end{cor}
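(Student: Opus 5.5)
The plan is to deduce Corollary~\ref{cor:selfDual} directly from Claim~\ref{claim:main}. That means exhibiting a flat charge-conjugation symmetry in the self-dual case. Since every simple object satisfies $a\cong a^\vee$, the identity functor $J=\Id_{\cC}$ is a unitary braided autoequivalence with $J(a)\cong a^\vee$ on simple objects. So $J=\Id_{\cC}$ is an admissible charge conjugation in the sense used in Sec.~\ref{sec:mechanism}, and it is automatically braided and unitary with trivial tensor structure.

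The next step is to compute the phases $\sigma_i=\varkappa_i/t_i$ entering the flatness condition~\eqref{Flatness} for this choice of $J$. As remarked after Claim~\ref{claim:main}, taking $J=\Id_{\cC}$ gives $t_i=1$. I would verify this from the definition in Sec.~\ref{sec:sufficient}: $t_i$ measures the comparison between the chosen isomorphism $J(i)\cong i^\vee$ and the identity structure, and for $J=\Id$ one can choose the canonical identifications so that no extra phase appears. It then remains to identify $\varkappa_i$ with the second Frobenius--Schur indicator $\nu_2(i)\in\{\pm1\}$ of the self-dual simple object $i$, computed with respect to the canonical (spherical) unitary dual functor. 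Hence $\sigma_i=\varkappa_i=\nu_2(i)$.

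Flatness then reads $\lambda_a^{bc}=\nu_2(b)\nu_2(c)/\nu_2(a)=1$ whenever $N_a^{bc}\neq0$. This is exactly the hypothesis that $\nu_2$ defines a $\mathbb Z_2$ grading of the fusion rules, i.e.\ $\nu_2(a)=\nu_2(b)\nu_2(c)$ for every $a\prec b\otimes c$. Applying Claim~\ref{claim:main} therefore yields a gauge with real, hence orthogonal, $F$-symbols and symmetric $R$-matrices.

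The main obstacle is the bookkeeping of the second step. I need to confirm that flatness only requires the existence of \emph{some} unitary dual functor with $\lambda\equiv1$, and that the canonical unitary dual functor, for which $\varkappa_i$ is the genuine Frobenius--Schur indicator, is an allowed choice. I also need to check that no residual sign in $t_i$ survives for self-dual objects with $\nu_2(i)=-1$, since there the duality map cannot be chosen symmetric. My first approach would be to track the pivotal phase in the definition of $\varkappa_i$ and observe that any such sign is absorbed into $\varkappa_i$ itself rather than into $t_i$. If that fails, I would note that changing the unitary dual functor multiplies $\sigma$ by a grading character, which does not affect whether $\lambda\equiv1$ is achievable.
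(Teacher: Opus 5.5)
Your proposal is correct and is essentially the paper's proof: take $J=\Id_{\cC}$, so that $t_x=1$, and flatness reduces to $\varkappa_b\varkappa_c/\varkappa_a=1$ whenever $N_a^{bc}\neq0$, which is exactly the Frobenius--Schur grading; $t_x=1$ holds because for the identity functor with trivial tensorator $\psi_x=(e_x\otimes\id)\circ(\id\otimes c_x)=\id$ by the zig-zag identity, so the sign lives entirely in $\varkappa_x$, as your first approach anticipates. Your fallback about changing the dual functor is not needed, since for self-dual simples $\varkappa_x$ is the Frobenius--Schur indicator for every standard dual functor (Proposition~\ref{prop:standardBalancedDualFunctor}), and the $\mathbb{Z}_2$-crossed lift required of $J=\Id_{\cC}$ is trivially available.
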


This discussion also raises the question of whether there are braided fusion categories with inherently complex $F$-symbols. In fact, we answered this question recently in \cite{Paper1}. There the idea was to note that, from the results in \cite{MR3421083}, we know there are braided categories where charge conjugation is not a genuine symmetry (i.e., it is not a braided autoequivalence of the category in question) but only a symmetry of the modular data (in other words, exchanging $a\leftrightarrow a^{\vee}$ leaves $S$ and $T$ invariant; this invariance is always achieved since $S^2=C$, where $C$ is a matrix implementing charge conjugation).\footnote{Such examples of charge conjugation were termed \lq\lq quasi" zero-form symmetries in \cite{Buican:2020who}.} Studying such categories, we found the following minimal (in terms of rank and total categorical dimension) ${\rm Rep}(G)$ examples with inherently complex $F$-symbols \cite{Paper1}: ${\rm Rep}(\mathbb{Z}_7\rtimes\mathbb{Z}_3)$ and ${\rm Rep}(\mathbb{Z}_5\rtimes\mathbb{Z}_4)$.

In this paper, we find a more subtle example of a category with inherently complex $F$-symbols: 
${\rm Rep}(\mathbb{F}_3^2\rtimes Q_8)$. Unlike the examples in \cite{Paper1}, this case has a braided charge conjugation: the trivial one, since all objects are self-dual. However, its Frobenius-Schur indicator does not define a $\mathbb{Z}_2$ grading, so the trivial charge-conjugation symmetry is not flat. We then show that this category has inherently complex $F$-symbols. As such, it is a subtle probe of the sharpness of our above claims and corollaries.

We thank Cesar Galindo for sharing a preliminary version of his independent work~\cite{Cesar} while we were completing this paper. \cite{Cesar} investigates closely related questions concerning the reality of $F$-symbols. The two works were developed independently and provide complementary approaches to understanding the structures underlying real $F$-symbol gauges for anyon theories. Our results agree on their common overlap and are often complementary in scope. For example, in Sec. \ref{ChernSimons} we find a gauge with real $F$-symbols and symmetric $R$-matrices for all self-dual Chern-Simons theories with compact and simply connected gauge groups at any level. On the other hand, \cite{Cesar} finds a gauge with real $F$-symbols in the examples of Sec. \ref{ChernSimons} as well as for general $\CC(\mathfrak g,k)$ for every complex simple $\mathfrak g$ and every $k$.

The plan of the paper is as follows. In Sec.~\ref{folding}, we first develop the physical intuition behind the construction using folding and Lagrangian algebras. In the UMTC setting this gives a spacetime picture of the gauge equivalence between $F$ and $\overline F$, and in the multiplicity-free case already leads directly to a real gauge. In Sec.~\ref{sec:mechanism}, we give a general diagrammatic derivation that does not require modularity. We construct the local antiunitaries $A_a^{bc}=\mathsf M_a^{bc}K$, prove $\overline F=\mathsf M_LF\mathsf M_R^\dagger$, and show that symmetry of the local $\mathsf M_a^{bc}$ matrices is sufficient for reality of the $F$-symbols. For arbitrary fusion multiplicities the required change of basis is obtained by Takagi factorisation. Section~\ref{DefsPrelim} develops the categorical framework needed to identify the origin of these gauge transformations. We introduce based unitary ribbon categories in order to keep track of trivalent basis choices, relate them explicitly to the usual anyon-theory description, and study the structural involutions $(\cdot)^{\op}$, $(\cdot)^{\mop}$, $(\cdot)^{\rev}$, and complex conjugation. These constructions provide the categorical counterpart of the diagrammatic gauge equivalence between $F$ and $\overline F$ and the spacetime maneuvers in Sec. \ref{folding}.
In Sec.~\ref{sec:sufficient}, we define flat charge-conjugation symmetries and prove our main theorem. Flatness makes the relevant comparison equivalence involutive at the level of its gauge symbol, forcing the local gauge matrices to be symmetric; Takagi factorisation then gives a gauge with real $F$-symbols and symmetric $R$-matrices. We also give a complementary diagrammatic interpretation of this symmetry in the self-dual case with trivial second Frobenius-Schur indicator. 
We then apply the theorem in Sec.~\ref{implications} to a variety of anyon theories, including families of Chern-Simons theories for which the $F$-symbols are difficult to compute directly. Finally, in Sec.~\ref{Complex} we investigate obstructions to reality and exhibit the self-dual category ${\rm Rep}(\mathbb F_3^2\rtimes Q_8)$, whose Frobenius-Schur indicator fails to define a grading and whose $F$-symbols are inherently complex. The appendices establish the connection of Sec. \ref{sec:sufficient} with the Kawanaka-Matsuyama indicator and provide a structural theorem used in our non-Abelian Chern-Simons examples.

\section{An intuitive picture from folding and some categorical notions in spacetime}
\label{folding}

This section provides a gentle warm-up and foundation geared toward the high-energy theory community. In particular, we give an intuitive physical explanation, from the perspective of folding, for two special cases of more general phenomena discussed in this paper:
\begin{enumerate}
\item UMTCs with a braided $\mathbb{Z}_2$ charge conjugation symmetry exhibit a gauge equivalence between their $F$-symbols and the complex conjugate $F$-symbols
\begin{eqnarray}\label{Freal}
\overline{\left[{F}^{abc}_{d}\right]}_{(e,\alpha,\beta)(f,\mu,\nu)} 
&=& \sum_{\alpha',\beta',\mu',\nu'} 
\left[\Gamma^{ab}_{e}\right]_{\alpha\alpha'}
\left[\Gamma^{ec}_{d}\right]_{\beta\beta'}
\left[F^{abc}_{d}\right]_{(e,\alpha',\beta')(f,\mu',\nu')}\left[\left(\Gamma^{bc}_{f}\right)^{-1}\right]_{\mu'\mu}
\Big[\big(\Gamma^{af}_{d}\big)^{-1}\Big]_{\nu'\nu}\cr&=:& \left[F^{abc}_{d}\right]_{(e,\alpha,\beta)(f,\mu,\nu)}^{\Gamma}~.
\end{eqnarray}
\item Specializing to the case of no fusion multiplicity (e.g., Abelian models or non-Abelian TQFTs like ${\rm Fib}\cong (G_2)_1$ Chern-Simons theory and Ising), there is a choice of fusion bases with real $F$-symbols
\begin{equation}
\overline{\left[{F}^{abc}_d\right]^{\widehat\Gamma}}_{ef}=\left[F^{abc}_d\right]^{\widehat\Gamma}_{ef}\in\mathbb{R}~.
\label{gaugereality}
\end{equation}
\end{enumerate}
As a bonus, our discussion also invites us to consider the spacetime interpretations of certain categorical notions described in Table \ref{CatST} in more detail.

Before continuing, let us note the limitations of this section and advertise what is to come in the rest of this paper. Throughout this section, we will assume that the underlying category is a UMTC. Later sections will describe results on gauge equivalence of $F$ and $\overline F$ that apply more broadly to unitary ribbon categories with a $\mathbb{Z}_2$ braided charge conjugation symmetry. In fact, our results on reality of $F$-symbols also generalize to unitary ribbon categories with fusion multiplicity subject to further constraints on the charge conjugation symmetry. We will discuss these generalizations in more detail in Secs. \ref{sec:mechanism}, \ref{DefsPrelim}, and \ref{sec:sufficient}. The main purpose of this section is to develop intuition for what is to come using light notation familiar to high-energy theorists.

\begin{figure}[!ht]
\begin{align*}
\begin{tikzpicture}[baseline,
    line width=1.1pt,
    >={Latex[length=2.6mm,width=2.1mm]},
    mid/.style={
      postaction={decorate},
      decoration={
        markings,
        mark=at position #1 with {\arrow{Latex[length=2.6mm,width=2.1mm]}}
      }
    },
    mid/.default=0.5,
    every node/.style={inner sep=1.5pt}
 ]
  %F-MOVE (top row)
  \begin{scope}[shift={(0,0)}]
    \coordinate (Fd)  at (1.0,-1.0);
    \coordinate (Fb)  at (1.0, 0.0);
    \coordinate (Fa)  at (0.4, 1.0);
    \coordinate (Fla) at (-0.10,2.0);
    \coordinate (Flb) at (0.70, 2.0);
    \coordinate (Flc) at (1.95, 2.0);
    \draw[mid]      (Fd) -- (Fb);
    \draw[mid=0.55] (Fb) -- (Fa);   % e
    \draw[mid=0.6]  (Fa) -- (Fla);  % a
    \draw[mid=0.6]  (Fa) -- (Flb);  % b
    \draw[mid]      (Fb) -- (Flc);  % c
    \node[above] at (Fla) {$a$};
    \node[above] at (Flb) {$b$};
    \node[above] at (Flc) {$c$};
    \node[below] at (Fd)  {$d$};
    \node at (0.16,1.02)  {$\alpha$};
    \node at (0.45,0.50)  {$e$};
    \node at (0.72,-0.30) {$\beta$};
  \end{scope}
  \end{tikzpicture}
  &=
    \displaystyle\sum_{f,\mu,\nu}\bigl[F^{abc}_d\bigr]_{(e,\alpha,\beta)(f,\mu,\nu)}
  \begin{tikzpicture}[baseline,
    line width=1.1pt,
    >={Latex[length=2.6mm,width=2.1mm]},
    mid/.style={
      postaction={decorate},
      decoration={
        markings,
        mark=at position #1 with {\arrow{Latex[length=2.6mm,width=2.1mm]}}
      }
    },
    mid/.default=0.5,
    every node/.style={inner sep=1.5pt}
  ]
    \coordinate (Gd)  at (1.0,-1.0);
    \coordinate (Gn)  at (1.0, 0.0);
    \coordinate (Gm)  at (1.6, 1.0);
    \coordinate (Gla) at (0.00,2.0);
    \coordinate (Glb) at (1.20,2.0);
    \coordinate (Glc) at (2.20,2.0);
    \draw[mid]      (Gd) -- (Gn);
    \draw[mid]      (Gn) -- (Gla);  % a
    \draw[mid=0.55] (Gn) -- (Gm);   % f
    \draw[mid=0.6]  (Gm) -- (Glb);  % b
    \draw[mid=0.6]  (Gm) -- (Glc);  % c
    \node[above] at (Gla) {$a$};
    \node[above] at (Glb) {$b$};
    \node[above] at (Glc) {$c$};
    \node[below] at (Gd)  {$d$};
    \node at (1.95,1.00)  {$\mu$};
    \node at (1.52,0.46)  {$f$};
    \node at (0.72,-0.20) {$\nu$};
  \end{tikzpicture}\\
  %R-MOVE (bottom row)
  \begin{tikzpicture}[baseline,
    line width=1.1pt,
    >={Latex[length=2.6mm,width=2.1mm]},
    mid/.style={
      postaction={decorate},
      decoration={
        markings,
        mark=at position #1 with {\arrow{Latex[length=2.6mm,width=2.1mm]}}
      }
    },
    mid/.default=0.5,
    every node/.style={inner sep=1.5pt}
  ] % [shift={(0,-4.7)}]
    \coordinate (Rc)  at (1.0,-1.0);
    \coordinate (Rm)  at (1.0, 0.0);
    \coordinate (Rla) at (-0.05,1.7);
    \coordinate (Rlb) at (2.05,1.7);
    \draw[mid=0.7] (Rm) .. controls (1.7,0.45) and (0.5,1.05) .. (Rla);
    \draw[mid=0.7,preaction={draw,line width=5pt,white}]
          (Rm) .. controls (0.3,0.45) and (1.5,1.05) .. (Rlb);
    \draw[mid] (Rc) -- (Rm);
    \node[above] at (Rla) {$a$};
    \node[above] at (Rlb) {$b$};
    \node[below] at (Rc)  {$c$};
    \node at (1.34,-0.22) {$\mu$};
  \end{tikzpicture}
  % \node at (3.3,-4.0) {$=$};
  &=
  % \node[right] at (3.7,-4.0)
    \displaystyle\sum_{\nu}\bigl[R^{ba}_c\bigr]_{\mu\nu}
  \begin{tikzpicture}[baseline,
    line width=1.1pt,
    >={Latex[length=2.6mm,width=2.1mm]},
    mid/.style={
      postaction={decorate},
      decoration={
        markings,
        mark=at position #1 with {\arrow{Latex[length=2.6mm,width=2.1mm]}}
      }
    },
    mid/.default=0.5,
    every node/.style={inner sep=1.5pt}
  ] % [shift={(7.4,-4.7)}]
    \coordinate (Sc)  at (1.0,-1.0);
    \coordinate (Sn)  at (1.0, 0.0);
    \coordinate (Sla) at (-0.05,1.7);
    \coordinate (Slb) at (2.05,1.7);
    \draw[mid] (Sn) -- (Sla);
    \draw[mid] (Sn) -- (Slb);
    \draw[mid] (Sc) -- (Sn);
    \node[above] at (Sla) {$a$};
    \node[above] at (Slb) {$b$};
    \node[below] at (Sc)  {$c$};
    \node at (1.32,-0.22) {$\nu$};
  % \end{scope}
\end{tikzpicture}
% }
\end{align*}
  \caption{The $F$- and $R$-symbols of a braided fusion category. The $F$-symbols govern fusion associativity, while the $R$-symbols govern braiding. Splitting spaces, $V_x^{yz}\cong{\rm Hom}(x\to y\otimes z)$, are represented diagrammatically by trivalent vertices with $x$ splitting into $y$ and $z$. We have ${\rm dim}_{\mathbb{C}}V_x^{yz}=N_{yz}^x$. Note that the $F$- and $R$-symbols depend on the choice of bases for these fusion spaces and therefore transform as in \eqref{eq:gaugeTransformation} and \eqref{RGT}.
  \label{fig:fr-moves}}
\end{figure}

We begin by considering the 3-manifold $\CM\cong\Sigma\times I$, where $\Sigma$ is a Riemann surface, and $I$ is a finite interval. For concreteness we can take $t,x$ as local coordinates on $\Sigma$ and $-1\le y\le 1$ as a coordinate on $I$. We place a non-spin $2+1$d TQFT, $\CT$, on $\CM$. We denote the corresponding UMTC as $\CC$ and sometimes use $\CT$ and $\CC$ interchangeably. The anyon world lines of $\CT$ correspond to objects in $\CC$, and we will use both terminologies interchangeably as well. Any line operator in $\CT$ or object in $\CC$ can be built up from a sum of simple lines / objects, the number of which, $|\CC|$, is called the rank of $\CC$ (we denote ${\rm Irr}(C)$ as the corresponding set of simples). We can define braiding data, in the form of $R$-symbols, and associativity data, in the form of $F$ symbols, as in Fig. \ref{fig:fr-moves}. Note that these quantities involve fusion spaces (vector spaces of three-punctured spheres) and a change of basis of these spaces results in the transformation
\begin{eqnarray}
\label{eq:gaugeTransformation}
\left[F^{abc}_{d}\right]^{\Gamma}_{(e,\alpha,\beta)(f,\mu,\nu)} 
&:=& \sum_{\alpha',\beta',\mu',\nu'} 
\left[\Gamma^{ab}_{e}\right]_{\alpha\alpha'}
\left[\Gamma^{ec}_{d}\right]_{\beta\beta'}
\left[F^{abc}_{d}\right]_{(e,\alpha',\beta')(f,\mu',\nu')}\left[\left(\Gamma^{bc}_{f}\right)^{-1}\right]_{\mu'\mu}
\left[\left(\Gamma^{af}_{d}\right)^{-1}\right]_{\nu'\nu}~,\ \ \ \ \ \ 
\end{eqnarray}
where $[\Gamma^{xy}_z]\in U(N^{xy}_z)$ is a unitary change of basis for the Hilbert space corresponding to the sphere punctured by lines $x$, $y$, and $z$ (more mathematically, we write this space as ${\rm Hom}(z,x\otimes y)$ or, equivalently, ${\rm Hom}(z\to x\otimes y)$). Similarly, for $R$ we have
\begin{align}\label{RGT}
\left[R^{ba}_{c}\right]^{\Gamma}_{\mu\nu}
&:= \sum_{\mu',\nu'} 
\left[\Gamma^{ba}_c\right]_{\mu\mu'}
\left[R^{ba}_c\right]_{\mu'\nu'}
\left[(\Gamma^{ab}_c)^{-1}\right]_{\nu'\nu}~.
\end{align}
Note that since $\CC$ is modular, its braiding satisfies a non-degeneracy condition: the modular $S$ matrix / Hopf link is a non-degenerate $|\CC|\times|\CC|$ matrix measuring the mutual Aharonov-Bohm phases of the simple lines.

\begin{figure}[ht]
  \centering
  \begin{tikzpicture}[
      x={(1cm,0cm)}, y={(0.5cm,0.3cm)}, z={(0cm,1cm)},
      line cap=round, line join=round,
      >={Stealth[length=2.4mm]},
      sccsurf/.style={fill=surf, fill opacity=0.28, draw=surf!70!black, line width=0.7pt},
      bdry/.style={fill=surf, fill opacity=0.5, draw=surf!65!black, line width=0.8pt},
      hid/.style={draw=slabedge!55, line width=0.6pt, dash pattern=on 2pt off 2pt},
      vis/.style={draw=slabedge, line width=0.95pt},
      aline/.style={draw=anyon, line width=1.5pt},
      albl/.style={fill=white, fill opacity=0.82, text opacity=1, inner sep=1.2pt},
    ]

    %LEFT SLAB :  M = Sigma x 
    \coordinate (LA) at (0,0,0);   \coordinate (LB) at (5,0,0);
    \coordinate (LC) at (5,2,0);   \coordinate (LD) at (0,2,0);
    \coordinate (LE) at (0,0,1.5); \coordinate (LF) at (5,0,1.5);
    \coordinate (LG) at (5,2,1.5); \coordinate (LH) at (0,2,1.5);

    \draw[hid] (LD)--(LA);  \draw[hid] (LD)--(LC);  \draw[hid] (LD)--(LH);

    \draw[sccsurf] (2.5,0,0)--(2.5,2,0)--(2.5,2,1.5)--(2.5,0,1.5)--cycle;

    % anyon line: two mid-line arrows (one each side of S_CC), pointing left
    \draw[aline, postaction={decorate}, decoration={markings,
          mark=at position 0.27 with {\arrow{Stealth[length=2.6mm]}},
          mark=at position 0.73 with {\arrow{Stealth[length=2.6mm]}}}]
          (5,1,0.75) -- (0,1,0.75);
    \fill[anyon] (2.5,1,0.75) circle (2.3pt);
    \draw[white,line width=0.4pt] (2.5,1,0.75) circle (2.3pt);

    \fill[slabface,opacity=0.10] (LE)--(LF)--(LG)--(LH)--cycle;
    \fill[slabface,opacity=0.14] (LA)--(LB)--(LF)--(LE)--cycle;
    \fill[slabface,opacity=0.18] (LB)--(LC)--(LG)--(LF)--cycle;

    \draw[vis] (LA)--(LB)--(LC); \draw[vis] (LA)--(LE);
    \draw[vis] (LB)--(LF); \draw[vis] (LC)--(LG);
    \draw[vis] (LE)--(LF)--(LG)--(LH)--(LE);

    \node[albl] at (0,1,1.08) {$a^{\vee}$};
    \node[albl] at (5,1,1.08) {$a$};
    \node at (2.4,1,2.75) {$S_{CC}$};
    \draw[->,slabedge,line width=0.7pt] (2.5,1,2.55) to[bend right=18] (2.5,1,1.62);
    \node[anchor=north,font=\footnotesize] at (2.62,0,-0.02) {$y=0$};
    \node[anchor=north] at (2.5,0,-0.55) {$\mathcal{M}\cong\Sigma\times I$};

    %FOLD ARROW 
    \draw[-{Stealth[length=3mm]},line width=1.1pt,slabedge]
          (7.4,1,0.9) -- (9.2,1,0.9) node[midway,above=3pt]{fold};

    %RIGHT SLAB :  C box Cbar 
    \coordinate (RA) at (10,0,0);    \coordinate (RB) at (12.4,0,0);
    \coordinate (RC) at (12.4,2,0);  \coordinate (RD) at (10,2,0);
    \coordinate (RE) at (10,0,1.5);  \coordinate (RF) at (12.4,0,1.5);
    \coordinate (RG) at (12.4,2,1.5);\coordinate (RH) at (10,2,1.5);

    \draw[hid] (RD)--(RA);  \draw[hid] (RD)--(RC);  \draw[hid] (RD)--(RH);

    \draw[bdry] (RA)--(RD)--(RH)--(RE)--cycle;

    % anyon line (a,a): single mid-line arrow, pointing left toward the boundary
    \draw[aline, postaction={decorate}, decoration={markings,
          mark=at position 0.5 with {\arrow{Stealth[length=2.6mm]}}}]
          (12.4,1,0.75) -- (10,1,0.75);
    \fill[anyon] (10,1,0.75) circle (2.3pt);
    \draw[white,line width=0.4pt] (10,1,0.75) circle (2.3pt);

    \fill[slabface,opacity=0.10] (RE)--(RF)--(RG)--(RH)--cycle;
    \fill[slabface,opacity=0.14] (RA)--(RB)--(RF)--(RE)--cycle;
    \fill[slabface,opacity=0.18] (RB)--(RC)--(RG)--(RF)--cycle;

    \draw[vis] (RA)--(RB)--(RC); \draw[vis] (RA)--(RE);
    \draw[vis] (RB)--(RF); \draw[vis] (RC)--(RG);
    \draw[vis] (RE)--(RF)--(RG)--(RH)--(RE);

    \node[albl,anchor=west] at (12.55,1,0.95) {$(a,a)$};
    \node at (9.55,1,2.7) {$\CA_{\CL}$};
    \draw[->,slabedge,line width=0.7pt] (9.7,1,2.5) to[bend left=18] (10,1,1.2);
    \node[anchor=north] at (11.2,0,-0.55) {$\mathcal{C}\boxtimes\overline{\mathcal{C}}$};

  \end{tikzpicture}
  \caption{On the left, we have the 3-manifold, $\CM\cong\Sigma\times I$. The surface at $y=0$ acts on the simple line $a$ to produce $S_{CC}(a)=a^{\vee}$. To go to the righthand side, we fold across $S_{CC}$ and produce $\CC\boxtimes\CC^{\rm rev}$. The resulting gapped boundary at $y=0$ is described by the Lagrangian algebra, $\CA_{\CL}$. Only the lines in $\CA_{\CL}$ can end on the gapped boundary.}
  \label{fig:Fold}
\end{figure}

At $y=0\in I$ we place a surface, $S_{CC}$, implementing the charge conjugation symmetry. In other words, the action of this surface on lines / objects is
\begin{equation}
S_{CC}(a)=a^{\vee}~.
\end{equation}
Here $a^{\vee}$ is the dual or orientation reversal of $a$. For simple objects, $a\otimes a^{\vee}=1+\cdots$, where $1$ is the trivial object. The ellipses contain various potential non-trivial objects in the non-Abelian case. If the theory is self-dual, then $a=a^{\vee}$ for all simple objects, and $S_{CC}\cong S_1$ is the trivial surface. If the theory is not self-dual, we assume $S_{CC}$ to be a non-trivial topological surface satisfying $S_{CC}\boxtimes S_{CC}=S_1$ (we do not study cases that square to a non-trivial soft braided automorphism\footnote{Although this section applies also to non-Abelian theories, we note, for the orientation of the reader, that, in the case of Abelian theories, such non-trivial soft braided autoequivalences do not exist. The reason is that the action of soft-braided autoequivalences on $F$ force them to correspond to $U\in H^2(G,U(1))$ while the action of soft-braided autoequivalences on $R$ force $U$ to be trivial in $H^2(G,U(1))$. Here $G$ is the fusion group of the Abelian theory in question.}). In the language of \cite{Gaiotto:2014kfa}, $S_{CC}$ is co-dimension one and therefore implements a 0-form symmetry of $\CT$. In the language of category theory, $S_{CC}$ corresponds to a braided autoequivalence of $\CC$.\footnote{Physically, $S_{CC}$ is a condensation surface (obtained by summing over an algebra of lines in $\CT$ at $y=0$). Mathematically, it is an object in the 2-category ${\rm Mod}(\CC)$ (e.g., see the recent discussion in \cite{Kong:2024ykr,KNBalasubramanian:2025vum}).} This construction is described on the lefthand side of Fig. \ref{fig:Fold}.

In what follows, it is useful to \lq\lq fold" the theory at the location of the $S_{CC}$ surface (as on the righthand side of Fig. \ref{fig:Fold}). The reason for doing so is that the theory then admits a gapped boundary at $y=0$ is equivalent to $\CC$ as a fusion category (since the boundary is 1+1 dimensional, we do not have a notion of braiding on the boundary itself and we can \lq\lq forget" the braiding). Folding amounts to taking $y\to-y$ for points in $\CM$ with $y<0$ and producing the manifold $\CM'\cong\Sigma\times I'$, where $I'$ is the interval with $0\le y\le1$. At the level of the UMTC, $\CC$, this procedure stacks a copy of the $y\le0$ UMTC data but with its braiding reversed (geometrically, this reversal is due to the reflection across the $y=0$ $S_{CC}$ surface which swaps under and over crossings). The category corresponding to this reflected data is called $\CC^{\rm rev}$ and has the same underlying objects as $\CC$. Our discussion is summarized in the corresponding entry in Table \ref{CatST}.

The UMTC that results from the above procedure is often written as $\CC\boxtimes\CC^{\rm rev}$ (recall that the points on $\CM$ with $y\ge0$ are untouched and therefore furnish a factor of $\CC$). This UMTC is isomorphic to the so-called Drinfeld center of $\CC$, $\CZ(\CC)\cong\CC\boxtimes\CC^{\rm rev}$ (e.g., see \cite{etingof2015tensor}). In modern physics language, we have constructed the SymTFT for $\CC$ (see \cite{Schafer-Nameki:2023jdn} and references therein for a review of SymTFTs). Note that lines piercing $S_{CC}$ in the unfolded configuration of Fig. \ref{fig:Fold} are orientation reversed. Therefore, a line $a$ being mapped to $S_{CC}(a)=a^{\vee}$ on the lefthand side of Fig. \ref{fig:Fold} is mapped to a line $a\boxtimes a\in\CZ(\CC)$ on the righthand side of that figure.

For high-energy physicists, it may be more natural to think about an underlying action. For example, suppose that on $\CM$ we have the theory $\CT$ with
\begin{equation}
S_{\CT}[A]=\int_{\CM}d^3x\CL(A)~,
\end{equation}
where $A$ corresponds to a gauge field with, say, a Chern-Simons action. Then, folding and stacking amounts to doing the following
\begin{equation}\label{StackAct}
S_{\CT}[A]\longrightarrow S_{\CT\boxtimes\CT^{\rm rev}}[A,A']:=\int _{\CM'}\CL[A]-\int _{\CM'}\CL[A']~,
\end{equation}
where $\CL[A]$ and $\CL[A']$ are identical Lagrangians. The relative minus sign follows from the fact that we take $y\to-y$ and reverse orientation. In particular, the $A'$ gauge theory gives rise to $\CT^{\rm rev}$, which is the TQFT corresponding to $\CC^{\rm rev}$. Clearly, the line operators after performing this procedure take the form $a\boxtimes b$ where $a$ and $b$ are line operators built from holonomies of $A$ and $A'$ respectively.

To understand the relation between the $F$-symbols of $\CC$ and their complex conjugates, it is useful to study the corresponding Lagrangian algebra, $\CA_{\CL}$. This algebra has an underlying object\footnote{A useful recent physics reference for these algebras is \cite{Kaidi:2021gbs}. See also \cite{Kapustin:2010hk,Davydov:2010kfz}.
\label{LagRef}}
\begin{equation}\label{Lag}
\CL=\bigoplus_{a\in {\rm Irr}(\CC)}a\boxtimes a~,
\end{equation}
built from acting with the folding trick on all simple lines passing through $S_{CC}$. The lines appearing in \eqref{Lag} are precisely the lines that can end topologically on the resulting gapped boundary as on the righthand side of Fig. \ref{fig:Fold}. The Lagrangian algebra also has an associated multiplication with various properties that we will not describe in detail here (we instead refer the interested reader to the relevant literature in footnote \ref{LagRef}). This multiplication is necessary to construct a spacetime mesh from $\CL$ that gauges the corresponding (generally non-invertible) one-form symmetry and produces the gapped boundary. (This discussion explains why the lines in \eqref{Lag} can end on the gapped boundary.) Had the $S_{CC}$ surface not been present at $y=0$, folding across the trivial surface, $S_1$, would have instead produced a different algebra with a different underlying object
\begin{equation}\label{LagS1}
\CL'=\bigoplus_{a\in {\rm Irr}(\CC)}a\boxtimes a^\vee~.
\end{equation}

To make further progress, it is useful to analyse the symmetries of $\CZ(\CC)\simeq\CC\boxtimes\CC^{\rm rev}$. In particular, we claim it has an anti-unitary symmetry, $K$, that acts as follows
\begin{equation}\label{Kaction}
K:\ a\boxtimes b\ \longmapsto\ b^{\vee}\boxtimes a^{\vee}~.
\end{equation}
To understand this point note that, before folding, the reflection $y\to-y$ exchanges the lines in the region $y<0$ with those in the region $y>0$ and takes lines to duals. However, this action leaves the $y=0$ locus invariant and, crucially, it leaves the surface invariant since $S_{CC}$ (or $S_1$ in the case of no surface at $y=0$) is its own orientation reversal (it satisfies $S_{CC}\boxtimes S_{CC}= S_1$). This action emphatically does not generally give rise to a symmetry of the unfolded theory. However, after folding, this action descends to the symmetry $K$ that exchanges the two sheets (i.e., the degrees of freedom that were in the positive $y$ region with the degrees of freedom that were in the negative $y$ region before folding), applies a complex conjugation (due to the anti-unitary nature of the reflection), and takes objects to duals. In addition, $K$ fixes the boundary (due to the fact that $K$ preserves the surface in the unfolded picture). This statement is also clear from the invariance of the objects in \eqref{Lag} and \eqref{LagS1} under $K$.\footnote{Our logic also guarantees that the Lagrangian algebra multiplication is preserved as well.}

Let us examine what the above discussion tells us about the relation between $F$ and $\overline F$. We start by folding without $S_{CC}$ at $y=0$ so that we produce \eqref{LagS1}. The corresponding excitations living on the gapped boundary are $\CC$, and we have the following relation induced by $K$
\begin{eqnarray}\label{FrealS1}
\overline{\left[{F}^{a^\vee b^\vee c^\vee}_{d^\vee}\right]}_{(e^\vee,\alpha,\beta)(f^\vee,\mu,\nu)} 
&=& \sum_{\alpha',\beta',\mu',\nu'} 
\left[\Gamma^{ab}_{e}\right]_{\alpha\alpha'}
\left[\Gamma^{ec}_{d}\right]_{\beta\beta'}
\left[F^{abc}_{d}\right]_{(e,\alpha',\beta')(f,\mu',\nu')}\left[\left(\Gamma^{bc}_{f}\right)^{-1}\right]_{\mu'\mu}
\Big[\big(\Gamma^{af}_{d}\big)^{-1}\Big]_{\nu'\nu}~.\ \ \ \ \ \ 
\end{eqnarray}
In particular, before applying $K$, the boundary $F$-symbols are $F^{abc}_d$, and, after applying $K$, they become $\overline{F^{a^\vee b^\vee c^\vee}_{d^\vee}}$. Since $K$ is a symmetry, the boundary fusion category is mapped to itself, and the $F$-symbols before and after application of $K$ are related by a gauge transformation.

Now, consider taking a bulk $S_{CC}$ surface, initially at $y=y_0>0$, and fusing it with the gapped boundary corresponding to \eqref{LagS1} by taking $y_0\to0$. This maneuver produces a gapped boundary described by \eqref{Lag}, and it therefore modifies \eqref{FrealS1} to \eqref{Freal}, since $S_{CC}$ takes objects to duals. For ease of reference, we reproduce this latter equation below 
\begin{eqnarray}\label{Freal2}
\overline{\left[{F}^{a b c}_{d}\right]}_{(e,\alpha,\beta)(f,\mu,\nu)} 
&=& \sum_{\alpha',\beta',\mu',\nu'} 
\left[\Gamma^{ab}_{e}\right]_{\alpha\alpha'}
\left[\Gamma^{ec}_{d}\right]_{\beta\beta'}
\left[F^{abc}_{d}\right]_{(e,\alpha',\beta')(f,\mu',\nu')}\left[\left(\Gamma^{bc}_{f}\right)^{-1}\right]_{\mu'\mu}
\Big[\big(\Gamma^{af}_{d}\big)^{-1}\Big]_{\nu'\nu}\cr&=& \left[F^{abc}_{d}\right]_{(e,\alpha,\beta)(f,\mu,\nu)}^{\Gamma}~.
\end{eqnarray}

To reproduce the promised result in \eqref{gaugereality}, we note that when the theory has no fusion multiplicities, the above $\Gamma$ matrices are numbers, and we can take the square root
\begin{equation}
    \widehat{\Gamma}^{ab}_c:=\sqrt{\Gamma^{ab}_c}~,
\end{equation}
to arrive at a real gauge. Indeed, in this gauge, we have
\begin{equation}\label{GTFFstar}
    \left[F^{abc}_d\right]^{\widehat\Gamma}_{ef}:= \sqrt{\Gamma^{ab}_e}\sqrt{\Gamma^{ec}_{d}}\left[F^{abc}_d\right]_{ef}\sqrt{(\Gamma^{bc}_f)^{-1}}\sqrt{(\Gamma^{af}_d)^{-1}}~.
\end{equation}
Then it follows from \eqref{Freal} that \eqref{gaugereality} holds
\begin{equation}
\overline{\left[{F}^{abc}_d\right]^{\widehat\Gamma}}_{ef}=\left[F^{abc}_d\right]^{\widehat\Gamma}_{ef}\in\mathbb{R}~.
\label{gaugereality2}
\end{equation}
This is precisely the reality condition we were after, and it establishes Claim \ref{claim:trivMult} in the case of UMTCs (the discussion in Sec. \ref{sec:mechanism} implicitly establishes this claim for general unitary ribbon categories). Note that, in the case of higher fusion multiplicities, this process of choosing a real gauge is more complicated and involves Takagi factorization (see the discussion around \eqref{eq:localTakagi} for an explanation of Takagi factorization and the discussion in Sec. \ref{sec:sufficient} for an explanation of which conditions imply the existence of such a factorization).

Before closing, we note that our $K$ symmetry discussed above is in fact a braided equivalence between
\begin{equation}\label{Zequiv}
\CZ(\CC)\cong\CC\boxtimes\CC^{\rm rev}\leftrightarrow \overline\CC^{\rm rev}\boxtimes\overline\CC~.
\end{equation}
In Sec. \ref{DefsPrelim}, we will see that the action of $K$ can be read through the following braided equivalences
\begin{equation}\label{BraidedEquivs}
\overline\CC\leftrightarrow\CC^{\op}\leftrightarrow\CC^{\mop}\leftrightarrow\CC^{\rev}~,
\end{equation}
which act on objects by
\begin{equation}\label{BraidedEquivsObjects}
\overline\CC\ \underset{a\mapsto a}{\longleftrightarrow}\ \CC^{\op}\ \underset{a\mapsto a^{\vee}}{\longleftrightarrow}\ \CC^{\mop}\ \underset{a\mapsto a}{\longleftrightarrow}\ \CC^{\rev}~.
\end{equation}
Combining with the $S_{CC}$ charge conjugation is then necessary to have a map relating the categories in \eqref{Zequiv} that is the identity on objects. Note that we will use the equivalences in \eqref{BraidedEquivs} in our general proof in Sec. \ref{sec:sufficient}.

As a concluding comment, we note that the above folding construction is often written as $\CC\to\CC\boxtimes\overline\CC$ in the physics literature. Here, $\overline\CC$ is a UMTC with the same objects as $\CC$ but with all the $F$ and $R$ symbols complex conjugated. When the underlying QFT is a unitary TQFT (and $\CC$ is a UMTC), one is free to replace $\CC^{\rm rev}\leftrightarrow\overline\CC$ (after mapping objects to duals), since the effect on the path integral is the same: $Z(M)=\int DA e^{iS_{\CT}[A]}\to\int DA e^{-iS_{\CT}[A]}$. This is the physical avatar of the fact that we should get equivalent TQFTs/UMTCs from taking $x\to-x$ or $t\to-t$ instead of $y\to-y$ or complex conjugating our theory. After all, our choice of coordinates is arbitrary since the theory is Poincar\'e invariant. This discussion suggests that there should be a relation between the corresponding categorical operations $\CC\to\CC^{\rm mop}$ and $\CC\to\CC^{\rm op}$ described in Table \ref{CatST}. Indeed, in the next section, we will see that the freedom to perform coordinate transformations corresponds to the braided equivalences in \eqref{BraidedEquivs}.

\section{The mechanism for real F-symbols}
\label{sec:mechanism}
The purpose of this section is to isolate the general mechanism by which a real gauge for the $F$-symbols can arise, independently of the specific conditions that guarantee it. Throughout we work in the langauge of anyon theories and string diagrams. This section should be understood as a foundational section and warm-up for the main results in Sec. \ref{sec:sufficient} aimed at those readers more familiar with anyon models. Our results here generalize those in Sec. \ref{folding} in the sense that we do not assume modularity. We also work out explicitly what conditions are needed on gauge transformations to guarantee reality of the $F$-symbols in the higher-multiplicity case. However, we postpone a discussion of the categorical properties that deliver these conditions to section Sec. \ref{sec:sufficient}.

Assuming the existence of a braided charge-conjugation autoequivalence that sends anyons to their duals (i.e., $J(a)\cong a^\vee$), we first show that, in a unitary ribbon category, the complex-conjugated $F$-symbols are related to the original ones by a vertex basis gauge transformation (see also \eqref{Freal} for the special case of a UMTC)
\begin{equation}\label{FFbargauge}
    \overline{F_d^{abc}}
    =
    \mathsf M_L F_d^{abc}\mathsf M_R^{\dagger}~.
\end{equation}
We then show that, if the unitary matrices, $\mathsf M_a^{bc}$, are symmetric, this gauge transformation can be removed by a consistent change of trivalent fusion bases, and the corresponding $F$-symbols are real. For multiplicity-free fusion spaces this reduction simply amounts to taking square roots of phases (see \eqref{gaugereality} for a discussion of this special case), while for higher fusion multiplicities, it is implemented by Takagi factorisation.

Much of the remainder of the paper is concerned with determining when the $\mathsf M_a^{bc}$ matrices are symmetric, culminating in our results of Sec. \ref{sec:sufficient}. In later sections, we show examples of theories obeying and violating these conditions.

\subsection{Fusion-state conventions and the
\texorpdfstring{$F$}{F}-amplitude}
Throughout this section, $\cC$ is a unitary ribbon fusion category, and $\Irr(\cC)$ is a set, closed under duality, containing one representative of each isomorphism class of simple objects. For every allowed anyon fusion vertex, we consider the corresponding admissible triple of simple objects, $(a,b,c)$ and choose an ordered orthonormal basis
\begin{equation}
    \left\{
        \ket{a\to b\otimes c;\alpha}
    \right\}_{\alpha=1}^{N_{bc}^{a}}
    \subset
    V_a^{bc}~,
    \qquad
    V_a^{bc}:=
    \operatorname{Hom}_{\mathcal C}(a,b\otimes c)~.
\end{equation}
Diagrammatically, we represent the basis vector
$\ket{a\to b\otimes c;\alpha}$ by $\ket{
\begin{tikzpicture}[
    scale=0.35,
    baseline=-0.8ex,
    line cap=round,
    line join=round,
    >=stealth
]
% coordinates
\coordinate (O) at (0,0);
\coordinate (A) at (0,-1.3);
\coordinate (B) at (-1.1,1.1);
\coordinate (C) at (1.1,1.1);
% lines
\draw[thick] (A) -- (O);
\draw[thick] (O) -- (B);
\draw[thick] (O) -- (C);
% arrows in the middle
\draw[->,thick] (0,-0.85) -- (0,-0.45);
\draw[->,thick] (-0.45,0.45) -- (-0.7,0.7);
\draw[->,thick] (0.45,0.45) -- (0.7,0.7);
% labels
\node at (0.6,-0.8) {$a$};
\node at (-1.2,0.4) {$b$};
\node at (1.2,0.4) {$c$};
\node at (-0.55,-0.25) {$\alpha$};
\end{tikzpicture} 
}$. 
The dagger defines a conjugate-linear isometric identification
\begin{equation}
    V_a^{bc}
    \longrightarrow
    \operatorname{Hom}_{\mathcal C}(b\otimes c,a)~,
    \qquad
    \ket{a\to b\otimes c;\alpha}
    \longmapsto
    \ket{a\to b\otimes c;\alpha}^{\dagger}~.
\end{equation}
Using the corresponding Riesz identification with the Hilbert-space
dual, we define
\begin{equation}
    \bra{a\to b\otimes c;\alpha}
    :=
    \ket{a\to b\otimes c;\alpha}^{\dagger}~,
\end{equation}
denoted diagrammatically as
$\bra{
\begin{tikzpicture}[
    scale=0.35,
    baseline=-0.6ex,
    line cap=round,
    line join=round,
    >=stealth
]
% coordinates
\coordinate (O) at (0,0);
\coordinate (A) at (0,1.3);
\coordinate (B) at (-1.1,-1.1);
\coordinate (C) at (1.1,-1.1);
% lines
\draw[thick] (O) -- (A);
\draw[thick] (B) -- (O);
\draw[thick] (C) -- (O);
% arrows in the middle
\draw[->,thick] (0,0.45) -- (0,0.85);
\draw[->,thick] (-0.7,-0.7) -- (-0.45,-0.45);
\draw[->,thick] (0.7,-0.7) -- (0.45,-0.45);
% labels
\node at (0.6,0.9) {$a$};
\node at (-1.2,-0.3) {$b$};
\node at (1.2,-0.3) {$c$};
\node at (-0.55,0.15) {$\alpha$};
\end{tikzpicture}}$, so that $\langle a\to b\otimes c;\alpha| a\to b\otimes c;\alpha'\rangle = \delta_{\alpha\alpha'}$.

For fixed external labels $a$, $b$, $c$, and $d$, the two possible
ways of associating the three outgoing anyons give two distinct
channel decompositions of the four-point fusion Hilbert space.  We
denote them by
\begin{equation}
    \mathcal H_L^{abc;d}
    :=
    \bigoplus_e
    V_e^{ab}\otimes V_d^{ec},
    \qquad
    \mathcal H_R^{abc;d}
    :=
    \bigoplus_f
    V_f^{bc}\otimes V_d^{af}~.
\label{eq:HLHR-def}
\end{equation}
Equivalently,
$\mathcal H_L^{abc;d}
    \cong
    \operatorname{Hom}_{\mathcal C}
    \bigl(d,(a\otimes b)\otimes c\bigr)$ and
$\mathcal H_R^{abc;d}
    \cong
    \operatorname{Hom}_{\mathcal C} \bigl(d,a\otimes(b\otimes c)\bigr)$.
Thus $\mathcal H_L^{abc;d}$ and $\mathcal H_R^{abc;d}$ describe the
same physical four-anyon fusion space in the two different
bracketings.  They are canonically related by the associator.

The chosen orthonormal bases of the trivalent splitting spaces induce orthonormal fusion-tree bases
\begin{align}
    \ket{e,\gamma,\delta}_L
    &:=
    \bigl(
        \ket{e\to a\otimes b;\gamma}
        \otimes \id_c
    \bigr)
    \ket{d\to e\otimes c;\delta}~,
    \\
    \ket{f,\mu,\nu}_R
    &:=
    \bigl(
        \id_a\otimes
        \ket{f\to b\otimes c;\mu}
    \bigr)
    \ket{d\to a\otimes f;\nu}~.
\end{align}
Here $e$ and $f$ label the intermediate fusion channels, while
$\gamma,\delta,\mu,\nu$ label bases of the corresponding trivalent
multiplicity spaces.

With the associator convention used throughout this paper, let
\begin{equation}
    \alpha_{a,b,c}:
        \mathcal H_R^{abc;d}
    \longrightarrow
    \mathcal H_L^{abc;d}~,
\end{equation}
denote the canonical identification between the two bracketings. The $F$-matrix is the matrix representation of this associator in the two fusion-tree bases
\begin{equation}
\bigl[F_d^{abc}\bigr]_
    {(e,\gamma,\delta),(f,\mu,\nu)}
    =
        {}_{L}\!\braket{
        e,\gamma,\delta
        |
        \alpha_{a,b,c}
        |
        f,\mu,\nu
    }_R~.
\label{eq:F-as-associator}
\end{equation}
Equivalently,
\begin{equation}
    \alpha_{a,b,c}
    \ket{f,\mu,\nu}_{R}
    =
    \sum_{f,\mu,\nu}
    \bigl[F_d^{abc}\bigr]_
    {(e,\gamma,\delta),(f,\mu,\nu)}
    \ket{f,\mu,\nu}_{R}~.
\label{eq:F-basis-change}
\end{equation}
Since the category is unitary and the fusion-tree bases are orthonormal, $F_d^{abc}$ is a unitary matrix.

The $F$- and $R$-symbols depend on the chosen bases of the trivalent splitting spaces as discussed in \eqref{eq:gaugeTransformation} and \eqref{RGT}. Indeed, recall that a vertex basis gauge transformation is a choice of unitary $\Gamma_z^{xy}\in U(N_{xy}^{z})$ on each $V_z^{xy}$, under which
\begin{eqnarray}
\label{eq:gaugeTransformationAn}
\left[F^{abc}_{d}\right]^{\Gamma}_{(e,\alpha,\beta)(f,\mu,\nu)}
&:=& \sum_{\alpha',\beta',\mu',\nu'}
\left[\Gamma^{ab}_{e}\right]_{\alpha\alpha'}
\left[\Gamma^{ec}_{d}\right]_{\beta\beta'}
\left[F^{abc}_{d}\right]_{(e,\alpha',\beta')(f,\mu',\nu')}\left[\left(\Gamma^{bc}_{f}\right)^{-1}\right]_{\mu'\mu}
\Big[\left(\Gamma^{af}_{d}\right)^{-1}\Big]_{\nu'\nu}~,\ \ \ \ \ \
\end{eqnarray}
and, for the braiding,\footnote{Notice that our conventions for the $R$-symbols differ slightly from those in \cite{Barkeshli:2014cna}.}
\begin{align}\label{RGTAn}
\left[R^{ba}_{c}\right]^{\Gamma}_{\mu\nu}
&:= \sum_{\mu',\nu'}
\left[\Gamma^{ba}_c\right]_{\mu\mu'}
\left[R^{ba}_c\right]_{\mu'\nu'}
\left[\left(\Gamma^{ab}_c\right)^{-1}\right]_{\nu'\nu}~.
\end{align}

\subsection{Local transformations of a trivalent vertex}
We now introduce the local transformations that will be used in the
next subsection to relate the complex-conjugated $F$-symbol,
$\overline F$, to $F$.  We will define the action of these transformations on a trivalent splitting space and in the
chosen fusion bases.

\subsubsection{Hermitian conjugation}

We denote the adjoint fusion space by $V_{bc}^{a} := \operatorname{Hom}_{\mathcal C}(b\otimes c,a)$.
The dagger structure defines the conjugate-linear isometric map
\begin{equation}
    \mathsf D_a^{bc}:V_a^{bc}\longrightarrow V_{bc}^a,
    \qquad
    \mathsf D_a^{bc}(v)=v^\dagger~.
\end{equation}
Using the bra notation introduced above, for $\ket{\psi}
    =
    \sum_{\alpha}
    \psi_{\alpha}
    \ket{a\to b\otimes c;\alpha}$,
Hermitian conjugation acts as
\begin{equation}
    \mathsf D_a^{bc}\ket{\psi}
    =
    \sum_{\alpha}
    \overline{\psi}_{\alpha}\,
    \bra{a\to b\otimes c;\alpha}~.
\end{equation}
Diagrammatically, starting from $\ket{
\begin{tikzpicture}[
    scale=0.35,
    baseline=-0.8ex,
    line cap=round,
    line join=round,
    >=stealth
]
\coordinate (O) at (0,0);
\coordinate (A) at (0,-1.3);
\coordinate (B) at (-1.1,1.1);
\coordinate (C) at (1.1,1.1);

\draw[thick] (A) -- (O);
\draw[thick] (O) -- (B);
\draw[thick] (O) -- (C);

\draw[->,thick] (0,-0.85) -- (0,-0.45);
\draw[->,thick] (-0.45,0.45) -- (-0.7,0.7);
\draw[->,thick] (0.45,0.45) -- (0.7,0.7);

\node at (0.6,-0.8) {$a$};
\node at (-1.2,0.4) {$b$};
\node at (1.2,0.4) {$c$};
\node at (-0.55,-0.25) {$\alpha$};
\end{tikzpicture}
}
\in V_a^{bc}$,
the first operation is Hermitian conjugation
\begin{equation}
\ket{
\begin{tikzpicture}[
    scale=0.35,
    baseline=-0.8ex,
    line cap=round,
    line join=round,
    >=stealth
]
\coordinate (O) at (0,0);
\coordinate (A) at (0,-1.3);
\coordinate (B) at (-1.1,1.1);
\coordinate (C) at (1.1,1.1);

\draw[thick] (A) -- (O);
\draw[thick] (O) -- (B);
\draw[thick] (O) -- (C);

\draw[->,thick] (0,-0.85) -- (0,-0.45);
\draw[->,thick] (-0.45,0.45) -- (-0.7,0.7);
\draw[->,thick] (0.45,0.45) -- (0.7,0.7);

\node at (0.6,-0.8) {$a$};
\node at (-1.2,0.4) {$b$};
\node at (1.2,0.4) {$c$};
\node at (-0.55,-0.25) {$\alpha$};
\end{tikzpicture}
}
\xrightarrow{\ \mathsf D_a^{bc}\ }
\bra{
\begin{tikzpicture}[
    scale=0.35,
    baseline=-0.6ex,
    line cap=round,
    line join=round,
    >=stealth
]
\coordinate (O) at (0,0);
\coordinate (A) at (0,1.3);
\coordinate (B) at (-1.1,-1.1);
\coordinate (C) at (1.1,-1.1);

\draw[thick] (O) -- (A);
\draw[thick] (B) -- (O);
\draw[thick] (C) -- (O);

\draw[->,thick] (0,0.45) -- (0,0.85);
\draw[->,thick] (-0.7,-0.7) -- (-0.45,-0.45);
\draw[->,thick] (0.7,-0.7) -- (0.45,-0.45);

\node at (0.6,0.9) {$a$};
\node at (-1.2,-0.3) {$b$};
\node at (1.2,-0.3) {$c$};
\node at (-0.55,0.15) {$\alpha$};
\end{tikzpicture}
}
\in V_{bc}^{a}~.
\end{equation}
Thus, in the chosen basis, $\mathsf D_a^{bc}$ acts as complex
conjugation on the coefficient vector, together with the replacement
of each trivalent ket vertex by its adjoint bra vertex. This accompanying diagrammatic transformation will make the subsequent composition of these operations more transparent.

\subsubsection{Anticlockwise \texorpdfstring{$\pi$}{pi}-rotation}

Starting from an adjoint trivalent vertex, the counterclockwise half-turn
defines a linear map
\begin{equation}
    P_{\circlearrowleft,a}^{bc}:
    V_{bc}^{a}
    \longrightarrow
    V_{a^\vee}^{c^\vee b^\vee}~.
\end{equation}
Its action in the chosen bases is
\begin{equation}
    \bra{a\to b\otimes c;\beta}
    \longrightarrow
    \sum_{\gamma}
    \bigl(P_{\circlearrowleft,a}^{bc}\bigr)_{\beta\gamma}
    \ket{
        a^\vee\to c^\vee\otimes b^\vee;\gamma
    }~.
\end{equation}
Diagrammatically, the counterclockwise half-turn acts as
\begin{equation}
\bra{
\begin{tikzpicture}[
    scale=0.35,
    baseline=-0.8ex,
    line cap=round,
    line join=round,
    >=stealth
]
% coordinates
\coordinate (O) at (0,0);
\coordinate (A) at (0,-1.3);
\coordinate (B) at (-1.1,1.1);
\coordinate (C) at (1.1,1.1);
% lines
\draw[thick] (A) -- (O);
\draw[thick] (O) -- (B);
\draw[thick] (O) -- (C);
% arrows in the middle
\draw[->,thick] (0,-0.85) -- (0,-0.45);
\draw[->,thick] (-0.45,0.45) -- (-0.7,0.7);
\draw[->,thick] (0.45,0.45) -- (0.7,0.7);
% labels
\node at (0.6,-0.8) {$a$};
\node at (-1.2,0.4) {$b$};
\node at (1.2,0.4) {$c$};
\node at (-0.6,-0.3) {$\beta$};
\end{tikzpicture} 
}
\longrightarrow
\ket{
\begin{tikzpicture}[
    scale=0.21,
    baseline=-0.6ex,
    line width=0.8pt,
    line cap=round,
    line join=round,
    >=stealth,
    sstrand/.style={
        postaction={decorate},
        decoration={
            markings,
            mark=at position .45 with {\arrow{>}}
        }
    }
]

% Trivalent vertex
\coordinate (V) at (0,0);

% Left outgoing strand b
\draw[sstrand]
    (V)
    .. controls (-0.25,0.48) and (-0.55,1.42) ..
    (-1.12,1.46)
    .. controls (-1.75,1.50) and (-1.78,0.72) ..
    (-1.78,-2.35);

% Right outgoing strand c, continuing around the upper-left side
\draw[sstrand]
    (V)
    .. controls (0.27,0.50) and (0.62,0.65) ..
    (0.87,1.04)
    .. controls (1.35,1.82) and (1.02,2.55) ..
    (0.18,2.80)
    .. controls (-1.12,3.20) and (-2.72,2.72) ..
    (-2.72,0.95)
    -- (-2.72,-2.35);

% Incoming strand a, looping around the lower-right side
\draw[sstrand]
    (2.05,2.18)
    -- (2.05,-1.55)
    .. controls (2.05,-2.35) and (1.48,-2.72) ..
    (0.82,-2.72)
    .. controls (0.22,-2.72) and (0.03,-2.20) ..
    (0.03,-1.42)
    -- (V);

% Labels
\node at (-1,0.5) {$b$};
\node at (-.8,3.75) {$c$};
\node at (2.95,-1.5) {$a$};
\node at (-0.7,-1.15) {$\beta$};

\end{tikzpicture}}
=
\sum_{\beta}
\bigl(P_{\circlearrowleft,a}^{bc}\bigr)_{\beta\gamma}
\ket{
\begin{tikzpicture}[
    scale=0.35,
    baseline=-0.6ex,
    line cap=round,
    line join=round,
    >=stealth
]
% coordinates
\coordinate (O) at (0,0);
\coordinate (A) at (0,1.3);
\coordinate (B) at (-1.1,-1.1);
\coordinate (C) at (1.1,-1.1);
% lines
\draw[thick] (O) -- (A);
\draw[thick] (B) -- (O);
\draw[thick] (C) -- (O);
% arrows in the middle
\draw[->,thick] (0,0.45) -- (0,0.85);
\draw[->,thick] (-0.7,-0.7) -- (-0.45,-0.45);
\draw[->,thick] (0.7,-0.7) -- (0.45,-0.45);
% labels
\node at (1,1.1) {$a^\vee$};
\node at (-1.6,-0.5) {$c^\vee$};
\node at (1.8,-0.5) {$b^\vee$};
\node at (-0.6,0.2) {$\gamma$};
\end{tikzpicture}
}~.
\end{equation}
In the curved diagram, we retain the original labels $a,b,c$ and
encode dualisation by the orientation of the bent strands.  When the
rotated diagram is expanded in the standard fusion basis, all strands
are returned to the standard orientation and an orientation-reversed
strand labelled $x$ is consequently represented by a strand labelled
$x^\vee$. The reversal of the order of the two outgoing labels follows
from the canonical duality identification
\begin{equation}
    (b\otimes c)^\vee
    \cong
    c^\vee\otimes b^\vee~.
\end{equation}
Finally, the matrix $P_{\circlearrowleft,a}^{bc}$ is unitary with respect to
the corresponding spherical inner products.

\subsubsection{Braiding}
\label{sssec:diagrammaticBraidingMatrix}
The categorical braiding $\beta_{b,c}:b\otimes c\longrightarrow c\otimes b$ induces a unitary map between the corresponding splitting spaces,
\begin{equation}
    V_a^{bc}\longrightarrow V_a^{cb}~,
\end{equation}
where $V_a^{bc}
    =
    \operatorname{Hom}_{\mathcal C}(a,b\otimes c)$.
Thus, in the chosen orthonormal bases,
\begin{equation}
    \ket{a\to b\otimes c;\mu}
    \longrightarrow
    \sum_{\nu=1}^{N_{cb}^{a}}
    \bigl(R_a^{bc}\bigr)_{\mu\nu}
    \ket{a\to c\otimes b;\nu}~,
\end{equation}
denoted diagrammatically as (see also Fig. \ref{fig:fr-moves}; note that our conventions for the $R$-matrix differ slightly from those in \cite{Barkeshli:2014cna})
\begin{equation}\ket{\begin{tikzpicture}[
    baseline={(current bounding box.center)},
    scale=0.33,
        baseline=0.3ex,
    line width=0.8pt,
    >={Stealth[length=3pt,width=4pt]},
%    >={Latex[length=2.6mm,width=2.1mm]},
    mid/.style={
        postaction={decorate},
        decoration={
            markings,
            mark=at position #1 with {\arrow{>}}
        }
    },
    mid/.default=0.78,
    every node/.style={inner sep=1pt}
  ]
    \coordinate (Rc)  at (1.0,-1.0);
    \coordinate (Rm)  at (1.0,0.0);
    \coordinate (Rla) at (-0.05,1.7);
    \coordinate (Rlb) at (2.05,1.7);

    \draw[mid=0.85]
      (Rm) .. controls (1.7,0.45) and (0.5,1.05) .. (Rla);

    \draw[
      mid=0.85,
      preaction={draw=white,line width=4pt}
    ]
      (Rm) .. controls (0.3,0.45) and (1.5,1.05) .. (Rlb);

    \draw[mid=0.85] (Rc)--(Rm);

\node at (1.7,-0.58) {$a$};
\node at (-0.5,1.20) {$c$};
\node at (2.5,1.20) {$b$};
\node at (0.2,0) {$\mu$};
\end{tikzpicture}}
= \sum_{\nu=1}^{N_{cb}^{a}}
    \bigl(R_a^{bc}\bigr)_{\mu\nu} 
\ket{
\begin{tikzpicture}[
    scale=0.35,
    baseline=-0.8ex,
    line cap=round,
    line join=round,
    >=stealth
]
% coordinates
\coordinate (O) at (0,0);
\coordinate (A) at (0,-1.3);
\coordinate (B) at (-1.1,1.1);
\coordinate (C) at (1.1,1.1);
% lines
\draw[thick] (A) -- (O);
\draw[thick] (O) -- (B);
\draw[thick] (O) -- (C);
% arrows in the middle
\draw[->,thick] (0,-0.85) -- (0,-0.45);
\draw[->,thick] (-0.45,0.45) -- (-0.7,0.7);
\draw[->,thick] (0.45,0.45) -- (0.7,0.7);
% labels
\node at (0.6,-0.8) {$a$};
\node at (-1.2,0.4) {$c$};
\node at (1.2,0.4) {$b$};
\node at (-0.55,-0.25) {$\nu$};
\end{tikzpicture} 
}~.
\end{equation}
The matrix $R_a^{bc}$ is therefore the matrix of the braiding map
with respect to a basis of $V_a^{bc}$ in the domain and a basis of
$V_a^{cb}$ in the codomain.  Only after choosing an additional
identification $V_a^{cb}\cong V_a^{bc}$ may this local braiding map
be regarded as an endomorphism of a single splitting space.  In
representations of the braid group on a fixed multi-anyon Hilbert
space, such an identification of the final and initial puncture
orderings is usually implicit.
The exchange with the opposite crossing is induced by $\beta_{c,b}^{-1}:b\otimes c\longrightarrow c\otimes b$
and is represented by
$\bigl(R_a^{cb}\bigr)^{-1}$.

\subsubsection{Charge conjugation}
Let
\begin{equation}
    J:\mathcal C\longrightarrow\mathcal C~,
\end{equation}
be a unitary braided monoidal charge-conjugation autoequivalence,
mapping each simple object to its dual,
\begin{equation}
    J(a)\cong a^\vee~.
\end{equation}
Its action on the splitting space $V_a^{bc}$ is represented in the
chosen bases by a unitary matrix $U_a^{bc}$:
\begin{equation}
    J\ket{a\to b\otimes c;\alpha}
    =
    \sum_{\beta}
    \bigl(U_a^{bc}\bigr)_{\alpha\beta}
    \ket{a^\vee\to b^\vee\otimes c^\vee;\beta}~.
\end{equation}
Here the tensorator of $J$ and the chosen identifications
$J(x)\cong x^\vee$ are absorbed into the definition of $U_a^{bc}$.
Diagrammatically, charge conjugation preserves the standard orientation of the splitting vertex while replacing each object by
its charge conjugate:
\[
J\ket{
\begin{tikzpicture}[
    scale=0.35,
    baseline=-0.8ex,
    line cap=round,
    line join=round,
    >=stealth
]
\coordinate (O) at (0,0);
\coordinate (A) at (0,-1.3);
\coordinate (B) at (-1.1,1.1);
\coordinate (C) at (1.1,1.1);

\draw[thick] (A) -- (O);
\draw[thick] (O) -- (B);
\draw[thick] (O) -- (C);

\draw[->,thick] (0,-0.85) -- (0,-0.45);
\draw[->,thick] (-0.45,0.45) -- (-0.7,0.7);
\draw[->,thick] (0.45,0.45) -- (0.7,0.7);

\node at (0.6,-0.8) {$a$};
\node at (-1.2,0.4) {$b$};
\node at (1.2,0.4) {$c$};
\node at (-0.55,-0.25) {$\alpha$};
\end{tikzpicture}
}
=
\sum_{\beta}
\bigl(U_a^{bc}\bigr)_{\alpha\beta}
\ket{
\begin{tikzpicture}[
    scale=0.35,
    baseline=-0.8ex,
    line cap=round,
    line join=round,
    >=stealth
]
\coordinate (O) at (0,0);
\coordinate (A) at (0,-1.3);
\coordinate (B) at (-1.1,1.1);
\coordinate (C) at (1.1,1.1);

\draw[thick] (A) -- (O);
\draw[thick] (O) -- (B);
\draw[thick] (O) -- (C);

\draw[->,thick] (0,-0.85) -- (0,-0.45);
\draw[->,thick] (-0.45,0.45) -- (-0.7,0.7);
\draw[->,thick] (0.45,0.45) -- (0.7,0.7);

\node at (0.75,-0.8) {$a^\vee$};
\node at (-1.5,0.4) {$b^\vee$};
\node at (1.35,0.4) {$c^\vee$};
\node at (-0.7,-0.25) {$\beta$};
\end{tikzpicture}
}~.
\]

\subsubsection{Pivotal identification}
The charge-conjugation step applied to a vertex carrying dual labels
produces double-dual objects.  We identify these with the original
objects using the chosen pivotal structure
\begin{equation}
    \mathsf p:
    (-)^{\vee\vee}
    \Longrightarrow
    \operatorname{Id}_{\mathcal C}~,
    \qquad
    p_x:x^{\vee\vee}\longrightarrow x~,
\end{equation}
which a ribbon category already carries.  We take $\Irr(\cC)$ to be a skeleton in which $x^{\vee\vee}=x$ exactly so that each $p_x$ is an endomorphism of a simple object. Since $\End(x)=\mathbb C\,\id_x$, we can take $p_x$ to be a number. Unitarity then implies this number is a phase. As we will discuss in more detail in the later category theory sections, in the case of self-dual theories, we have $p_x=\varkappa_x$ (i.e., $p_x$ is the second Frobenius-Schur indicator).

The pivotal structure induces the following linear map on vertices
\begin{equation}
    \label{eq:mechanismPivotalCoboundary}
    \mathsf p_a^{bc}:
    V_{a^{\vee\vee}}^{b^{\vee\vee}c^{\vee\vee}}
    \longrightarrow
    V_a^{bc}~,
\end{equation}
defined by
\begin{equation}
    \mathsf p_a^{bc}(v)
    :=
    (p_b\otimes p_c)
    \circ v\circ p_a^{-1}~,
\end{equation}
and act as
\begin{equation}
    \ket{a^{
    \vee\vee}\to b^{
    \vee\vee}\otimes c^{
    \vee\vee};\alpha}
    =
    \sum_{\beta}
    [\mathsf p^{ab}_c]_{\alpha\beta}
    \ket{a\to b\otimes c;\beta}~.
\end{equation}
Note that we explicitly have $[\mathsf p^{ab}_c]_{\alpha\beta}= (p_b p _c/p_a)\delta_{\alpha\beta}$.

\subsubsection{Composition of transformations}
\label{sssec:mechanismR}
Having defined these local maps, we combine them into a single
antiunitary endomorphism of each splitting space. With the orientation conventions introduced above, their successive action is
\begin{equation}
\begin{aligned}
    V_a^{bc}
    &\xrightarrow{\ \mathsf D_a^{bc}\ }
    V_{bc}^{a}
    \xrightarrow{\ P_{\circlearrowleft,a}^{bc}\ }
    V_{a^\vee}^{c^\vee b^\vee}
    \xrightarrow{\ R_{a^\vee}^{c^\vee b^\vee}\ }
    V_{a^\vee}^{b^\vee c^\vee}
    \xrightarrow{\ U_{a^\vee}^{b^\vee c^\vee}\ }
    V_{a^{\vee\vee}}^{b^{\vee\vee}c^{\vee\vee}}
    \xrightarrow{\ \mathsf p_a^{bc}\ }
    V_a^{bc}~.
\end{aligned}
\label{eq:A-local-sequence}
\end{equation}
Here charge conjugation sends
$J(x^\vee)\cong x^{\vee\vee}$, while the final identification of the
double-dual objects with the original ones is supplied by the chosen
pivotal structure.

We therefore define
\begin{equation}
    A_a^{bc}
    :=
    \mathsf p_a^{bc}
    \circ
    J_{a^\vee}^{b^\vee c^\vee}
    \circ
    R_{a^\vee}^{c^\vee b^\vee}
    \circ
    P_{\circlearrowleft,a}^{bc}
    \circ
    \mathsf D_a^{bc}
    :
    V_a^{bc}\longrightarrow V_a^{bc}~.
\label{eq:A-local-definition}
\end{equation}
The map $A_a^{bc}$ is antiunitary because
$\mathsf D_a^{bc}$ is conjugate-linear, whereas the half-turn,
braiding, charge-conjugation and pivotal maps are linear and unitary. In the basis chosen above,
\begin{equation}
    A_a^{bc}
    =
    \mathsf M_a^{bc}K~,
    \qquad
    \mathsf M_a^{bc}
    = {\mathsf p_a^{bc}}
    U_{a^\vee}^{b^\vee c^\vee}
    R_{a^\vee}^{c^\vee b^\vee}
    P_{\circlearrowleft,a}^{bc} ~,
\label{eq:A-MK}
\end{equation}
where $K$ denotes complex conjugation of the coefficients with respect to a chosen basis and $\mathsf M_a^{bc}$ is unitary.

At this point we make no assumption about the square of $A_a^{bc}$.
The role of $A_a^{bc}$ here is simply to package the one-pass
diagrammatic transformation that takes the complex-conjugated
$F$-symbol network back to the original fusion-tree structure.  We
now apply this transformation to the two fusion-tree bases to obtain
the gauge relation between $\overline F$ and $F$.

\subsection{Application to the \texorpdfstring{$F$}{F}-symbol amplitude}

We now apply the local antiunitary transformation introduced above to
the closed diagram representing an $F$-symbol matrix element.
With
the orientation and multiplicity conventions of the preceding
subsection, the diagram displayed below represents
\begin{equation}
\bigl[F_d^{abc}\bigr]_{(i,\nu,\mu),(j,\alpha,\beta)}
=
\Biggl\langle
\begin{tikzpicture}[
    scale=0.95,
    baseline=4.5ex,
    line cap=round,
    line join=round,
    >={Stealth[length=3pt,width=4pt]},
]

\tikzset{
    midarrow/.style={
        thick,
        postaction={decorate},
        decoration={
            markings,
            mark=at position 0.53 with {\arrow{>}}
        }
    },
    midarrowrev/.style={
        thick,
        postaction={decorate},
        decoration={
            markings,
            mark=at position 0.53 with {\arrow{>}}
        }
    }
}

\begin{scope}[rotate around={180:(0,1)}]

% vertices
\coordinate (beta)  at (0,2.00);
\coordinate (alpha) at (-0.42,1.00);
\coordinate (nu)    at (0.42,1.00);
\coordinate (mu)    at (0,0);

% outer curve d, moved to the other side by sphericality
\draw[midarrow] (mu) .. controls (1.5,0.42) and (1.5,1.58) .. (beta);

% inner curved edges
\draw[midarrow] (beta) .. controls (-0.20,1.82) and (-0.44,1.58) .. (alpha); % i
\draw[midarrow] (beta) .. controls (0.20,1.82) and (0.44,1.58) .. (nu);      % c
\draw[midarrow] (alpha) .. controls (-0.44,0.42) and (-0.20,0.18) .. (mu);   % a
\draw[midarrow] (nu)    .. controls (0.44,0.42) and (0.20,0.18) .. (mu);     % j

% short central arc b with reversed arrow
\draw[midarrow] (alpha) .. controls (-0.10,1.17) and (0.10,1.17) .. (nu);

% dots
\fill (beta)  circle (1.8pt);
\fill (alpha) circle (1.8pt);
\fill (nu)    circle (1.8pt);
\fill (mu)    circle (1.8pt);

% vertex labels
\node[above=-18pt] at (beta) {$\mu$};
\node[left=-15pt] at (alpha) {$\nu$};
\node[right=-15pt] at (nu) {$\alpha$};
\node[below=-18pt] at (mu) {$\beta$};

% edge labels
\node at (1.4,1.00) {$d$};
\node at (-0.6,1.50) {$i$};
\node at (0.6,1.50) {$c$};
\node at (-0.55,0.50) {$a$};
\node at (0,1.4) {$b$};
\node at (0.6,0.50) {$j$};

\end{scope}
\end{tikzpicture} 
\Biggr\rangle~.
\label{eq:F-closed-amplitude}
\end{equation}
Here $i$ and $j$ are the intermediate channels of the left- and
right-associated fusion trees, respectively.  More precisely,
$\nu$ labels $V_i^{ab}$ and $\mu$ labels $V_d^{ic}$, while
$\alpha$ labels $V_j^{bc}$ and $\beta$ labels $V_d^{aj}$.
The brackets in Eq.~\eqref{eq:F-closed-amplitude} denote the
normalised spherical evaluation of the closed network and should not
be confused with a Hilbert-space bra.

We now follow the transformation of this amplitude step by step.
Hermitian conjugation first produces its complex conjugate.  The
remaining operations, anticlockwise half-turn, braiding, charge
conjugation and pivotal identification, are linear and return the
reflected network to the same object labels and fusion-tree topology
as the original one.  During this process they induce unitary changes
of basis on the trivalent multiplicity spaces.

To keep the graphical equations readable, we denote by
$\mathcal P$, $\mathcal R$ and $\mathcal U$ the accumulated actions
of the half-turn, braiding and charge-conjugation transformations on
the four trivalent multiplicity spaces.  When fusion multiplicities
are present these symbols represent matrices together with the
appropriate contractions over intermediate multiplicity indices, not
independent scalar factors.  The precise matrix relation will be
given at the end of the subsection.

\begin{itemize}

\item
\textbf{Hermitian conjugation.}

We first take the Hermitian conjugate of the closed diagram in
Eq.~\eqref{eq:F-closed-amplitude}.  Since the spherical evaluation of
a closed network is a complex scalar, this gives
\begin{equation}
\overline{
\bigl[F_d^{abc}\bigr]}_{(i,\nu,\mu),(j,\alpha,\beta)
}
=
\Biggl\langle
\begin{tikzpicture}[
    scale=0.95,
    baseline=4.5ex,
    line cap=round,
    line join=round,
    >={Stealth[length=3pt,width=4pt]},
]

\tikzset{
    midarrow/.style={
        thick,
        postaction={decorate},
        decoration={
            markings,
            mark=at position 0.53 with {\arrow{<}}
        }
    },
    midarrowrev/.style={
        thick,
        postaction={decorate},
        decoration={
            markings,
            mark=at position 0.53 with {\arrow{>}}
        }
    }
}

% vertices (tight, symmetric layout)
\coordinate (beta)  at (0,2.00);
\coordinate (alpha) at (-0.42,1.00);
\coordinate (nu)    at (0.42,1.00);
\coordinate (mu)    at (0,0);

% outer left curve f
\draw[midarrow] (mu) .. controls (-1.5,0.42) and (-1.5,1.58) .. (beta);

% inner curved edges (curving outward)
\draw[midarrow] (beta) .. controls (-0.20,1.82) and (-0.44,1.58) .. (alpha); % c
\draw[midarrow] (beta) .. controls (0.20,1.82) and (0.44,1.58) .. (nu);      % i
\draw[midarrow] (alpha) .. controls (-0.44,0.42) and (-0.20,0.18) .. (mu);   % j
\draw[midarrow] (nu)    .. controls (0.44,0.42) and (0.20,0.18) .. (mu);     % a

% central b-line only: reversed arrow
\draw[midarrowrev] (alpha) .. controls (-0.10,1.17) and (0.10,1.17) .. (nu);

% dots
\fill (beta)  circle (1.8pt);
\fill (alpha) circle (1.8pt);
\fill (nu)    circle (1.8pt);
\fill (mu)    circle (1.8pt);

% vertex labels
\node[above=2pt] at (beta) {$\mu$};
\node[left=-2pt] at (alpha) {$\alpha$};
\node[right=0pt] at (nu) {$\nu$};
\node[below=2pt] at (mu) {$\beta$};

% edge labels
\node at (-1.4,1.00) {$d$};
\node at (-0.6,1.50) {$c$};
\node at (0.6,1.50) {$i$};
\node at (-0.55,0.50) {$j$};
\node at (0,1.4) {$b$};
\node at (0.6,0.50) {$a$};

\end{tikzpicture}
\Biggl\rangle~.
\end{equation}
Diagrammatically, Hermitian conjugation reflects the entire oriented
network and replaces every trivalent morphism by its adjoint.  The
orientation arrows are carried along by the reflection, while the
object labels themselves remain unchanged.  In particular, no
dualisation of the labels is performed at this stage.

The reflection also interchanges the relative positions of the
vertices of the network.  Any apparent change in the direction of an
individual arrow in the planar drawing is therefore simply the image
of that oriented strand under the reflection, rather than an
additional dualisation operation.

\item
\textbf{Anticlockwise $\pi$-rotation.}

We next apply an anticlockwise half-turn to each of the four trivalent vertices of the conjugated network. For the closed
$F$-symbol diagram we immediately express the rotated vertices in the
standard splitting-space bases.  Consequently, all strands are
returned to the standard graphical orientation and a physical strand
whose orientation has been reversed by the half-turn is represented
by the corresponding dual object $x^\vee$.

The geometrical transformation is

\begin{equation}
    \Biggl\langle
\begin{tikzpicture}[
    scale=0.95,
    baseline=4.5ex,
    line cap=round,
    line join=round,
    >={Stealth[length=3pt,width=4pt]},
]

\tikzset{
    midarrow/.style={
        thick,
        postaction={decorate},
        decoration={
            markings,
            mark=at position 0.53 with {\arrow{<}}
        }
    },
    midarrowrev/.style={
        thick,
        postaction={decorate},
        decoration={
            markings,
            mark=at position 0.53 with {\arrow{>}}
        }
    }
}

% vertices (tight, symmetric layout)
\coordinate (beta)  at (0,2.00);
\coordinate (alpha) at (-0.42,1.00);
\coordinate (nu)    at (0.42,1.00);
\coordinate (mu)    at (0,0);

% outer left curve f
\draw[midarrow] (mu) .. controls (-1.5,0.42) and (-1.5,1.58) .. (beta);

% inner curved edges (curving outward)
\draw[midarrow] (beta) .. controls (-0.20,1.82) and (-0.44,1.58) .. (alpha); % c
\draw[midarrow] (beta) .. controls (0.20,1.82) and (0.44,1.58) .. (nu);      % i
\draw[midarrow] (alpha) .. controls (-0.44,0.42) and (-0.20,0.18) .. (mu);   % j
\draw[midarrow] (nu)    .. controls (0.44,0.42) and (0.20,0.18) .. (mu);     % a

% central b-line only: reversed arrow
\draw[midarrowrev] (alpha) .. controls (-0.10,1.17) and (0.10,1.17) .. (nu);

% dots
\fill (beta)  circle (1.8pt);
\fill (alpha) circle (1.8pt);
\fill (nu)    circle (1.8pt);
\fill (mu)    circle (1.8pt);

% vertex labels
\node[above=2pt] at (beta) {$\mu$};
\node[left=-2pt] at (alpha) {$\alpha$};
\node[right=0pt] at (nu) {$\nu$};
\node[below=2pt] at (mu) {$\beta$};

% edge labels
\node at (-1.4,1.00) {$d$};
\node at (-0.6,1.50) {$c$};
\node at (0.6,1.50) {$i$};
\node at (-0.55,0.50) {$j$};
\node at (0,1.4) {$b$};
\node at (0.6,0.50) {$a$};

\end{tikzpicture}
    \Biggr\rangle
    \xrightarrow{\;\mathcal P\;}
    \Biggl\langle
\begin{tikzpicture}[
    scale=0.95,
    baseline=4.5ex,
    line cap=round,
    line join=round,
    >={Stealth[length=3pt,width=4pt]},
]

\tikzset{
    midarrow/.style={
        thick,
        postaction={decorate},
        decoration={
            markings,
            mark=at position 0.53 with {\arrow{>}}
        }
    },
    midarrowrev/.style={
        thick,
        postaction={decorate},
        decoration={
            markings,
            mark=at position 0.53 with {\arrow{<}}
        }
    }
}

\begin{scope}[rotate around={180:(0,1)}]

% vertices
\coordinate (beta)  at (0,2.00);
\coordinate (alpha) at (-0.42,1.00);
\coordinate (nu)    at (0.42,1.00);
\coordinate (mu)    at (0,0);

% outer left curve f
\draw[midarrow] (mu) .. controls (-1.5,0.42) and (-1.5,1.58) .. (beta);

% inner curved edges
\draw[midarrow] (beta) .. controls (-0.20,1.82) and (-0.44,1.58) .. (alpha); % b
\draw[midarrow] (beta) .. controls (0.20,1.82) and (0.44,1.58) .. (nu);      % c
\draw[midarrow] (alpha) .. controls (-0.44,0.42) and (-0.20,0.18) .. (mu);   % a
\draw[midarrow] (nu)    .. controls (0.44,0.42) and (0.20,0.18) .. (mu);     % e

% short central arc b (reversed orientation)
\draw[midarrowrev] (alpha) .. controls (-0.10,1.17) and (0.10,1.17) .. (nu);

% dots
\fill (beta)  circle (1.8pt);
\fill (alpha) circle (1.8pt);
\fill (nu)    circle (1.8pt);
\fill (mu)    circle (1.8pt);

% vertex labels
\node[above=-18pt] at (beta) {$\mu$};
\node[left=-15pt] at (alpha) {$\alpha$};
\node[right=-15pt] at (nu) {$\nu$};
\node[below=-18pt] at (mu) {$\beta$};

% edge labels
\node at (-1.45,1.00) {$d^\vee$};
\node at (-0.6,1.50) {$c^\vee$};
\node at (0.6,1.50) {$i^\vee$};
\node at (-0.65,0.65) {$j^\vee$};
\node at (0,1.4) {$b^\vee$};
\node at (0.6,0.50) {$a^\vee$};

\end{scope}
\end{tikzpicture}
    \Biggr\rangle   \overset{\mathrm{sph}}{\sim}
    \Biggl\langle
\begin{tikzpicture}[
    scale=0.95,
    baseline=4.5ex,
    line cap=round,
    line join=round,
    >={Stealth[length=3pt,width=4pt]},
]

\tikzset{
    midarrow/.style={
        thick,
        postaction={decorate},
        decoration={
            markings,
            mark=at position 0.53 with {\arrow{>}}
        }
    }
}

\begin{scope}[rotate around={180:(0,1)}]

% vertices
\coordinate (beta)  at (0,2.00);
\coordinate (alpha) at (-0.42,1.00);
\coordinate (nu)    at (0.42,1.00);
\coordinate (mu)    at (0,0);

% outer curve d, moved to the other side by sphericality
\draw[midarrow] (mu) .. controls (1.5,0.42) and (1.5,1.58) .. (beta);

% inner curved edges
\draw[midarrow] (beta) .. controls (-0.20,1.82) and (-0.44,1.58) .. (alpha); % b
\draw[midarrow] (beta) .. controls (0.20,1.82) and (0.44,1.58) .. (nu);      % c
\draw[midarrow] (alpha) .. controls (-0.44,0.42) and (-0.20,0.18) .. (mu);   % a
\draw[midarrow] (nu)    .. controls (0.44,0.42) and (0.20,0.18) .. (mu);     % e

% short central arc d
\draw[midarrow] (nu) .. controls (0.10,1.17) and (-0.10,1.17) .. (alpha);

% dots
\fill (beta)  circle (1.8pt);
\fill (alpha) circle (1.8pt);
\fill (nu)    circle (1.8pt);
\fill (mu)    circle (1.8pt);

% vertex labels
\node[above=-18pt] at (beta) {$\mu$};
\node[left=-15pt] at (alpha) {$\alpha$};
\node[right=-15pt] at (nu) {$\nu$};
\node[below=-18pt] at (mu) {$\beta$};

% edge labels
\node at (1.4,1.00) {$d^\vee$};
\node at (-0.6,1.50) {$c^\vee$};
\node at (0.6,1.45) {$i^\vee$};
\node at (-0.68,0.50) {$j^\vee$};
\node at (0,1.4) {$b^\vee$};
\node at (0.6,0.50) {$a^\vee$};

\end{scope}
\end{tikzpicture}
    \Biggr\rangle~.
\label{eq:F-half-turn}
\end{equation}

When the rotated vertices are expanded in the chosen standard bases, the corresponding
changes of multiplicity basis are represented by the matrices
$P_{\circlearrowleft}$ introduced above and their accumulated action is
denoted by $\mathcal P$.

The final equivalence in Eq.~\eqref{eq:F-half-turn} is a spherical
isotopy moving the external $d^\vee$-strand through the point at
infinity.  This deformation changes neither the object labels nor the
orientations of the physical strands and leaves the spherical
evaluation unchanged.

\item
\textbf{Braiding.}

The anticlockwise half-turn reverses the local ordering of the two tensor
factors at the transformed trivalent vertices.  We therefore apply
the corresponding braiding maps to restore the ordering appropriate
to the original $F$-symbol network.  Their action on the multiplicity
spaces is represented by the appropriate $R$-matrices.

Diagrammatically,

\begin{equation}
    \Biggl\langle
\begin{tikzpicture}[
    scale=0.95,
    baseline=4.5ex,
    line cap=round,
    line join=round,
    >={Stealth[length=3pt,width=4pt]},
]

\tikzset{
    midarrow/.style={
        thick,
        postaction={decorate},
        decoration={
            markings,
            mark=at position 0.53 with {\arrow{>}}
        }
    }
}

\begin{scope}[rotate around={180:(0,1)}]

% vertices
\coordinate (beta)  at (0,2.00);
\coordinate (alpha) at (-0.42,1.00);
\coordinate (nu)    at (0.42,1.00);
\coordinate (mu)    at (0,0);

% outer curve d, moved to the other side by sphericality
\draw[midarrow] (mu) .. controls (1.5,0.42) and (1.5,1.58) .. (beta);

% inner curved edges
\draw[midarrow] (beta) .. controls (-0.20,1.82) and (-0.44,1.58) .. (alpha); % b
\draw[midarrow] (beta) .. controls (0.20,1.82) and (0.44,1.58) .. (nu);      % c
\draw[midarrow] (alpha) .. controls (-0.44,0.42) and (-0.20,0.18) .. (mu);   % a
\draw[midarrow] (nu)    .. controls (0.44,0.42) and (0.20,0.18) .. (mu);     % e

% short central arc d
\draw[midarrow] (nu) .. controls (0.10,1.17) and (-0.10,1.17) .. (alpha);

% dots
\fill (beta)  circle (1.8pt);
\fill (alpha) circle (1.8pt);
\fill (nu)    circle (1.8pt);
\fill (mu)    circle (1.8pt);

% vertex labels
\node[above=-18pt] at (beta) {$\mu$};
\node[left=-15pt] at (alpha) {$\alpha$};
\node[right=-15pt] at (nu) {$\nu$};
\node[below=-18pt] at (mu) {$\beta$};

% edge labels
\node at (1.4,1.00) {$d^\vee$};
\node at (-0.6,1.50) {$c^\vee$};
\node at (0.6,1.45) {$i^\vee$};
\node at (-0.7,0.50) {$j^\vee$};
\node at (0,1.4) {$b^\vee$};
\node at (0.6,0.50) {$a^\vee$};

\end{scope}
\end{tikzpicture}
    \Biggr\rangle
    \quad
\xrightarrow{\;\mathcal R\;}
    \quad
    \Biggl\langle
\begin{tikzpicture}[
    scale=0.95,
    baseline=4.5ex,
    line cap=round,
    line join=round,
    >={Stealth[length=3pt,width=4pt]},
]

\tikzset{
    midarrow/.style={
        thick,
        postaction={decorate},
        decoration={
            markings,
            mark=at position 0.53 with {\arrow{>}}
        }
    },
    midarrowrev/.style={
        thick,
        postaction={decorate},
        decoration={
            markings,
            mark=at position 0.53 with {\arrow{>}}
        }
    }
}

\begin{scope}[rotate around={180:(0,1)}]

% vertices
\coordinate (beta)  at (0,2.00);
\coordinate (alpha) at (-0.42,1.00);
\coordinate (nu)    at (0.42,1.00);
\coordinate (mu)    at (0,0);

% outer curve d, moved to the other side by sphericality
\draw[midarrow] (mu) .. controls (1.5,0.42) and (1.5,1.58) .. (beta);

% inner curved edges
\draw[midarrow] (beta) .. controls (-0.20,1.82) and (-0.44,1.58) .. (alpha); % i
\draw[midarrow] (beta) .. controls (0.20,1.82) and (0.44,1.58) .. (nu);      % c
\draw[midarrow] (alpha) .. controls (-0.44,0.42) and (-0.20,0.18) .. (mu);   % a
\draw[midarrow] (nu)    .. controls (0.44,0.42) and (0.20,0.18) .. (mu);     % j

% short central arc b with reversed arrow
\draw[midarrow] (alpha) .. controls (-0.10,1.17) and (0.10,1.17) .. (nu);

% dots
\fill (beta)  circle (1.8pt);
\fill (alpha) circle (1.8pt);
\fill (nu)    circle (1.8pt);
\fill (mu)    circle (1.8pt);

% vertex labels
\node[above=-18pt] at (beta) {$\mu$};
\node[left=-15pt] at (alpha) {$\nu$};
\node[right=-15pt] at (nu) {$\alpha$};
\node[below=-18pt] at (mu) {$\beta$};

% edge labels
\node at (1.4,1.00) {$d^\vee$};
\node at (-0.6,1.50) {$i^\vee$};
\node at (0.6,1.45) {$c^\vee$};
\node at (-0.65,0.50) {$a^\vee$};
\node at (0,1.4) {$b^\vee$};
\node at (0.55,0.50) {$j^\vee$};

\end{scope}
\end{tikzpicture}
    \Biggr\rangle~.
\label{eq:F-braiding}
\end{equation}

In the present network the braiding exchanges the positions of the
$c^\vee$ and $i^\vee$ strands and, correspondingly, those of
$j^\vee$ and $a^\vee$, while the $b^\vee$ and $d^\vee$ strands are
unchanged.  Importantly, braiding does not introduce any additional
dualisation and does not reverse the physical orientation of the
strands.

We denote the accumulated action of the required braiding maps on the
multiplicity spaces by $\mathcal R$.  In multiplicity-free channels
this reduces to the corresponding product of scalar $R$-symbols.

\item
\textbf{Charge conjugation and pivotal identification.}

We next apply charge conjugation to each of the four transformed
trivalent vertices.  Charge conjugation acts on the object labels and
on their multiplicity spaces while preserving the standard graphical
orientation of the strands.  Its local action is represented by the
matrices $U$ introduced above.  Thus

\begin{equation}
    \Biggl\langle
\begin{tikzpicture}[
    scale=1.2,
    baseline=6ex,
    line cap=round,
    line join=round,
    >={Stealth[length=3pt,width=4pt]},
]

\tikzset{
    midarrow/.style={
        thick,
        postaction={decorate},
        decoration={
            markings,
            mark=at position 0.53 with {\arrow{>}}
        }
    },
    midarrowrev/.style={
        thick,
        postaction={decorate},
        decoration={
            markings,
            mark=at position 0.53 with {\arrow{>}}
        }
    }
}

\begin{scope}[rotate around={180:(0,1)}]

% vertices
\coordinate (beta)  at (0,2.00);
\coordinate (alpha) at (-0.42,1.00);
\coordinate (nu)    at (0.42,1.00);
\coordinate (mu)    at (0,0);

% outer curve d, moved to the other side by sphericality
\draw[midarrow] (mu) .. controls (1.5,0.42) and (1.5,1.58) .. (beta);

% inner curved edges
\draw[midarrow] (beta) .. controls (-0.20,1.82) and (-0.44,1.58) .. (alpha); % i
\draw[midarrow] (beta) .. controls (0.20,1.82) and (0.44,1.58) .. (nu);      % c
\draw[midarrow] (alpha) .. controls (-0.44,0.42) and (-0.20,0.18) .. (mu);   % a
\draw[midarrow] (nu)    .. controls (0.44,0.42) and (0.20,0.18) .. (mu);     % j

% short central arc b with reversed arrow
\draw[midarrow] (alpha) .. controls (-0.10,1.17) and (0.10,1.17) .. (nu);

% dots
\fill (beta)  circle (1.8pt);
\fill (alpha) circle (1.8pt);
\fill (nu)    circle (1.8pt);
\fill (mu)    circle (1.8pt);

% vertex labels
\node[above=-18pt] at (beta) {$\mu$};
\node[left=-15pt] at (alpha) {$\nu$};
\node[right=-15pt] at (nu) {$\alpha$};
\node[below=-18pt] at (mu) {$\beta$};

% edge labels
\node at (1.4,1.00) {$d^\vee$};
\node at (-0.6,1.50) {$i^\vee$};
\node at (0.6,1.45) {$c^\vee$};
\node at (-0.65,0.50) {$a^\vee$};
\node at (0,1.4) {$b^\vee$};
\node at (0.55,0.50) {$j^\vee$};

\end{scope}
\end{tikzpicture}
    \Biggr\rangle
    \quad   \xrightarrow{\;\mathcal U\;}
    \quad
    \Biggl\langle
\begin{tikzpicture}[
    scale=1.3,
    baseline=6.5ex,
    line cap=round,
    line join=round,
    >={Stealth[length=3pt,width=4pt]},
]

\tikzset{
    midarrow/.style={
        thick,
        postaction={decorate},
        decoration={
            markings,
            mark=at position 0.53 with {\arrow{>}}
        }
    },
    midarrowrev/.style={
        thick,
        postaction={decorate},
        decoration={
            markings,
            mark=at position 0.53 with {\arrow{>}}
        }
    }
}

\begin{scope}[rotate around={180:(0,1)}]

% vertices
\coordinate (beta)  at (0,2.00);
\coordinate (alpha) at (-0.42,1.00);
\coordinate (nu)    at (0.42,1.00);
\coordinate (mu)    at (0,0);

% outer curve d, moved to the other side by sphericality
\draw[midarrow] (mu) .. controls (1.5,0.42) and (1.5,1.58) .. (beta);

% inner curved edges
\draw[midarrow] (beta) .. controls (-0.20,1.82) and (-0.44,1.58) .. (alpha); % i
\draw[midarrow] (beta) .. controls (0.20,1.82) and (0.44,1.58) .. (nu);      % c
\draw[midarrow] (alpha) .. controls (-0.44,0.42) and (-0.20,0.18) .. (mu);   % a
\draw[midarrow] (nu)    .. controls (0.44,0.42) and (0.20,0.18) .. (mu);     % j

% short central arc b with reversed arrow
\draw[midarrow] (alpha) .. controls (-0.10,1.17) and (0.10,1.17) .. (nu);

% dots
\fill (beta)  circle (1.8pt);
\fill (alpha) circle (1.8pt);
\fill (nu)    circle (1.8pt);
\fill (mu)    circle (1.8pt);

% vertex labels
\node[above=-18pt] at (beta) {$\mu$};
\node[left=-15pt] at (alpha) {$\nu$};
\node[right=-15pt] at (nu) {$\alpha$};
\node[below=-18pt] at (mu) {$\beta$};

% edge labels
\node at (1.4,1.00) {$d^{\vee\vee}$};
\node at (-0.65,1.50) {$i^{\vee\vee}$};
\node at (0.65,1.40) {$c^{\vee\vee}$};
\node at (-0.65,0.50) {$a^{\vee\vee}$};
\node at (0,1.4) {$b^{\vee\vee}$};
\node at (0.6,0.50) {$j^{\vee\vee}$};

\end{scope}
\end{tikzpicture}
    \Biggr\rangle 
    \quad   \xrightarrow{\;\mathsf p\;}
    \quad
    \Biggr\langle 
\begin{tikzpicture}[
    scale=1.2,
    baseline=6ex,
    line cap=round,
    line join=round,
    >={Stealth[length=3pt,width=4pt]},
]

\tikzset{
    midarrow/.style={
        thick,
        postaction={decorate},
        decoration={
            markings,
            mark=at position 0.53 with {\arrow{>}}
        }
    },
    midarrowrev/.style={
        thick,
        postaction={decorate},
        decoration={
            markings,
            mark=at position 0.53 with {\arrow{>}}
        }
    }
}

\begin{scope}[rotate around={180:(0,1)}]

% vertices
\coordinate (beta)  at (0,2.00);
\coordinate (alpha) at (-0.42,1.00);
\coordinate (nu)    at (0.42,1.00);
\coordinate (mu)    at (0,0);

% outer curve d, moved to the other side by sphericality
\draw[midarrow] (mu) .. controls (1.5,0.42) and (1.5,1.58) .. (beta);

% inner curved edges
\draw[midarrow] (beta) .. controls (-0.20,1.82) and (-0.44,1.58) .. (alpha); % i
\draw[midarrow] (beta) .. controls (0.20,1.82) and (0.44,1.58) .. (nu);      % c
\draw[midarrow] (alpha) .. controls (-0.44,0.42) and (-0.20,0.18) .. (mu);   % a
\draw[midarrow] (nu)    .. controls (0.44,0.42) and (0.20,0.18) .. (mu);     % j

% short central arc b with reversed arrow
\draw[midarrow] (alpha) .. controls (-0.10,1.17) and (0.10,1.17) .. (nu);

% dots
\fill (beta)  circle (1.8pt);
\fill (alpha) circle (1.8pt);
\fill (nu)    circle (1.8pt);
\fill (mu)    circle (1.8pt);

% vertex labels
\node[above=-18pt] at (beta) {$\mu$};
\node[left=-15pt] at (alpha) {$\nu$};
\node[right=-15pt] at (nu) {$\alpha$};
\node[below=-18pt] at (mu) {$\beta$};

% edge labels
\node at (1.4,1.00) {$d$};
\node at (-0.6,1.50) {$i$};
\node at (0.6,1.50) {$c$};
\node at (-0.55,0.50) {$a$};
\node at (0,1.4) {$b$};
\node at (0.6,0.50) {$j$};

\end{scope}
\end{tikzpicture} 
    \Biggr\rangle~.
\label{eq:F-charge-conjugation}
\end{equation}

We denote the accumulated action of these four charge-conjugation
maps on the multiplicity spaces by $\mathcal U$.

After charge conjugation, all standard splitting vertices carry
double-dual labels.  The pivotal structure
$\mathsf p_x:x^{\vee\vee}\to x$ therefore supplies the final identification,
or equivalently the maps $\mathsf p$ on the corresponding splitting
spaces.
After the
pivotal identification, the resulting closed network is isotopic to
the original $F$-symbol network.

\end{itemize}

The preceding diagrams establish geometrically that the
complex-conjugated $F$-symbol differs from the original $F$-symbol
only by the local unitary changes of trivalent basis generated by the
linear part of the antiunitary transformation.  We now state this
relation precisely.

Using the two fusion-tree Hilbert spaces introduced in the preceding
subsection, let
\begin{equation}
    p=(i,\nu,\mu)~,
    \qquad
    q=(j,\alpha,\beta)~,
\end{equation}
so that
\begin{equation}
    \bigl[F_d^{abc}\bigr]_{pq}
    =
    {}_L\!\braket{
        L_p
        |
        \alpha_{a,b,c}
        |
        R_q
    }_R~.
\label{eq:F-overlap-LR}
\end{equation}
As usual in the graphical expressions, the associator
$\alpha_{a,b,c}$ is suppressed.

The transformation $A_a^{bc}=\mathsf M_a^{bc}K$ defined above induces
a single antiunitary map on each of the two fusion-tree Hilbert
spaces,
\begin{equation}
    A_L=\mathsf M_L K~,
    \qquad
    A_R=\mathsf M_R K~,
\end{equation}
whose unitary parts are
\begin{equation}
    \mathsf M_L
    =
    \bigoplus_i
    \mathsf M_i^{ab}\otimes\mathsf M_d^{ic}~,
    \qquad
    \mathsf M_R
    =
    \bigoplus_j
    \mathsf M_j^{bc}\otimes\mathsf M_d^{aj}~,
\label{eq:MLMR-local}
\end{equation}
with
\begin{equation}
    \mathsf M_a^{bc}
    =
    \mathsf p_a^{bc}
    U_{a^\vee}^{b^\vee c^\vee}
    R_{a^\vee}^{c^\vee b^\vee}
    P_{\circlearrowleft,a}^{bc} ~.
\label{eq:local-M}
\end{equation}

The diagrammatic construction above precisely expresses the
compatibility of the antiunitary transformations with the associator:
the transformed left- and right-associated fusion trees are related
by the same $F$-move.  Hence, using the antiunitarity of $A_R$,
\begin{equation}
\begin{aligned}
    \overline{
        \bigl[F_d^{abc}\bigr]}_{pq}
    &=
    {}_L\!\braket{
        A_L L_p
        |
        \alpha_{a,b,c}
        |
        A_R R_q
    }_R
    \\
    &=
    \sum_{p',q'}
    (\mathsf M_L)_{p'p}\,
    \bigl[F_d^{abc}\bigr]_{p'q'}\,
    \overline{(\mathsf M_R)}_{q'q}~.
\end{aligned}
\end{equation}
Written as a matrix identity, the second line above reads $\overline{F_d^{abc}}=\mathsf M_L^{T}\,F_d^{abc}\,\overline{\mathsf M_R}$. We can massage this expression as follows. First, conjugate it entrywise, which gives $F_d^{abc}=\mathsf M_L^{\dagger}\,\overline{F_d^{abc}}\,\mathsf M_R$, and substitute the relation back into itself:
\begin{equation}
    F_d^{abc}
    =
    \bigl(\mathsf M_L^{\dagger}\mathsf M_L^{T}\bigr)
    \,F_d^{abc}\,
    \bigl(\overline{\mathsf M_R}\,\mathsf M_R\bigr)~.
\end{equation}
Equivalently, since all the matrices involved are unitary and
$\bigl(\overline{\mathsf M_R}\,\mathsf M_R\bigr)^{-1}=\mathsf M_R^{\dagger}\mathsf M_R^{T}$,
\begin{equation}
    \bigl(\mathsf M_L^{\dagger}\mathsf M_L^{T}\bigr)\,F_d^{abc}
    =
    F_d^{abc}\,\bigl(\mathsf M_R^{\dagger}\mathsf M_R^{T}\bigr)~.
\end{equation}
Multiplying this on the left by $\mathsf M_L$ and on the right by
$\overline{\mathsf M_R}$, and using $\mathsf M_L\mathsf M_L^{\dagger}=\mathbbm 1$
together with
$\mathsf M_R^{T}\overline{\mathsf M_R}=\overline{\mathsf M_R^{\dagger}\mathsf M_R}=\mathbbm 1$,
we obtain
\begin{equation}
    \mathsf M_L^{T}\,F_d^{abc}\,\overline{\mathsf M_R}
    =
    \mathsf M_L\,F_d^{abc}\,\mathsf M_R^{\dagger}~.
\end{equation}
Therefore, for the complete $F$-matrix,
\begin{equation}
    \boxed{
        \overline{F_d^{abc}}
        =
        \mathsf M_L
        F_d^{abc}
        \mathsf M_R^{\dagger}={[F^{abc}_d]}^{M}
    ~.}
\label{eq:FstarMLFMR}
\end{equation}
This is precisely the standard trivalent gauge transformation with local gauge matrices $\Gamma_a^{bc} = \bigl(\mathsf M_a^{bc}\bigr)$. 
Importantly, no involutivity condition has been used in deriving it: at this stage we
have made no assumption that $A^2=1$, that the matrices
$\mathsf M_a^{bc}$ are symmetric, or that any Frobenius--Schur turning
operator is trivial.  Those additional conditions will enter only
when we ask whether the gauge transformation in
Eq.~\eqref{eq:FstarMLFMR} can itself be removed so as to obtain a
gauge in which $\overline F=F$.

\subsection{Choosing the real gauge}
\label{ssec:mechanismIfSymmetric}
The remaining question is whether the gauge transformation $\mathsf M_a^{bc}$ can be removed by a consistent change of trivalent fusion bases. In the following sections we will establish sufficient conditions under which the local unitary matrices are symmetric,
\begin{equation}
    \bigl(\mathsf M_a^{bc}\bigr)^T
    =
    \mathsf M_a^{bc}~.
\label{eq:M-local-symmetric}
\end{equation}
In the present subsection we assume this symmetry and show that it is
sufficient to construct a gauge in which
\begin{equation}
    \overline{F_d^{abc}}
    =
    F_d^{abc}~.
\end{equation}
The construction is particularly simple in the multiplicity-free
case, where the local matrices reduce to phases, and is generalised
to arbitrary fusion multiplicities by Takagi factorisation.

We first consider the multiplicity-free case, $N_{ab}^{c}=0,1$.
Each non-vanishing splitting space is then one-dimensional, and the
local unitary $\mathsf M_c^{ab}$ is simply a phase, which we denote by
\begin{equation}
    \mathsf m_c^{ab}\in U(1)~.
\end{equation}
Since a one-dimensional matrix is automatically symmetric, the
condition~\eqref{eq:M-local-symmetric} imposes no additional
restriction in this case.

Equation~\eqref{eq:FstarMLFMR} reduces componentwise to
\begin{equation}
\overline{
\bigl(F_d^{abc}\bigr)}_{ef}
=
\bigl(\mathsf m_e^{ab}\mathsf m_d^{ec}\bigr)
\bigl(\mathsf m_f^{bc}\mathsf m_d^{af}\bigr)^{-1}
\bigl(F_d^{abc}\bigr)_{ef}~.
\label{eq:Fbar-multiplicity-free}
\end{equation}
For every admissible trivalent vertex, choose a phase
$\mathsf w_c^{ab}$ satisfying
\begin{equation}
\mathsf m_c^{ab}
=
\bigl(\mathsf w_c^{ab}\bigr)^2~.
\label{eq:scalar-Takagi}
\end{equation}
We then perform the local gauge transformation
\begin{equation}
\Gamma_c^{ab}
=
\bigl(\mathsf w_c^{ab}\bigr)~.
\label{eq:scalar-real-gauge}
\end{equation}
Under a multiplicity-free gauge transformation,
\begin{equation}
\bigl(F_d^{abc}\bigr)_{ef}
\longmapsto
\Gamma_e^{ab}
\Gamma_d^{ec}
\bigl(\Gamma_f^{bc}\bigr)^{-1}
\bigl(\Gamma_d^{af}\bigr)^{-1}
\bigl(F_d^{abc}\bigr)_{ef}~.
\label{eq:multiplicityFreeGauge}
\end{equation}
Hence the transformed matrix element is
\begin{equation}
\bigl(F_d^{abc}\bigr)_{ef}^{\mathsf w}=
\bigl(\mathsf w_e^{ab}\mathsf w_d^{ec}\bigr)
\bigl(\mathsf w_f^{bc}\mathsf w_d^{af}\bigr)^{-1}
\bigl(F_d^{abc}\bigr)_{ef}~.
\label{eq:Fhat-multiplicity-free}
\end{equation}
Using Eq.~\eqref{eq:Fbar-multiplicity-free} and
$\mathsf m=(\mathsf w)^2$, we immediately obtain
\begin{equation}
\overline{
\bigl(F_d^{abc}\bigr)^{\mathsf w}}_{ef}
=
\bigl(F_d^{abc}\bigr)^{\mathsf w}_{ef}~.
\end{equation}
Thus, in the multiplicity-free case, the real gauge is obtained by
taking square roots of the local phases appearing in the unitary part
of the antiunitary operation. The resulting real $F$-symbols need not
be positive, as residual signs are compatible with a real orthogonal
$F$-matrix.
\subsubsection{Higher fusion multiplicities}
\label{sssec:higherMultDiagrammatic}
Now, we consider the case of gauge transformations with vertices of arbitrary multiplicity
\begin{align}
 \bigl[F_d^{abc}\bigr]_
 {(e,\alpha,\beta),(f,\mu,\nu)}^{\Gamma}=
 \sum_{\alpha',\beta',\mu',\nu'}
 \bigl[\Gamma_e^{ab}\bigr]_{\alpha\alpha'}
 \bigl[\Gamma_d^{ec}\bigr]_{\beta\beta'}
 \nonumber
 \bigl[F_d^{abc}\bigr]_
 {(e,\alpha',\beta'),(f,\mu',\nu')}
 \bigl[\bigl(\Gamma_f^{bc}\bigr)^{-1}\bigr]_{\mu'\mu}
 \bigl[\bigl(\Gamma_d^{af}\bigr)^{-1}\bigr]_{\nu'\nu}~.
 \label{eq:higherMultiplicityGauge}
\end{align}
Equivalently, defining
\begin{equation}
L_\Gamma
=
\bigoplus_e
\Gamma_e^{ab}\otimes\Gamma_d^{ec}~,
\qquad
R_\Gamma
=
\bigoplus_f
\Gamma_f^{bc}\otimes\Gamma_d^{af}~,
\end{equation}
we have
\begin{equation}
\left[F_d^{abc}\right]^{\Gamma}
=
L_\Gamma
F_d^{abc}
R_\Gamma^\dagger~.
\label{eq:blockGaugeTransformation}
\end{equation}
Suppose now that the local matrices $\mathsf M_c^{ab}$ appearing in
the antiunitary transformation are symmetric and unitary,
\begin{equation}
    \bigl(\mathsf M_c^{ab}\bigr)^T
    =
    \mathsf M_c^{ab}~.
\end{equation}
By Takagi factorisation, for every admissible trivalent splitting
space there exists a unitary matrix $\mathsf W_c^{ab}$ such that
\begin{equation}
    \mathsf M_c^{ab}
    =
    \mathsf W_c^{ab}
    \bigl(\mathsf W_c^{ab}\bigr)^T~.
\label{eq:localTakagi}
\end{equation}
For a general complex symmetric matrix the Takagi factorisation
contains a non-negative diagonal matrix between the two unitary
factors. Since $\mathsf M_c^{ab}$ is itself unitary, all Takagi
singular values are equal to one, so this diagonal matrix is the
identity.

We correspondingly define
\begin{equation}
    \mathsf W_L
    =
    \bigoplus_e
    \mathsf W_e^{ab}\otimes\mathsf W_d^{ec}~,
    \qquad
    \mathsf W_R
    =
    \bigoplus_f
    \mathsf W_f^{bc}\otimes\mathsf W_d^{af}~.
\label{eq:WLWR}
\end{equation}
Equation~\eqref{eq:localTakagi} then implies
\begin{equation}
    \mathsf M_L
    =
    \mathsf W_L\mathsf W_L^T,
    \qquad
    \mathsf M_R
    =
    \mathsf W_R\mathsf W_R^T~.
\label{eq:MLMR-Takagi}
\end{equation}

We now choose the local vertex-basis gauge transformation
\begin{equation}
    \Gamma_c^{ab}
    =
    \bigl(\mathsf W_c^{ab}\bigr)^T~.
\label{eq:Takagi-gauge}
\end{equation}
With our gauge-transformation convention,
Eq.~\eqref{eq:blockGaugeTransformation}, this gives
\begin{equation}
    \left[F_d^{abc}\right]^{\mathsf W}
    =
    \mathsf W_L^T
    F_d^{abc}
    \overline{\mathsf W_R}~.
\label{eq:realGaugeHigherMultiplicity}
\end{equation}
Here we have used
\begin{equation}
    \bigl(\mathsf W_R^T\bigr)^\dagger
    =
    \overline{\mathsf W_R}~.
\end{equation}

Taking the entrywise complex conjugate and using
Eq.~\eqref{eq:FstarMLFMR}, we obtain
\begin{align}
    \overline{
    \left[F_d^{abc}\right]^{\mathsf W}}
    &=
    \mathsf W_L^\dagger
    \overline{F_d^{abc}}
    \mathsf W_R
    =
    \mathsf W_L^\dagger
    \mathsf M_L
    F_d^{abc}
    \mathsf M_R^\dagger
    \mathsf W_R
    =
    \mathsf W_L^\dagger
    \bigl(\mathsf W_L\mathsf W_L^T\bigr)
    F_d^{abc}
    \bigl(\mathsf W_R\mathsf W_R^T\bigr)^\dagger
    \mathsf W_R
    \nonumber\\
    &=
    \mathsf W_L^T
    F_d^{abc}
    \overline{\mathsf W_R}
    =
    \left[F_d^{abc}\right]^{\mathsf W}~.
\end{align}
Thus the transformed $F$-symbols are real. 

The higher-multiplicity case is the direct matrix generalisation of the multiplicity-free construction: the square roots of scalar phases are replaced by Takagi factors of the symmetric unitary matrices $\mathsf M_c^{ab}$. Importantly, the Takagi factorisation is performed locally on each splitting space $V_c^{ab}$. The resulting gauge is therefore a single consistent global choice of trivalent bases, rather than an independent change of basis for each $F$-matrix. We have therefore shown that symmetry of the local unitary matrices $\mathsf M_a^{bc}$ is sufficient for the gauge equivalence between $\overline F$ and $F$ to be trivialised by a single consistent choice of trivalent fusion bases. In this gauge,
\begin{equation}
    \overline{F_d^{abc}}
    =
    F_d^{abc}~,
\end{equation}
for all admissible labels. Therefore, all $F$-symbols are real.

The problem of finding a real gauge has thus been reduced to
understanding when the local matrices $\mathsf M_a^{bc}$ are
symmetric.  In Sec. \ref{sec:sufficient} we will describe the categorical conditions under which this condition holds.

\section{Categorical description of gauge equivalences}\label{DefsPrelim}

In this section, we we will review unitary ribbon categories, carefully explaining the relationships between general ribbon categories and $F$-symbols, and between monoidal equivalences and the induced gauge transformations. 
Aside from establishing foundations, basic facts, and conventions, this section will have two main goals.
First, we will show that given a unitary ribbon category $\cC$, there exists a canonical braided equivalence $\cC\to\overline{\cC}^{\rev}$ which acts as a dual functor on objects.
Second, we will show that the gauge transformation associated with a monoidal equivalence $R:\cC\to\overline{\cC}$ can be gauged away provided that it is symmetric, and establish sufficient conditions for that to occur. 
In {Sec.}~\ref{sec:sufficient}, we will these tools to establish sufficient conditions for a gauge with real $F$-symbol.

The outline of this section is as follows.
In {Sec.}~\ref{ssec:URCBasics}, we will review the notion of unitary ribbon category and introduce related notions which will be useful later on.
{Sec.}~\ref{sssec:based} introduces the notion of \textit{based unitary ribbon category}, which is a unitary ribbon category equipped with properties and extra structures that function as a choice of gauge, by which categorical data such as the associator and braiding natural isomorphisms may be translated into the $F$-symbol and $R$-symbol of an anyon theory.
{Sec.}~\ref{sssec:2groupoids} describes translations between anyon theories, based unitary ribbon categories, and unitary ribbon categories, showing that the three notions are equivalent.
In {Sec.}~\ref{ssec:ribbonDefns}, we review the categorical formalization of the reflections and rotations described in Table~\ref{CatST}, as well as comparisons between them in the form of functors which package parts of the structure of a unitary ribbon category, such as a unitary dual functor and dagger functor.
We establish technical results about these functors which will eventually allow us to compute the crucial gauge symbols in {Sec.}~\ref{sec:sufficient}. 

\subsection{Unitary ribbon categories and algebraic theories of anyons}
\label{ssec:URCBasics}

It is well-known \cite{Moore:1991ks,Kitaev:2005hzj} that anyons in a (2+1)D topological phase are described by a unitary modular tensor category (UMTC).
For our purposes, it is helpful to study a slightly more general notion, that of a \textit{unitary ribbon category}, also known as \textit{unitary premodular category} \cite{MR1292673}.
\begin{defn}
    \label{defn:URC}
    A \textbf{unitary ribbon category}, also known as a \textit{unitary premodular category}, is a braided rigid $C^*$-tensor category which is finitely semisimple and has simple tensor unit (fusion rather than multifusion).\footnote{
     Philosophically, the adjective ``ribbon'' refers to the fact that strings in diagrammatic calculus behave like ribbons.
     This in no way implies that the unit object must be simple, so it might be better to not include simplicity of the tensor unit in this definition, and to call unitary ribbon categories with simple tensor unit ``unitary ribbon fusion categories.''
     However, we decline to do this because it would require either adding the word ``fusion'' in many places or generalizing calculations to account for the possibility of multifusion categories, which is straightforward to do, but orthogonal to the content of this paper.
    }
\end{defn}
\begin{remark}
    A ribbon category is a braided spherical fusion category, so one might expect the definition of unitary ribbon category to include the additional data of a choice of spherical pivotal structure.
    However, in the unitary case, rigidity ensures the existence of a canonical unitary dual functor with a spherical pivotal structure, unique up to a unique unitary monoidal natural isomorphism \cite{MR1444286,MR2091457,MR4133163}.
    Therefore, we do not include the choice of dual functor as additional data in Definition~\ref{defn:URC}, nor do we need to explicitly require its existence; see also \cite{Galindo_2014}.
    
    Once endowed with the canonical spherical structure, a unitary ribbon category is a unitary braided fusion category equipped with the additional structure necessary to close off braids into links.
    It is therefore the correct setting in which to define $S$- and $T$-matrices and interpret them as the evaluations of closed links in the graphical calculus.
\end{remark}
\begin{remark}
    The Drinfeld center of a unitary ribbon category is a UMTC, and the braiding on a ribbon category $\cC$ gives a full inclusion $\cC\hookrightarrow Z(\cC)$.
    It is therefore equivalent to say that unitary ribbon categories are the fusion subcategories of UMTCs.
    In other words, they are all obtained from UMTCs by restricting to subsets of the anyon labels closed under fusion. 
\end{remark}

In parts of the physics literature, including \cite{Kitaev:2005hzj,Barkeshli:2014cna}, unitary ribbon categories are packaged differently, as a finite list of nonisomorphic simple objects, a list of fusion multiplicities, and $F$- and $R$-symbols satisfying certain consistency conditions, or as equivalent diagrammatic data, sometimes called an \textit{anyon theory}.
The relationship between anyon theories and more notions such as unitary ribbon category or UMTC has been previously described in places such as \cite[Appendix~E.7]{Kitaev:2005hzj}, and is well-known to experts. 

For our purposes, it will be helpful to review this relationship and to adopt a slightly difference description of the correspondence from \cite{Kitaev:2005hzj}.
To that end, we introduce a decorated form of unitary ribbon category, called \textit{based unitary ribbon category}.
The additional data of a based unitary ribbon category will help to track the effects of the reflections previously discussed in {Sec.}~\ref{folding} on the choice of gauge, which is necessary to distinguish between situations where $F=\overline{F}$ and those where they are merely related by a gauge transformation.
In {Sec.}~\ref{sssec:2groupoids}, we will formalize a sense in which unitary ribbon categories, based unitary ribbon categories, and anyon theories are all equivalent notions by describing translations between them.

\subsubsection{Based unitary ribbon categories}
\label{sssec:based}
Next, we introduce a special class of unitary ribbon categories, designed to walk the line between strictness and skeletality, while excluding pathologies present in general unitary ribbon categories that obscure the relationship between gauge transformations of $F$-symbols and braided monoidal equivalences.
\begin{defn}
    \label{defn:standard}
    A \textbf{standard} set of simple objects for a unitary ribbon category $\cC$ is a set $S\subseteq\Irr(\cC)$ containing one simple object from each isomorphism class, such that
    \begin{enumerate}
        \item[(S1)] For each word $w=\prod_{j=1}^\ell s_j$ in the free monoid $S^*$ on $S$ (with unit $1\in S$ isomorphic to the tensor unit), the object $\ev(w)\cong\otimes_{j=1}^{\ell}s_j$, is independent of the bracketing of the tensor product
        \item[(S2)] For distinct words $v\neq w\in S^*$, the objects $\ev(v)$ and $\ev(w)$ are also distinct.
    \end{enumerate}
    A \textbf{standard} unitary ribbon category is a unitary ribbon category together with a choice of standard set of simple objects.
\end{defn}
(S1) requires that $\cC$ be strictly unital with the chosen tensor unit, and that the tensor product be strictly monoidal at the level of objects which are tensor-generated by $S$. 
(S2) excludes the possibility that, for $x,y,z,w\in S$ with $(x,y)\neq (z,w)$, we might have $x\otimes y=z\otimes w$ by coincidence.\footnote{
For example, a skeletal version of a Tambara-Yamigami category would violate (S2), since the non-Abelian simple object absorbs all the others.
}

Given a unitary ribbon category $\cC$, it is easy to find an equivalent $\cC'$ which satisfies (S1).
Indeed, one can take $\cC'$ to be the image of the embedding $\cC\hookrightarrow\End(\cC)$ given by $x\mapsto x\otimes\cdot$ (which is a usual strictification), after replacing $1\otimes\cdot$ with the isomorphic functor $\Id_{\cC}$ to ensure strict unitality.
To satisfy (S2), it may be necessary to add new objects.
We outline the details in the following construction.
\begin{const}
    \label{const:standardURC}
    Let $\cC$ be a unitary ribbon category.
    We will now construct a standard unitary ribbon category $(\widetilde\cC,S)$ where the underlying unitary ribbon category $\widetilde\cC$ is equivalent to $\cC$.
    
    As pointed out above, we may always assume that $\cC$ is strictly unital.
    Pick an arbitrary set of simple objects $S\subseteq\Irr(\cC)$ containing one object from each isomorphism class, with $1_\cC\in S$.
    We now define a new braided monoidal $\dag$-category $\cC'$:
    \begin{itemize}
        \item $\Obj(\cC')=S^*$, the free monoid on $S$ with unit $1_{\cC}$.
        \item $\cC'(v\to w):=\cC(\ev(v)\to\ev(w))$, where $\ev(v)$ is the right-bracketed tensor product of $v$.
        \item $v\otimes w:=vw$, the concatenation of words in $S^*$.
        \item The composition of morphisms, operation $(\cdot)^\dag$, tensor product of morphisms, associator, and braiding are chosen so that $\ev$ becomes a braided monoidal equivalence. That is:
        \begin{itemize}
            \item The composition and $\dag$ of morphisms are the same as in $\cC$.
            \item The tensorator $\mu:\ev(v\otimes w)\to\ev(v)\otimes\ev(w)$ is the rebracketing map obtained from the associator of $\cC$, which is unique since $\cC$ satisfies the pentagon equation.
            In particular, if $x,y\in S$, then $\mu_{x,y}=\id_{x\otimes y}$, and if $x,y,z\in S$, then $\mu_{xy,z}:=\alpha_{x,y,z}:\ev((xy)z)=x\otimes(y\otimes z)\to(x\otimes y)\otimes z)=\ev(xy)\otimes \ev(z)$, the associator from $\cC$. 
            \item The associator $\alpha_{v,w,x}:vwx=v\otimes (w\otimes x)\to (v\otimes w)\otimes x=vwx$ is given by $\id_{\ev(vwx)}$.
            This choice makes $\ev$ a monoidal functor, since the commuting square expressing this condition is a face of the associahedron of $\cC$.
            \item The braiding $\beta_{x,y}$ is the same as in $\cC$ for $x,y\in S$.
            In general, the value of the braiding on other words is forced by monoidality of $\ev$ and the hexagon equation of $\cC$.
        \end{itemize}
    \end{itemize}
    Now $\cC'$ is a braided monoidal category, and $\ev:\cC'\to\cC$ is a braided monoidal $\dag$-functor which is the inclusion of a full subcategory.
    
    Finally, we define $\widetilde\cC:=\Kar(\cC')$, the Karoubi completion of $\cC'$.
    The functor $\ev$ uniquely lifts to an equivalence $\ev:\widetilde{\cC}\to\cC$, which is a unitary braided monoidal equivalence.
    The set $S$ is manifestly a set of standard simple objects for $\widetilde{\cC}$.
\end{const}

The benefits of working with standard unitary ribbon categories become clear when considering functors between unitary ribbon categories, such as braided autoequivalences, which are symmetries of the corresponding anyon theory.
One would like to interpret an autoequivalence $\Phi:\cC\to\cC$ as a permutation of simple objects, together with a gauge transformation which preserves the $F$-symbol of $\cC$.
In general, such a $\Phi$ only permutes simple objects up to isomorphism, so to extract a gauge transformation, we also need to pick an arbitrary system of unitary isomorphisms $\Phi(x)\to x'$ for each $x\in S$, where $x'\in S$ is the unique element of $S$ isomorphic to $\Phi(x)$.
To keep track of these details, we introduce the following notion.
\begin{defn}
\label{defn:based}
 A \textbf{based unitary ribbon category} consists of
 \begin{itemize}
     \item A standard unitary ribbon category $(\cC,S_{\cC})$
     \item For each $a$, $b$, and $c\in S_{\cC}$, a choice of ordered orthonormal basis
     \[\left\{\gamma_a^{bc}[i]\right\}_{i=1}^{N_{bc}^a}=:B_a^{bc}\subseteq\cC(a\to bc)~,\]
     for the normalized spherical trace inner product \eqref{eq:tracialIP}.
     That is,
     \[\iprod{\gamma_a^{bc}[j]}{\gamma_a^{bc}[i]}:=\frac{1}{\sqrt{d_ad_bd_c}}\tr(\gamma_a^{bc}[j]^\dag\gamma_a^{bc}[i])=\delta_{i=j}
     \,\text{ and }\,
     \gamma_a^{bc}[j]^\dag\gamma_a^{bc}[i]=\delta_{i=j}\frac{\sqrt{d_bd_c}}{\sqrt{d_a}}\id_a~.\]
 \end{itemize}
\end{defn}

\begin{remark}
    In \S~\ref{sssec:mechanismR}, the calculations include an antiunitary operator $K$, which performs complex conjugation of coefficients with respect to a particular basis.
    The operator $K$ is necessarily basis dependent, though $K\mathsf{M}K=\overline{M}$ is not.
    The bases of trivalent vertex spaces in Definition~\ref{defn:based} will eventually play the role of the choice of basis implicit in the construction of $K$.
\end{remark}

\begin{nota}
    \label{nota:basedDiagrams}
    We can depict the basis $B_{\cC}$ of splitting channels diagrammatically as trivalent vertices, as follows.
    \[\gamma_a^{bc}[j]=\tikzmath{
     \draw[mid<] (0:0) -- (-90:.8);
     \node at (-90:1) {$a$};
     \draw[mid>] (0:0) -- (150:.8);
     \node at (150:1) {$b$};
     \draw[mid>] (0:0) -- (30:.8);
     \node at (30:1) {$c$};
     \node at (90:.3) {$\scriptstyle j$};
    }\]
\end{nota}

In this context, the role of condition (S2) from Definition~\ref{defn:standard} is to ensure that no two of the trivalent vertex spaces $\cC(x\to yz)$ are literally equal by coincidence.
In other words, we avoid choosing a category $\cC$ that secretly includes extraneous relations of the form $\gamma_a^{bc}=U_{a}\gamma_a^{de}$ for some $(b,c)\neq (d,e)$ and system of unitary matrices $(U_{a})_{a\in S_{\cC}}$, as these relations are evil and are not captured by diagrammatic calculus.

It is straightforward to see that a based unitary (braided) fusion category is one which is equipped with precisely the extra data needed to write down a specific $F$-symbol (and $R$-symbol).
As described in Remark~\ref{rem:goodInnerProduct}, the chosen bases $B_a^{bc}$ determine orthonormal bases of the tetravalent homs-spaces $\cC(d\to a(bc))$ and $\cC(d\to (ab)c)$.
Explicitly, if we abbreviate
\[R_d^{abc}[x,i,j]:=(\id_a\otimes\gamma_x^{bc}[j])\gamma_d^{ax}[i],\quad L_d^{abc}[y,k,\ell]:=(\gamma_y^{ab}[\ell]\otimes\id_c)\gamma_d^{yc}[k]~,\]
then $\{R_d^{abc}[x,i,j]\}$ and $\{L_d^{abc}[y,k,\ell]\}$ are the respective bases.  
The $F$-symbol is then given by
\begin{equation}
    \label{eq:FFromBasedF1C}
    F_d^{abc}[(y,k,\ell),(x,i,j)]:=\iprod{L_d^{abc}[y,k,\ell]}{\alpha_{a,b,c}R_d^{abc}[x,i,j]}~.
\end{equation}
One recovers the associator as
\begin{equation}
    \label{eq:alphaFromF}
    \alpha_{a,b,c}=\bigoplus_{d}\sum_{y,k,\ell}\sum_{x,i,j}\sqrt{\frac{d_d}{d_ad_bd_c}}L_d^{abc}[y,k,\ell]R_d^{abc}[x,i,j]^\dag~.
\end{equation}
Similarly, the $R$-symbol is given by
\begin{equation}
    \label{eq:RFromBasedF1C}
    R_x^{y,z}[k,j]:=\iprod{\gamma_x^{z,y}[k]}{\beta_{y,z}\circ\gamma_x^{y,z}[j]}~.
\end{equation}
Thus, given a based unitary ribbon category, we can recover the usual data of an anyon theory expressed in a particular gauge.
The pentagon and hexagon equations for $F$ and $R$ follow immediately from equations \eqref{eq:FFromBasedF1C} and \eqref{eq:RFromBasedF1C} together with the penatagon and hexagon equations for $\alpha$ and $\beta$.

To complete Definition~\ref{defn:based}, we now introduce the corresponding notion of unitary braided monoidal equivalence, which we call standard equivalence.\footnote{We leave the extension of this notion to monoidal functors which are not equivalences to future work, since they will not be needed here.}
\begin{defn}
    \label{defn:standardEquivalence}
    If $(\cC,S_{\cC})$ and $(\cD,S_{\cD})$ are standard unitary ribbon categories, a \textbf{standard equivalence} $\Phi:\cC\to\cD$ is a unitary braided monoidal equivalence with the following properties.
    \begin{enumerate}
        \item[(SF1)] For any $x\in S_{\cC}$, we have $\Phi(x)\in S_{\cD}$
        \item[(SF2)] For $x$ and $y\in S_{\cC}$, we have $\Phi(xy)=\Phi(x)\Phi(y)$
    \end{enumerate}
\end{defn}
If $\Phi$ is a standard equivalence, then we have $\cD(\Phi(x)\to\Phi(yz))=\cD(\sigma(x)\to\sigma(y)\sigma(z))$, where $\sigma:S_{\cC}\to S_{\cD}$ is the bijection induced by $\Phi$.
Therefore, we can define a gauge symbol $U_x^{yz}$ associated with the standard equivalence $\Phi$ with tensorator $(\theta_{x,y})_{x,y\in S_{\cC}}$ by the following formula.
\begin{equation}
    \label{eq:gaugeSymbolFromStdEq}
    U_x^{yz}[k,j]:=\iprod{(\gamma^{\cD})_x^{yz}[k]}{\theta_{y,z}\circ\Phi((\gamma^{\cC})_x^{yz}[j])}~.
\end{equation}
The tensorator satisfies the equation
\begin{equation}
    \alpha^{\cD}_{\Phi(A),\Phi(B),\Phi(C)}\circ(\id_{\Phi(A)}\otimes\theta_{B,C})\circ\theta_{A,BC} = (\theta_{A,B}\id_{\Phi(C)})\circ\theta_{AB,C}\circ\Phi(\alpha^{\cC}_{A,B,C})~.
\end{equation}
Writing this equation in the chosen gauges produces the equation
\begin{equation}
 \label{eq:gaugeTransformationMonoidal}
 \begin{aligned}
    &(F^{\cD})^{\Phi(a)\Phi(b)\Phi(c)}_{\Phi(d)}[(\Phi(e),i,j),(\Phi(f),k,\ell)] = \cdots
    \\ &\cdots 
    \sum_{i',j',k',\ell'}
    U_d^{ec}[i,i']U_e^{ab}[j,j']
    (F^{\cD})^{abc}_{d}[(e,i',j'),(f,k',\ell')]
    (U_d^{af})^\dag[k,k'](U_f^{bc})^\dag[\ell,\ell']~.
 \end{aligned}
\end{equation}
Similarly, that $\Phi$ is braided means that
\begin{equation}
    \beta_{\Phi(A),\Phi(B)}\theta_{A,B}=\theta_{B,A}\beta_{\Phi(B),\Phi(A)}~,
\end{equation}
which becomes the equation
\begin{equation}
 \label{eq:gaugeTransformationBraided}
 R_{\Phi(x)}^{\Phi(y)\Phi(z)}[i,j] = \sum_{i',j'} U_x^{zy}[i,i']R_x^{yz}[i',j'](U_x^{yz})^{\dag}[j,j']~.
\end{equation}
Equations~\eqref{eq:gaugeTransformationMonoidal} and \eqref{eq:gaugeTransformationBraided} are standard conditions on a gauge transformation found in sources such as \cite{Kitaev:2005hzj,BarkeshliBondersonChengWang2019}.

The following construction shows that every unitary braided monoidal equivalence of standard unitary ribbon categories is unitarily monoidally naturally isomorphic to a standard equivalence.
\begin{const}
    \label{const:standardEq}
    Let $(\cC,S_{\cC})$ and $(\cD,S_{\cD})$ be standard unitary ribbon categories, and let $\Phi:\cC\to\cD$ be a unitary braided monoidal equivalence with tensorator $(\theta_{y,z})$.
    We will proceed to define a standard equivalence $\Phi':(\cC,S_{\cC})\to(\cD,S_{\cD})$.
    For every $x\in S_{\cC}$, pick an arbitrary unitary $\iota_x\in\Phi(x)\to\sigma(x)$.
    For every $y$ and $z\in S_{\cC}$, pick an arbitrary unitary $\tau_{y,z}:\Phi(yz)\to\sigma(y)\sigma(z)$.
    
    We define $\Phi'$ on objects by 
    \begin{equation}
     \label{eq:stdFunctorObjs}
     \Phi'(C)=\begin{cases}
      \sigma(x)\text{ if }C=x\in S_{\cC}\\
      \sigma(y)\sigma(z)\text{ if }C=yz\text{ for }y,z\in S_{\cC}
      \Phi(C)\text{ otherwise}
     \end{cases}~.
    \end{equation}
    The condition (S2) from Definition~\ref{defn:standard} ensures that there $\Phi'$ is well-defined on objects.
    
    Now define a system of unitary isomorphisms $\beta_C:\Phi(C)\to\Phi'(C)$ by
    \begin{equation}
        \beta_C:=\begin{cases}
         \iota_x\text{ if }C=x\in S_{\cC}\\
         \tau_{y,z}\text{ if }C=yz\text{ for }y,z\in S_{\cC}\\
         \id_C\text{ otherwise}
        \end{cases}~.
    \end{equation}
    Again, condition (S2) means that $\beta_C$ is well-defined.
    
    For a morphism $f\in\cC(B\to C)$, define $\Phi'(f)=\beta_C\Phi(f)\beta_B^{\dag}$.
    It is immediate that $\Phi'$ is functorial and that $\beta:\Phi\to\Phi'$ is a unitary natural isomorphism.

    Finally, we define the tensorator $\theta'$ of $\Phi'$ by
    \begin{equation}
        \theta'_{y,z}:=(\iota_y\otimes\iota_z)\circ\theta_{y,z}\circ\tau_{y,z}^\dag~.
    \end{equation}
    It is immediate that $\theta'$ is a unitary natural isomorphism, that $\Phi'$ is a unitary braided monoidal functor, and that $\beta$ is a unitary monoidal natural isomorphism $\Phi\to\Phi'$.
    The means that $\Phi'$ is a unitar braided monoidal equivalence, and the choices \eqref{eq:stdFunctorObjs} ensure that $\Phi'$ is a standard equivalence. 
\end{const}
\begin{remark}
    The arbitrary choice of $\iota$ and $\tau$ in Construction~\ref{const:standardEq} shows that the association of a concrete gauge symbol to an equivalence is highly non-canonical, even after making choices of gauge on the source and target in the form of a based unitary ribbon category.
    This is no obstruction if we merely wish to determine when two equivalences are naturally isomorphic.
    However, when we care about the evil notion of equality of functors, we must take greater care.   
\end{remark}

A special case of equations \eqref{eq:gaugeTransformationMonoidal} and \eqref{eq:gaugeTransformationBraided} is that a standard equivalence with trivial associated gauge symbol is an \textit{equality} of $F$- and $R$-symbols.
This is of clear interest to us, because the complex conjugate $\overline{F}$ of the $F$-symbol of $\cC$ is the $F$-symbol of the complex conjugate unitary ribbon category; see {Sec.}~\ref{ssec:ribbonDefns} below.
\begin{remark}
    \label{rem:endIsTaut}
    Given a unitary braided equivalence $\Phi:\cC\to\cD$ of unitary ribbon categories, it is always possible to lift $\Phi$ to a unitary based equivalence.
    By Constructions~\ref{const:standardURC} and \ref{const:standardEq}, one can always assume that $\Phi$, $\cC$, and $\cD$ are standard. 
    One may then lift $\cC$ arbitraily to a based unitary ribbon category, set the gauge symbol associated with $\Phi$ to be $I$, and make $\cD$ based by choosing bases such that \eqref{eq:gaugeSymbolFromStdEq} holds.
    (This will be further explored in {Sec.}~\ref{sssec:2groupoids} below.)
    In other words, an equivalence $\cC\to\cD$ only means that $\cC$ and $\cD$ have the same fusion ring (up to a certain relabeling of simples) and the same $F$-symbol and $R$-symbol in some gauge.

    Autoequivalences $\cC\to\cC$ are a more interesting matter.
    When computing the gauge symbol for an autoequivalence, one uses a single basis for both the source and target $\cC$.
    Nontrivial soft autoequivalences $\cC\to\cC$ \cite{10.1063/1.4895764,Kobayashi:2025ykb} are precisely those which cannot be deformed to induce a trivial gauge symbol when using a single basis of $\cC$, but any gauge symbol could of course be trivialized if we permitted ourselves to use different row and column bases.
    For the same reasons, to show that $\cC$ has a real $F$-symbol, it will be necessary to construct a standard equivalence $\cC\to\overline{\cC}$ and trivialize the gauge symbol while maintaining the same bases for the trivalent vertex spaces of $\cC$ and $\overline{\cC}$.
    If we did not take care to preserve the alignment the bases of trivalent vertex spaces in $\cC$ and $\overline{\cC}$, we could only show that $F$ and $\overline{F}$ are related by a gauge transformation.
    Indeed, we will later find examples of categories where $F$ and $\overline{F}$ are gauge equivalent, but $F$ is not real in any gauge; see Remark~\ref{rem:FGaugeBarF}.
\end{remark}

\subsubsection{Anyon theories and unitary ribbon categories}
\label{sssec:2groupoids}
We now compile the preceding ideas to give a precise sense in which the notion of anyon theory in the sense of works like \cite{Kitaev:2005hzj,BarkeshliBondersonChengWang2019} is equivalent to the notion of unitary ribbon category (or non-degenerate unitary ribbon category), using based unitary ribbon categories as a bridge.
The overall equivalence between anyon theories and unitary ribbon categories is extremely well-understood by experts, and the details can already be found in \cite{Kitaev:2005hzj,BarkeshliBondersonChengWang2019,Moore:1988qv}.
We include this story to further illustrate how the notion of based unitary ribbon category relates to both, in preparation for the arguments of {Sec.}~\ref{sec:sufficient}.

\begin{defn}
    \label{defn:anyonTheory}
    An \textbf{anyon theory} $\mathfrak{C}$ is a commutative unital $\mathbb{Z}_+$ based ring $K_0(\mathfrak{C})$ \cite{EtingofNikshychOstrik2010,MR1976459} with distinguished basis $A_{\mathfrak{C}}$ and $1_{K_0(\mathfrak{C})}\in A_{\mathfrak{C}}$, together with an involution $(\cdot)^\vee$ and unitary $F$-symbol and $R$-symbols.
    We define the fusion coefficients $N_x^{yz}$ by
    \begin{equation}
     yz=\sum_{x\in A_{\mathfrak{C}}}N_x^{yz}x~,
    \end{equation}
    and require that $N_1^{xy}=\delta_{y=x^\vee}$.
    We also require that the $F$-symbol satisfies the pentagon equation, and that the $F$- and $R$-symbols satisfy the hexagon equation. 
\end{defn}
In many sources, an anyon theory would carry an additional assumption which ensures non-degeneracy of the $S$-matrix.
By not doing so, we allow anyon theories that describe a proper subset of anyon types which is closed under fusion, just as unitary ribbon categories arise as full fusion subcategories of unitary modular tensor categories.

The following two definitions, which mirror those of \cite[{Sec.}~III]{BarkeshliBondersonChengWang2019}, capture unitary braided monoidal equivalences and unitary monoidal natural transformations from the perspective of anyon theory.   
\begin{defn}
    A \textbf{gauge transformation} $U$ between anyon theories $\mathfrak{C}$ and $\mathfrak{D}$ is a bijection $\sigma_U:A_{\mathfrak{C}}\to A_{\mathfrak{D}}$ together with unitary matrices $U_{x}^{yz}$ for each $x,y,z\in A_{\mathfrak{C}}$ satisfying equations \eqref{eq:gaugeTransformationMonoidal} and \eqref{eq:gaugeTransformationBraided}.
\end{defn}
The composition of gauge transformations $V$ and $U$ is a gauge transformation given by
\begin{equation}
    \label{eq:ATGTComp}
    \sigma_{V\circ U}=\sigma_V\circ\sigma_U,\qquad
    [V\circ U]_x^{yz}=V_{\sigma_U(x)}^{\sigma_U(y)\sigma_U(z)}U_x^{yz}~.
\end{equation}
\begin{defn}
    \label{defn:ATNat}
    For gauge transformations $U,V:\mathfrak{C}\to\mathfrak{D}$, a \textbf{natural transformation} $\beta:U\to V$ is the equality $\sigma_U=\sigma_V$, together with a collection of phases $(\beta_x)_{x\in A_{\mathfrak{C}}}$ such that, for all $x$, $y$, and $z\in A_{\mathfrak{C}}$, 
    \begin{equation}
        \label{eq:ATNat}
        V_x^{yz}=\frac{\beta_y\beta_z}{\beta_x}U_x^{yz}~.
    \end{equation}
\end{defn}
Equation~\ref{eq:ATNat} appears as \cite[Eq.~(91)]{BarkeshliBondersonChengWang2019}\cite{Kitaev:2005hzj}.
\begin{defn}
    \label{defn:ATDag}
    For a unitary natural transformation $\beta:U\to V$ in the sense of Definition~\ref{defn:ATNat}, the \textbf{adjoint} natural transformation $\beta^\dag:V\to U$ is given by
    \begin{equation}
        (\beta^\dag)_x:=\overline{\beta_x}~.
    \end{equation}
\end{defn}
The vertical composition of natural transformations $U\xrightarrow{\beta}V\xrightarrow{\gamma}W$
is given by
\begin{equation}
    \label{eq:ATVerticalComp}
    [\gamma\circ\beta]_x=\gamma_x\beta_x~.
\end{equation}
The horizontal composition of natural transformations $\beta:U_1\to U_2$ and $\gamma:V_1\to V_2$ is given by
\begin{equation}
    \label{eq:ATHorizontalComp}
    [\gamma\cdot\beta]_x:=\gamma_{\sigma_U(x)}\beta_x~.
\end{equation}
Note that we may write $\sigma_U$ in equation~\eqref{eq:ATHorizontalComp} since the existence of $\beta$ implies that $\sigma_{U_1}=\sigma_{U_2}$.

\begin{defn}
    A \textbf{$\dag$-$2$-groupoid} is a $2$-category in which all $2$-morphisms are invertible, all $\Hom$ $1$-categories are $\dag$-categories, and all $1$-morphisms are invertible up to unitary $2$-morphism.
\end{defn}
\begin{defn}
    There is a $\dag$-$2$-groupoid $\AT$ whose objects are anyon theories, $1$-morphisms are gauge transformations, and $2$-morphisms are natural transformations in the sense of Definition~\ref{defn:ATNat}.
\end{defn}
It is straightforward to check that the compositions of gauge transformations and natural transformations from equations \eqref{eq:ATGTComp}-\eqref{eq:ATHorizontalComp} and the operation $\dag$ from Definition~\ref{defn:ATDag} assemble to form a $\dag$-$2$-category.

\begin{defn}
  There is a $\dag$-$2$-groupoid $\URC^*$ with the following description:
\begin{itemize}
    \item Objects are unitary ribbon categories
    \item $1$-morphisms are braided monoidal $\dag$-equivalences
    \item $2$-morphisms are unitary \textit{monoidal} natural transformations
\end{itemize}
\end{defn}
To carefully construct $\URC^*$, one first constructs a $\dag$-$2$-category $\URC$ including all braided monoidal $\dag$-functors and all natural transformations, and then restricts to $\URC^*$, which is essentially the unitary part of the core.

By similar means, one can construct a $2$-groupoid of based unitary ribbon categories $\URC^\based$.
\begin{defn}
    There is a $\dag$-$2$-groupoid $\URC^\based$ with the following description
    \begin{itemize}
        \item Objects are based unitary ribbon categories
        \item $1$-morphisms are standard equivalences
        \item $2$-morphisms are monoidal natural unitaries (with no extra data or conditions)
    \end{itemize}
\end{defn}

We will now show that these three $2$-groupoids are equivalent by describing equivalences from $\URC^{\based}$ to the other two.
\begin{nota}
    We denote by $\cU:\URC^\based\to\URC^*$ the forgetful functor, which forgets the additional data and properties that objects, $1$-morphisms, and $2$-morphisms are based and remembers only the underlying ribbon category, monoidal functor, or natural transformation respectively.
\end{nota}

\begin{defn}
    \label{defn:ex}
    We define a $\dag$-$2$-functor $\Ex:\URC^\based\to\AT$ as follows:
    \begin{itemize}
        \item Given a based unitary ribbon category $(\cC,S,(\gamma))$, we may extract an anyon theory $\Ex(\cC)$ with $A_{\Ex(\cC)}:=S$, fusion multiplicities $N_x^{yz}=\dim(\cC(x\to yz))$, and $F$- and $R$-symbols determined by equations \eqref{eq:FFromBasedF1C} and \eqref{eq:RFromBasedF1C}.
        \item Given a standard equivalence $(\Phi,\phi):\cC\to\cD$, we extract a gauge symbol $\Ex(\Phi)$ by \eqref{eq:gaugeSymbolFromStdEq}.
        \item Given a natural unitary $\eta:(\Phi,\phi)\to(\Psi,\psi)$, since both $\Phi$ and $\Psi$ are standard, we know that for $x\in S_{\cC}$, $\Phi(x)=\sigma(x)=\Psi(x)$.
        We therefore define phases $(\beta_x)_{x\in A_{\Ex(\cC)}}$ by
        \begin{equation}
            \beta_x\id_{\sigma(x)}:=\eta_x~.
        \end{equation}
        That these phases satisfy the correct equation follows from monoidality of $\eta$.
    \end{itemize}
\end{defn}
Checking that $\Ex$ is really a $2$-functor is a standard exercise in translating between coherence equations for unitary ribbon categories, evaluated at the chosen simples $S_{\cC}$, and the coherence equations for anyon theories.
\begin{prop}
    \label{prop:URCBased=AT}
    The $\dag$-$2$-functor $\Ex:\URC^{\based}\to\AT$ is an equivalence of $2$-groupoids.
\end{prop}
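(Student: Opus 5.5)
To prove that $\Ex:\URC^{\based}\to\AT$ is an equivalence of $2$-groupoids, I will use the standard criterion for $2$-functors between $2$-groupoids. It suffices to show three things: $\Ex$ is essentially surjective on objects; for each pair of based categories $\cC,\cD$, the induced functor on hom-categories $\URC^{\based}(\cC,\cD)\to\AT(\Ex\cC,\Ex\cD)$ is essentially surjective on $1$-morphisms; and that same functor is fully faithful on $2$-morphisms. The third property is the easiest, so I will treat it first. Since $\cD$ is standard and both $\Phi,\Psi$ are standard equivalences, a natural unitary $\eta:\Phi\to\Psi$ is determined by its components $\eta_x$ at $x\in S_\cC$, by semisimplicity. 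Each such component is a scalar $\beta_x$ times $\id_{\sigma(x)}$. Monoidality of $\eta$, evaluated on the basis vectors $\gamma_x^{yz}[j]$ and paired using \eqref{eq:gaugeSymbolFromStdEq}, is exactly the condition \eqref{eq:ATNat}. Conversely, any family of phases satisfying \eqref{eq:ATNat} extends by semisimplicity to a monoidal natural unitary. Hence $\Ex$ is bijective on $2$-morphisms.

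For essential surjectivity on objects, I start from an anyon theory $\mathfrak C$ and build a based unitary ribbon category by the usual skeletal reconstruction. Take one simple object for each $x\in A_{\mathfrak C}$, and take $\Hom$-spaces spanned by formal trivalent trees with orthonormal vertices. Define the associator by \eqref{eq:alphaFromF} and the braiding from $R$. The pentagon and hexagon equations then give a braided monoidal category. Its $C^*$ structure comes from declaring the formal vertex bases orthonormal; unitarity of $F$ and $R$ makes $\alpha,\beta$ unitary. Rigidity comes from $N_1^{xy}=\delta_{y=x^\vee}$. I then apply Construction~\ref{const:standardURC} to obtain a standard category. Because the construction only alters non-simple words, the chosen vertex bases carry over, and we get a based category with $\Ex$ of it equal to $\mathfrak C$ on the nose.

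The main obstacle is essential surjectivity on $1$-morphisms. Given a gauge transformation $(\sigma,U)$ from $\Ex\cC$ to $\Ex\cD$, I must build a standard equivalence $\Phi$ with $\Ex(\Phi)=U$. I would define $\Phi$ on $S_\cC$ by $\sigma$ and on words by concatenation, so that (SF1) and (SF2) hold. On trivalent vertices I set $\theta_{y,z}\Phi(\gamma_x^{yz}[j]):=\sum_k U_x^{yz}[k,j]\gamma_{\sigma x}^{\sigma y\sigma z}[k]$. I then extend to all morphisms by choosing a $\theta$ on longer words through iterated rebracketing, using that every morphism between words decomposes into fusion trees. Functoriality and $\dag$-compatibility follow from unitarity of $U$. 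Monoidality and braiding of $\Phi$ reduce, on the tree bases, precisely to \eqref{eq:gaugeTransformationMonoidal} and \eqref{eq:gaugeTransformationBraided}. The care required is in checking that the tensorator on longer words is well defined independently of tree choice, which I expect to follow from the pentagon identity together with \eqref{eq:gaugeTransformationMonoidal} and coherence.

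Finally, compatibility of $\Ex$ with compositions \eqref{eq:ATGTComp} and \eqref{eq:ATVerticalComp}–\eqref{eq:ATHorizontalComp} and with $\dag$ is a direct computation from \eqref{eq:gaugeSymbolFromStdEq}. Combining the three properties shows that $\Ex$ is an equivalence.
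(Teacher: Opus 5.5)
Your proposal is correct and follows the paper's proof sketch essentially step for step: reconstruct a category from the anyon data and standardize it via Construction~\ref{const:standardURC}; build a standard equivalence by letting $\sigma_U$ fix the images of words and $U_x^{yz}$ fix the images of trivalent basis vectors; and identify monoidal natural unitaries with families of phases satisfying \eqref{eq:ATNat}. The paper makes one step explicit that you leave implicit, namely extending $\Phi$ from the full subcategory on $S_{\cC}^*$ to the remaining (non-word) objects of the based category, using that every object is unitarily isomorphic to a subobject of a word; it also simply takes the tensorator to be trivial on $S_{\cC}^*$ rather than assembling one from rebracketings.
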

\begin{proof}[Proof sketch]
    To see that $\Ex$ is essentially surjective on objects, we observe that the data of an anyon theory $\mathfrak{C}$ is sufficient to define a tensor product and braiding on the $\dag$-category of $A_{\mathfrak{C}}$-graded vectorspaces, yielding a braided monoidal $\dag$-category $\cC$.
    Indeed, this is some of the content of \cite[Appenix~E.7]{Kitaev:2005hzj}.
    The involution condition of Definition~\ref{defn:anyonTheory} means that $\cC$ is rigid, which, together with unitarity, will make $\cC$ a unitary ribbon category when equipped with a braided balanced unitary dual functor and the associated canonical spherical structure.
    In general, $\cC$ will not be standard, but it may be standardized by means of Construction~\ref{const:standardURC}.
    
    Similarly, to see that $\Ex$ is an equivalence on $\Hom$ $1$-categories, we first see that $\Ex$ is essentially surjective on each.
    Given unitary based ribbon categories $\cC$ and $\cD$ and a gauge transformation $U:\Ex(\cC)\to\Ex(\cD)$, we need to define a corresponding standard equivalent $\Phi:\cC\to\cD$ such that $\Ex(\Phi)=U$.
    The permutation $\sigma_U$ determines the images of objects in $S_{\cC}^*$ under $\Phi$, and the gauge symbols $U_x^{yz}$ determine images of trivalent basis vectors for $\cC$ so that $\Phi$ becomes monoidal with trivial tensorator for objects in $S_{\cC}^*$.
    Since every object in $\cC$ is unitarily isomorphic to a subobject of some object in $S_{\cC}^*$, there is a unique extension of $\Phi$ to the objects and $\Hom$-spaces not in $S_{\cC}^*$ up to unique unitary monoidal natural isomorphism.
    
    Finally, a natural isomorphism of standard equivalences is entirely determined by the components at simple objects, which are the data given by a natural isomorphism in the sense of Definition~\ref{defn:ATNat}, so $\Ex$ is faithful on $2$-morphism spaces.
    To see that $\Ex$ is full, suppose that $\Phi$ and $\Psi$ are standard equivalences and $(\beta_x)$ is a natural isomorphism $\Ex(\Phi)\to\Ex(\Psi)$ with lift $\eta:\Phi\to\Psi$.
    Equation~\ref{eq:ATNat} precisely ensures that $\eta$ is indeed a monoidal natural isomorphism, so that $\eta$ really is a $2$-morphism of $\URC^{\based}$.
\end{proof}

\begin{prop}
    \label{prop:URCBased=URC}
    The forgetful $\dag$-$2$-functor $\cU:\URC^{\based}\to\URC$ is an equivalence of $2$-groupoids.
\end{prop}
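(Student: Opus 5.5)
The plan is to show that $\cU$ is essentially surjective on objects, essentially surjective on $1$-morphisms between fixed objects, and fully faithful on $2$-morphisms. These three properties together characterise an equivalence of $2$-groupoids, so it suffices to check them one at a time.

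\textbf{Objects.} Take an arbitrary unitary ribbon category $\cC$. Construction~\ref{const:standardURC} gives a standard unitary ribbon category $(\widetilde\cC,S)$ together with a unitary braided equivalence $\ev:\widetilde\cC\to\cC$. Then choose, for each triple $a,b,c\in S$, any orthonormal basis of $\widetilde\cC(a\to bc)$ for the normalised trace inner product of Definition~\ref{defn:based}; this exists by Gram--Schmidt. The result is a based unitary ribbon category whose image under $\cU$ is equivalent to $\cC$ in $\URC^*$.

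\textbf{$1$-morphisms.} Let $\cC,\cD$ be based and let $\Phi:\cU\cC\to\cU\cD$ be a unitary braided monoidal equivalence. Construction~\ref{const:standardEq} produces a standard equivalence $\Phi'$ and a unitary monoidal natural isomorphism $\beta:\Phi\to\Phi'$. Standard equivalences impose no condition on the chosen bases; the bases only enter through the gauge symbol \eqref{eq:gaugeSymbolFromStdEq}, which is extracted from $\Phi'$ rather than imposed on it. So $\Phi'$ is already a $1$-morphism of $\URC^{\based}$, and $\cU(\Phi')\cong\Phi$. Hence $\cU$ is essentially surjective on each $\Hom$ $1$-category.

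\textbf{$2$-morphisms.} By definition, the $2$-morphisms of $\URC^{\based}$ are exactly the unitary monoidal natural transformations, with no extra data or conditions. Therefore $\cU$ is the identity on $2$-morphism sets between standard equivalences, hence bijective, and it trivially respects $\dag$ and both compositions.

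\textbf{Expected obstacle.} The main subtlety is in the $1$-morphism step. One has to check that $\Phi'$ really is braided monoidal with the modified tensorator $\theta'$. One also has to check that $\Phi'$ is well defined on the non-simple objects lying outside $S^*$ in the Karoubi completion. This is where condition (S2) of Definition~\ref{defn:standard} is needed: it guarantees that no word $yz$ coincides with another word or with a simple object, so the case-wise definition of $\Phi'$ and $\beta$ is unambiguous. Naturality of $\beta$ on general objects then follows by conjugation, and monoidality of $\beta$ follows from the definition of $\theta'$.
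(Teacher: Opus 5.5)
Your proposal is correct and follows the paper's proof essentially verbatim: essential surjectivity on objects comes from Construction~\ref{const:standardURC} plus an arbitrary choice of orthonormal trivalent bases, essential surjectivity on each hom-category comes from Construction~\ref{const:standardEq} (standard equivalences place no condition on the bases), and full faithfulness holds because the $2$-morphisms of $\URC^{\based}$ and $\URC^*$ coincide by definition. Your ``obstacle'' paragraph only spells out details the paper calls immediate, with one small caveat: (S2) cannot separate the words $1z=z1=z$, so in Construction~\ref{const:standardEq} you should choose $\tau_{1,z}=\tau_{z,1}=\iota_z$ for all $z\in S$ to make $\beta$ unambiguous.
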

\begin{proof}
    By Construction~\ref{const:standardURC}, we know that there is a standard unitary ribbon category equivalent to any unitary ribbon category, and we can always choose some based structure for this standard form.
    Therefore, $\cU$ is essentially surjective.

    Now let $\cC$ and $\cD$ be two based unitary ribbon categories.
    We need to check that $\cU_{\cC\to\cD}:\URC^{\based}(\cC\to\cD)\to\URC^*(\cU(\cC)\to\cU(\cD))$ is an equivalence of $1$-categories.
    Construction~\ref{const:standardEq} verifies that $\cU_{\cC\to\cD}$ is essentially surjective.
    The $2$-morphisms spaces in $\URC^*$ and $\URC^\based$ are the same by definition, so $\cU_{\cC\to\cD}$ is fully faithful on $\Hom$ $1$-categories.  
\end{proof}

\begin{cor}
    \label{cor:URC=AT}
    The $\dag$-$2$-groupoids $\AT$ and $\URC^*$ are equivalent.
\end{cor}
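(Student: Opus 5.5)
The plan is to obtain the equivalence $\AT\simeq\URC^*$ by composing the two equivalences already established, with $\URC^{\based}$ serving as the bridge. By Proposition~\ref{prop:URCBased=AT}, $\Ex:\URC^{\based}\to\AT$ is an equivalence of $\dag$-$2$-groupoids. By Proposition~\ref{prop:URCBased=URC}, the forgetful functor $\cU:\URC^{\based}\to\URC^*$ is also an equivalence. (Its target is the core $\URC^*$: the argument there only uses braided monoidal $\dag$-equivalences and unitary monoidal natural transformations.)

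The main step is to invert one of these equivalences and compose. A $2$-functor between $2$-groupoids that is essentially surjective on objects and an equivalence on all $\Hom$-categories admits a weak inverse, a pseudofunctor that is unique up to equivalence. So I would choose a weak inverse $\cU^{-1}:\URC^*\to\URC^{\based}$ as follows:
\begin{itemize}
\item On objects, send $\cC$ to a standardization via Construction~\ref{const:standardURC}, together with an arbitrary choice of orthonormal trivalent bases.
\item On $1$-morphisms, use Construction~\ref{const:standardEq}.
\end{itemize}
The composite $\Ex\circ\cU^{-1}:\URC^*\to\AT$ is then an equivalence, since composites of equivalences are equivalences. Symmetrically, $\cU\circ\Ex^{-1}$ gives the reverse direction.

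I also need to check that the weak inverse respects the $\dag$-structure, so that the result is an equivalence of $\dag$-$2$-groupoids and not merely of $2$-groupoids. This reduces to checking that the pseudonaturality and modification data of the inverse can be chosen unitary. That holds because Construction~\ref{const:standardEq} produces unitary $\beta$ and the $2$-morphisms of all three $2$-groupoids are unitary.

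I expect the main obstacle to be this coherence bookkeeping for the weak inverse. The arbitrary choices of $\iota_x$ and $\tau_{y,z}$ in Construction~\ref{const:standardEq} mean that $\cU^{-1}$ is only a pseudofunctor, with compositors given by the unique unitary monoidal isomorphisms relating different standardizations. I would handle this by citing the standard fact that a fully faithful, essentially surjective $2$-functor between bicategories is a biequivalence, and noting that every choice involved can be made unitary.
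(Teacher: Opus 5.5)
Your proposal is correct and matches the paper's proof, which consists only of the observation that $\AT$ and $\URC^*$ are each equivalent to $\URC^{\based}$ by Propositions~\ref{prop:URCBased=AT} and~\ref{prop:URCBased=URC}. Your additional steps, namely choosing a weak inverse of $\cU$ via Constructions~\ref{const:standardURC} and~\ref{const:standardEq} and checking that its coherence data can be taken unitary, spell out details the paper leaves implicit.
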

\begin{proof}
    Both are equivalent to $\URC^{\based}$.
\end{proof}

\subsection{Structural involutions of unitary ribbon categories and comparison functors}
\label{ssec:ribbonDefns}
In this subsection, we formalize the involutions on unitary ribbon categories listed in Table~\ref{CatST}.
We then package parts of the data of a unitary ribbon category as canonical braided equivalences between different rotations of $\cC$.\footnote{
Here, ``rotation'' means a composite of two of the reflections from Table~\ref{CatST}.
}
Finally, we will see that there is a unique such comparison between any two rotations of $\cC$, up to canonical monoidal natural unitary.
In particular, this gives us a canonical braided monoidal equivalence $\cC\to\overline{\cC}^{\rev}$, and also canonical braided monoidal equivalences implementing each arrow in \eqref{BraidedEquivs}.

As summarized in Table~\ref{CatST}, there are four natural ways to invert part of the structure of $\cC$, three of which have a natural interpretation as reflecting the graphical calculus of $\cC$, which we will now introduce.

\begin{defn}
 Given a unitary ribbon category $\cC$, the \emph{opposite category} $\cC^{\op}$ has the same objects, but $\cC^{\op}(x\to y):=\cC(y\to x)$, and the composition $\circ^{\op}$ of morphisms is given by $f\circ^{\op}g:=g\circ f$, where $\circ$ is the original composition of $\cC$.
 The tensor products of objects and morphisms, as well as the dagger structure are left unchanged.
 Coherence natural isomorphisms, including the associator, unitors, and braiding, are replaced with their inverses.
\end{defn}

\begin{remark}
 \label{rem:opBraidingChoice}
 The involution $(\cdot)^{\op}$ is a strict (symmetric monoidal) $2$-functor $\URC\to\URC^{2-\op}$, where $(\cdot)^{2-\op}$ means that the composition of $2$-morphisms is reversed. 
 Strictness means, in particular, that $(\cC^{\op})^{\op}=\cC$ for any unitary ribbon category $\cC$.
 A braided monoidal functor $F:\cC\to\cD$ gives a braided monoidal functor $F^{\op}:\cC^{\op}\to\cD^{\op}$, which acts the same on objects and morphisms.
 However, like the coherence natural isomorphisms for $\cC$ and $\cD$, the tensorator of $F^{\op}$ is inverse to the tensorator of $F$.
\end{remark}

\begin{remark}
    We choose the braiding $\beta^{\op}$ of $\cC^{\op}$ to be
    \[\beta^{\op}_{x,y}:=\beta_{x,y}^{-1},\]
    rather than $\beta_{y,x}$, so that $(\cdot)^{\op}$ is a reflection in graphical calculus. This is because, in both $\beta_{x,y}$ and $\beta_{x,y}^{-1}$, the same strand (the $x$ strand, in our conventions) is on top of the crossing.
    Some works instead adopt the convention $\beta^{\op}_{x,y}=\beta_{y,x}$.
    This has the effect of making $\op$ into a rotation of the graphical calculus, and exchanges $\cC^{\op}$ and $\cC^{\op,\rev}$ compared to our notation.
\end{remark}

\begin{defn}
    Given a unitary ribbon category $\cC$, the \emph{monoidal opposite} $\cC^{\mop}$ has the same underlying $\dag$-category $\cC$, but a different monoidal product $\otimes^{\mop}$ given by 
    \[x\otimes^{\mop}y:=y\otimes x~.\]
     The associator of $\otimes^{\mop}$ is $\alpha^\dag$, the inverse of the associator of $\alpha$, since reversing the monoidal product exchanges left and right bracketings.
     Left and right unitors are exchanged.
     We equip $\cC^{\mop}$ with the braiding $\beta^{\mop}_{x,y}:=\beta_{x,y}^{-1}$.
\end{defn}

\begin{remark}
    $(\cdot)^{\mop}$ is a strict (symmetric monoidal) $2$-functor $\URC\to\URC$.
    Given a braided monoidal functor $F:\cC\to\cD$ with tensorator $\mu$, the tensorator of $F^{\mop}$ is
    $\mu^{\mop}_{x,y}:=\mu_{y,x}$.
\end{remark}

\begin{remark}
    \label{rem:mopBraidingChoice}
    Similar to Remark~\ref{rem:opBraidingChoice}, some works instead adopt the convention $\beta^{\mop}_{x,y}:=\beta_{y,x}$.
    Our convention preserves the fact that the $x$ string is on top of the crossing, making $\mop$ into a reflection.
    Adopting the alternative convention would have the effect of exchanging the roles of $(\cdot)^{\mop}$ and $(\cdot)^{\mop,\rev}$.
\end{remark}

\begin{defn}
     Given a unitary ribbon category $\cC=(\cC,\beta)$, the \textbf{braided reverse} $\cC^{\rev}=(\cC,\beta^{\rev})$ is the same underlying unitary fusion category, but with the reversed braiding
     \[\beta^{\rev}_{x,y}:=\beta_{y,x}^\dag~.\]
\end{defn}

\begin{remark}
     As with $(\cdot)^{\mop}$, $(\cdot)^{\rev}$ is a strict involution on $\URC$.
     In particular, the braided reverse $F^{\rev}$ of a functor $F:\cC\to\cD$ is precisely the same monoidal $\dag$-functor, since being a \textit{braided} monoidal functor is only a property, not an additional structure.
     The effect of reversing the braiding on graphical calculus is to swap overcrossings and undercrossings in all string diagrams, which is the same as reflecting across a plane parallel to the viewing projection.
\end{remark}

\begin{defn}
    Given a unitary ribbon category $\cC$, the \textbf{complex conjugate} category $\overline{\cC}$ is the same $\mathbb{R}$-linear $\dag$ ribbon category, but with the conjugate scalar multiplication.
    That is, for $f\in\overline{\cC}(x\to y):=\cC(x\to y)$,
    \[i\cdot_{\overline{\cC}}f:=-i\cdot_{\cC}f~.\]
\end{defn}

\begin{remark}
    Taking complex conjugate is a strict (symmetric monoidal) $2$-functor $\URC\to\overline{\URC}$, the complex conjugate of $\URC$.
    This is because complex conjugating every component conjugates every natural isomorphism.
\end{remark}

\begin{remark}
    Although the associator and braiding coherators of $\overline{\cC}$ are the same as those from $\cC$, conjugating the scalar multiplication has the effect of conjugating their matrix entries in any basis.
    Thus, the $F$-symbols and $R$-symbols of $\overline{\cC}$ are the entry-wise complex conjugates of those from $\cC$.
    In particular, if $\cC$ and $\overline{\cC}$ have the same $F$-symbol in some common gauge, then that $F$-symbol is real.
\end{remark}

\begin{remark}
    The four involutions $(\cdot)^{\op}$, $(\cdot)^{\mop}$, $(\cdot)^{\rev}$,and $\overline{(\cdot)}$, are both strict and strictly commuting, meaning that for $I,J\in\{(\cdot)^{\op},(\cdot)^{\mop},(\cdot)^{\rev},\overline{(\cdot)}\}$ and $\cC\in\URC$,
    \[IJ(\cC)=JI(\cC)~,\]
    as unitary ribbon categories.
    We may therefore reorder the several involutions without ambiguity.
\end{remark}

\begin{remark}
 Each of the above involutions applies more generally than the setting of a unitary ribbon category.
 For example, $\op$, $\mop$, and $\overline{(\cdot)}$ all make sense for a unitary fusion category without a braiding.
 This is particularly relevant because the statement that $F$ is real in some gauge is only a statement about the monoidal structure and does not involve the braiding.
\end{remark}

Now that we have introduced the above rack of hats which a unitary ribbon category may wear, we introduce several accompanying \textbf{structural functors}, braided monoidal equivalences which compare two of the categories which may be obtained from $\cC$ via structural involutions.

\begin{defn}
    The \textbf{dagger functor} on a unitary ribbon category $\cC$ is the braided monoidal equivalence $(\cdot)^{\dag}:\cC\to\overline{\cC}^{\op}$ which acts as the identity on objects and maps a morphism $f\in\cC(X\to Y)$ to $f^\dag\in\overline{\cC}^{\op}(X\to Y):=\overline{\cC}(Y\to X)$.
\end{defn}

\begin{defn}
    A \textbf{balanced braided unitary right dual functor} on a unitary ribbon category $\cC$ is a braided monoidal equivalence $\pi:\cC\to\cC^{\op,\mop}$ such that for all objects $x\in\cC$, $\pi(x)\cong x^\vee$, the images of morphisms and tensorators for $\pi$ are determined using choices of duality data, as in \cite{MR4133163}, and for every $x\in\cC$,
    \begin{equation}
        \label{eq:balanced}
        \Tr_L(\id_x)=\Tr_R(\id_x)~,
    \end{equation}
    where $\Tr_L$ and $\Tr_R$ are the left and right pivotal traces determined by the canonical unitary pivotal structure.
    Note that to be a braided monoidal equivalence, $\pi$ must further satisfy the condition that, for all objects $x$ and $y\in\cC$,
    \begin{equation}
     \label{eq:braidedDualFunctor}
     \tikzmath{
        \draw[mid>] (.5,0) .. controls ++(0,.7) and ++(0,-.7) .. (0,1) arc (180:90:1);
        \draw[mid>] (2,0) -- (2,1) arc (0:90:1);
        \draw[knot,mid>] (0,0) .. controls ++(0,.7) and ++(0,-.7) .. (.5,1) arc (180:90:.5);
        \draw[mid>] (1.5,0) -- (1.5,1) arc (0:90:.5);
        \node at (-.15,-.2) {$\scriptstyle\pi(x)$};
        \node at (.65,-.2) {$\scriptstyle\pi(y)$};
        \node at (1.5,-.2) {$\scriptstyle x$};
        \node at (2,-.2) {$\scriptstyle y$};
     }
     \,\, 
     =
     \tikzmath{
        \draw[mid>] (2,0) .. controls ++(0,.7) and ++(0,-.7) .. (1.5,1) arc (0:90:.5);
        \draw[mid>] (.5,0) -- (.5,1) arc (180:90:.5);
        \draw[knot,mid>] (1.5,0) .. controls ++(0,.7) and ++(0,-.7) .. (2,1) arc (0:90:1);
        \draw[mid>] (0,0) -- (0,1) arc (180:90:1);
        \node at (-.15,-.2) {$\scriptstyle\pi(x)$};
        \node at (.65,-.2) {$\scriptstyle\pi(y)$};
        \node at (1.5,-.2) {$\scriptstyle x$};
        \node at (2,-.2) {$\scriptstyle y$};
     }
    \end{equation}
    where the four caps are part of the duality data associated with $\pi(x)$ and $\pi(y)$ as right duals.
\end{defn}
\begin{remark}
    The condition that $\pi$ be \emph{balanced} means that $\pi$ is $1$-balanced in the sense of \cite[{Sec.}~3.4]{MR4133163}; it is equivalent to asking that the canonical unitary pivotal structure associated with $\pi$ is the canonical spherical structure.
\end{remark}
\begin{nota}
    We denote the choice of duality data associated with a right dual functor $\pi$ by $e_x:\pi(x)x\to1$ and $c_x:1\to x\pi(x)$.
\end{nota}

\begin{defn}
    The \textbf{braiding functor} on a unitary ribbon category $\cC$ is the braided monoidal equivalence $B:\cC\to\cC^{\mop,\rev}$ which has underlying functor $B=\Id_{\cC}$ and the braiding $\beta_{x,y}:xy=B(xy)\to B(x)\otimes^{\mop}B(y)=yx$ as tensorator.
    The \textbf{reverse braiding functor} $\refB:\cC\to\cC^{\mop,\rev}$ instead has a tensorator $\mu$ given by the reversed braiding: $\mu_{x,y}=\beta_{y,x}^{-1}$.
\end{defn}

\begin{remark}
    The functors $B$ and $\refB$ are strict inverses, in the sense that
    \[\refB^{\mop,\rev} B=B^{\mop\rev}\refB=\Id_{\cC}~.\]
    We thus have $\mathbb{Z}$-many braided autoequivalences of $\cC$ from the braiding, namely $(B^{2n})_{n\in\mathbb{Z}}$, where $B^{-2n}:=\refB^{2n}$.
    There is a canonical equivalence $B^2\cong\Id_{\cC}$ of braided autoequivalences following from balancing \cite[Ch.~2]{MR1797619}.
\end{remark}

Each of these three structural equivalences above square to the identity, up to canonical unitary isomorphism.
That is, there is an equality
\begin{equation}
    \label{eq:doubleDagContracts}
    \overline{(\cdot)^\dag}^{\op}\circ(\cdot)^\dag=\Id_{\cC}~,
\end{equation}
a canonical braided monoidal equivalence
\begin{equation}
    \label{eq:doubleBraidContracts}
    \beta^2:B^{\mop,\rev}\circ B\to\Id_{\cC}~,
\end{equation}
given by the double braiding, and a canonical braided monoidal equivalence
\begin{equation}
    \label{eq:pivotalStructure}
    \rho:\pi^{\op,\mop}\circ\pi\to\Id_{\cC}~,
\end{equation}
namely the canonical unitary spherical structure.
Thus, although we can compose each structural equivalence with itself many times, different odd powers of $(\cdot)^\dag$, $B$, or $\pi$ only yield a single equivalence up to canonical unitary isomorphism.

\begin{remark}
    \label{rem:goodInnerProduct}
    Combining the $\dag$-category structure on $\cC$, a choice of unitary dual functor, and the canonical compatible spherical structure \cite{MR4133163}, we obtain a canonoical (unnormalized) spherical trace $\tr$ on endomorphism spaces of objects of $\cC$.
    From this, we obtain an inner product on all morphism spaces between tensor products of simple objects of $\cC$, making them into Hilbert spaces.
    For morphisms $f$ and $g\in\cC\left(\prod_{j=1}^nx_j,\prod_{k=1}^my_k\right)$, we define
    \begin{equation}
        \label{eq:tracialIP}
        \iprod{f}{g}:=\frac{1}{\sqrt{\prod_{j=1}^nd_{x_j}\prod_{k=1}^md_{y_k}}}\tr(f^\dag g)~.
    \end{equation}
    The benefits of this choice of inner product are several.
    First, it is rotationally invariant, as both the spherical trace and the scalar are rotationally invariant by construction.
    Second, the choice of normalization means that the composition of morphisms from orthonormal bases along a single edge (such as when using chosen bases of trivalent hom-spaces to produce bases of tetravalent hom-spaces in order to compute an $F$-symbol) again produces an orthonormal basis.
    Consequently, unitary morphisms are of norm $1$, and so are their rotations.
    This includes identity morphisms and duality data (cup and cap). 
\end{remark}

\begin{remark}
    The involutions $(\cdot)^{\op}$, $(\cdot)^{\mop}$, $(\cdot)^{\rev}$, and $\overline{(\cdot)}$ all make sense for based unitary ribbon categories.
    Standard sets of simple objects are preserved by each involution.
    If $\cC$ is equipped with the bases $B_x^{yz}=\{\gamma_x^{yz}[j]\}$, then $(B_x^{yz})_{x,y,z\in S_{\cC}}$ is also a system of orthonormal bases for $\cC^{\mop}$, $\cC^{\rev}$, and $\overline{\cC}$.
    For $\cC^{\op}$, we instead take $\{\gamma^\dag|\gamma\in B_x^{yz}\}$ as the basis for $\cC^{\op}(x\to yz)=\cC(yz\to x)$.

    The dagger functor then becomes a standard equivalence with trivial gauge symbol.
    Not every balanced braided unitary dual functor is a standard equivalence, but one may always choose a standard banaced braided unitary dual functor, which we henceforth call a \textit{standard dual functor} for short.
    
    The braiding functor $B:\cC\to\cC^{\mop,\rev}$ is not be a standard equivalence, because it is not monoidal on distinguished simple objects (as $\Id_{\cC}$ is not antimonoidal).
    However, it is straightforward to see that there is a standard equivalence $B'$ unitarily equivalent to $B$ where, for $x,y,z\in S_{\cC}$ and $\gamma\in \cC(x\to yz)$, we have
    \begin{equation}
        B'(yz)=zy,\qquad\text{ and }\quad B(\gamma)=\beta_{y,z}\circ\gamma~.
    \end{equation}
    The tensorator $B'_{y,z}$ is then necessarily trivial, so that the associated gauge symbol is the $R$-matrix $R_x^{yz}$.
    From the based perspective, this functor is behind the operations discussed in {Sec.}~\ref{sssec:diagrammaticBraidingMatrix}. 
\end{remark}

\subsubsection{The pivotal structure associated with a standard dual functor}
We now recall some important facts about balanced unitary dual functors and the canonical spherical structure, as expressed in terms of based unitary ribbon categories and anyon theories.
Given a standard unitary ribbon category $(\cC,S)$, we may take the balanced dual functor $\pi:\cC\to\cC^{\op,\mop}$ to be a standard equivalence.
Then $\pi(x)\in S$ is uniquely determined by $\pi(x)\cong x^\vee$, and $\pi(\pi(x))=x$ for all $x\in S$, so that $\rho_x:x\to x$ is an endomorphism of the simple $x$.
The following result tells us the possible values of $\rho$.
\begin{prop}
    \label{prop:standardBalancedDualFunctor}
    Let $(\cC,S)$ be a unitary ribbon category.
    Let $(\varkappa_x)_{x\in S}$ be a function $S\to U(1)$ so that, if $x$ is self-dual, then $\varkappa_x$ is the second Frobenius-Schur indicator of $x$.
    Then there is a standard balanced braided unitary dual functor $\pi:\cC\to\cC^{\op,\mop}$ so that the canonical spherical structure is given by
    \begin{equation}
    \label{eq:defnVarkappa}
     \rho_x=\varkappa_x\id_x~.
    \end{equation}
    Moreover, every standard balanced unitary dual functor is of this form.
\end{prop}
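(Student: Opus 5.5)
The plan is to start from one standard balanced dual functor, compute its $\rho$, and then use two independent phase freedoms in the data of a standard dual functor to move $\rho_x$ to any prescribed $\varkappa_x$ on non-self-dual simples. On self-dual simples I will show $\rho_x$ is pinned to the Frobenius--Schur indicator.

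\textbf{Step 1: a reference functor.} By \cite{MR4133163} the unitary ribbon category $\cC$ has a canonical balanced unitary dual functor, braided for the given braiding. Construction~\ref{const:standardEq} replaces it by a standard equivalence $\pi_0:\cC\to\cC^{\op,\mop}$ with $\pi_0(x)=\bar x\in S$, where $\bar x$ is the element of $S$ isomorphic to $x^\vee$. Call its duality data $(e_x,c_x)_{x\in S}$ and its spherical structure $\rho^0$. Since $\pi_0\pi_0(x)=x$ for $x\in S$ and $x$ is simple, each $\rho^0_x$ is a phase times $\id_x$.

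\textbf{Step 2: two phase freedoms.} I will track how $\rho_x$ and $\rho_{\bar x}$ change under two modifications, expressing $\rho_x$ as a closed diagram built from $e_x$, $c_x$, $e_{\bar x}$, $c_{\bar x}$ and the identifications.
\begin{enumerate}[(a)]
\item Rescale the duality data by $c_x\mapsto\lambda_x c_x$ and $e_x\mapsto\overline{\lambda_x}e_x$ with $|\lambda_x|=1$. This preserves the zigzag identities, unitarity of the dual functor (the adjoints $e_x^\dag,c_x^\dag$ rescale compatibly), and balancing, since left and right traces are unchanged. I expect the effect to be $\rho_x\mapsto(\lambda_x/\lambda_{\bar x})^{\pm1}\rho_x$, so $\rho_x$ and $\rho_{\bar x}$ move by inverse phases.
\item Change the unitary identification $\iota_x:\pi(x)\to\bar x$ in Construction~\ref{const:standardEq} by a phase $u_x$. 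Compensate through the arbitrary $\tau_{y,z}$ so that the result is still a standard, braided, monoidal functor. Because $\rho_x$ involves both $\iota_x$ and $\pi^{\op,\mop}(\iota_{\bar x})$, I expect $\rho_x\mapsto u_xu_{\bar x}\rho_x$, and the same factor for $\rho_{\bar x}$.
\end{enumerate}
Braidedness \eqref{eq:braidedDualFunctor} survives both moves, because the phases attached to each cap occur equally on the two sides. Monoidality of $\rho$ on words $yz$ is absorbed into the tensorator of $\pi^{\op,\mop}\pi$.

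\textbf{Step 3: reaching any target.} For each non-self-dual pair $\{x,\bar x\}$, freedom (a) multiplies $(\rho_x,\rho_{\bar x})$ by $(\lambda,\lambda^{-1})$ and freedom (b) multiplies it by $(\mu,\mu)$. Every pair in $U(1)^2$ arises this way: solve $\lambda\mu=\varkappa_x/\rho^0_x$ and $\mu/\lambda=\varkappa_{\bar x}/\rho^0_{\bar x}$, taking square roots of phases. So any prescribed $(\varkappa_x,\varkappa_{\bar x})$ is attained.

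\textbf{Step 4: self-dual simples and the ``moreover''.} For self-dual $x$, so $\bar x=x$, the two effects cancel. Freedom (a) gives $\lambda/\lambda=1$, and freedom (b) must likewise leave $\rho_x$ invariant. The quantity $\rho_x$ is exactly the scalar by which the pivotal structure identifies the rotated cap $c_x$ with $c_x$ itself, which is the second Frobenius--Schur indicator $\nu_2(x)$; it is gauge invariant. Hence every standard balanced dual functor has $\rho_x=\nu_2(x)$ on self-dual $x$, and on non-self-dual $x$ it gives some function $S\to U(1)$. This proves that every such functor has the stated form, and Step 3 gives existence.

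The main obstacle is the bookkeeping in Step 2. I must verify that freedom (a) acts by a ratio and freedom (b) by a product, with signs in the exponents chosen consistently. I must also check that freedom (b) genuinely cancels on self-dual objects rather than giving $u_x^2$. I would settle this first by writing $\rho_x$ explicitly as the composite $x\to\pi\pi(x)$ built from $e_{\bar x}^\dag$-type and $c_x$-type maps and reading off each phase directly.
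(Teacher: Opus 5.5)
\textbf{Gap: Steps 2(b) and 3.} Changing the identification $\iota_x$ by a phase is not a second freedom. Once $\pi(x)=\bar x$ is fixed, a standard dual functor is determined by its duality data $(e_x,c_x)_{x\in S}$. Replacing $\iota_x$ by $u_x\iota_x$ only rescales $(e_x,c_x)$ by inverse phases, which is your freedom (a) again. Your guess that it acts by the product $u_xu_{\bar x}$ overlooks that $\pi$ reverses arrows, so $\pi(\iota_{\bar x})$ enters $\rho_x$ inverted. The actual effect is the ratio $u_x/u_{\bar x}$ on $\rho_x$ and its inverse on $\rho_{\bar x}$. This is the tension you flagged at the end: the product form Step~3 needs is incompatible with the invariance Step~4 needs, and the ratio is the correct one. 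So the only motion available on a non-self-dual pair is $(\rho_x,\rho_{\bar x})\mapsto(\mu\rho_x,\overline{\mu}\,\rho_{\bar x})$.

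\textbf{No argument can reach an arbitrary pair.} Since $\Hom(1,x\bar x)$ and $\Hom(1,\bar x x)$ are one-dimensional, write $e_{\bar x}^\dag=a\,c_x$ and $e_x^\dag=b\,c_{\bar x}$. Take $\dag$ of the zigzag $(e_x\otimes\id_{\bar x})\circ(\id_{\bar x}\otimes c_x)=\id_{\bar x}$ and compare it with the zigzag $(\id_{\bar x}\otimes e_{\bar x})\circ(c_{\bar x}\otimes\id_{\bar x})=\id_{\bar x}$; this gives $b=\bar a$. Meanwhile \eqref{eq:unitaryPivotalStructureShort} and the zigzags give $\rho_x=a$ and $\rho_{\bar x}=b$. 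Hence $\rho_{\bar x}=\overline{\rho_x}$ for every standard balanced dual functor. For self-dual $x$ the same computation gives $a=\bar a$ and $|a|=1$ by balancing, so $\rho_x=\pm1$, which is a cleaner version of your Step~4.

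\textbf{Consequence for the statement and comparison with the paper.} The existence half can only hold for $\varkappa$ with $\varkappa_{\bar x}=\overline{\varkappa_x}$; for an arbitrary function $S\to U(1)$ it is false. For such $\varkappa$ your freedom (a) alone suffices, and that is exactly the paper's argument. The paper fixes $(e_x,c_x)$ for one member of each non-self-dual pair and takes the duality data of $\pi(x)$ to be $(\varkappa_x c_x^\dag,\varkappa_x^{-1}e_x^\dag)$. Then $\rho_{\pi(x)}=\varkappa_x$ by a zigzag, and automatically $\rho_x=\overline{\varkappa_x}$. Your Step~1, freedom (a), and the ``moreover'' direction are fine and agree with the paper. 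For the self-dual scalar, the paper cites \cite{simon2022straighteningfrobeniusschurindicator} to identify it with the Frobenius--Schur indicator.
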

\begin{proof}
    By \cite[Cor.~3.10]{MR4133163}, we may compute $\rho_x$ with the following equation.
    \begin{equation}
        \label{eq:unitaryPivotalStructureShort}
        \rho_x=(\id_x\otimes e_x)\circ(e_{\pi(x)}^\dag\otimes\id_x)~.
    \end{equation}
    
    In case $x\in S$ is self-dual, then $\varkappa_x=\pm 1$ is the second Frobenius-Schur indicator of $x$ \cite{simon2022straighteningfrobeniusschurindicator}.
    Equation~\eqref{eq:unitaryPivotalStructureShort} becomes
    \begin{equation}
        \rho_x=(\id_x\otimes e_x)\circ(e_x^\dag\otimes\id_x)~,
    \end{equation}
    and the main point of \cite{simon2022straighteningfrobeniusschurindicator} is that the right-hand side is precisely the Frobenius-Schur indicator times $\id_x$.
    
    On the other hand, if $x$ is not self-dual, then the value of $\varkappa_x$ depends on the choice of unitary dual functor $\pi$.
    If $(e_x,c_x)$ is a choice of duality data expressing $\pi(x)$ as the right dual of $x$, then a valid choice of duality data expressing $\pi(\pi(x))=x$ as the right dual of $\pi(x)$ is $(c_x^\dag,e_x^\dag)$.
    For any scalar $\varkappa_x$, we may rescale the duality data to $(\varkappa_x c_x^\dag,\varkappa_x^{-1} e_x^\dag)$ while preserving the zig-zag equations.
    If and only if $\varkappa_x\in U(1)$, both the left and right traces remain equal, so that we still have a balanced dual functor.
    Equation~\eqref{eq:unitaryPivotalStructureShort} then becomes
    \begin{equation}
        \rho_{\pi(x)}=(\id_{\pi(x)}\otimes \varkappa_xc_x^\dag)\circ(e_x^\dag\otimes\id_{\pi(x)})=\varkappa_x\id_{\pi(x)}~,
    \end{equation}
    by the zig-zag equation.
\end{proof}
As foreshadowed in \eqref{eq:A-local-sequence}, the scalars $\varkappa_x$ of \eqref{eq:defnVarkappa} associated with the pivotal structure will contribute to the key gauge transformation.
Proposition~\ref{prop:standardBalancedDualFunctor} shows that non self-dual objects are a source of gauge freedom in deforming the gauge symbol.

\subsubsection{Commutativity of structural functors and autoequivalences}
\label{sssec:commutingSquares}
We now prove the existence of several commuting squares consisting of the structural functors associated with duals, braiding, and the $\dag$-structure and an arbitrary autoequivalence $J:\cC\to\cC$.
Some squares commute strictly, while others commute only up to a canonical monoidal unitary natural isomorphism.
The upshot is that if $\bm{a}$ and $\bm{b}$ are arbitrary composites of the structural involutions $(\cdot)^{\op}$, $(\cdot)^{\mop}$, $(\cdot)^{\rev}$, and $\overline{(\cdot)}$, then there is at most one comparison functor $\cC^{\bm{a}}\to\cC^{\bm{b}}$ obtained by composing the structural functors introduced in the previous section, up to canonical unitary isomorphism.
Consequently, an arbitrary composite of these structural functors and $J$ can be reordered (formally applying the involutions to one or more functors in the process) arbitrarily, with all the possible reorderings being unitarily equivalent.

\begin{lem}
    \label{lem:BDagSquare}
    For a unitary ribbon category $\cC$, the following square strictly commutes.
    \[\begin{tikzcd}
        \cC \arrow[r,"B"] \arrow[d,"(\cdot)^\dag"']
        & \cC^{\mop,\rev}
        \arrow[d,"(\cdot)^\dag"]
        \\ \overline{\cC}^{\op} \arrow[r,"B"]
        & \overline{\cC}^{\op,\mop,\rev}
    \end{tikzcd}\]
\end{lem}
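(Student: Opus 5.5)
Since all four functors in the square are braided monoidal functors, and being braided is only a property, the plan is to check strict equality of the two composites $(\cdot)^\dag\circ B$ and $B\circ(\cdot)^\dag$ as monoidal $\dag$-functors $\cC\to\overline{\cC}^{\op,\mop,\rev}$. This amounts to three checks, carried out in order: the targets agree, the underlying functors agree on objects and morphisms, and the tensorators agree.

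\textbf{Targets.} The two bottom-right corners coincide as unitary ribbon categories by the strict commutativity of the structural involutions. Applying $(\cdot)^\dag$ to $\cC^{\mop,\rev}$ lands in $\overline{\cC^{\mop,\rev}}^{\op}$, and applying $B$ to $\overline{\cC}^{\op}$ lands in $(\overline{\cC}^{\op})^{\mop,\rev}$. Both are $\overline{\cC}^{\op,\mop,\rev}$.

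\textbf{Objects and morphisms.} This step is immediate. $B$ is the identity on objects and morphisms, while $(\cdot)^\dag$ is the identity on objects and sends $f\mapsto f^\dag$. Hence both composites send $x\mapsto x$ and $f\mapsto f^\dag$.

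\textbf{Tensorators.} This is the only step with content and the main place to be careful. I would fix the convention that the composite $G\circ F$ has tensorator $G(\mu^F_{x,y})\circ\mu^G_{F(x),F(y)}$, composed in the target category. I would also note that the dagger functor has trivial tensorator (identities), since it is strictly monoidal.

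Along the top-right path, the tensorator is therefore $(\beta_{x,y})^\dag$. This is the image under the dagger functor of $B$'s tensorator $\beta_{x,y}\colon xy\to yx$, viewed as a morphism $xy\to yx$ in $\overline{\cC}^{\op,\mop,\rev}$, i.e. as $\beta_{x,y}^\dag\in\cC(yx\to xy)$.

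Along the left-bottom path, the tensorator is that of $B$ on $\overline{\cC}^{\op}$. This is the braiding of $\overline{\cC}^{\op}$, which by our convention is $\beta^{\op}_{x,y}=\beta_{x,y}^{-1}$, regarded as an element of $\overline{\cC}^{\op}(xy\to yx)=\cC(yx\to xy)$. Unitarity of the braiding in a $C^*$-tensor category gives $\beta_{x,y}^{-1}=\beta_{x,y}^\dag$, so the two tensorators coincide.

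The potential obstacle is purely bookkeeping: being sure that the $\op$-convention $\beta^{\op}_{x,y}=\beta_{x,y}^{-1}$ (rather than $\beta_{y,x}$) and the direction of tensorators line up so that no inverse or swap of $x,y$ is introduced. I would verify this by writing both tensorators explicitly as elements of $\cC(yx\to xy)$, as above.

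Finally, conjugating the scalars does not affect any of these identities, since they are equalities of underlying morphisms. Strict commutativity of the square follows.
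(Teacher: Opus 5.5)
Your proposal is correct and follows essentially the same route as the paper: both composites are the identity on objects and send $f\mapsto f^\dag$, and the tensorators agree because the dagger functor is strictly monoidal and the braiding of $\overline{\cC}^{\op}$ is $\beta^{-1}=\beta^\dag$. Your extra bookkeeping, checking that the two targets coincide and writing both tensorators as elements of $\cC(yx\to xy)$, only makes explicit what the paper leaves implicit.
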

\begin{proof}
    Each functor $B$ and $(\cdot)^\dag$ is the identity on objects, so the two sides of the diagram agree on all objects.
    Indeed, the braided monoidal functor $B$ has underlying functor $\Id_{\cC}$, so $B(f)=f$ for any morphism $f$, while $(f)^\dag=f^\dag$, so that the two sides agree on all morphisms.
    Finally, one checks that the tensorators of $(\cdot)^\dag\circ B$ and $B\circ(\cdot)^\dag$ agree.
    The tensorator of $B$ is just the braiding, while the tensorator of $(\cdot)^\dag$ is trivial, {i.e.} $(\cdot)^\dag$ is strictly monoidal.
    Since the braiding on $\overline{C}^{\op}$ is the inverse/adjoint of the braiding on $\cC$, the two sides again agree.
\end{proof}

\begin{lem}
    \label{lem:BJSquare}
    For a unitary ribbon category $\cC$ and autoequivalence $\Phi:\cC\to\cC$, the following square strictly commutes.
    \[\begin{tikzcd}
        \cC \arrow[r,"B"] \arrow[d,"\Phi"']
        & \cC^{\mop,\rev} \arrow[d,"\Phi^{\mop,\rev}"]
        \\ \cC \arrow[r,"B"]
        & \cC^{\mop,\rev}
    \end{tikzcd}\]
\end{lem}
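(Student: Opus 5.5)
The plan is to verify strict commutativity in the same three layers used for Lemma~\ref{lem:BDagSquare}: action on objects, action on morphisms, and tensorators. Recall that $B$ has underlying functor $\Id_{\cC}$ and tensorator $\beta_{x,y}:xy\to yx$. Recall also that $\Phi^{\mop,\rev}$ has the same underlying functor as $\Phi$, with tensorator $\mu^{\mop}_{x,y}=\mu_{y,x}$, where $\mu$ is the tensorator of $\Phi$; the $\rev$ contributes nothing to the structure of a functor.

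\textbf{Objects and morphisms.} Both composites $\Phi^{\mop,\rev}\circ B$ and $B\circ\Phi$ have underlying functor $\Phi\circ\Id_{\cC}=\Id_{\cC}\circ\Phi=\Phi$. Hence they agree on the nose on objects and on morphisms, and no natural isomorphism is needed at this level.

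\textbf{Tensorators.} The tensorator of a composite $G\circ F$ is $G(\mu^F_{x,y})$ followed by $\mu^G_{F(x),F(y)}$, with the conventions appropriate to the target monoidal product.

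For the top-right path, $\Phi^{\mop,\rev}\circ B$, the tensorator is a map $\Phi(xy)\to\Phi(y)\Phi(x)$ given by
\[
\mu_{y,x}\circ\Phi(\beta_{x,y}),
\]
where $\mu_{y,x}:\Phi(yx)\to\Phi(y)\Phi(x)$.

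For the left-bottom path, $B\circ\Phi$, the tensorator is
\[
\beta_{\Phi(x),\Phi(y)}\circ\mu_{x,y}:\Phi(xy)\to\Phi(x)\Phi(y)\to\Phi(y)\Phi(x).
\]
These two maps are equal precisely by the braided-functor condition for $\Phi$ recorded earlier in the paper: $\beta_{\Phi(A),\Phi(B)}\theta_{A,B}=\theta_{B,A}\Phi(\beta_{A,B})$, with $\theta=\mu$. Since $\Phi$ is a braided autoequivalence, this identity holds, so the two tensorators coincide strictly. Unitors are trivial in both composites by strict unitality.

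\textbf{Main obstacle.} The only real risk is bookkeeping in the definition of the $\mop$ tensorator and the order of composition of tensorators. In particular, one must check that the convention $\mu^{\mop}_{x,y}=\mu_{y,x}$ lands in $\Phi(y)\Phi(x)$ rather than in its inverse direction. I would fix this by writing out explicitly the source and target of each map, so that the braided-functor identity can be matched term by term. Since that identity is exactly the displayed equation, I expect the argument to close without needing any natural isomorphism.
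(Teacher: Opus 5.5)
Your proposal is correct and is the same argument as the paper's, which simply notes that the square commutes because $\Phi$ is braided monoidal. You make that explicit: both composites have underlying functor $\Phi$, and their tensorators $\mu_{y,x}\circ\Phi(\beta_{x,y})$ and $\beta_{\Phi(x),\Phi(y)}\circ\mu_{x,y}$ coincide by the braided-functor axiom. The form of that axiom you use, with $\Phi(\beta_{A,B})$ on the right-hand side, is the correct one. The version displayed earlier in the paper has $\beta_{\Phi(B),\Phi(A)}$ in that position, which does not type-check, so it is a typo and not a different condition.
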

\begin{proof}
    Follows immediately from the fact that $\Phi$ is a braided monoidal functor.
\end{proof}

\begin{lem}
    \label{lem:dagJSquare}
    For a unitary ribbon category $\cC$ and autoequivalence $\Phi:\cC\to\cC$, the following square strictly commutes.
    \[\begin{tikzcd}
        \cC \arrow[r,"(\cdot)^\dag"] \arrow[d,"\Phi"']
        & \overline{\cC}^{\op} \arrow[d,"\overline{\Phi}^{\op}"]
        \\ \cC \arrow[r,"(\cdot)^\dag"]
        & \overline{\cC}^{\op}
    \end{tikzcd}\]
\end{lem}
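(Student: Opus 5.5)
The plan is to verify the three pieces of data of a braided monoidal functor one at a time, namely the action on objects, the action on morphisms, and the tensorator, and check in each case that the two composites $\overline{\Phi}^{\op}\circ(\cdot)^\dag$ and $(\cdot)^\dag\circ\Phi$ agree on the nose, as in the proof of Lemma~\ref{lem:BDagSquare}.

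\emph{Objects.} The dagger functor is the identity on objects, and $\overline{\Phi}^{\op}$ acts on objects exactly as $\Phi$ does (by the remarks on $(\cdot)^{\op}$ and on complex conjugation as strict $2$-functors). Both composites therefore send $x\mapsto\Phi(x)$.

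\emph{Morphisms.} For $f\in\cC(X\to Y)$, one composite gives $\Phi(f)^\dag$. The other gives $\overline{\Phi}^{\op}(f^\dag)=\Phi(f^\dag)$, since $\overline{\Phi}^{\op}$ has the same underlying action on morphisms as $\Phi$. Equality then reduces to $\Phi(f^\dag)=\Phi(f)^\dag$, which is exactly the statement that $\Phi$ is a $\dag$-functor. Here I use that autoequivalences in $\URC^*$ are unitary, i.e.\ braided monoidal $\dag$-equivalences. Complex conjugation does not interfere, because $\overline{\Phi}$ is the same $\mathbb R$-linear map and the dagger is antilinear on $\cC$ but linear as a map into $\overline{\cC}$.

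\emph{Tensorators.} The dagger functor is strictly monoidal. So the tensorator of $(\cdot)^\dag\circ\Phi$ is the image under $(\cdot)^\dag$ of the tensorator $\theta_{x,y}$ of $\Phi$, namely $\theta_{x,y}^\dag$, read as a morphism of $\overline{\cC}^{\op}$ in the reversed direction. The tensorator of $\overline{\Phi}^{\op}\circ(\cdot)^\dag$ is that of $\overline{\Phi}^{\op}$, which by Remark~\ref{rem:opBraidingChoice} is $\theta_{x,y}^{-1}$, again viewed in $\overline{\cC}^{\op}$.

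This is the step I expect to need the most care, because one must track directions in $\cC^{\op}$ and conventions for whether tensorators point $\Phi(xy)\to\Phi(x)\Phi(y)$ or the reverse. The two tensorators agree precisely because $\theta$ is unitary, so $\theta^{-1}=\theta^\dag$.

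Finally, since being braided is a property rather than extra structure, the square commutes strictly as braided monoidal functors.
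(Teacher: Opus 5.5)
Your proposal is correct and is the same argument as the paper's, whose proof consists only of the remark that the square commutes because $\Phi$ is a $\dag$-functor. You have unpacked that into separate object, morphism, and tensorator checks. Your tensorator step, which uses $\theta^{-1}=\theta^\dag$ together with the inverse-tensorator convention for $(\cdot)^{\op}$, makes explicit the unitarity of the tensorator that the paper folds into the phrase ``$\dag$ functor.''
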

\begin{proof}
    Follows immediately from the fact that $\Phi$ is a $\dag$ functor.
\end{proof}

\begin{lem}
    \label{lem:BPiSquare}
    For a unitary ribbon category $\cC$ and braided unitary dual functor $\pi:\cC\to\cC^{\op,\mop}$ the following square strictly commutes.
    \[\begin{tikzcd}
        \cC \arrow[r,"\pi"] \arrow[d,"B"']
        & \cC^{\op,\mop} \arrow[d,"\refB^{\op,\mop}"]
        \\ \cC^{\mop,\rev} \arrow[r,"\pi^{\mop,\rev}"]
        & \cC^{\op,\rev}
    \end{tikzcd}\]
\end{lem}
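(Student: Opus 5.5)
The plan is to verify strict commutativity directly in the same three-stage pattern as Lemmas~\ref{lem:BDagSquare}--\ref{lem:dagJSquare}: first objects, then morphisms, then tensorators. Both composites are braided monoidal functors $\cC\to\cC^{\op,\rev}$. Since $B$ and $\refB$ have underlying functor $\Id_{\cC}$, and the involutions $(\cdot)^{\op,\mop}$ and $(\cdot)^{\mop,\rev}$ do not change the underlying assignment of a functor on objects and morphisms (Remark~\ref{rem:opBraidingChoice} and the remarks following the definitions of $\mop$ and $\rev$), both composites have the same underlying functor, namely that of $\pi$. They therefore agree on objects, $x\mapsto\pi(x)$, and on morphisms, $f\mapsto\pi(f)$. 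It remains only to compare the two tensorators.

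Next I would write out both tensorators explicitly.

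\emph{Lower-left path.} The composite is $\pi^{\mop,\rev}\circ B$. Its tensorator at $(x,y)$ is the tensorator of $\pi^{\mop,\rev}$ evaluated at the images of $x,y$, composed with $\pi^{\mop,\rev}$ applied to the tensorator of $B$, which is $\beta_{x,y}$. Using $\mu^{\mop}_{x,y}=\mu_{y,x}$, this should come out as $\pi(\beta_{x,y})$ composed with $\mu^\pi_{y,x}$, with the composition order read in $\cC^{\op,\mop}$.

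\emph{Upper-right path.} The composite is $\refB^{\op,\mop}\circ\pi$. Its tensorator is the tensorator of $\refB^{\op,\mop}$ at $(\pi(x),\pi(y))$ composed with $\mu^\pi_{x,y}$. Here the tensorator of $\refB^{\op,\mop}$ is obtained from $\beta^{-1}_{\pi(y),\pi(x)}$: $\op$ inverts it and $\mop$ swaps its indices. The net effect should be the morphism $\beta_{\pi(x),\pi(y)}$ or its inverse, appropriately oriented between $\pi(x)\pi(y)$ and $\pi(y)\pi(x)$.

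Equality of the two tensorators then reduces to the identity
\begin{equation*}
\mu^\pi_{y,x}\circ\pi(\beta_{x,y})=\beta^{\pm1}_{\pi(\cdot),\pi(\cdot)}\circ\mu^\pi_{x,y},
\end{equation*}
with the precise form dictated by the conventions. This is exactly the statement that $\pi$ is braided as a functor $\cC\to\cC^{\op,\mop}$. Recall that $\cC^{\op,\mop}$ carries the braiding $(\beta^{-1})^{-1}$ with the tensor factors reversed. Diagrammatically, this is the condition \eqref{eq:braidedDualFunctor}: sliding the crossing of $\pi(x)$ and $\pi(y)$ through the nested caps turns it into the crossing of $x$ and $y$. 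So the final step is to invoke the hypothesis that $\pi$ is a braided monoidal equivalence, equivalently \eqref{eq:braidedDualFunctor}.

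The main obstacle will be the bookkeeping. I have to track precisely how $\op$ (which inverts coherence maps and reverses composition), $\mop$ (which swaps tensorator indices and uses the braiding $\beta^{-1}_{x,y}$), and $\rev$ act on the tensorators of $B$, $\refB$ and $\pi$. The aim is to confirm that the $B$ on the left edge is matched by $\refB$ on the right edge rather than by $B$, with no stray inverse or double braiding left over. To settle this I would first check the conventions in the simplest case where $\pi$ is strict and $\cC$ is symmetric, and fix the directions there. I would then redo the check using the graphical calculus, where each reflection has a transparent meaning, reading $\op$ as a vertical flip and $\mop$ as a horizontal flip with the over-strand preserved.
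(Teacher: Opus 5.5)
Your proposal is correct and follows the paper's own argument. Both composites have underlying functor $\pi$, and the equality of their tensorators is precisely the braidedness of $\pi$, i.e.\ \eqref{eq:braidedDualFunctor}. The bookkeeping you hedge on resolves with no stray inverse. The tensorator of $\refB^{\op,\mop}$ at $(\pi(x),\pi(y))$ is $\beta_{\pi(x),\pi(y)}$ itself: $\op$ inverts $\beta^{-1}_{b,a}$ to $\beta_{b,a}$, and then $\mop$ swaps the indices. So the identity to check, read as morphisms $\pi(x)\pi(y)\to\pi(xy)$ of $\cC$, is $\pi(\beta_{x,y})\circ\mu^\pi_{y,x}=\mu^\pi_{x,y}\circ\beta_{\pi(x),\pi(y)}$, which is exactly the braided condition for $\pi:\cC\to\cC^{\op,\mop}$.
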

\begin{proof}
    This follows immediately from the fact that $\pi$ is a braided dual functor, and specifically \eqref{eq:braidedDualFunctor}.
    Indeed, braidedness of $\pi$ and equality of the tensorators of $\refB^{\op,\mop}\circ\pi$ and $\pi^{\mop,\rev}$ reduce to precisely the same equation in $\cC$.
\end{proof}

\begin{lem}
    \label{lem:JPiSquare}
    For a unitary ribbon category $\cC$, braided unitary dual functor $\pi:\cC\to\cC^{\op,\mop}$, and autoequivalence $\Phi:\cC\to\cC$ with tensorator $(\theta_{xy}^z)$, there is a canonical commuting square
    \begin{equation}
        \label{eq:piJSquare}
        \begin{tikzcd}
        \cC \arrow[r,"\Phi"] \arrow[d,"\pi"']
        & \cC \arrow[d,"\pi"] \arrow[dl,Rightarrow,"\psi"']
        \\ \cC^{\op,\mop} \arrow[r,"\Phi^{\op,\mop}"']
        & \cC^{\op,\mop}
    \end{tikzcd}
    \end{equation}
    where (suppressing associators and unitors for brevity, and using the $\circ$ and $\otimes$ of $\cC$ rather than $\cC^{\op,\mop}$)
    \begin{equation}
        \label{eq:JPi2CellAlg}
        \psi_x=((\Phi(e_x)\circ\theta^\dag_{\pi(x),x})\otimes\id_{\pi_{\Phi(x)}})\circ(\id_{\Phi(\pi(x))}\otimes c_{\Phi(x)})~.
    \end{equation}
\end{lem}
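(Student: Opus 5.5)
The plan is to read $\psi$ as the standard comparison between two choices of right duality data on the same object $\Phi(x)$. Since $\Phi$ is a monoidal equivalence, the pair
\[
\bigl(\Phi(e_x)\circ\theta^\dag_{\pi(x),x},\ \theta_{x,\pi(x)}\circ\Phi(c_x)\bigr)
\]
exhibits $\Phi(\pi(x))$ as a right dual of $\Phi(x)$. Here we use that $\Phi$ is unitary and strictly unital, so $\theta^\dag=\theta^{-1}$ and $\Phi(1)=1$. The pair $(e_{\Phi(x)},c_{\Phi(x)})$ exhibits $\pi(\Phi(x))$ as another right dual of the same object. Uniqueness of duals then gives a canonical isomorphism between the two dual objects. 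That isomorphism is exactly the formula \eqref{eq:JPi2CellAlg}: apply the evaluation of the first pair after the coevaluation of the second. Its inverse is given by the mirror formula with the roles of the two pairs exchanged.

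The argument then has four steps, all carried out with the $\circ$ and $\otimes$ of $\cC$.

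\textbf{Invertibility.} Show that $\psi_x$ is invertible, with inverse the mirror formula. This is the usual snake-equation argument. We also check that it is unitary. Both duality data are unitary ($e^\dag$ is proportional to a coevaluation for the rotated diagram, and $\Phi$ is a $\dag$-functor), so the inverse coincides with $\psi_x^\dag$.

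\textbf{Naturality.} Show that $\psi$ is natural in $x$, as a transformation between $\pi\circ\Phi$ and $\Phi^{\op,\mop}\circ\pi$. On a morphism $f$, the functor $\pi$ acts as the transpose defined using $(e,c)$. Applying $\Phi$ to such a transpose and using naturality of $\theta$ gives the transpose of $\Phi(f)$ with respect to the transported duality data. Uniqueness of the comparison between dual pairs then gives naturality.

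\textbf{Monoidality.} Show that $\psi$ is monoidal. Here we must compare
\[
\psi_{xy}\ \text{ with }\ (\psi_y\otimes\psi_x),
\]
composed with the tensorators of $\pi\circ\Phi$ and of $\Phi^{\op,\mop}\circ\pi$. The tensorators of $\pi$ are built from nested duality data, as in \eqref{eq:braidedDualFunctor}. The tensorator of $\Phi^{\op,\mop}$ is $\theta^{-1}$ with the factors swapped. Expanding both sides produces a nested cup/cap diagram. Using the hexagon-free monoidal coherence for $\theta$ (the associativity equation for the tensorator), the two sides become isotopic. This reduces to uniqueness of the comparison map for the dual of $xy$, since both sides are comparisons between the same pairs of duality data on $\Phi(xy)$.

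\textbf{Braiding.} Being braided is only a property, so nothing needs checking for a natural transformation. We only need both composites to be braided functors, which holds by Lemma~\ref{lem:BPiSquare}-type reasoning.

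I expect the monoidality step to be the main obstacle. The bookkeeping of $\op,\mop$ reversals means the tensorators appear inverted and in swapped order, and it is easy to lose a $\theta$ versus $\theta^\dag$. My plan there is to avoid explicit component chasing. Instead I will reduce everything to the statement that any two comparisons between right duals of the same object, built compatibly from duality data, coincide. I will then verify that the duality data for $xy$ transported by $\Phi$ agrees with the data obtained by nesting the transported data for $x$ and $y$. That agreement is exactly the tensorator coherence of $\Phi$.
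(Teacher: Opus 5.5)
Your proposal is correct, but it argues more directly than the paper does. The paper never builds $\psi$ componentwise. It passes to the standard setting and chooses a standard quasi-inverse $\Psi$ of $\Phi$ that is an object-wise section on $S_{\cC}^*$. It then notes that $\pi$ and $\Phi\pi\Psi$ are both right dual functors, cites Penneys for a unique monoidal unitary natural isomorphism between them determined by their duality data, and whiskers with $\Phi$.

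You instead identify $\psi_x$ directly as the comparison map of Lemma~\ref{lem:zetaSwitch}. It compares the transported data $(\Phi(e_x)\circ\theta^\dag_{\pi(x),x},\,\theta_{x,\pi(x)}\circ\Phi(c_x))$ on $\Phi(\pi(x))$ with $(e_{\Phi(x)},c_{\Phi(x)})$ on $\pi(\Phi(x))$. You then obtain naturality and monoidality from uniqueness of such comparisons, which amounts to reproving the needed special case of the cited result.

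Your route gives an actual derivation of \eqref{eq:JPi2CellAlg}, which the paper states but does not derive. It also avoids standardization, $\Psi$, and the identification $\Psi\Phi\cong\Id_{\cC}$ that is needed after the whiskering step but never stated. The paper's version is shorter because it outsources exactly the monoidality bookkeeping you flag as the main work.

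One justification needs tightening. Unitarity of $\psi_x$ does not follow from each pair being "unitary" in the sense you describe. For any right-duality data $(e,c)$, the pair $(c^\dag,e^\dag)$ is automatically left-duality data. Moreover, replacing the transported pair by $(te_1,t^{-1}c_1)$ with $t>0$, $t\neq 1$ gives valid duality data but scales $\psi_x$ by $t$. What you actually need is that both pairs are balanced. Applying $\Phi$ to $e_xe_x^\dag=c_x^\dag c_x$ and using unitarity of $\theta$ shows the transported pair is balanced whenever $\pi$ is. Then, for simple $x$, write $\psi_x=\lambda u$ with $u$ unitary; comparing $ee^\dag$ with $c^\dag c$ for the two pairs gives $|\lambda|^4=1$.

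There are also two small slips. The tensorator of $\pi$ is \eqref{eq:piTensoratorGraphical}, not \eqref{eq:braidedDualFunctor}. And $\pi\circ\Phi$ and $\Phi^{\op,\mop}\circ\pi$ are braided automatically as composites of braided functors, so no appeal to Lemma~\ref{lem:BPiSquare} is needed.
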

\begin{proof}
The idea behind Lemma~\ref{lem:JPiSquare} is the well-known fact that the duality data determine a unique monoidal unitary natural isomorphism between unitary dual functors.
See, for example, \cite[{Sec.}~1]{MR4133163}.
By Proposition~\ref{prop:URCBased=URC}, without loss of generality, we may assume that $\cC$ and $\Phi$ are standard, with $S_{\cC}$ the set of standard simples.
In that case, there is a standard autoequivalence $\Psi:\cC\to\cC$ adjoint to $\Phi$, which is hence an object-wise section of $\Phi$ on $S_{\cC}^*$. 
Since $\Psi$ is an object-wise section of $\Phi$ on $S_{\cC}^*$, then $\pi$ and $\Phi\pi\Psi$ are both right dual functors for $\cC$.
Whiskering with $\Phi$ on the right gives the commuting square \eqref{eq:piJSquare} and the natural isomorphism $\psi$ between the two.
\end{proof}
    
    Diagrammatically, we may depict \eqref{eq:JPi2CellAlg} as
    \begin{equation}
        \label{eq:JPi2CellPic}
        \psi_x=
        \tikzmath{
         \draw (0,0) -- (0,1.8);
         \draw (.5,1.8) -- (.5,3.2);
         \draw (1,.4) -- (1,1.8);
         \draw (2,.4) -- (2,3.6);
         \roundNbox{fill=white}{(.5,1.8)}{.4}{.3}{.3}{$\scriptstyle\theta_{\pi(x),x}^\dag$}
         \roundNbox{fill=white}{(.5,3.2)}{.4}{0}{0}{$\scriptstyle\Phi(e_x)$}
         \roundNbox{fill=white}{(1.5,.4)}{.4}{.3}{.3}{$\scriptstyle c_{\Phi(x)}$}
         \node at (0,-.2) {$\scriptstyle\Phi\pi(x)$};
         \node at (2,3.8) {$\scriptstyle\pi\Phi(x)$};
        }
    \end{equation}

Finally, we turn to the task of producing a relationship between the dagger functor and dual functor.
Instead of phrasing this relationship as a commuting square, it will later be more convenient to have it in a slightly different form than the others.
Before stating the main results, we recollect some necessary facts about balanced unitary dual functors.
These facts all certainly appear in \cite{MR4133163,etingof2015tensor,Selinger2011} in certain forms, but we include them here for self-containedness and to set conventions.
\begin{lem}
    \label{lem:zetaSwitch}
    Let $X$ be an object in a unitary ribbon category $\cC$.
    Let $(Y_1,e_1,c_1)$ and $(Y_2,e_2,c_2)$ be two right duals of $X$, meaning that the duality data are $e_j:Y_jX\to 1$ and $c_j:1\to XY_j$.
    Then there is a unique isomorphism $\zeta:Y_1\to Y_2$ such that $e_1=e_2\circ \zeta\otimes\id_X$ and $c_2=\id_x\otimes\zeta$, namely 
    \begin{equation}
        \label{eq:dualInterchange}
        \zeta=(e_1\otimes\id_{Y_2})\circ(\id_{Y_1}\otimes c_2)~.
    \end{equation}
    Moreover, if $\zeta$ is unitary, then
    \begin{equation}
        \label{eq:zetaSwitch}
        \zeta^\dag=(e_2\otimes\id_{Y_1})\circ(\id_{Y_2}\otimes c_1)~.
    \end{equation}
\end{lem}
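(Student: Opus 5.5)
The proof has two parts: existence and uniqueness of $\zeta$, and then the formula for $\zeta^\dag$ when $\zeta$ is unitary. Throughout I read the two required identities as $e_1=e_2\circ(\zeta\otimes\id_X)$ and $c_2=(\id_X\otimes\zeta)\circ c_1$.

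\textbf{Existence.} Define $\zeta$ by \eqref{eq:dualInterchange}. To check $e_2\circ(\zeta\otimes\id_X)=e_1$, substitute the definition. The left side becomes the composite of $e_1$ on $Y_1X$ with $e_2$ on $Y_2X$, joined through $c_2$. Interchange the order of the two cup/cap pieces, which acts on disjoint strands. What remains is $e_1$ composed with the zig-zag $(e_2\otimes\id_X)(\id_X\otimes c_2)$ along the $X$ strand, up to associators. By the snake equation for $(Y_2,e_2,c_2)$ this zig-zag equals $\id_X$, so the left side is $e_1$. The identity for $c_2$ is proved the same way: substituting the definition and applying the snake equation for $(Y_1,e_1,c_1)$ on the $X$ strand gives $(\id_X\otimes\zeta)\circ c_1=c_2$.

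\textbf{Uniqueness and invertibility.} Suppose $\zeta'$ also satisfies $e_1=e_2\circ(\zeta'\otimes\id_X)$. Insert this into \eqref{eq:dualInterchange}, so that $\zeta=(e_2(\zeta'\otimes\id_X)\otimes\id_{Y_2})(\id_{Y_1}\otimes c_2)$. Slide $\zeta'$ down past $c_2$, which is allowed because the two act on different tensor factors. This leaves $\zeta'$ followed by the $Y_2$ snake, which is the identity, so $\zeta=\zeta'$. Exchanging the roles of the two duals gives a map $\zeta_{21}:Y_2\to Y_1$. The composites $\zeta_{21}\zeta$ and $\zeta\zeta_{21}$ satisfy the defining identities for the pair $(1,1)$ and the pair $(2,2)$ respectively. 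The identity map also satisfies those identities, so by uniqueness both composites are identities and $\zeta$ is an isomorphism.

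\textbf{The formula for $\zeta^\dag$.} If $\zeta$ is unitary, then $\zeta^\dag=\zeta^{-1}=\zeta_{21}$. The symmetric version of \eqref{eq:dualInterchange}, with indices $1$ and $2$ swapped, is exactly \eqref{eq:zetaSwitch}.

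The only delicate step is tracking associators and unitors in the zig-zag manipulations. I will handle this by working in a strictification, which the paper's standing conventions and Construction~\ref{const:standardURC} justify, so that every step above is a planar string-diagram isotopy plus one application of a snake equation.
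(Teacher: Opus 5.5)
Your proposal is correct and follows the paper's route: the zig-zag identities give uniqueness of the comparison map, exchanging the roles of the two duals produces $\zeta^{-1}$, and unitarity gives $\zeta^{\dagger}=\zeta^{-1}$. You simply carry out the existence, uniqueness and invertibility checks that the paper cites from Penneys or calls straightforward. One small slip: the $X$-strand snake you invoke in the existence step should be $(\id_X\otimes e_2)\circ(c_2\otimes\id_X)=\id_X$, since $(e_2\otimes\id_X)\circ(\id_X\otimes c_2)$ does not type-check.
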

\begin{proof}
    That $\zeta$ is the unique morphism with the given property follows immediately from the zig-zag equations; see, for example, \cite[Rem.~2.3]{MR4133163}.
    
    If we exchange the roles of $Y_1$ and $Y_2$, we find that there is a unique isomorphism $\xi:Y_2\to Y_1$ such that
    $e_2=e_1\circ\zeta\otimes\id_X$ and $c_1=\id_x\otimes\xi$.
    On the one hand, by \eqref{eq:dualInterchange}, $\xi$ must be the right side of \eqref{eq:zetaSwitch}.
    On the other hand, it is straightforward to show that $\zeta^{-1}$ also satisfies the required properties, so $\xi=\zeta^{-1}$.
    When $\zeta$ is unitary, $\zeta^{-1}=\zeta^\dag$, completing the proof.
\end{proof}

As an example, given objects $X,Y\in\cC$, the dual functor $\pi$ gives us two right duals for $XY$.
One is $(\pi(XY),e_{XY},c_{XY})$, while another is $\left(\pi(Y)\pi(X),e_Y\circ(\id_{\pi(Y)}\otimes e_X\otimes\id_X),(\id_X\otimes c_Y\otimes\id_{\pi(X)})\circ c_X\right)$.
The tensorator of $\pi$ is just the comparison map \eqref{eq:dualInterchange} from the dual $\pi(Y)\pi(X)$ to the dual $\pi(XY)$ \cite[{Sec.}~3.2]{MR4133163}.
Note that the tensorator has the type it does because we write it as a morphism in $\cC$, while it is actually the tensorator of a functor whose target is $\cC^{\op,\mop}$.
In graphical calculus,
\begin{equation}
    \label{eq:piTensoratorGraphical}
    \theta_{x,y}:=
    \tikzmath[baseline]{
     \draw (0,0) -- (0,2.5);
     \node at (0,-.2) {$\scriptstyle\pi(Y)$};
     \draw (1,0) -- (1,1.5);
     \node at (1,-.2) {$\scriptstyle\pi(X)$};
     \draw (2,1.5) -- (2,.5);
     \draw (3,2.5) -- (3,.5);
     \draw (4,.5) -- (4,3);
     \node at (4,3.2) {$\scriptstyle\pi(XY)$};
     \roundNbox{fill=white}{(1.5,1.5)}{.3}{.4}{.4}{$\scriptstyle e_X$}
     \roundNbox{fill=white}{(1.5,2.5)}{.3}{1.4}{1.4}{$\scriptstyle e_Y$}
     \roundNbox{fill=white}{(3,.5)}{.3}{1}{1}{$\scriptstyle c_{XY}$}
    }
\end{equation}
When $\pi$ is a unitary dual functor, $\theta$ is unitary, so Lemma~\ref{lem:zetaSwitch} gives us the following alternative expression.
\begin{equation}
    \label{eq:piTensoratorSwapped}
    \theta_{x,y}=\tikzmath[baseline,xscale=-1]{
     \draw (0,0) -- (0,2.5);
     \node at (0,-.2) {$\scriptstyle\pi(Y)$};
     \draw (1,0) -- (1,1.5);
     \node at (1,-.2) {$\scriptstyle\pi(X)$};
     \draw (2,1.5) -- (2,.5);
     \draw (3,2.5) -- (3,.5);
     \draw (4,.5) -- (4,3);
     \node at (4,3.2) {$\scriptstyle\pi(XY)$};
     \roundNbox{fill=white}{(1.5,1.5)}{.3}{.4}{.4}{$\scriptstyle c_X^\dag$}
     \roundNbox{fill=white}{(1.5,2.5)}{.3}{1.4}{1.4}{$\scriptstyle c_Y^\dag$}
     \roundNbox{fill=white}{(3,.5)}{.3}{1}{1}{$\scriptstyle e_{XY}^\dag$}
    }
\end{equation}

We now have the tools to prove the following Lemma.
\begin{lem}
    \label{lem:dagPiLoop}
    Let $\cC$ be a unitary ribbon category, with $\pi:\cC\to\cC^{\op,\mop}$ a balanced braided unitary dual functor.
    Then the pivotal structure $\rho$ of $\pi$ is also a monoidal natural isomorphism
    \begin{equation}
        \label{eq:dualityCancellationType}
         \mu:\pi^{\op,\mop}\circ\overline{(\cdot)^\dag}^{\mop}\circ\overline{\pi}^{\op}\circ(\cdot)^\dag\xrightarrow{\rho}\Id_{\cC}~,
    \end{equation}
    in the sense that setting $\mu_X:=\rho_X$ determines a monoidal natural isomorphism of the correct type. 
\end{lem}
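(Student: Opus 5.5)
We first identify the composite functor in \eqref{eq:dualityCancellationType} on objects and morphisms, and then check that the components $\rho_X$ are natural and monoidal for this composite. On objects, $(\cdot)^\dag$ and $\overline{(\cdot)^\dag}^{\mop}$ act as the identity, while $\overline{\pi}^{\op}$ and $\pi^{\op,\mop}$ act as $\pi$. The composite therefore sends $X\mapsto\pi(\pi(X))$, which is exactly the source of $\rho_X$, so the proposed components have the correct type.

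On a morphism $f:X\to Y$, the composite gives $\pi(\pi(f^\dag)^\dag)$. For a unitary dual functor, $\pi(g)$ is the transpose built from $e$ and $c$, and unitarity of $\pi$ means $\pi(g^\dag)=\pi(g)^\dag$. Using this, $\pi(\pi(f^\dag)^\dag)=\pi(\pi(f))$. Naturality of $\rho$ as a transformation $\pi^{\op,\mop}\circ\pi\to\Id$ from \eqref{eq:pivotalStructure} then gives $\rho_Y\circ\pi\pi(f)=f\circ\rho_X$. If one prefers to avoid invoking unitarity of $\pi$ abstractly, I would verify $\pi(g^\dag)=\pi(g)^\dag$ directly from the explicit transpose formula together with $c_X=e_{\pi(X)}^\dag$-type relations. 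This is essentially the computation behind \eqref{eq:unitaryPivotalStructureShort}.

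The substantive step is monoidality, namely comparing the tensorator of the composite with that of $\pi^{\op,\mop}\circ\pi$. The dagger functors are strictly monoidal, but the $\op$'s invert tensorators. The tensorator of the composite is therefore built from $\pi(\theta_{x,y}^\dag)$, via $\overline\pi^{\op}$, and $\theta_{\pi y,\pi x}$, rather than $\pi(\theta_{x,y})$ and $\theta_{\pi y,\pi x}$. Here $\theta$ is unitary and satisfies $\pi(\theta^\dag)=\pi(\theta)^\dag$, and passing through the $\op$ inverts it once more. I would therefore show that the two tensorators literally coincide, using \eqref{eq:piTensoratorGraphical} and its swapped form \eqref{eq:piTensoratorSwapped} from Lemma~\ref{lem:zetaSwitch}. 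Concretely, I expect the composite's tensorator to be the diagram for $\pi(\theta)$ with every $e,c$ replaced by $c^\dag,e^\dag$, and \eqref{eq:piTensoratorSwapped} says this is the same morphism. Once the tensorators coincide, monoidality of $\mu=\rho$ follows from monoidality of the pivotal structure $\rho$ in \eqref{eq:pivotalStructure}.

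This tensorator bookkeeping is the expected obstacle, since it requires tracking which $\op$ and $\mop$ invert or swap indices. I would carry it out diagrammatically for simple $x,y$ in a standard setting, invoking Proposition~\ref{prop:URCBased=URC} to assume standardness, and then extend by naturality.
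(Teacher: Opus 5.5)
Your argument is correct, but it is organised differently from the paper's. You show that the composite $W:=\pi^{\op,\mop}\circ\overline{(\cdot)^\dag}^{\mop}\circ\overline{\pi}^{\op}\circ(\cdot)^\dag$ \emph{is} $\pi^{\op,\mop}\circ\pi$ as a monoidal functor, and then inherit naturality and monoidality of $\mu$ from the pivotal structure \eqref{eq:pivotalStructure}. On morphisms you use $\pi(g^\dag)=\pi(g)^\dag$, which is part of Penneys' definition of a unitary dual functor; on tensorators you use unitarity of $\theta$.

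The paper invokes neither $\pi(g^\dag)=\pi(g)^\dag$ nor the monoidality of $\rho$. It reads the inner composite $\Gamma$ as the \emph{left} dual functor with duality data $(c_X^\dag,e_X^\dag)$. It then uses the two expressions \eqref{eq:unitaryPivotalStructure} for $\rho$ to get $W(f)=\rho_Y^\dag\circ f\circ\rho_X$, so naturality needs only unitarity of $\rho$. For monoidality it expands the tensorator $\theta^\dag_{\pi(Y),\pi(X)}\circ\pi(\theta_{X,Y})$ using the swapped form \eqref{eq:piTensoratorSwapped} and cancels a zig-zag to reach \eqref{eq:RhoWMonoidal}. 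That tensorator is also the tensorator of $\pi^{\op,\mop}\circ\pi$, so the paper's computation is in effect a direct proof that the pivotal structure is monoidal.

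Your route is shorter and more conceptual. It also shows that $\Gamma=\pi$ as monoidal functors, so whatever distinction Remark~\ref{rem:cancelledDualsAreDoubledDuals} draws between them is not one of monoidal functors. The paper's route is self-contained from the duality data and produces the explicit formula for $W(f)$.

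The tensorator bookkeeping you flag as the main obstacle is easier than you expect, and your intermediate description misplaces the daggers. The factors are:
\begin{itemize}
\item $\overline{\pi}^{\op}$ contributes $\theta_{X,Y}^{-1}$;
\item the middle functor $\overline{(\cdot)^\dag}^{\mop}$ sends this to $(\theta_{X,Y}^{-1})^\dag=\theta_{X,Y}$;
\item the outer $\pi^{\op,\mop}$ contributes $\theta^{-1}_{\pi(Y),\pi(X)}=\theta^\dag_{\pi(Y),\pi(X)}$.
\end{itemize}
So the tensorator of $W$ is literally $\theta^\dag_{\pi(Y),\pi(X)}\circ\pi(\theta_{X,Y})$, the same expression as for $\pi^{\op,\mop}\circ\pi$. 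You do not need diagrams, restriction to simple objects, or passage to a standard model via Proposition~\ref{prop:URCBased=URC}.
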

\begin{proof}
    For brevity, we denote the source of $\mu$ by $W:\cC\to\cC$, and we define
    \begin{equation}
        \label{eq:defGamma}
        \Gamma:=\overline{(\cdot)^\dag}^{\mop}\circ\overline{\pi}^{\op}\circ(\cdot)^\dag~,
    \end{equation}
    so that $W=\pi^{\op,\mop}\circ \Gamma$.
    Since the dagger functor is the identity on objects, $W$ and the double dual functor agree on objects.
    This shows that defining $\mu_x:=\rho_x$ for all objects $x\in\cC$ gives a family of morphisms of the correct types.
    We just need to check that $\mu$ is both natural and monoidal.
    Since these checks are largely routine, we outline the details.

    Although $W$ and $\pi^{\op,\mop}\circ\pi$ agree on objects, the two functors are conceptually different.
    To see why, observe that $\Gamma$ is actually a \textit{left} dual functor, specifically the one which expresses $\pi(X)$ as the left dual of $X$ using the duality data $c_X^{\dag}:X\pi(X)\to 1$, $e_X^\dag:1\to\pi(X)X$.
    By \cite[Cor.~3.10]{MR4133163},\footnote{Note that their pivotal structure is inverse to ours, as it is a map $\Id_{\cC}\to\pi^{\op,\mop}\circ\pi$. %\note{Peter}{Move this earlier, when we first cite Dave?}
    }
    the pivotal structure $\rho_x$ associated with $\pi$ is precisely given by
    \begin{equation}
        \label{eq:unitaryPivotalStructure}
        \rho_x=(e_{\pi(x)}\otimes\id_x)\circ(\id_{\pi^2(x)}\otimes e_x^\dag)=(\id_x\otimes c_{\pi(x)}^\dag)\circ(c_x\otimes\id_{\pi^2(x)})~.
    \end{equation}
    Evaluating $W$ on morphisms, we find that for any $f\in\cC(X\to Y)$,
    \begin{equation}
        \label{eq:Wf}
        W(f)=\rho_Y^\dag\circ f\circ \rho_X~.
    \end{equation}
    It immediately follows that 
    \begin{equation}
     \rho_Y W(f)=f\rho_X~,
    \end{equation}
    showing that $\mu$ is natural.

    To check that $\mu$ is monoidal, we first compute the tensorator of $W$, which is $\theta_{\pi(Y),\pi(X)}^\dag\circ\pi(\theta_{X,Y})$ since the dagger functor has trivial tensorator.
    Using \eqref{eq:piTensoratorSwapped} to expand both instances of $\theta$, we find that
    \begin{equation}
        \label{eq:WTensorator}
        \theta_{\pi(Y),\pi(X)}^\dag\circ\pi(\theta_{X,Y})=
        \tikzmath[baseline]{
         \draw (0,0) -- (0,2);
         \node at (0,-.2) {$\scriptstyle\pi^2(XY)$};
         \draw (1,2) -- (1,.5);
         \draw (2,.5) -- (2,2.5);
         \draw (3,.5) -- (3,1.5);
         \draw (4,1.5) -- (4,.5);
         \draw (5,2.5) -- (5,.5);
         \draw (6,.5) -- (6,2.5);
         \draw (7,2.5) -- (7,.5);
         \draw (8,2.5) -- (8,1.5);
         \draw (9,1.5) -- (9,3);
         \node at (9,3.2) {$\scriptstyle\pi^2(X)$};
         \draw (10,.5) -- (10,3);
         \node at (10,3.2) {$\scriptstyle\pi^2(Y)$};
         \roundNbox{fill=white}{(.5,2)}{.3}{.4}{.4}{$\scriptstyle e_{\pi(XY)}$}
         \roundNbox{fill=white}{(2,.5)}{.3}{.9}{.9}{$\scriptstyle e_{XY}^\dag$}
         \roundNbox{fill=white}{(3.5,2.5)}{.3}{1.4}{1.4}{$\scriptstyle c_X^\dag$}
         \roundNbox{fill=white}{(3.5,1.5)}{.3}{.4}{.4}{$\scriptstyle c_Y^\dag$}
         \roundNbox{fill=white}{(5,.5)}{.3}{.9}{.9}{$\scriptstyle c_{\pi(Y)\pi(X)}$}
         \roundNbox{fill=white}{(7,2.5)}{.3}{.9}{.9}{$\scriptstyle e_{\pi(Y)\pi(X)}$}
         \roundNbox{fill=white}{(8.5,.5)}{.3}{1.4}{1.4}{$\scriptstyle c_{\pi(Y)}$}
         \roundNbox{fill=white}{(8.5,1.5)}{.3}{.4}{.4}{$\scriptstyle c_{\pi(X)}$}
        }
    \end{equation}
    The middle instances of $c_{\pi(Y)\pi(X)}$ and $e_{\pi(Y)\pi(X)}$ vanish by the zig-zag equation, leaving us with
    \begin{equation}
        \label{eq:RhoWMonoidal}
        \theta_{\pi(Y),\pi(X)}^\dag\circ\pi(\theta_{X,Y})=(\rho_X^\dag\otimes\rho_Y^\dag)\circ\rho_{XY}~,
    \end{equation}
    which is exactly the condition that $\mu$ is monoidal.
\end{proof}
\begin{cor}
    \label{cor:pivotalContractionScalars}
    If $(\cC,S)$ is a standard unitary ribbon category and $\pi$ is a standard braided dual functor, then the monoidal natural isomorphism $\mu$ of \eqref{eq:dualityCancellationType} is given by
    \begin{equation}
        \mu_x=\varkappa_x\id_x~,
    \end{equation}
    for every $x\in S$,
    where $(\varkappa_x)_{x\in S}$ are the scalars from Proposition~\ref{prop:standardBalancedDualFunctor}. 
\end{cor}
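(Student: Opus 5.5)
The corollary follows by specializing Lemma~\ref{lem:dagPiLoop} to the based setting and reading off components on standard simples. The plan has three steps.

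\textbf{Step 1: the components are the $\rho_x$.} Lemma~\ref{lem:dagPiLoop} already identifies the monoidal natural isomorphism $\mu$ of \eqref{eq:dualityCancellationType} with the pivotal structure, $\mu_X=\rho_X$, for every object $X$. So the task reduces to computing $\rho_x$ for $x\in S$ when $\pi$ is a standard braided dual functor.

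\textbf{Step 2: check that the components are endomorphisms of $x$.} Because $\pi$ is standard, condition (SF1) gives $\pi(x)\in S$ for every $x\in S$. Hence $\pi(\pi(x))\in S$ is isomorphic to $x^{\vee\vee}\cong x$. Since $S$ contains exactly one representative of each isomorphism class, this forces $\pi^2(x)=x$.

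The dagger functor is the identity on objects. Therefore the source functor $W=\pi^{\op,\mop}\circ\Gamma$ sends $x$ to $\pi^2(x)=x$. It follows that $\mu_x=\rho_x$ is a genuine endomorphism of the simple object $x$, and hence a scalar multiple of $\id_x$.

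\textbf{Step 3: identify the scalar.} By Proposition~\ref{prop:standardBalancedDualFunctor}, every standard balanced unitary dual functor has canonical spherical structure $\rho_x=\varkappa_x\id_x$ for the family $(\varkappa_x)_{x\in S}$ attached to $\pi$. Combining this with Step 1 gives $\mu_x=\varkappa_x\id_x$.

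\textbf{Expected obstacle.} The only delicate point is the conventions. Proposition~\ref{prop:standardBalancedDualFunctor} computes $\rho$ via \eqref{eq:unitaryPivotalStructureShort}, whereas Lemma~\ref{lem:dagPiLoop} uses \eqref{eq:unitaryPivotalStructure}. I would check that these two formulas define the same morphism $x^{\vee\vee}\to x$, and not its inverse or its adjoint; otherwise the scalar would come out as $\overline{\varkappa_x}$.

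To do this, I would apply Lemma~\ref{lem:zetaSwitch} to the two right-duality structures $(e_{\pi(x)},c_{\pi(x)})$ and $(c_x^\dag,e_x^\dag)$ exhibiting $x$ as a right dual of $\pi(x)$. That lemma shows the two expressions for $\rho_x$ agree, because the comparison map is unitary and its adjoint is given by \eqref{eq:zetaSwitch}.

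Once this is confirmed, the scalar is exactly $\varkappa_x$ as defined in \eqref{eq:defnVarkappa}. In the self-dual case it is the Frobenius--Schur indicator, which equals its own conjugate, so no ambiguity can arise there.
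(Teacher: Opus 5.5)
Your Steps 1--3 are the paper's argument. The corollary is stated without separate proof, as an immediate consequence of Lemma~\ref{lem:dagPiLoop} (which sets $\mu_X:=\rho_X$) and \eqref{eq:defnVarkappa} (which defines $\varkappa_x$ as the scalar by which the canonical spherical structure $\rho_x$ acts). Your Step~2 is the remark made just before Proposition~\ref{prop:standardBalancedDualFunctor}.

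The resolution in your ``expected obstacle'' paragraph is wrong, so you should not make the conclusion depend on it. Apply Lemma~\ref{lem:zetaSwitch} with $X=\pi(x)$, $(Y_1,e_1,c_1)=(\pi^2(x),e_{\pi(x)},c_{\pi(x)})$ and $(Y_2,e_2,c_2)=(x,c_x^\dagger,e_x^\dagger)$. It gives $\zeta=(e_{\pi(x)}\otimes\id_x)\circ(\id_{\pi^2(x)}\otimes e_x^\dagger)$, and via \eqref{eq:zetaSwitch} it gives $\zeta^\dagger=(c_x^\dagger\otimes\id_{\pi^2(x)})\circ(\id_x\otimes c_{\pi(x)})$. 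This only reconciles the two expressions \emph{inside} \eqref{eq:unitaryPivotalStructure}.

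By contrast, \eqref{eq:unitaryPivotalStructureShort}, with its first identity read as $\id_{\pi^2(x)}$, is literally $(\id_{\pi^2(x)}\otimes e_x)\circ(e_{\pi(x)}^\dagger\otimes\id_x)=\zeta^\dagger=\zeta^{-1}$. That is a map $x\to\pi^2(x)$: Penneys' pivotal map $\varphi_x=\rho_x^{-1}$, compare \eqref{eq:pivotalApp} and the footnote in the proof of Lemma~\ref{lem:dagPiLoop}. On a standard simple the two formulas therefore give complex-conjugate scalars, which is exactly the scenario you feared. They agree only when the scalar is $\pm1$, for instance when $x$ is self-dual.

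This does not harm the corollary, because no reconciliation is needed. $\varkappa_x$ is defined by $\rho_x=\varkappa_x\id_x$, where $\rho:\pi^{\op,\mop}\circ\pi\to\Id_{\cC}$ is the canonical spherical structure of \eqref{eq:pivotalStructure}. That is the same $\rho$ that Lemma~\ref{lem:dagPiLoop} identifies with $\mu$, not the output of a particular formula. The discrepancy only affects the internal computation in the proof of Proposition~\ref{prop:standardBalancedDualFunctor}. Delete the ``once this is confirmed'' contingency: your Step~3 already completes the proof.
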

\begin{remark}
    \label{rem:cancelledDualsAreDoubledDuals}
    In fact, Lemma~\ref{lem:dagPiLoop} shows that $\pi^{\op,\mop}\circ\overline{(\cdot)^\dag}^{\mop}\circ\overline{\pi}^{\op}\circ(\cdot)^\dag$ is equal to the double dual functor, although the left dual functor $\Gamma$ of \eqref{eq:defGamma} need not be equal to $\pi$.
    When $\pi$ is a standard dual functor for $(\cC,S)$, the equality $\Gamma=\pi$ holds exactly when $\varkappa_x=1$ for all non-self-dual $x\in S$.
\end{remark}

Restating Lemma~\ref{lem:dagPiLoop} in a form more similar to the previous lemmata gives the following. 
\begin{lem}
    \label{lem:dagPiSquare}
    For a unitary ribbon category $\cC$ and braided balanced unitary dual functor $\pi$, there is a canonical commuting square
        \begin{equation}
         \label{eq:dagPiSquare}
         \begin{tikzcd}
          \cC \arrow[r,"(\cdot)^\dag"] \arrow[d,"\pi"']
          & \overline{\cC}^{\op} \arrow[d,"\overline{\pi}^{\op}"] \arrow[dl,Rightarrow,"\mu'"']
          \\ \cC^{\op,\mop} \arrow[r,"((\cdot)^\dag)^{\op,\mop}"']
          & \overline{\cC}^{\mop}
         \end{tikzcd}
        \end{equation}
        where the monoidal natural unitary $\mu'$ is determined by the duality data of $\pi$.  
\end{lem}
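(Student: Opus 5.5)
The plan is to read the square in Lemma~\ref{lem:dagPiSquare} as a repackaging of the monoidal natural isomorphism $\mu$ from Lemma~\ref{lem:dagPiLoop}. I will \emph{define} $\mu'$ as the 2-cell obtained from $\mu$ by whiskering with the inverses of the structural functors that square to the identity. Concretely, $\overline{(\cdot)^\dag}^{\op}\circ(\cdot)^\dag=\Id$ holds strictly by \eqref{eq:doubleDagContracts}, and the same identity holds after applying the strict involutions $(\cdot)^{\op,\mop}$ and $\overline{(\cdot)}$. Post-composing the source of \eqref{eq:dualityCancellationType} with suitable daggers therefore turns the loop $\pi^{\op,\mop}\circ\overline{(\cdot)^\dag}^{\mop}\circ\overline{\pi}^{\op}\circ(\cdot)^\dag$ into the two legs $\overline{\pi}^{\op}\circ(\cdot)^\dag$ and $((\cdot)^\dag)^{\op,\mop}\circ\pi$ of the square.

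The key steps, in order, are as follows.

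First, I check on objects that both legs send $x\mapsto\pi(x)$, since $(\cdot)^\dag$ is the identity on objects. The components of $\mu'$ are then endomorphisms of $\pi(x)$ in $\overline{\cC}^{\mop}$.

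Second, I set $\mu'_x$ to be the image of $\rho_x$, or equivalently $\rho_{\pi(x)}^{\pm1}$, transported through $(\cdot)^\dag$. The intended formula is $\mu'_x:=\bigl(\pi(\rho_x)\bigr)^\dag$, up to inverse. By \eqref{eq:unitaryPivotalStructure}, this is expressed purely through the duality data $e_x,c_x$ of $\pi$, which justifies the claim that $\mu'$ is determined by the duality data.

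Third, I verify naturality. The computation \eqref{eq:Wf}, $W(f)=\rho_Y^\dag f\rho_X$, rewritten after applying $\pi$ and $\dag$, gives exactly the naturality square comparing $\overline{\pi}^{\op}(f^\dag)$ with $(\pi(f))^\dag$.

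Fourth, I verify monoidality. The tensorator of $\overline{\pi}^{\op}\circ(\cdot)^\dag$ is $\theta^\dag$ read in $\overline{\cC}^{\op,\mop}$, while that of $((\cdot)^\dag)^{\op,\mop}\circ\pi$ is $\theta$ viewed through the dagger. Their comparison reduces to \eqref{eq:RhoWMonoidal}, whose derivation uses \eqref{eq:piTensoratorSwapped} from Lemma~\ref{lem:zetaSwitch}.

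An alternative, cleaner route for the second through fourth steps is available. The functor $\overline{\pi}^{\op}\circ(\cdot)^\dag$ is, up to the involutions, exactly the left dual functor $\Gamma$ of \eqref{eq:defGamma} with duality data $(c_x^\dag,e_x^\dag)$. Both legs are then dual functors with the same values on objects, and the canonical comparison \eqref{eq:dualInterchange} of Lemma~\ref{lem:zetaSwitch} supplies a unique unitary monoidal natural isomorphism between them. This mirrors the proof of Lemma~\ref{lem:JPiSquare}. Uniqueness in Lemma~\ref{lem:zetaSwitch} gives canonicity, and unitarity of $\zeta$ follows from \eqref{eq:zetaSwitch} applied to both pairs of duality data.

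I expect the main obstacle to be bookkeeping of variance. I must track which of $\op$, $\mop$ and the complex conjugate act on each tensorator and composition, so that the square is oriented with $\mu'$ going the stated way and the monoidality equation comes out as \eqref{eq:RhoWMonoidal} rather than its inverse or adjoint. I would handle this first in a based setting: take $\cC$ and $\pi$ standard, using Proposition~\ref{prop:URCBased=URC}, and check on simples that the result is $\mu'_x=\varkappa_x^{\pm1}\id$, consistent with Corollary~\ref{cor:pivotalContractionScalars}. I would fix the sign of the exponent there before writing the general diagrammatic check.
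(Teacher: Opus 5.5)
Your framing matches the paper, whose proof says only that it is ``entirely similar to Lemma~\ref{lem:dagPiLoop}'': you obtain the square by repackaging that lemma. The trouble is that the specific $2$-cell you commit to is wrong, and your Step~4 would fail.

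First, the whiskering is incomplete, because only the daggers are strictly invertible. Set $D:=((\cdot)^\dag)^{\op,\mop}$, so that $D^{-1}=\overline{(\cdot)^\dag}^{\mop}$ and the source of $\mu$ is $\pi^{\op,\mop}\circ D^{-1}\circ\bigl(\overline{\pi}^{\op}\circ(\cdot)^\dag\bigr)$. You must whisker $\mu$ by $D\circ\pi$ and then also contract $\pi\circ\pi^{\op,\mop}\Rightarrow\Id_{\cC^{\op,\mop}}$, which uses $\rho$ a second time. The resulting component is $\pi(\rho_X)^\dag\circ\rho_{\pi(X)}^{-1}$. A monoidal pivotal structure satisfies $\pi(\rho_X)=\rho_{\pi(X)}^{-1}$, and $\rho$ is unitary, so this equals $\id_{\pi(X)}$. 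Your $\mu'_X=\pi(\rho_X)^\dag$ keeps only one of the two factors. Away from standardized simples it is not even an endomorphism of $\pi(X)$; it maps $\pi^3(X)\to\pi(X)$.

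More decisively, the two legs coincide as monoidal functors. Since $\pi$ is a unitary dual functor in Penneys' sense (a $\dag$-functor with unitary tensorator $\theta$), you have $\overline{\pi}^{\op}(f^\dag)=\pi(f^\dag)=\pi(f)^\dag=((\cdot)^\dag)^{\op,\mop}(\pi(f))$. The tensorators are $\theta^{-1}$ (Remark~\ref{rem:opBraidingChoice}) and $\theta^\dag$, which agree. Your own Step~3 input already shows this: \eqref{eq:Wf} together with naturality of $\rho$ gives $W(f)=\rho_Y^\dag f\rho_X=\pi^{\op,\mop}\pi(f)$, so $\Gamma$ and $\pi$ agree on morphisms.

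Hence any admissible $\mu'$ is a monoidal automorphism of a single equivalence. By \eqref{eq:ATNat}, its scalars must satisfy $\beta_y\beta_z=\beta_x$ whenever $N_x^{yz}\neq0$. Your predicted $\mu'_x=\varkappa_x^{\pm1}\id$ does this exactly when $\varkappa$ grades the fusion rules, which is the flatness condition. That condition fails for $\Rep(\mathbb{F}_3^2\rtimes Q_8)$ at $V^{\sigma\rho}_\rho$ (Sec.~\ref{ssec:FS2ObstructionExample}). So the monoidality check cannot reduce to \eqref{eq:RhoWMonoidal}. That identity compares the nontrivial tensorator of $\pi^{\op,\mop}\circ\pi$ with the trivial one of $\Id_{\cC}$, whereas here the two tensorators being compared are equal.

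The repair is to take $\mu'=\id$, since the square commutes on the nose. This is what the completed whiskering produces. A correctly executed comparison via Lemma~\ref{lem:zetaSwitch} gives the same: the left transpose with $(c_X^\dag,e_X^\dag)$ equals the right transpose with $(e_X,c_X)$ precisely because $\rho$ in \eqref{eq:unitaryPivotalStructure} is unitary. Naturality and monoidality are then trivial. Note also that \eqref{eq:zetaSwitch} presupposes that $\zeta$ is unitary, so it cannot be used to establish unitarity.
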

The proof is entirely similar to Lemma~\ref{lem:dagPiLoop}. 

The commuting squares of Lemmata~\ref{lem:BDagSquare}, \ref{lem:BPiSquare}, and \ref{lem:dagPiSquare} and equations~\eqref{eq:doubleDagContracts}-\eqref{eq:pivotalStructure} show that, for any unitary ribbon category $\cC$ and sets of structural involutions $S,T\subseteq\{\overline{(\cdot)},(\cdot)^{\op},(\cdot)^{\mop},(\cdot)^{\rev}\}$ where $|S|=|T|\text{ mod }2$, there is a unique braided monoidal equivalence $\cC^S\to\cC^T$ generated by $\{\pi,(\cdot)^\dag,\pi\}$ up to canonical unitary natural monoidal equivalence.
In other words the $2$-groupoid generated by these data is generally the disjoint union of two contractible $2$-groupoids.\footnote{
For a general $\cC$, the $2$-groupoid need not actually be contractible, since some of the categories and functors could literally be equal.
For example, it is possible to construct Tanakian unitary ribbon categories where $B$ is equal to the identity functor.
A precise statement of the result should refer to a contractible $2$-groupoid within the free $\mathbb{C}$-linear braided monoidal $\dag$-category on some generators.
} 
In particular, there are unique such equivalences of each type appearing in~\eqref{BraidedEquivs}.
For example, we may choose
\begin{equation}
    \label{eq:BraidedEquivsImplementation}
    \overline{\cC}\xleftrightarrow{\overline{(\cdot)^{\dag}}}\cC^{\op}\xleftrightarrow{\pi^{\op}}\cC^{\mop}\xleftrightarrow{B^{\mop}}\cC^{\rev}~,
\end{equation}
with other choices differing by canonical natural unitaries.
There is also a unique such equivalence $\cC\to\overline{\cC}^{\rev}$, which can be obtained by composing $B$, $\pi$, and $(\cdot)^\dag$ in any order, {e.g.}
\begin{equation}
 \label{eq:RStructuralPart}
 \cC\xrightarrow{(\cdot)^\dag}\overline{\cC}^{\op}\xrightarrow{\overline{\pi}^{\op}}\overline{\cC}^{\mop}\xrightarrow{\overline{\refB}^{\mop}}\overline{\cC}^{\rev}~.    
\end{equation}
While this equivalence relates $\cC$ and $\overline{\cC}$, it is a dual functor on the level of objects and may induce a nontrivial gauge transformation on trivalent vertex spaces.

\section{Sufficient conditions for real \texorpdfstring{$F$}{F}}
\label{sec:sufficient}

In this section, we will state and prove Theorem~\ref{thm:main}, which gives sufficient conditions for the existence of a gauge in which the $F$-symbol of a unitary ribbon category $\cC$ is real.
Section~\ref{ssec:mainTheorem} will present our main results, while
{Sec.}~\ref{RealGauge} will recapitulate the proof in diagrammatic terms analogous to those of {Sec.}~\ref{sec:mechanism} in the special case where every object is self-dual, and the FS indicator is trivial.   

\subsection{Real F from a flat charge conjugation symmetry}
\label{ssec:mainTheorem}
In {Sec.}~\ref{ssec:ribbonDefns}, we saw that there is a canonical equivalence \eqref{eq:RStructuralPart} $\cC\to\overline{\cC}^{\rev}$ obtained from structural functors.
The assumptions of Theorem~\ref{thm:main} will allow us to build on this equivalence to produce an equivalence which preserves isomorphism classes of simple object and has trivial gauge symbol, becoming an equality of $F$- and $R$-symbols.

The first difficulty is that the structural equivalence $\cC\to\overline{\cC}^{\rev}$ of \eqref{eq:RStructuralPart} does not preserve isomorphism classes of simple objects, instead sending an object to its right dual.
In the event that $\cC$ has a charge-conjugation symmetry $J:\cC\to\cC$, {i.e.} a braided autoequivalence which also sends an object to its dual, we may compose \eqref{eq:RStructuralPart} with $J$ to obtain a functor $\cC\to\overline{\cC}^{\rev}$ which is (up to natural isomorphism / standardization) the identity on simple objects.
However, we will need to place requirements on $J$ to ensure that we can choose a gauge for $\cC$ which trivialize the corresponding gauge transformation. 
We incorporate some of these requirements into the following definition.
\begin{defn}
    \label{defn:CCSym}
    A \textbf{charge-conjugation symmetry} of a unitary ribbon category $\cC$ is a unitary braided monoidal equivalence $J:\cC\to\cC$ such that $J(x)\cong x^\vee$ for all $x\in\Obj(\cC)$
    together with a lift to a $\mathbb{Z}_2$-crossed braided extension of $\cC$.
\end{defn}
As a first example, if every object of $\cC$ is self-dual, then $\Id_{\cC}$ is a charge-conjugation symmetry.

We unpack the data of a lift of $J$ to a general $\mathbb{Z}_2$-crossed braided extension.
First, there must be a monoidal unitary natural isomorphism $\eta:J^2\to\Id_{\cC}$, so that we actually get a group homomorphism $\mathbb{Z}_2\to\Aut(\cC)$, the group of braided autoequivalences of $\cC$.
This rules out situations where $J^2$ is a nontrivial soft autoequivalence of $\cC$ \cite{10.1063/1.4895764,Kobayashi:2025ykb}.
Second, this $\mathbb{Z}_2$ action must be gaugeable, meaning that the group homomorphism $\mathbb{Z}_2\to\Aut(\cC)$ must lift to a $2$-group homomorphism $\mathbb{Z}_2\to\Pic(\cC)$ \cite[{Sec.}~7.8]{MR2677836} \cite{Barkeshli:2014cna}.\footnote{Physically, this statement corresponds to the fact that $\mathbb{Z}_2$ has vanishing 't Hooft anomaly (guaranteed by the fact that $H^4(\mathbb{Z}_2,U(1))$ is trivial) and that the charge conjugation does not form part of a non-split 2-group. It is in this sense that charge conjugation is gaugeable: it is non-anomalous and one is not forced to gauge a 1-form symmetry along with it. We emphasize that imposing gaugeability of charge conjugation in this sense is a priori a choice (although we are not aware of explicit examples where this is not true).}
We characterize this requirement in terms of the contribution of $\eta$ to the gauge symbol by the following Lemma.
\begin{lem}
    \label{lem:resolveCC}
    Suppose $J:\cC\to\cC$ is a charge-conjugation symmetry.
    Then there is a charge-conjugation symmetry $J':\cC\to\cC$ monoidally naturally unitary equivalent to $J$ which permutes $\Irr(\cC)$, such that $(J')^2$ has trivial gauge symbol.
\end{lem}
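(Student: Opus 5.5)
First replace $J$ by a standard equivalence. By Proposition~\ref{prop:URCBased=URC} and Construction~\ref{const:standardEq}, we may assume $\cC$ is a based unitary ribbon category and that $J$ is unitarily monoidally naturally isomorphic to a standard equivalence $J'$. Since $J(x)\cong x^\vee$, condition (SF1) means $J'$ restricts to the bijection $x\mapsto x^\vee$ on $S_{\cC}$. In particular $J'$ permutes $\Irr(\cC)$ and $(J')^2$ fixes every element of $S_{\cC}$. The lift of $J$ to a $\mathbb{Z}_2$-crossed braided extension transports along this isomorphism, so $J'$ is again a charge-conjugation symmetry. Let $U$ denote the gauge symbol of $J'$. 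By \eqref{eq:ATGTComp}, the gauge symbol of $(J')^2$ is
\[
V_x^{yz}=U_{x^\vee}^{y^\vee z^\vee}\,U_x^{yz}~.
\]

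Next use the extension data. It supplies a monoidal unitary natural isomorphism $\eta:(J')^2\to\Id_{\cC}$. Both functors are standard and act as the identity on $S_{\cC}$, so Definition~\ref{defn:ex} turns $\eta$ into a 2-morphism of $\AT$ given by phases $\eta_x$. Equation~\eqref{eq:ATNat} then reads
\[
\frac{\eta_y\eta_z}{\eta_x}\,V_x^{yz}=\mathbbm 1~.
\]
So $V$ is a ``coboundary'' of the phases $\eta$. The remaining task is to absorb these phases into $J'$ itself.

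To do this, modify $J'$ by a natural unitary $\beta$ with components $\beta_x\in U(1)$ on $S_{\cC}$, that is, conjugate the tensorator as in Construction~\ref{const:standardEq}. This gives a standard functor $J''$ with gauge symbol $(\beta_y\beta_z/\beta_x)\,U_x^{yz}$, which is still monoidally naturally equivalent to $J$. The gauge symbol of $(J'')^2$ becomes
\[
\frac{\beta_y\beta_z\,\beta_{y^\vee}\beta_{z^\vee}}{\beta_x\,\beta_{x^\vee}}\,V_x^{yz}~.
\]
We therefore need to choose $\beta$ so that
\[
\beta_x\beta_{x^\vee}=\eta_x\cdot(\text{a character of the fusion rules})~.
\]
For non-self-dual $x$, split each orbit $\{x,x^\vee\}$ and solve freely, e.g.\ $\beta_x=\eta_x$ and $\beta_{x^\vee}=1$. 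This requires first checking $\eta_x=\eta_{x^\vee}$. For self-dual $x$ we need $\beta_x^2=\eta_x$, which can always be solved by taking a square root.

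\textbf{Main obstacle.} The difficulty is consistency. Here the choice of $\eta$ matters: $\eta$ is only determined up to monoidal automorphisms of $\Id_{\cC}$, i.e.\ characters of the universal grading group. The relevant compatibility is $\eta_{J'(x)}=\eta_x$, i.e.\ $J'\eta=\eta J'$. This is exactly where gaugeability, meaning the lift to a $\mathbb{Z}_2$-crossed extension rather than merely a homomorphism to $\Aut(\cC)$, should be used. The $H^3$ obstruction to lifting to $\Pic(\cC)$ is the failure of $\eta J'$ and $J'\eta$ to agree up to a coboundary. Vanishing of that obstruction lets us adjust $\eta$ by a grading character so that $\eta_{x^\vee}=\eta_x$ holds on the nose. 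I expect this step, translating the extension condition into the identity $\eta_{x^\vee}=\eta_x$ on standard simples, to need the most care; the rest is bookkeeping with \eqref{eq:ATNat} and \eqref{eq:ATGTComp}.
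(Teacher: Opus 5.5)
Your proposal is correct and follows essentially the same route as the paper: standardize $J$, read off the phases of $\eta:J^2\to\Id_{\cC}$, use gaugeability (the obstruction $\Omega_x=\eta_{J(x)}/\eta_x$, whose class is trivial exactly when $\Omega=\nu^2$ for a character $\nu$ of the universal grading group) to adjust $\eta$ so that $\eta_{x^\vee}=\eta_x$, and then translate $J$ by phases whose product over each orbit $\{x,x^\vee\}$ recovers $\eta_x$. The only difference is cosmetic: the paper translates uniformly by $(\eta'_x)^{1/2}$, with the same root chosen for $x$ and $x^\vee$, rather than splitting non-self-dual orbits as you do.
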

\begin{proof}
 By the results of {Sec.}~\ref{sssec:2groupoids}, we may assume without loss of generality that $\cC$ is a standard unitary ribbon category and that $J$ is a standard equivalence.
 The obstruction to lifting $J$ to a $\mathbb{Z}_2$-crossed braided extension is then described in \cite[{Sec.}~IV.D.1]{Barkeshli:2014cna}.
 If $\eta_x=\beta_x\id_x$ for $x\in S_{\cC}$, then the symmetry is gaugeable if and only if the obstruction $\Omega_x:=\frac{\beta_{J(x)}}{\beta_x}$ is cohomologically trivial \cite[Eq.~(191)]{Barkeshli:2014cna}.
 Altering $\Omega$ by a coboundary means replacing $\beta_x$ with $\beta'_x:=\beta_x\nu_x$ for $\nu\in\operatorname{Char}(\cU(\cC))$, the universal grading group of $\cC$ \cite[Eq.~(92)]{Barkeshli:2014cna} \cite{EtingofNikshychOstrik2010}.
 Since $\nu_x=\nu_{J(x)}^{-1}$ for such a character, the effect on $\Omega$ is to replace $\Omega_x$ with $\nu_x^2\Omega_x$.
 
 If $\Omega$ is cohomologically trivial, we may choose $\nu$ so that 
 \begin{equation}
    \label{eq:Omega1}
     \Omega_x':=\frac{\beta'_{J(x)}}{\beta'_x}=1~,
 \end{equation}
 for all $x\in S_{\cC}$.
 To implement this choice we
 keep the functor $J$ the same, while replacing the monoidal natural unitary $\eta:J^2\to\Id_{\cC}$ with $\eta'$ given by
 \begin{equation}
     \eta'_x:=\beta'_x\id_x~.
 \end{equation}
 The natural unitary $\eta'$ is still monoidal because it is the postcomposition of $\eta$ with is the monoidal natural automorphism of $\Id_{\cC}$ given by the character $(\nu_x)$.
 
 By equation \eqref{eq:Omega1}, $\beta'_x=\beta'_{J(x)}$,
 so may define $J'$ as the translation of $J$ by the natural isomorphism $\theta$ given by $\theta_x=(\beta'_x)^{1/2}\id_{J(x)}$.
 The whiskerings $J\cdot\theta$ and $\theta\cdot J$ translate the simple $x$ by $\beta_{J(x)}^{1/2}$ and $\beta_{x}^{1/2}$ respectively, but as these are the same phases, $J'^2$ is the translation of $J^2$ by $\eta$, which has trivial gauge symbol and is therefore $\Id_{\cC}$, as desired.
\end{proof}

Given a charge-conjugation symmetry $J$ in the sense of Definition~\ref{defn:CCSym}, we may write down a braided equivalence
\begin{equation}
    \label{eq:RDef}
    R:\cC\xrightarrow{(\cdot)^\dag}\overline{\cC}^{\op}\xrightarrow{\overline{\pi}^{\op}}\overline{\cC}^{\mop}\xrightarrow{\overline{\refB}^{\mop}}\overline{\cC}^{\rev}\xrightarrow{\overline{J}}\overline{\cC}^{\rev}~,
\end{equation}
which is, without loss of generality, the identity functor on objects.
Consequently, $R$ determines a gauge equivalence between $F$ and $\overline{F}$ (and between $R$ and $R^T$).
However, $R$ may still have a nontrivial gauge symbol, which we would wish to deform away by changing $R$ up to a natural isomorphism and/or choosing a different gauge for $\cC$.
As foreshadowed in {Sec.}~\ref{folding} and {Sec.}~\ref{sec:mechanism}, our strategy will be to arrange things so that $\overline{R}\circ R=\Id_{\cC}$, which will show that the gauge symbol for $R$ is symmetric, and then to apply Takagi factorization to find the appropriate change of gauge for $\cC$.

We begin by finding the gauge symbol for $\overline{R}\circ R$.
In order to do so, we write down a canonical natural isomorphism $\overline{R}\circ R\to\Id_{\cC}$ and then apply equation~\ref{eq:ATNat}.
\begin{prop}
    \label{prop:R2=1}
    Let $\cC$ be a unitary ribbon category with unitary braided right dual functor $\pi$ and canonical spherical structure $\mu:\pi^{\op,\mop}\circ\pi\to\Id_{\cC}$.
    Suppose that $J:\cC\to\cC$ is a charge-conjugation symmetry, with tensorator $(\theta_{x,y})$ and $J^2=\Id_{\cC}$.

    Define
    \begin{equation}
        \label{eq:defnR}
        R = \overline{J}^{\rev}\circ\overline{\refB}^{\mop}\circ\overline{\pi}^{\op}\circ(\cdot)^\dag:\cC\to\overline{\cC}^{\rev}~.
    \end{equation}
    Then we have $\overline{R}^{\rev}\circ R\cong\Id_{\cC}$, with the natural isomorphism given by 
    \begin{equation}
        \label{eq:R2Homotopy}
        (\id_{J^2}\cdot\mu)\circ(\id_J\cdot\psi^{\dag}\cdot\id_{\pi})\circ(\id_{J\pi(\cdot)^\dag J}\cdot\mu)~,
    \end{equation}
    where $\psi$ is the natural unitary $\pi\circ J\to J^{\op,\mop}\circ\pi$ from Lemma~\ref{lem:JPiSquare}
    and $\mu$ is the natural unitary from Lemma~\ref{lem:dagPiLoop}.
\end{prop}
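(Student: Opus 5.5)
The plan is to expand $\overline{R}^{\rev}\circ R$ as a composite of eight structural functors and then collapse it in stages, using the commuting squares and contraction equivalences of Sec.~\ref{sssec:commutingSquares}. Since $\overline{(\cdot)}$, $(\cdot)^{\op}$, $(\cdot)^{\mop}$, $(\cdot)^{\rev}$ are strict and strictly commuting, we have
\[
\overline{R}^{\rev}= J^{\rev,\rev}\circ \refB^{\mop,\rev}\circ\pi^{\op,\rev}\circ\overline{(\cdot)^\dag}^{\rev},
\]
so the composite reads
\[
J\circ\refB^{\mop}\circ\pi^{\op}\circ\overline{(\cdot)^\dag}\circ\overline{J}\circ\overline{\refB}^{\mop}\circ\overline{\pi}^{\op}\circ(\cdot)^\dag,
\]
with the involution superscripts suppressed where harmless.

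\textbf{Step 1: move the inner $\overline{J}$ leftward.} Lemma~\ref{lem:dagJSquare} commutes $\overline{J}$ past $\overline{(\cdot)^\dag}$ strictly. Lemma~\ref{lem:JPiSquare} then commutes it past $\pi^{\op}$, at the cost of the 2-cell $\psi$ (or $\psi^\dag$, depending on direction). Finally, Lemma~\ref{lem:BJSquare} commutes it past $\refB^{\mop}$ strictly. The two copies of $J$ now sit adjacent on the left as $J^2=\Id_{\cC}$, or are contracted via $\eta$ after Lemma~\ref{lem:resolveCC}.

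\textbf{Step 2: cancel the braiding functors.} Using Lemmas~\ref{lem:BDagSquare} and \ref{lem:BPiSquare}, move the two copies of $\refB$ together. They cancel strictly, since $\refB^{\mop,\rev}\circ B=\Id$, and the $\op$ conjugation in Lemma~\ref{lem:BPiSquare} turns $\refB$ into $B$ as required. These moves introduce no 2-cells, which matches the absence of any braiding term in \eqref{eq:R2Homotopy}.

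\textbf{Step 3: contract the duality and dagger functors.} What remains is $\pi^{\op,\mop}\circ\overline{(\cdot)^\dag}^{\mop}\circ\overline{\pi}^{\op}\circ(\cdot)^\dag$, conjugated by $J$. Lemma~\ref{lem:dagPiLoop} contracts this composite to $\Id_{\cC}$ via $\mu$, and applying it at the two relevant whiskerings produces the two $\mu$ factors in \eqref{eq:R2Homotopy}. Composing the 2-cells gives the claimed natural isomorphism. It is monoidal because each constituent ($\mu$, $\psi$) is monoidal and whiskering preserves monoidality.

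\textbf{Main obstacle.} I expect the bookkeeping in Step 1 to be the hardest part: checking that the 2-cell really appears as $\psi^\dag$ whiskered as $\id_J\cdot\psi^\dag\cdot\id_\pi$, and that the placement of $\mu$ on either side is consistent with the ordering of the functors. My first attempt would be to verify this componentwise on a simple $x$, writing each 2-cell via \eqref{eq:JPi2CellAlg} and \eqref{eq:unitaryPivotalStructure} and checking that the types match. I would not attempt to evaluate the composite on trivalent bases here.
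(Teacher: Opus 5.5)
Your argument is the paper's. The paper first brings the two $\refB$'s together and cancels them using Lemmas~\ref{lem:BDagSquare}, \ref{lem:BJSquare} and \ref{lem:BPiSquare}. It then brings the $J$'s together at the cost of $\psi^\dag$ using Lemmas~\ref{lem:dagJSquare} and \ref{lem:JPiSquare}, and finally contracts with $\mu$. Doing the first two steps in the opposite order, as you do, is harmless: every braiding move is a strict equality of functors, so the whiskered $\psi^\dag$ comes out the same.

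One correction to Step 3. After these moves only a single loop $\pi^{\op,\mop}\circ\overline{(\cdot)^\dag}^{\mop}\circ\overline{\pi}^{\op}\circ(\cdot)^\dag$ remains, and $J^2=\Id_{\cC}$ has already been used, so nothing is ``conjugated by $J$''. Lemma~\ref{lem:dagPiLoop} is therefore applied exactly once, as in the paper's proof. There is no second whiskering that would produce another $\mu$.
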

\begin{proof}
    To simplify $\overline{R}\circ R$ (we henceforth omit the $(\cdot)^{\rev}$ in $\overline{R}^{\rev}$ as it is a property), we use the commuting squares constructed in {Sec.}~\ref{sssec:commutingSquares}.

    We first apply Lemmata~\ref{lem:BDagSquare}, \ref{lem:BJSquare}, and \ref{lem:BPiSquare} to bring the two instances of $\refB$ together.
    The application of Lemma~\ref{lem:BPiSquare} turns one $\refB$ into a $B$, and these cancel, as $\refB B=\Id_{\cC}$.
    We therefore obtain
    \begin{equation}
        \label{eq:R2NoB}
        \overline{R}\circ R=J\circ\pi^{\op,\mop}\circ\overline{(\cdot)^{\dag}}^{\mop}\circ \overline{J}^{\mop}\circ\overline{\pi}^{\op}\circ(\cdot)^\dag~.
    \end{equation}

    Next, we use Lemmata~\ref{lem:dagJSquare} and \ref{lem:JPiSquare} to reorder the functors so that the two applications of $J$ are consecutive, at the cost of applying the natural isomorphism $\psi$ from Lemma~\ref{lem:JPiSquare}.
    \begin{equation}
     \label{eq:R2PostPsi}
     \overline{R}\circ R=J\circ\pi^{\op,\mop}\circ\overline{J}^{\mop}\circ\overline{(\cdot)^{\dag}}^{\mop}\circ\overline{\pi}^{\mop}\circ(\cdot)^{\dag}\xrightarrow{\id_J\cdot\psi^\dag\cdot\id_{(\cdot)^\dag\pi(\cdot^\dag)}} J^2\circ\pi^{\op,\mop}\circ\left[{\overline{(\cdot)^\dag}}^{\mop}\circ\overline{\pi}^{\op}\circ(\cdot)^\dag\right]~.
    \end{equation}
    
    Finally, since $J^2=\Id_{\cC}$, the target of \eqref{eq:R2PostPsi} is unitarily equivalent to $\Id_{\cC}$ by the isomorphism $\mu$ from Lemma~\ref{lem:dagPiLoop}. 
\end{proof}

\begin{cor}
    \label{cor:R2Symbol}
    The gauge symbol associated with the autoequivalence $\overline{R}R:\cC\to\cC$ is given by
    \begin{equation}        \label{eq:R2HomotopySymbol}U_x^{yz}=\frac{\sigma_y\sigma_z}{\sigma_x}I~, \ \ \ \sigma_i:={\varkappa_i\over t_i}~,
    \end{equation}
    where $(\varkappa_x)_{x\in S_{\cC}}$ are the scalars associated with the dual functor $\pi$ by Proposition~\ref{prop:standardBalancedDualFunctor}
    and $t_x$ is the scalar such that $\psi_x=t_x\id_x$, where $\psi$ is the natural isomorphism from Lemma~\ref{lem:JPiSquare}.
\end{cor}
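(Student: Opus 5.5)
The plan is to read off the gauge symbol of $\overline{R}R$ from the explicit natural isomorphism \eqref{eq:R2Homotopy} of Proposition~\ref{prop:R2=1}, using the dictionary between monoidal natural unitaries and gauge symbols in Definition~\ref{defn:ATNat}. By Propositions~\ref{prop:URCBased=AT} and \ref{prop:URCBased=URC}, we may assume $\cC$ is based, $\pi$ is a standard dual functor, and $J$ is standard with $J^2=\Id_{\cC}$ (Lemma~\ref{lem:resolveCC}). The dagger functor has trivial gauge symbol, and the two instances of $\refB$ cancel strictly in the proof of Proposition~\ref{prop:R2=1}. The remaining standard functors compose (via \eqref{eq:ATGTComp}) to something whose symbol we can compare against $\Id_{\cC}$.

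First, observe that every component of \eqref{eq:R2Homotopy} at a standard simple $x\in S_{\cC}$ is an endomorphism of a simple object, hence a phase. For the outer factors $\mu$, Corollary~\ref{cor:pivotalContractionScalars} gives $\mu_x=\varkappa_x\id_x$. For the whiskered $\psi^\dag$, Lemma~\ref{lem:JPiSquare} gives a component that is a scalar; by definition $\psi_x=t_x\id_x$, so the dagger contributes $\overline{t_x}=t_x^{-1}$.

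Care is needed with the whiskerings. The left whiskerings by $J$ and $J\pi(\cdot)^\dag J$ act on morphisms by functors that, being standard and braided, send scalars $c\,\id$ to $c\,\id$ or $\bar c\,\id$ (for the conjugate-linear pieces). The right whiskering by $\pi$ re-indexes $x\mapsto \pi(x)$, and composing with $J$ brings the label back to $x$. I would track these to show that the total component of the isomorphism $\overline{R}R\to\Id_{\cC}$ at $x$ is $\eta_x=\sigma_x^{\pm1}\id_x$ with $\sigma_x=\varkappa_x/t_x$. Here the two $\mu$ factors should combine so that only one net $\varkappa$ survives. This is plausible because one $\mu$ appears under an odd number of conjugations, and $\varkappa_{J(x)}$ relates to $\varkappa_x$ through the dualities. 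Pinning down exactly how the two $\varkappa$'s and the conjugations combine is the step I expect to be the main obstacle. I would handle it by first treating the self-dual case $J=\Id$, where $t_x=1$ and $\varkappa_x=\pm1$ so conjugation is harmless, and then checking the general case componentwise with \eqref{eq:unitaryPivotalStructure}.

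Finally, apply \eqref{eq:ATNat} with $V$ the trivial symbol of $\Id_{\cC}$ and $U$ the symbol of $\overline{R}R$. This gives $I=(\eta_y\eta_z/\eta_x)\,U_x^{yz}$, so $U_x^{yz}=\frac{\sigma_y\sigma_z}{\sigma_x}I$ once the orientation or inversion convention for $\eta$ is matched. Since the result only uses ratios, a global inversion of $\sigma$ would appear as an inversion of $\lambda$. I would fix it by comparing with the multiplicity-free computation $\mathsf p_a^{bc}=p_bp_c/p_a$ from Sec.~\ref{sec:mechanism}, which is the analogous factor.
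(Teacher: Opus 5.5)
You have the same skeleton as the paper's proof: read off the components of $\overline{R}R\Rightarrow\Id_{\cC}$ at standard simples as phases, then apply \eqref{eq:ATNat}. However, the step you flag as the main obstacle is resolved incorrectly, and as written it would fail. You take \eqref{eq:R2Homotopy} at face value as containing two $\mu$-factors and expect them to ``combine so that only one net $\varkappa$ survives.'' That cannot happen. In your own test case ($J=\Id_{\cC}$, all simples self-dual, $\pi(x)=x$ on standard simples), every whiskered component is evaluated at the same label $x$, and $\varkappa_x=\pm1$. Two such factors therefore multiply to $\varkappa_x^{2}$ or $\varkappa_x^{0}$, i.e.\ to $1$, however conjugations and inversions are distributed. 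You would conclude that the symbol of $\overline{R}R$ is trivial for every self-dual category, so that $J=\Id_{\cC}$ is always flat. That contradicts Sec.~\ref{ssec:FS2ObstructionExample}, where $\lambda^{\sigma\rho}_{\rho}=-1$ and the $F$-symbols are inherently complex.

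The missing observation is that the homotopy actually built in the proof of Proposition~\ref{prop:R2=1} uses $\mu$ exactly once. The strict squares of Lemmata~\ref{lem:BDagSquare}, \ref{lem:BJSquare}, \ref{lem:BPiSquare} and \ref{lem:dagJSquare} remove the braiding functors and move $\overline{J}$ past the dagger. Then $\psi^\dag$ is applied once (Lemma~\ref{lem:JPiSquare}) to make the two copies of $J$ adjacent, $J^2=\Id_{\cC}$ is used, and the single $\mu$ of Lemma~\ref{lem:dagPiLoop} contracts $\pi^{\op,\mop}\circ\Gamma$. The extra left-whiskered factor $\id_{J\pi(\cdot)^\dag J}\cdot\mu$ in the display cannot be a step of a homotopy out of $\overline{R}R$: its source contains three dual functors, while $\overline{R}R$ contains two. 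It should not be counted.

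The component at $x\in S_{\cC}$ is then a single $\varkappa_x$ (Corollary~\ref{cor:pivotalContractionScalars}) times a single $t$-phase. That phase is read at $\pi(x)\cong x^\vee$, as you correctly noticed; compare Appendix~\ref{KMAppendix}. Applying \eqref{eq:ATNat} then finishes exactly as in the paper.

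Your worry about the orientation of $\eta$ is also unnecessary. The symbol of $\overline{R}R$ is $\overline{M}M$ with $M$ unitary. From $\overline{M}M=\lambda I$ you get $M=\bar\lambda M^{T}$, hence $M=\bar\lambda^{2}M$ and $\lambda=\pm1$. So $\lambda$ and $\lambda^{-1}$ coincide, and no comparison with $\mathsf p_a^{bc}$ is needed.
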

\begin{proof}
 By Corollary~\ref{cor:pivotalContractionScalars}, we know that the final natural isomorphism $\mu$ from Proposition~\ref{prop:R2=1} is given by $\mu_x=\varkappa_x\id_x$, just as $\psi_x$ is given by $t_x\id_x$.
 
 Passing the natural isomorphism \eqref{eq:R2Homotopy} into the language of anyon theories gives the following instance of \eqref{eq:ATNat}.  
 \begin{equation}
     U_x^{yz}=\frac{\sigma_y\sigma_z}{\sigma_x}I_x^{yz}~,
 \end{equation}
 where $I_x^{yz}$ is the gauge symbol associated with $\Id_{\cC}$, {i.e.} the identity matrix.
\end{proof}

In view of equation~\ref{eq:R2HomotopySymbol}, we define the following additional property of charge conjugation symmetries, and use it to state Theorem~\ref{thm:main}, which is the precise version of Claim~\ref{claim:main}.
\begin{defn}
    A charge-conjugation symmetry is \textbf{flat} when there exists a choice of unitary dual functor such that the gauge symbol \eqref{eq:R2HomotopySymbol} is trivial.
\end{defn}
Note that flatness is necessarily invariant under translating $J$ by unitary natural isomorphisms.
\begin{thm}
    \label{thm:main}
    Let $\cC$ be a unitary ribbon category with a flat charge conjugation symmetry $J:\cC\to\cC$.
    Then there is a gauge for which $\cC$ has real $F$-symbol and symmetric $R$-symbol.
\end{thm}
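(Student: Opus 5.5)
We reduce to the standard, based setting and then perform a Takagi factorisation. By Propositions~\ref{prop:URCBased=URC} and \ref{prop:URCBased=AT}, we may assume $\cC$ is a based unitary ribbon category with standard simples $S_{\cC}$. We may take $\pi$ to be a standard balanced braided dual functor realising the flatness condition. By Lemma~\ref{lem:resolveCC}, we may replace $J$ by a standard charge-conjugation symmetry that permutes $S_{\cC}$ and satisfies $J^2=\Id_{\cC}$ on the nose, since its square has trivial gauge symbol. Flatness is insensitive to this replacement.

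Next we build $R$ by \eqref{eq:defnR}. Using Construction~\ref{const:standardEq}, we replace it by a standard equivalence $\cC\to\overline{\cC}^{\rev}$ that is the identity on $S_{\cC}$. This is possible because $\pi$ and $J$ both act as duality on simples, so their composite fixes each $x\in S_{\cC}$. The underlying fusion category of $\overline{\cC}^{\rev}$ is $\overline{\cC}$, with the same trivalent bases. Hence \eqref{eq:gaugeTransformationMonoidal} for this standard equivalence reads
\[
\overline{F}=M_L\,F\,M_R^\dagger ,
\]
where $M_x^{yz}$ is the gauge symbol of $R$, i.e.\ the categorical version of the matrices $\mathsf M_a^{bc}$ of Sec.~\ref{sec:mechanism}. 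Likewise, \eqref{eq:gaugeTransformationBraided} relates $R$-symbols of $\cC$ to those of $\overline{\cC}^{\rev}$, namely the conjugate transposes of the inverse braiding, that is, the transposes.

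The key step is to compute the gauge symbol of $\overline{R}\circ R$ in two ways. Composition of gauge transformations \eqref{eq:ATGTComp}, together with the identity action on labels, gives $\overline{M_x^{yz}}\,M_x^{yz}$; the bar appears because $\overline{R}$ acts on conjugated coefficients. On the other hand, Proposition~\ref{prop:R2=1} and Corollary~\ref{cor:R2Symbol} give a monoidal natural unitary $\overline{R}R\cong\Id_{\cC}$ with symbol $\frac{\sigma_y\sigma_z}{\sigma_x}I$, which is $I$ by flatness. The delicate point, and the one I expect to be the main obstacle, is that a natural isomorphism only equates gauge symbols up to the factor in \eqref{eq:ATNat}. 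One must check that the standardisation choices ($\iota_x,\tau_{y,z}$ in Construction~\ref{const:standardEq}) made for $R$ and for $\overline{R}$ are conjugates of each other. Only then is the symbol of the composite literally $\overline{M}M$, and the phases $\sigma$ are exactly those computed. I would handle this by choosing the standardisation of $R$ first and defining that of $\overline{R}$ as its image under complex conjugation, which is a strict $2$-functor. With this in place, $\overline{M}M=I$, and unitarity gives $M=\overline{M}^{-1}=M^T$, so each $M_x^{yz}$ is symmetric.

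Finally, we apply the argument of Sec.~\ref{ssec:mechanismIfSymmetric}. Takagi factorisation gives $M_x^{yz}=W_x^{yz}(W_x^{yz})^T$ with $W$ unitary, and we regauge by $\Gamma_x^{yz}=(W_x^{yz})^T$. The computation following \eqref{eq:realGaugeHigherMultiplicity} then shows $\overline{F}=F$ in the new basis. Equivalently, the new gauge makes the symbol of $R$ trivial, so $R$ becomes an equality of anyon data between $\cC$ and $\overline{\cC}^{\rev}$. Applying the same regauging to \eqref{eq:gaugeTransformationBraided} yields $R_x^{yz}$ equal to the corresponding $R$-symbol of $\overline{\cC}^{\rev}$, which, after unwinding $\beta^{\rev}_{x,y}=\beta_{y,x}^\dagger$ and conjugation, is the transposed matrix $R_x^{zy}{}^{T}$. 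This is the claimed symmetry of the $R$-symbol.
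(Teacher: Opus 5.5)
Your proposal is correct and takes essentially the same route as the paper. You pass to the based setting, show by flatness that the gauge symbol $M$ of the standard equivalence $R:\cC\to\overline{\cC}^{\rev}$ satisfies $\overline{M}M=I$ (so $M$ is symmetric unitary), and then Takagi-factorise $M$ to regauge so that $R$ has trivial symbol, which gives $F=\overline{F}$ and $R_x^{yz}=(R_x^{zy})^T$. The standardisation issue you flag, which the paper passes over, is in fact harmless: two standardisations of $R$ differ by phases $c_x$ on simples, which multiply each $M_x^{yz}$ by a single phase (of the form $c_yc_z/c_x$ or its conjugate), and that phase cancels in $\overline{M}M$, so the symbol of $\overline{R}R$ does not depend on those choices.
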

\begin{proof}
    By Proposition~\ref{prop:URCBased=URC}, we can work in the based setting.
    That is, it is enough to consider a based unitary ribbon category $(\cC,S,(B_x^{yz})_{x,y,z\in S})$ and for which $J$ is a standard autoequivalence.
    Let $B_x^{yz}=\{\gamma_x^{yz}[j]\}_{j=1}^{N_x^{yz}}$.
    By Proposition~\ref{prop:R2=1} and Corollary~\ref{cor:R2Symbol}, that $J$ is a flat charge conjugation symmetry implies that there is a standard equivalence $R:\cC\to\overline{\cC}^{\rev}$ such that $R(x)=x$ for all $x\in S$ and $\overline{R}\circ R:\cC\to\cC$ has trivial gauge symbol.
    Since $\overline{R}\circ R$ is a standard equivalence, we actually have $\overline{R}\circ R=\Id_{\cC}$.

    Let $M_x^{yz}$ denote the gauge transformation associated with $R$, {i.e.}~$M=\Ex(R)$, where $\Ex$ is the functor from Definition~\ref{defn:ex}.
    Then the gauge transformation associated with $\overline{R}$ is $\overline{M}$, so the equation $\overline{R}\circ R=\Id_{\cC}$ becomes
    \begin{equation}
        \overline{M}_x^{yz}\circ M_x^{yz}=I_x^{yz}~.
    \end{equation}
    Since $M_x^{yz}$ is unitary, this is equivalent to
    \begin{align}
        \overline{M_x^{yz}} &= (M_x^{yz})^\dag\\
        \label{eq:BasedMSymmetric}
        M_x^{yz} &= (M_x^{yz})^T~.
    \end{align}
    In other words, each $M_x^{yz}$ is a symmetric unitary matrix.

    Armed with \eqref{eq:BasedMSymmetric}, we proceed as in {Sec.}~\ref{sssec:higherMultDiagrammatic}.
    By Takagi factorization, for every $x,y,z\in S$, there is a unitary $W_x^{yz}$ such that
    \begin{equation}
        \label{eq:BasedDefnW}
        M_x^{yz}=(W_x^{yz})^TW_x^{yz}~.
    \end{equation}
    Consider the based unitary fusion category $\cC':=(\cC,S,((W_x^{yz})^\dag C_x^{yz}))$, where the basis $C_x^{yz}$ is the translation of $B_x^{yz}$ by $(W_x^{yz})^\dag$.
    That is, $C_x^{yz}=\{\xi_x^{yz}[j]\}_{j=1}^{N_x^{yz}}$, where
    \begin{equation}
        \xi_x^{yz}[k] := \sum_{j=1}^n (W_x^{yz})^\dag[k,j]\gamma_x^{yz}[j]~.
    \end{equation}
    We then have $\overline{\cC'}=(\overline{\cC},S,C_x^{yz})$.
    
    The functor $R$ is also a standard equivalence $\cC'\to\overline{\cC'}^{\rev}$, since $\cC'$ has the same underlying standard unitary ribbon category as $\cC$.
    Note, however, that the basis $C_x^{yz}$ for $\overline{\cC}^{\rev}(x\to yz)$ is the translation of $B_x^{yz}$ not by $(W_x^{yz})^\dag$, but rather by $\overline{(W_x^{yz})^\dag}=(W_x^{yz})^T$, since the action of scalars on morphisms in $\overline{\cC}$ is complex-conjugated.
    This means that the gauge symbol associated with the standard equivalence $R:\cC'\to\overline{\cC'}^{\rev}$ is
    \begin{equation}
        ((W_x^{yz})^T)^\dag M_x^{yz}(W_x^{yz})^\dag = I_x^{yz}~,
    \end{equation}
    where the simplification to the identity matrix follows from \eqref{eq:BasedDefnW}.
    
    Since $R:\cC'\to\overline{\cC'}^{\rev}$ is a standard equivalence with $R(x)=x$ for $x\in S$ and trivial associated gauge symbol, the $F$- and $R$-symbols of $\cC'$ and $\overline{\cC'}^{\rev}$ are equal.
    Equality of the $F$-symbols means that $F_C=\overline{F_C}$, where $F_C$ is the $F$-symbol of $\cC$ computed in the gauge $(C_x^{yz})$.
    Equality of the $R$-symbols means that $R_C=\overline{R_C}^\dag=R_C^T$, where $R_C$ is the $R$-symbol of $\cC$ computed in the gauge $(C_x^{yz})$.
    Hence, $(C_x^{yz})$ is a choice of gauge for $\cC$ in which $F$ is entrywise real and each $R$-matrix is symmetric.
\end{proof}

In the case where every object of $\cC$ is self-dual, we may simplify Theorem~\ref{thm:main} as follows.
\begin{cor}
    \label{cor:mainSelfDual}
    If $\cC$ is a unitary ribbon category, every object of $\cC$ is self-dual, and the second Frobenius-Schur indicator is a $\mathbb{Z}_2$ grading of $\cC$, then there is a gauge for which $\cC$ has real $F$-symbol and symmetric $R$-symbol. 
\end{cor}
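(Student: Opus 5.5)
We will deduce Corollary~\ref{cor:mainSelfDual} directly from Theorem~\ref{thm:main}, by exhibiting a flat charge-conjugation symmetry. Since every object is self-dual, take $J=\Id_{\cC}$ with trivial tensorator. Its lift to a $\mathbb{Z}_2$-crossed braided extension is the trivial one: take $\eta=\id$, so $J^2=\Id_{\cC}$ strictly and the obstruction $\Omega$ of Lemma~\ref{lem:resolveCC} is identically $1$. Hence $\Id_{\cC}$ is a charge-conjugation symmetry in the sense of Definition~\ref{defn:CCSym}, as already noted after that definition. We work in the based setting of Proposition~\ref{prop:URCBased=URC} and choose a standard dual functor $\pi$ as in Proposition~\ref{prop:standardBalancedDualFunctor}. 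Self-duality forces $\pi(x)=x$ for all $x\in S$, and then $\varkappa_x$ is exactly the second Frobenius--Schur indicator of $x$.

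Next we compute the scalars $t_x$ of Corollary~\ref{cor:R2Symbol}, defined by $\psi_x=t_x\id_x$ for the $2$-cell of Lemma~\ref{lem:JPiSquare}. With $\Phi=\Id_{\cC}$ and $\theta=\id$, formula \eqref{eq:JPi2CellAlg} reduces to
\[
\psi_x=(e_x\otimes\id_{\pi(x)})\circ(\id_{\pi(x)}\otimes c_x),
\]
which is the zig-zag composite and so equals $\id_{\pi(x)}$. Thus $t_x=1$ and $\sigma_x=\varkappa_x$. Corollary~\ref{cor:R2Symbol} then identifies the gauge symbol of $\overline{R}\circ R$ with $\frac{\varkappa_y\varkappa_z}{\varkappa_x}I$ on $V_x^{yz}$.

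The grading hypothesis says that $x\mapsto\varkappa_x\in\{\pm1\}$ is a $\mathbb{Z}_2$ grading of the fusion rules. Concretely, $\varkappa_x=\varkappa_y\varkappa_z$ whenever $N_x^{yz}\neq0$, and since $\varkappa^2=1$ this is equivalent to $\varkappa_y\varkappa_z/\varkappa_x=1$. So the gauge symbol is trivial for this choice of $\pi$, and $J=\Id_{\cC}$ is flat. Theorem~\ref{thm:main} then gives a gauge in which the $F$-symbols are real and the $R$-matrices are symmetric.

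The only delicate point is checking that $\psi$ is genuinely trivial, rather than carrying some hidden phase from the choice of duality data. The worry would be that it depends on whether $\pi$ is normalised compatibly with $e_x$ and $c_x$. Since $\psi$ is built from a single pair $(e_x,c_x)$ of duality data for one and the same dual, the zig-zag identity settles this without any further normalisation. For self-dual objects, Proposition~\ref{prop:standardBalancedDualFunctor} leaves no freedom in $\varkappa_x$, so flatness is exactly the stated grading condition.
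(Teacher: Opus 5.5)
Your proposal is correct and follows essentially the same route as the paper: take $J=\Id_{\cC}$, note $t_x=1$ so that $\sigma_x=\varkappa_x$, observe that flatness then reduces to $\varkappa_y\varkappa_z/\varkappa_x=1$ whenever $N_x^{yz}\neq0$ (the Frobenius--Schur grading), and invoke Theorem~\ref{thm:main}. Your zig-zag computation showing $\psi_x=\id$ fills in a step that the paper simply asserts.
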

\begin{proof}
    Since every object of $\cC$ is self-dual, we may pick $J=\Id_{\cC}$.
    This ensures that $t_x=1$ for all $x$, so the flatness condition $U_x^{yz}=I$ reduces to
    \begin{equation}
        \label{eq:FS2Triv}
        \frac{\varkappa_y\varkappa_z}{\varkappa_x}=1~,
    \end{equation}
    for every $x$, $y$, and $z$ with $N_x^{yz}\neq0$.
    This is precisely the statement that the Frobenius-Schur indicator is a $\mathbb{Z}_2$ grading on $\cC$.
\end{proof}
In {Sec.}~\ref{ssec:FS2ObstructionExample}, we exhibit a group $G$ where every representation of $G$ is self-dual but $\Rep(G)$ has inherently complex $F$-symbol, showing that the condition on Frobenius-Schur indicator is not entirely superfluous.

Even when $\cC$ has non-self-dual objects, the conditions of Theorem~\ref{thm:main} are often satisfied.
We provide several positive examples in {Sec.}~\ref{implications}.
More generally, one might hope to interpret the trivialization of the $U_x^{yz}$ as the statement that there is a grading of $\cC$ given by a certain twisted Frobenius-Schur indicator, with the $t$-phases doing the twisting.
The appropriate notion of twisted Frobenius-Schur indicator was introduced in the case $\cC=\Rep(G)$ in \cite{kawanaka1990twisted}.
We translate between their notion and ours in Appendix~\ref{KMAppendix}.
We leave a full interpretation of our gauge symbol as a twisted Frobenius Schur indicator in unitary ribbon categories which are not Tannakian as a topic for future work.

\begin{remark}
    \label{rem:FGaugeBarF}
    In cases where there is a braided autoequivalence $J:\cC\to\cC$ which is a dual functor at the level of objects, but $J$ is not a flat charge conjugation symmetry, either because $J^2$ is a nontrivial soft autoequivalence, because $J$ is not gaugeable, or because there is no extension of the corresponding twisted Frobenius-Schur indicator to a $\mathbb{Z}_2$-grading, there is still a gauge equivalence between $F$ and $\overline{F}$, given by the functor $R$ from \eqref{eq:defnR}.
    Nevertheless, it is still possible that $F$ is not real in any gauge.
    An example of a unitary ribbon category with a charge conjugation symmetry (which fails to be flat) and inherently complex $F$-symbol is discussed in {Sec.}~\ref{ssec:FS2ObstructionExample}.
\end{remark}

\subsection{Diagrammatic interpretation in the self-dual, trivial Frobenius-Schur case}
\label{RealGauge}

We conclude this section by giving a diagrammatic interpretation of the symmetry condition derived above. We restrict here to the particularly transparent case in which every simple object is self-dual and has trivial second Frobenius-Schur indicator,
\begin{equation}
    a^\vee\simeq a,
    \qquad
    \nu_2(a)=+1~.
\end{equation}
In this case the charge-conjugation autoequivalence may be taken to be $J=\Id_{\cC}$.

We emphasise that the argument below is intended as a diagrammatic interpretation of the categorical proof of Theorem~\ref{thm:main}, rather than as an independent proof. A completely intrinsic graphical treatment of rotations and straightening of trivalent vertices requires the flagging conventions of \cite{Kitaev:2005hzj,simon2022straighteningfrobeniusschurindicator}, which distinguish the Frobenius-Schur factors generated by different ways of bending and straightening strands. Since all second Frobenius-Schur indicators are $+1$ in the present case, these factors are trivial, and we suppress the corresponding flags in the diagrams below.

In Sec.~\ref{sec:mechanism}, the transformation of a trivalent splitting space was represented by the local antiunitary
\begin{equation}
    A_a^{bc}=\mathsf M_a^{bc}K~,
\end{equation}
where $K$ denotes complex conjugation in the chosen trivalent basis. The categorical proof above shows, in the present specialization, that the corresponding local gauge symbol satisfies
\begin{equation}
    \overline{\mathsf M_a^{bc}}\,
    \mathsf M_a^{bc}
    =
    \mathbbm 1_{V_a^{bc}}~.
\label{eq:MbarM-diagrammatic}
\end{equation}
Equivalently,
\begin{equation}
    \bigl(A_a^{bc}\bigr)^2
    =
    \mathbbm 1_{V_a^{bc}}~.
\end{equation}

Diagrammatically, the second application of $A_a^{bc}$ reflects the half-turn and the braiding produced by the first. The crossing is therefore paired with its inverse. The remaining local operation is a zig-zag, or $S$-shaped turning, of the incident world lines.  In a fully-flagged graphical calculus this is precisely the step at which Frobenius-Schur factors are recorded \cite{Kitaev:2005hzj, simon2022straighteningfrobeniusschurindicator}. Under the present assumption $\nu_2(a)=+1$ for every simple object, these factors are trivial, so the corresponding unflagged diagram may schematically be straightened to the original trivalent vertex.

The corresponding world-line straightening is
\begin{equation}
\label{eq:SStraightenFSTriv}
\ket{
\begin{tikzpicture}[
    scale=0.7,
    baseline=-0.8ex,
    line cap=round,
    line join=round,
    >={Stealth[length=3pt,width=4pt]},
    sstrand/.style={
        thick,
        postaction={decorate},
        decoration={
            markings,
            mark=at position .58 with {\arrow{>}}
        }
    }
]

\draw[sstrand] (0,-1) .. controls ++(-.5,.5) and ++(.5,-.5) .. (0,0);
\draw[sstrand] (0:0) .. controls (75:.71) and (-15:.71) ..  (30:1);
\draw[sstrand] (0:0) .. controls (195:.71) and (105:.71) .. (150:1);
\node at (-60:.6) {$\scriptstyle a$};
\node at (60:.6) {$\scriptstyle c$};
\node at (180:.6) {$\scriptstyle b$};
\node at (-15:.4) {$\scriptstyle\lambda$};
\end{tikzpicture}
}
\xrightarrow[\nu_2=+1]{\text{straightening}}
\ket{
\begin{tikzpicture}[
    scale=0.35,
    baseline=-0.8ex,
    line cap=round,
    line join=round,
    >=stealth
]

\coordinate (O) at (0,0);
\coordinate (A) at (0,-1.3);
\coordinate (B) at (-1.1,1.1);
\coordinate (C) at (1.1,1.1);

\draw[thick] (A) -- (O);
\draw[thick] (O) -- (B);
\draw[thick] (O) -- (C);

\draw[->,thick] (0,-0.85) -- (0,-0.45);
\draw[->,thick] (-0.45,0.45) -- (-0.7,0.7);
\draw[->,thick] (0.45,0.45) -- (0.7,0.7);

\node at (0.6,-0.8) {$\scriptstyle a$};
\node at (-1.2,0.4) {$\scriptstyle b$};
\node at (1.2,0.4) {$\scriptstyle c$};
\node at (-0.55,-0.25) {$\scriptstyle\lambda$};

\end{tikzpicture}
}~.
\end{equation}

Since $\mathsf M_a^{bc}$ is unitary, Eq.~\eqref{eq:MbarM-diagrammatic} gives
\begin{equation}
    \overline{\mathsf M_a^{bc}}
    =
    \bigl(\mathsf M_a^{bc}\bigr)^\dagger~,
\end{equation}
and therefore
\begin{equation}
    \boxed{
    \bigl(\mathsf M_a^{bc}\bigr)^T
    =
    \mathsf M_a^{bc}~.
    }
\end{equation}
Thus the diagrammatic picture reproduces the symmetry condition obtained in the categorical proof. As shown in Sec.~\ref{sec:mechanism}, Takagi factorisation then produces a trivalent-basis gauge in which the $F$-symbols are real.

\section{Examples of models with real \texorpdfstring{$F$}{F}-symbols and implications of our results}
\label{implications}

We now discuss several families of unitary ribbon categories that illustrate the general claims proved in the previous section, some of which were already mentioned briefly in the introduction. We begin with familiar classes of anyon models for which gauges with real $F$-symbols are already known explicitly. For each such class, we show that the hypotheses of our theorem are satisfied, thereby explaining from a unified categorical perspective why their $F$-symbols can be made real.

We then turn to self-dual Chern-Simons TQFTs with compact simply connected gauge groups in Sec.~\ref{ChernSimons}. In these theories, the $F$-symbols are typically difficult to compute explicitly, and no manifestly real gauge is generally known. Nevertheless, we argue that these theories possess a flat charge-conjugation symmetry and therefore satisfy the assumptions of our theorem. Our result consequently guarantees the existence of a real gauge even in cases where it is not apparent from any known presentation of the $F$-symbols and where such reality had not previously been expected. In this sense, the Chern-Simons examples represent a new application of our framework, extending the phenomenon of real $F$-symbols beyond the standard models in which it can be verified directly.

By cataloguing both types of examples, we show that our theorem not only unifies many known instances of real $F$-symbols, but also predicts their existence in a broader class of anyon theories.

\subsection{Abelian theories}
Abelian unitary ribbon categories consist of anyons with fusion rules characterized by an Abelian group
\begin{equation}
G\cong\prod_{i=1}^N\mathbb{Z}_{n_i}~,\ \ \ \mathbb{Z}_{n_i}\cong\langle g_i\rangle~.
\end{equation}
In these categories, simple objects admit the unique presentation $a=\prod_ig_i^{a_i}$. Such a category also has a quadratic form, $q$, that specifies the topological spins via $\theta(a)=\exp(2\pi iq(a))$. Equivalently, the spin function $\theta:G\to U(1)$ is itself a ($U(1)$-valued) quadratic form, $\theta(na)=\theta(a)^{n^{2}}$, with associated bicharacter $b_{\theta}(a,b)=\theta(a+b)/\theta(a)\theta(b)$. We pass freely between the two, writing $O(G,q)=O(G,\theta)$ and denoting the category by $\CC(G,\theta)$.

As a group, the invertible 0-form symmetries / braided automorphisms of $\CC(G,\theta)$ are isomorphic to $O(G,\theta)$. By construction, these automorphisms preserve the quadratic form. It is trivial to check that charge conjugation (which sends $a\to-a\cong a^{\vee}$) is in $O(G,\theta)$ (the only cases of Abelian theories with trivial charge conjugation are the self-dual theories with $G\cong\mathbb{Z}_2^{\oplus n}$). Since these theories have no fusion multiplicity, the results of Sec. \ref{folding} imply these theories have real $F$-symbols when they are modular. In fact, unitary ribbon Abelian categories are full braided fusion subcategories of Abelian UMTCs\footnote{Indeed, as discussed above (and as argued in \cite{street1993braided,Quinn:1998un}), such a category, $\CC(G,\theta)$, is determined by a finite Abelian group, $G$, with quadratic form $\theta$ (non-degeneracy of $\theta$ is not assumed). Now, define $D:=G\oplus\widehat G$ and $\Theta(a,\chi):=\theta(a)\,\chi(a)$ where $\chi\in \widehat G$. Then $\Theta(-x)=\Theta(x)$, and the associated bilinear form is
\begin{equation}
B\bigl((a,\chi),(b,\psi)\bigr)=\frac{\Theta(a+b,\chi\psi)}{\Theta(a,\chi)\Theta(b,\psi)}=b_{\theta}(a,b)\,\chi(b)\,\psi(a)~.
\end{equation}
Since $b_\theta$ is the bicharacter built from $\theta$, we see that $\Theta$ is a quadratic form. If $(a,\chi)$ lies in the radical of $B$, then setting $b=0$ gives $\psi(a)=1$ for every $\psi\in\widehat G$, hence $a=0$. Now, setting $a=0$ gives $\chi(b)=1$ for every $b\in G$, hence $\chi=1$. Thus $\Theta$ is non-degenerate and $\CC(D,\Theta)$ is an Abelian UMTC. Finally $\Theta(a,1)=\theta(a)$, so $a\mapsto(a,1)$ is an isometry of quadratic forms and realizes $\CC(G,\theta)$ as a full braided fusion subcategory of $\CC(D,\Theta)$, compatibly with the ribbon structures.} and so the results of Sec. \ref{folding} apply more generally by restriction.

This conclusion can be explicitly verified by going to the so-called Quinn gauge \cite{Quinn:1998un} (see also \cite{Lee:2018eqa}), where the $F$-symbols are real
\begin{eqnarray}\label{FAbelian}
F(a,b,c)=\prod_i\begin{cases}
1 & {\rm if}\ b_i+c_i<n_i~,\\
\theta(g_i)^{n_ia_i}=\pm1& {\rm otherwise}~.
\end{cases}
\end{eqnarray}
Here $\theta(g_i)$ is the topological spin / self-statistics corresponding to $g_i$, and we have used the fact that the topological spins are $n^{\rm th}$ or $2n^{\rm th}$ roots of unity.

\subsection{Metaplectic UMTCs}\label{metaplectic}
Metaplectic UMTCs arise as the categories of Wilson lines in ${\rm Spin}(2p+1)_2$ Chern-Simons theories. We will discuss a more general class of Chern-Simons theories in Sec. \ref{ChernSimons} that includes the Metaplectic theories as a special case.

The main point is that ${\rm Spin}(2p+1)_2$ theories are self-dual. Moreover, in Sec. \ref{ChernSimons}, we will show that they respect an FS grading (this claim has also been argued in \cite{simon2022straighteningfrobeniusschurindicator}). Therefore, our theorem implies these theories have a gauge with real $F$-symbols. Indeed, the metaplectic UMTCs have real $F$-symbols in the gauge discussed in \cite{ardonne2021classification}.

\subsection{\texorpdfstring{$SU(2)_k$}{SU(2)k} Chern-Simons theories}
\label{SU2k}
$SU(2)_k$ Chern-Simons theories are self-dual: this property follows from the fact that all representations of $SU(2)$ are self-dual. As in the case of the metaplectic UMTCs, the $SU(2)_k$ theories have an FS grading (see Sec. \ref{ChernSimons} and also the argument in \cite{simon2022straighteningfrobeniusschurindicator}). Therefore, our results imply there should be a gauge in which the $F$-symbols are real. Indeed, $SU(2)_k$ $F$-symbols are known to be real in the standard gauge chosen in \cite{kirillov1989representations}. 

\subsection{Braided TY categories}
Tambara-Yamagami (TY) categories are captured by a finite Abelian group, $G$, an $\mathcal{R}$-valued symmetric non-degenerate bi-character, $\chi(a,b)$, and a square root, $\tau$, of $1/|G|$.\footnote{Here, $\mathcal{R}$ is a ring that contains roots of unity of order $8|G|$.} They have fusion rules
\begin{equation}\label{TYfusion}
X\otimes X=\sum_{g\in G}g~, \ \ \ g\times X=X\times g = X~, \ \ \ g\times h=g+h~,
\end{equation}
where $g$ and $h$ are any elements in $G$, and $X$ is a single non-invertible simple object.

In order to admit a braiding, $G\cong\mathbb{Z}_2^{\oplus m}$, and so braided TY categories are self-dual. In all of these cases, there is an FS grading and so there is a flat charge conjugation symmetry corresponding to the trivial symmetry. Indeed, the topological spins of the Abelian objects are all $\theta_a=\chi(a,a)=\pm1$ \cite{siehler2000braided} and therefore these objects have trivial FS indicator. Since $X\not\in X\otimes X$, no matter the FS indicator of $X$ (given by $\nu_2(X)={\rm sign}(\tau)$), the FS grading is respected by the fusion rules.

Our results therefore imply the existence of a gauge in which the $F$-symbols are real. Indeed, an analysis of hexagon equations shows that $\chi(a,b)=\pm1$ \cite{siehler2000braided}. In the gauge discussed in \cite{siehler2000braided}, this fact suffices to ensure that the $F$-symbols are real.

\subsection{Non-symmetric near group categories}\label{NSNGC}
There are also complete results in braided non-symmetric near group categories generalizing the TY examples of the previous section. Here, non-symmetric means that the $S$-matrix is not completely degenerate, {i.e.}~rank $1$.
(We will return to the infinitely many categories with completely degenerate braiding in {Sec.}~\ref{Complex}.)
These categories are completely classified \cite{thornton2011braided} (see also \cite{siehler2003near}), and their fusion rules are fixed by
\begin{equation}\label{NGfus}
X\otimes X=\sum_{g\in G}g+\alpha X~, \ \ \ \alpha\in\mathbb{Z}_{\ge0}~,
\end{equation}
where $G$ is an Abelian group with $g$ running over the corresponding invertible simples. As in the TY discussion, $X$ is again the only non-invertible object.

The above fusion rules are summarized in the shorthand $G+\alpha$, and there are sometimes multiple categories sharing the same fusion rules.\footnote{In this notation, TY corresponds to $G+0$. As we have discussed above, TY categories are specified by additional data (i.e., $\chi$ and $\tau$).} Indeed, there are two unitary non-symmetric braided categories with $\mathbb{Z}_1+1$ fusion (they are UMTCs corresponding to $(G_2)_1$ and $(F_4)_1$ Chern-Simons theory).\footnote{There are two further non-unitary braided categories with $\mathbb{Z}_1+1$ fusion rules: the Lee-Yang MTC and its conjugate. These also turn out to have real $F$-symbols in an appropriate gauge.} Both these categories have trivial FS indicators and therefore have flat (albeit trivial) charge conjugation symmetries.\footnote{Triviality of the FS indicator of the non-invertible simple object in each category follows from inspecting $\varkappa_X=d_X\left(F^{XXX}_X\right)_{11}=1$ (e.g., see \cite{Feiguin:2006ydp}, and note that our $X$ is their $\tau$).}

Next, there are two non-symmetric near group categories with $\mathbb{Z}_2+1$ fusion: they come from equivariantizing the two different UMTCs with $\mathbb{Z}_3$ fusion rules with respect to the charge conjugation automorphism of $\mathbb{Z}_3$.\footnote{Physically, these theories come from gauging charge conjugation in either $SU(3)_1$ Chern-Simons or $(E_6)_1$ Chern-Simons and throwing out twisted sectors.} They again have trivial FS indicators and therefore have flat (and again trivial) charge conjugation symmetries.\footnote{Triviality of the FS indicator of the non-invertible simple object in each category again follows from inspecting $\varkappa_X=d_X\left(F^{XXX}_X\right)_{11}=1$ in \cite{siehler2003near}. The triviality of the FS indicator of the non-trivial invertible simple object follows from the M\"uger center discussion in \cite{thornton2011braided}.}

Finally, there is one such category with $\mathbb{Z}_3+2$ fusion. It arises from equivariantizing the UMTC with $\mathbb{Z}_2^{\oplus2}$ fusion rules where all non-trivial objects are fermions.\footnote{Physically, this category arises from gauging the $\mathbb{Z}_3<S_3$ exchange symmetry of the non-trivial lines in ${\rm Spin}(8)_1$ Chern-Simons theory and throwing out twisted sectors.} It corresponds to a subcategory of $SU(3)_3$ Chern-Simons theory \cite{Barkeshli:2014cna}.

All the categories described above except the one with $\mathbb{Z}_3+2$ fusion rules have all objects self-dual and an FS grading. Therefore, all the self-dual non-symmetric near group categories have gauges with real $F$-symbols.\footnote{There are no other non-symmetric braided near group categories, and the three categories with $|G|>1$ are non-modular (as follows from the discussion above, their M\"uger centers are ${\rm Vec}_G$; physically this is the case because we have discarded twisted sectors in the gauging of the underlying Abelian Chern-Simons theories) \cite{thornton2011braided}.}  A conjectured generalization of a result we present in Sec. \ref{ChernSimons} (see the discussion around \eqref{Gnsd}) implies that the above category with $\mathbb{Z}_3+2$ fusion rules also has a gauge with real $F$-symbols via its connection with $SU(3)_3$ Chern-Simons.

Indeed, one can check that the gauges in  \cite{siehler2003near,Feiguin:2006ydp} are real for all categories except the one with $\mathbb{Z}_3+2$ fusion rules. However, one can make the following transformation 
\begin{equation}
\Gamma^{XX}_X=\frac{1}{\sqrt{2}}\begin{pmatrix}
1 &1\\
i&-i
\end{pmatrix}~,
\end{equation}
where all other $\Gamma$'s are taken to be trivial to get real $F$-symbols in this final case. Note that, In our notation, $m$ of \cite{siehler2003near} is denoted $X$.

\subsection{\texorpdfstring{${\rm Rep}(G)$}{Rep(G)} with \texorpdfstring{$|G|<20$}{|G|<20}}
In this subsection we discuss representation categories of groups with order $|G|<20$. These examples include cases like ${\rm Rep}(S_3)$, ${\rm Rep}(D_8)$, ${\rm Rep}(A_4)$, and beyond. Using the relevant formula for the FS indicator
\begin{equation}\label{eq:fs1}
\nu_2(\chi)\;=\;\frac{1}{|G|}\sum_{g\in G}\chi(g^2)
\;=\;\begin{cases}
+1~, & \text{real (orthogonal) type}~,\\
\phantom{+}0~, & \text{complex type, } \chi\neq\overline{\chi}~,\\
-1~, & \text{quaternionic (symplectic) type}~,
\end{cases}
\end{equation}
it is possible to check that all examples of self-dual ${\rm Rep}(G)$ categories in our range have an FS grading. Therefore, by our results, such examples have a gauge with real $F$-symbols.

In the non-self-dual ${\rm Rep}(G)$ cases in our range (e.g., ${\rm Rep}(A_4)$), it turns out that the twisted FS indicator can be computed through the Kawanaka-Matsuyama formula \cite{kawanaka1990twisted}
\begin{equation}\label{eq:KM1}
  \nu_\tau(\chi)\;=\;\frac{1}{|G|}\sum_{g\in G}\chi\big(g\,\tau(g)\big)~.
\end{equation}
Here $\tau\in{\rm Aut}(G)$ is a class-inverting involution and can be understood via the split extension $E=G\rtimes_\tau\langle t\rangle$ (we have checked that every $G$ in the range $|G|<20$ has such a split extension). Note that $tgt^{-1}=\tau(g)$ and $t^2=1$. When $\tau={\rm Id}$, we recover the case of ambivalent $G$ and the FS formula in \eqref{eq:fs1}. See Appendix \ref{KMAppendix} for details.

In every case with $|G|<20$, the KM formula furnishes a grading, and so all such categories have a gauge with real $F$-symbols. 

\subsection{Implications for Chern-Simons theories with compact simply connected gauge groups}\label{ChernSimons}
For general Chern-Simons theories, it is difficult to compute the $F$-symbols and explicitly check whether real gauges exist. {\it Therefore, these theories present an opportunity to apply our results.}  As motivation, we note the results discussed in Secs. \ref{metaplectic} and \ref{SU2k} show that ${\rm Spin}(2p+1)_2$ and $SU(2)_k$ Chern-Simons theories have real $F$-symbols via the explicit gauges in \cite{ardonne2021classification,kirillov1989representations}. We have also seen in Sec. \ref{NSNGC} that $(G_2)_1$ and $(F_4)_1$ Chern-Simons theories have real $F$-symbols.

In this section, we will argue these facts are more generally true by studying Chern-Simons theories with connections valued in the simple Lie algebras $\mathfrak{sl}_{2}$, $\mathfrak{so}_{2r+1}$, $\mathfrak{sp}_{2r}$, $\mathfrak{so}_{4m}$, $\mathfrak{g}_2$, $\mathfrak{f}_4$, $\mathfrak{e}_7$, and $\mathfrak{e}_8$. For simplicity, we focus on the Chern-Simons theories built from the corresponding simply connected groups (here we mean that $\pi_0(G)$ and $\pi_1(G)$ are trivial)\footnote{Throughout, $r\ge2$ in
${\rm Spin}(2r+1)$, $r\ge3$ in ${\rm USp}(2r)$, and $m\ge2$ in ${\rm Spin}(4m)$. The reason for working in these ranges is because $\mathfrak{so}_3\cong\mathfrak{sp}_2
\cong\mathfrak{sl}_2$, $\mathfrak{sp}_4\cong\mathfrak{so}_5$, and $\mathfrak{so}_4$ is not simple.}
\begin{equation}\label{SdualCS}
G_{\rm sd}=\left\{SU(2)~,\ {\rm Spin}(2r+1)~, \  {\rm USp}(2r)~,\  {\rm Spin}(4m)~,\ G_2~, \ F_4~,\ E_7~, \ E_8\right\}~.
\end{equation}

In these cases, the theory is specified by choosing an integer level (and also a framing)\footnote{More generally, for Chern-Simons theories with non-simply connected groups (e.g., those arising from anyon condensation of lines in the center of $G$) one should include sums over different topological sectors (e.g., see \cite{Cordova:2017vab}).}
\begin{equation}\label{CSaction}
S[A]=\frac{k}{4\pi}\int_M\left(\langle A\wedge dA\rangle_{\rm long}+\frac{1}{3}\langle A\wedge [A\wedge A]\rangle_{\rm long}\right)~,
\end{equation}
where $A$ is the connection, $M$ is an oriented three-manifold, and $\langle\cdot,\cdot\rangle_{\rm long}$ is the invariant form on $\mathfrak{g}$ normalized so that long roots have squared length 2. We will refer to the theory with action \eqref{CSaction} as $G_k$ Chern-Simons theory (or more verbosely as Chern-Simons theory with gauge group $G$ at level $k$).

Note that the $G_k$ Chern-Simons theories for $G\in G_{\rm sd}$ are self-dual. Therefore, by our main theorem, we get a gauge with real $F$-symbols in such theories provided there is a grading for the FS indicator. In fact, the arguments in \cite{simon2022straighteningfrobeniusschurindicator} strongly suggest that such a grading exists. The remainder of this section can be understood as spelling out steps suggested in \cite{simon2022straighteningfrobeniusschurindicator} that were left implicit. The upshot is that Chern-Simons theories with gauge groups in \eqref{SdualCS} have gauges with real $F$-symbols. We will comment on generalizations to non-self dual Chern-Simons theory with compact simply connected gauge group $G\in\left\{SU(r+1),\ {\rm Spin}(4m+2),\ E_6\right\}$ ($r>1$) at the end of this section.

To describe the simple Wilson lines in $G_k$ Chern-Simons theory, we first recall some basics of Lie theory. To that end, let $P\supset Q$ denote the weight and root lattices, $h^\vee$ the dual Coxeter number, $W$ the Weyl group with $w_0\in W$ its longest element, and $\theta^\vee$ the coroot of the highest root, $\theta$, of $\mathfrak g$
\begin{equation}
\theta^\vee:=\frac{2\theta}{\langle\theta,\theta\rangle}~.
\end{equation}
The Wilson lines of $G_k$ are then labelled by the dominant
weights in the level-$k$ alcove,
\begin{equation}\label{WilsonGk}
\Lambda_k=\{\lambda\in P_+:\langle\lambda+\rho,\theta^\vee\rangle<k+h^\vee\}~,
\end{equation}
and we write $V_\lambda$ for the corresponding objects of the underlying UMTC, $\CC(\mathfrak g,k)$. The theory is ribbon, with each $V_\lambda$ carrying a twist, $\theta_\lambda$, and pairs of lines carrying a braiding, $c_{x,y}$.

Duality acts on labels by $V_\lambda^\vee=V_{-w_0\lambda}$, and $-w_0$ is an automorphism of the Dynkin diagram. Every Wilson line is self-dual (and therefore, the corresponding $G_k$ TQFT is self-dual) when $-w_0={\rm Id}$ or, in other words, when $-1\in W$. This happens exactly for the cases in \eqref{SdualCS}.

As the final piece of Lie-theoretic data that will be useful for us below, let us introduce a distinguished central element. To that end, set $\rho^\vee:=\sum_i\omega^\vee_i$ to be half the sum of the positive coroots, and define
\begin{equation}\label{zcentral}
z:=\exp(2\pi i\rho^\vee)~.
\end{equation}
It is easy to see that $z$ lies in the center of $G$, and that $z^{2}=1$. Consequently, $\lambda\mapsto\lambda(z)$ is a $\{\pm1\}$-valued character of the weight lattice that is trivial on the root lattice, $Q$.\footnote{To understand this statement, note that for a character, $\lambda$, of the maximal torus, $\lambda(\exp(2\pi iH))=e^{2\pi i\langle\lambda,H\rangle}$. Since $\langle\alpha_i,\rho^\vee\rangle=1$ for every simple root, $\alpha(z)=1$ for every root. Therefore, $z$ is central, because it commutes with every root vector and with the torus. Since $2\rho^\vee=\sum_{\alpha>0}\alpha^\vee$ lies in the coroot lattice, $\langle\lambda,2\rho^\vee\rangle\in\mathbb Z$ for every $\lambda\in P$, so $\lambda(z^{2})=1$. Since $G$ is simply connected, $P$ is the full character lattice and characters separate points, giving $z^{2}=1$. Finally, triviality on $Q$ is $\alpha(z)=1$.} For $G=SU(2)$, $z=-I$ and $\lambda(z)=(-1)^{2j}$ distinguishes half-integer from integer spin. On the other hand, for $G$ with trivial center, as in the case of  $G_2,F_4,E_8$, $z=1$ and every $\lambda(z)=1$.

The main object of interest in this section is the FS indicator, $\varkappa$. We have already encountered it as a component of the pivotal structure, $p_x=\varkappa_x\in\left\{\pm1\right\}$. In this section it will be more useful for us to use the braiding formula for the FS indicator (e.g., see \cite{Kitaev:2005hzj})
\begin{equation}\label{Rkappa}
\varkappa_x=\theta_xR^{xx}_{\mathbf 1}\in\left\{\pm1\right\}~.
\end{equation}
In particular, thinking of $\CC(\mathfrak{g},k)$ and ${\rm Rep}(G)$ as two \lq\lq specializations" of a single underlying algebra that depends on a formal variable $q$, we will argue that this formula allows us to equate the indicator in $\CC(\mathfrak{g},k)$  with the FS indicator in ${\rm Rep}(G)$. Using this identification then allows us to establish an FS grading. As we will see, the utility of \eqref{Rkappa} lies in the fact that it involves no antilinear ingredient (unlike the pivotal definition) and both the twists and braiding are defined over the coefficient ring specified by $q$, as we now make precise.

The underlying algebra is defined over Laurent polynomials in a fractional power of $q$, $\CA=\mathbb Z[q^{\pm1/L}]$.\footnote{The algebra in question is Sawin's ribbon integral form, $U^{\dagger}_{\CA}(\mathfrak g)$, and not the restricted integral form, $U^{\rm res}_{\CA}(\mathfrak g)$, that we use below to define the $\CA$-forms of the Weyl modules. Indeed, by \cite[Thm.~3]{sawin2006quantum} the $U^{\dagger}_{\CA}(\mathfrak g)$-modules that are free of finite rank over $\CA$ constitute an abelian ribbon category enriched over $\CA$, so that the braiding and the twist are defined over $\CA$. Note that the restricted form alone is not a ribbon Hopf algebra, since writing the $R$-matrix requires both fractional powers of $q$ and a completion of the Cartan part \cite[Sec. 1]{sawin2006quantum}. Our $\CA$ is the ring denoted $\CA'=\mathbb Z[s^{\pm1}]$ in \cite{sawin2006quantum}, with $s=q^{1/L}$.} The powers of $q$ are fractional because the braiding involves exponents like $q^{\langle\lambda,\mu\rangle}$ that we want to be in $\CA$ \cite{sawin2006quantum}.\footnote{In other words, $L$ is the least integer such that $L\langle\lambda,\mu\rangle\in\mathbb{Z}$ {for all $\lambda,\mu\in P$. Here, and in all $q$-exponents in the remainder of this section, $\langle\cdot,\cdot\rangle$ denotes the invariant form normalized so that short roots have squared length $2$, as in \cite{sawin2006quantum}. It equals $D$ times the form appearing in \eqref{CSaction}, where $D$ is the lacing number (the ratio of long to short squared root lengths). With this convention, $L$ agrees with the values tabulated in \cite{sawin2006quantum}, and the level $k$ theory corresponds to the specialization $q=e^{i\pi/D(k+h^\vee)}$.} In particular, $L$ depends on $\mathfrak{g}$ and not on $k$.} In this context, specializing means setting $q=1$ (to obtain observables in ${\rm Rep}(G)$) or to an appropriate root of unity (and performing a quotient we will describe below) to obtain observables in $\CC(\mathfrak{g},k)$. Mathematically, this maneuver amounts to applying a ring homomorphism $\CA\to\mathbb C$ sending $q$ to its specified value (note that no limit is taken).

Two elementary facts about $\CA$ are crucial. First, its invertible elements are $\pm q^{m/L}$ with $m\in\mathbb Z$ arbitrary.\footnote{Indeed, if $fg=1$, then comparing highest and lowest terms forces $f$ and $g$ to be single monomials whose coefficients are units of $\mathbb Z$.} Every unit therefore factors uniquely as $\mathrm{sgn}(f)\,q^{m/L}$ with $\mathrm{sgn}(f)\in\{\pm1\}$. The second fact is that ring homomorphisms fix $1$ and $-1$. We emphasise that $\mathrm{sgn}$ is an invariant of an element of $\CA$, not of its image in $\mathbb C$.\footnote{Indeed, the specialisation maps are far from injective, and units of opposite sign can have equal images. For example, if $\phi(q^{1/L})=\zeta$ has even order $2n$, as it does at every level-$k$ specialisation below, then $\phi(-q^{n/L})=-\zeta^{n}=1=\phi(1)$. In particular, a unit $\pm q^{m/L}$ with $m\neq0$ specialises to different complex numbers, of no well-defined sign, at $q=1$ and at a root of unity.} What we shall use instead is the sharper statement that, since $\CA$ is an integral domain, a unit $f\in \CA^{\times}$ satisfies $f^{2}=1$ if and only if $f=\pm1$ (i.e., if and only if $m=0$). Hence a unit of $\CA$ that squares to $1$ takes one and the same value, namely its own sign, under every specialisation.

The braiding acts invertibly on $\Hom(\mathbf 1,x\otimes x)$, and we want it to act by a unit of $\CA$. Indeed, the braiding eigenvalue alone does not survive specialisation, but Lemma~\ref{lem:kappa} shows that its product with the twist is an element of $\CA$ squaring to $1$, and such elements do. For this procedure to work, the Hom space must be free of rank one over $\CA$\footnote{An $\CA$-module $M$ is free of rank one if $M\cong \CA$ as an $\CA$-module. In this case, there is a generator $\gamma\in M$ such that every element of $M$ is $r\gamma$ for a unique $r\in \CA$. Both the existence of $r$ and its uniqueness are used below.} with generator $\gamma$. The action of braiding is then $\gamma\mapsto u\gamma$ for a single unit $u\in \CA$ (invertibility holds because braiding is invertible). 

The argument also requires that a generator over $\CA$ remains a generator after both specialisations. This requirement does not follow from freeness. Instead, in Lemma~\ref{lem:free} we show that $\gamma$ is primitive. This property suffices to establish that $\gamma$ remains a generator after both specializations.

A final subtlety is that, when $q$ is a root of unity, the representation theory degenerates. In particular, modules that were irreducible become indecomposable but reducible, and the category stops being semisimple. Tilting modules are the part that survives. They are defined as the modules admitting both a Weyl filtration and a dual Weyl filtration \cite[Def.~4]{sawin2006quantum}. They are closed under tensor product \cite[Cor.~4]{sawin2006quantum}. To get to the Chern-Simons category, $\CC(\mathfrak{g},k)$, we need to take the quotient of the category of tilting modules by the \lq\lq negligible" objects (i.e., for indecomposables, the objects of quantum dimension zero \cite[Lem.~7]{sawin2006quantum}) \cite[Thm.~5]{sawin2006quantum}. 

Now we establish the necessary results discussed above. To do this, we must first distinguish between the \lq\lq $\CA$-form" $V_{\lambda,\CA}$ and $V_\lambda$. The latter is the Wilson line and results from specializing $V_{\lambda,\CA}$ to $q$ a root of unity and quotienting by negligible objects. As a further useful disambiguation, we write $V_{\lambda,\zeta}:=V_{\lambda,\CA}\otimes_{\CA,\phi}\mathbb{C}$ for the specialization of the $\CA$-form at a root of unity $\zeta=\phi(q^{1/L})$, so that $V_\lambda=\Pi(V_{\lambda,\zeta})$ with $\Pi$ the quotient functor introduced below. We write $V^{\rm cl}_\lambda$ for the specialisation of $V_{\lambda,\CA}$ at $q^{1/L}\mapsto1$, the irreducible representation of $G$ with highest weight $\lambda$. Given this background, we prove the following:

\begin{lem}\label{lem:free}
Specialize to the self-dual cases discussed above. Let $\lambda$ lie in the level-$k$ alcove and set $H=\Hom(\mathbf 1,V_{\lambda,\CA}\otimes V_{\lambda,\CA})$.
Then
\begin{enumerate}
\item[\bf(a)] $H$ is free of rank one over $\CA$, with generator $\gamma$
\item[\bf(b)] $\gamma$ is primitive, hence has nonzero image under any ring homomorphism $\CA\to\mathbb C$ sending $q^{1/L}$ to a root of unity (including the case $q\mapsto1$). These are the only specializations used in this section.
\item[\bf(c)] the image of $H$ in $\cC(\mathfrak g,k)$ is again one dimensional, spanned by the image
of $\gamma$
\end{enumerate}
\end{lem}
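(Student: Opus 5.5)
The plan is to identify $H$ with a space of $U_q$-invariant bilinear forms on the Weyl module and to control it with the integral theory of Weyl modules, à la Sawin and Andersen. Self-duality ($-w_0=\mathrm{Id}$) gives $V_{\lambda,\CA}^\vee\cong V_{\lambda,\CA}$ as modules over the restricted integral form, since the dual Weyl module of highest weight $-w_0\lambda=\lambda$ is the (co-)Weyl module of weight $\lambda$. Hence
\begin{equation}
H\cong\Hom(V_{\lambda,\CA}^\vee,V_{\lambda,\CA})\cong\Hom(\nabla_{\lambda,\CA},V_{\lambda,\CA})~.
\end{equation}
For (a), I will use the standard Ext-vanishing and Hom-formula for Weyl versus dual Weyl modules over $\CA$, namely $\Hom(\Delta_\lambda,\nabla_\mu)\cong\delta_{\lambda\mu}\CA$, which holds integrally because Weyl modules are free $\CA$-lattices with weight-space decompositions whose highest weight space is free of rank one. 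Concretely, an invariant map is determined by the image of a highest-weight vector $v_\lambda$, which must land in the rank-one $\lambda$-weight space of the target. This embeds $H$ as a submodule of $\CA$, and since $\CA$ is a UFD one can clear common factors to get a generator. I would construct $\gamma$ explicitly as the canonical form pairing $v_\lambda$ with the lowest weight vector $v_{w_0\lambda}$ with coefficient $1$. It exists over $\CA$ because the contravariant/Shapovalov pairing on the Weyl lattice is integral, and it spans because any invariant form is determined by its value on this pair.

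For (b), primitivity will mean that $\gamma$ is not divisible by any nonunit of $\CA$ as an element of the free lattice $V_{\lambda,\CA}\otimes V_{\lambda,\CA}$. I will show that $\gamma$ has a coordinate equal to $\pm1$ (up to a power of $q^{1/L}$), namely the coefficient of $v_\lambda\otimes v_{w_0\lambda}$ in a PBW/crystal-type basis. Any ring map $\phi$ then sends this coordinate to a unit, so $\phi(\gamma)\neq0$, including at $q\mapsto1$ and at roots of unity. The main obstacle is this step: ensuring that the chosen normalization of $\gamma$ keeps such a unit coordinate while $\gamma$ remains invariant over $\CA$. I would first try Lusztig's canonical basis, where the extremal-weight coefficient of the invariant form is a signed power of $q$.

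For (c), at a root of unity $\zeta$ with $\lambda$ in the alcove, $V_{\lambda,\zeta}$ is simple and tilting, with nonzero quantum dimension (by the alcove condition and the Weyl dimension formula). So $\Hom(\mathbf 1,V_{\lambda,\zeta}^{\otimes2})$ is one-dimensional by Schur's lemma applied to $V_{\lambda,\zeta}^\vee\cong V_{\lambda,\zeta}$. It contains $\phi(\gamma)\neq0$ by (b), so it is spanned by it. The quotient functor $\Pi$ by negligibles is full and preserves Hom spaces between non-negligible simples, because the nondegeneracy of the trace pairing between $\Hom(\mathbf 1,V\otimes V)$ and $\Hom(V\otimes V,\mathbf 1)$ follows from $d_\lambda\neq0$. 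Hence $\Pi(\phi(\gamma))\neq0$ spans the image of $H$ in $\cC(\mathfrak g,k)$.
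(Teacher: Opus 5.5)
There is a genuine gap, and it sits in the step you flagged. Over $\CA$ the Weyl module is not self-dual. $V_{\lambda,\CA}^\vee$ is the dual Weyl lattice $\nabla_{\lambda,\CA}$, which is in general not isomorphic to $\Delta_{\lambda,\CA}=V_{\lambda,\CA}$. Already for $\mathfrak{sl}_2$, $\lambda=2$, the two specialise at $q=\pm i$ (where $[2]_q=0$) to non-isomorphic non-split extensions of $L(2)$ and $L(0)$.

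So $H\cong\Hom(\nabla_{\lambda,\CA},\Delta_{\lambda,\CA})$. The integral formula you quote, $\Hom(\Delta_\lambda,\nabla_\mu)\cong\delta_{\lambda\mu}\CA$, goes the opposite way, and so does the integral Shapovalov form. Both give a map $\Delta\to\nabla$, i.e.\ an invariant \emph{form} on $V_{\lambda,\CA}$ (an invariant in $\nabla\otimes\nabla$), not an invariant \emph{tensor} in $V_{\lambda,\CA}\otimes V_{\lambda,\CA}$.

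The invariant tensor normalised to coefficient $1$ on $v_\lambda\otimes v_{w_0\lambda}$ is not integral. For $\mathfrak{sl}_2$ take $v_j=F^{(j)}v_\lambda$, which is also Lusztig's canonical basis. The invariant tensor's coefficients are, up to signs and powers of $q$, $1/\binom{n}{j}_q$. This is the inverse of the Gaussian binomials $\binom{n}{j}_q$ that appear in the integral invariant form. For $n=2$ and $\Delta(E)=E\otimes K+1\otimes E$, the primitive generator is
\[
\gamma=[2]_q\,v_0\otimes v_2-v_1\otimes v_1+q^{-2}[2]_q\,v_2\otimes v_0~.
\]
The extremal weight spaces are free of rank one, so the coefficient of $v_\lambda\otimes v_{w_0\lambda}$ in a generator of $H$ is intrinsic up to a unit. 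Here it is $[2]_q=q+q^{-1}$, a non-unit that vanishes at $q=\pm i$. Your plan for (b), exhibiting a unit extremal coordinate, is therefore false in every weight basis, not merely hard.

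Your (a) has a further weak point. Embedding $H$ into $\CA$ via the highest-weight coefficient only shows $H$ is isomorphic to an ideal, and ideals of the UFD $\CA$ need not be principal (e.g.\ $(2,q^{1/L}-1)$). Also, $\nabla_\lambda$ is not generated by its highest-weight vector, so ``determined by the image of $v_\lambda$'' needs a torsion-freeness argument, not generation.

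The fix is to stop looking for a distinguished coordinate, which is the paper's route.
\begin{itemize}
\item For (a): $H$ is the joint kernel of finitely many $\CA$-linear operators on the free module $M=V_{\lambda,\CA}\otimes V_{\lambda,\CA}$, hence saturated in $M$. It has rank one because $H\otimes_\CA\mathbb{Q}(q^{1/L})$ has the classical dimension $\dim\Hom_G(\mathbf 1,V^{\rm cl}_\lambda\otimes V^{\rm cl}_\lambda)=1$. Divide any nonzero $h\in H$ by the gcd of its coordinates to get $\gamma$, which lies in $H$ by saturation; a coprimality argument then gives $H=\CA\gamma$.
\item For (b): primitivity just means the coordinates of $\gamma$ have gcd $1$. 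If $\phi(q^{1/L})=\zeta$ has order $m$, then $\ker\phi$ is generated by the prime $\Phi_m(q^{1/L})$. If $\phi(\gamma)=0$, this prime would divide every coordinate, a contradiction. In the $\mathfrak{sl}_2$ example it is the middle coefficient $-1$ that does the work. The specialised invariant space has dimension at most one, by simplicity of $V_{\lambda,\zeta}$ in the alcove or irreducibility at $q=1$, so $\phi(\gamma)$ spans it.
\item For (c): your argument is fine and is essentially the paper's, namely fullness of $\Pi$, with $d_\lambda\neq0$ ensuring the channel is not negligible.
\end{itemize}
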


See Appendix \ref{CSApp} for a proof.

To ensure the utility of the formula \eqref{Rkappa} in relating the FS indicator in the Chern-Simons and ${\rm Rep}(G)$ categories, we need the following technical result:

\begin{lem}\label{lem:kappa}
Let $\lambda$ be as in Lemma~\ref{lem:free}, and let $x=V_\lambda$ be the corresponding Wilson line with $x$ simple and self-dual. Then $\varkappa_x=\theta_x\,R^{xx}_{\mathbf 1}$, and $(\theta_xR^{xx}_{\mathbf 1})^2=1$. Moreover, the second identity has a counterpart at the level of $\CA$. With $u\in \CA$ the unit by which the braiding acts on the generator $\gamma$ of Lemma~\ref{lem:free}, and $\theta_\lambda^\CA\in \CA$ the scalar by which the twist acts on $V_{\lambda,\CA}$, one has $(\theta_\lambda^\CA u)^{2}=1$. Hence $\theta_\lambda^\CA u\in\{\pm1\}$.
\end{lem}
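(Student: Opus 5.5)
The proof has three parts. First, establish $\varkappa_x=\theta_xR^{xx}_{\mathbf 1}$ in $\cC(\mathfrak g,k)$. Second, show $(\theta_xR^{xx}_{\mathbf 1})^2=1$ there. Third, lift the squaring identity to $\CA$ by a universal computation in Sawin's ribbon category over $\CA$.

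\emph{First identity.} For a simple self-dual $x$, the space $V^{xx}_{\mathbf 1}$ is one-dimensional, spanned by the (suitably normalized) cup $c_x$. I will compute the braiding on this cup. Write $\beta_{x,x}\circ c_x=R^{xx}_{\mathbf 1}c_x$, and close the strand: $(\id_x\otimes e_x)\circ(\beta_{x,x}c_x\otimes\id_x)$. This is a curl, which by balancing/ribbon axioms gives $\theta_x^{-1}$ times the map obtained by straightening. Straightening a self-dual zig-zag with $c_x$ and $e_x=c_x^\dagger$-type data produces exactly the pivotal factor $\rho_x=\varkappa_x$ of Proposition~\ref{prop:standardBalancedDualFunctor}, by \eqref{eq:unitaryPivotalStructureShort}. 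Comparing the two evaluations gives $R^{xx}_{\mathbf 1}\varkappa_x=\theta_x^{-1}$, up to the sign conventions. Since $\varkappa_x=\pm1$, this yields $\varkappa_x=\theta_xR^{xx}_{\mathbf 1}$, the standard Kitaev formula \eqref{Rkappa}. Squaring, $(\theta_xR^{xx}_{\mathbf 1})^2=\varkappa_x^2=1$.

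\emph{Lift to $\CA$.} The FS indicator is not available over $\CA$ (there is no dagger), so I will avoid it and prove the identity $\theta^2R^2=1$ directly and universally. The tool is the double braid. Naturality of the braiding applied to $\gamma\in\Hom(\mathbf 1,V\otimes V)$, together with the balancing axiom $\beta_{V,V}\beta_{V,V}=(\theta_V^{-1}\otimes\theta_V^{-1})\theta_{V\otimes V}$ and $\theta_{\mathbf 1}=1$, gives
\[\beta^2\gamma=(\theta_\lambda^{\CA})^{-2}\gamma.\]
Since $H$ is free of rank one with generator $\gamma$ (Lemma~\ref{lem:free}(a)), this reads $u^2\gamma=(\theta^{\CA}_\lambda)^{-2}\gamma$. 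Uniqueness of coefficients in a free module then gives $(\theta_\lambda^\CA u)^2=1$. Here $\theta_\lambda^\CA$ is a unit of $\CA$, of the form $q^{m/L}$, because the twist acts on the Weyl module by the Casimir scalar $q^{\langle\lambda,\lambda+2\rho\rangle}$.

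The same argument applied in $\cC(\mathfrak g,k)$ gives an independent proof of the squaring claim. Finally, since $\CA$ is an integral domain, $f^2=1$ forces $f=\pm1$, so $\theta^\CA_\lambda u\in\{\pm1\}$.

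\emph{Main obstacle.} The delicate step is the first identity: getting the sign and orientation conventions right so that the curl produces $\theta_x$ rather than $\theta_x^{-1}$, with no stray factor of $\varkappa_x$. This matters because the statement is an equality, not merely one up to sign. I would fix it by checking against $SU(2)_k$ with $x$ the spin-$\tfrac12$ line, whose FS indicator is $-1$.
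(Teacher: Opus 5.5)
Your proposal is correct and follows the paper's argument. The $\CA$-level claim is obtained exactly as in the paper: evaluate the balancing axiom of Sawin's ribbon category on the generator $\gamma$, use that $H$ is free so $\gamma$ is not torsion, then apply the integral-domain argument. For the first identity, the paper simply cites the standard formula \eqref{Rkappa} and derives $(\theta_xR^{xx}_{\mathbf 1})^2=1$ from balancing rather than from $\varkappa_x^2=1$, so your curl computation only fills in a fact the paper cites. Your aside that $\theta^{\CA}_\lambda=q^{\langle\lambda,\lambda+2\rho\rangle}$ is not needed here, since the balancing identity can be written as $\gamma=(\theta^{\CA}_\lambda u)^2\gamma$ without inverses (and the twist is invertible anyway). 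The paper only pins down that form later, in the proof of Lemma~\ref{lem:rigid}.
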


\begin{proof}
We have already explained the identity $\varkappa_x=\theta_x\,R^{xx}_{\mathbf 1}$ for self-dual $x$ around \eqref{Rkappa}. For the second claim, evaluate the balancing axiom $\theta_{a\otimes b}=c_{b,a}c_{a,b}(\theta_a\otimes\theta_b)$, with $a=b=x$, on the vacuum channel $\gamma$. Naturality of the twist gives $\theta_{x\otimes x}\circ\gamma=\gamma\circ\theta_{\mathbf 1}$, so that $(R^{xx}_{\mathbf 1})^2=\theta_{\mathbf 1}\theta_x^{-2}$. Since $\theta_{\mathbf{1}}=1$, we have $(\theta_xR^{xx}_{\mathbf 1})^2=1$.

For the final assertion, note that, by \cite[Thm.~3]{sawin2006quantum}, the $U^{\dagger}_{\CA}(\mathfrak g)$-modules that are free of finite rank over $\CA$ form an abelian ribbon category enriched over $\CA$, so the balancing axiom applies before any specialization. The twist is an endomorphism of $V_{\lambda,\CA}$, so it commutes with the action of $U^{\rm res}_\CA$ and preserves weight spaces. The highest-weight space is free of rank one over $\CA$, so the twist sends $v\mapsto\theta_\lambda^\CA v$ for some $\theta_\lambda^\CA\in\CA$, and $V_{\lambda,\CA}=U^{\rm res}_\CA(\mathfrak g)\cdot v$ then forces it to be $\theta_\lambda^\CA$ times the identity. Evaluating the same balancing axiom on $\gamma$, now with $a=b=V_{\lambda,\CA}$, gives $(\theta_\lambda^\CA u)^{2}\gamma=\gamma$. Since $V_{\lambda,\CA}\otimes V_{\lambda,\CA}$ is free over $\CA$, the element $\gamma$ is not torsion, and therefore $(\theta_\lambda^\CA u)^{2}=1$ in $\CA$. Since $\CA$ is an integral domain, $\theta_\lambda^\CA u=\pm1$.
\end{proof}

We are now in a position to relate the FS indicator in our two categories of interest:

\begin{lem}\label{lem:rigid}
Let $\lambda$ be as in Lemma~\ref{lem:free}, and let $x=V_\lambda$ be the corresponding Wilson line with $x$ simple and self-dual. Then $\varkappa_x$ takes the same value in $\cC(\mathfrak g,k)$ as it does at $q=1$, where it is the classical Frobenius--Schur indicator of the corresponding representation of $G$.
\end{lem}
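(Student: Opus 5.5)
The plan is to compute $\varkappa_x$ as the single element $\theta^{\CA}_\lambda u\in\CA$ and specialise it in two ways. By Lemma~\ref{lem:kappa}, $\theta^{\CA}_\lambda u=\epsilon\in\{\pm1\}$, an element of $\CA$ squaring to $1$. Any ring homomorphism $\phi:\CA\to\mathbb C$ fixes $\pm1$, so $\phi(\theta^{\CA}_\lambda u)=\epsilon$ for every specialisation. It then remains to show that, at each of the two specialisations of interest, $\phi(\theta^{\CA}_\lambda)\phi(u)$ is exactly the quantity $\theta_xR^{xx}_{\mathbf 1}$ computed in the specialised category. Formula \eqref{Rkappa} then identifies that quantity with the Frobenius--Schur indicator.

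At the root of unity I would proceed as follows. Base change $-\otimes_{\CA,\phi}\mathbb C$ is a ribbon functor from free $U^\dagger_\CA$-modules to the specialised tilting category, so the twist on $V_{\lambda,\zeta}$ is the scalar $\phi(\theta^\CA_\lambda)$ and the braiding sends $\gamma\otimes1$ to $\phi(u)(\gamma\otimes1)$. Then apply the quotient functor $\Pi$ to the Chern--Simons category $\cC(\mathfrak g,k)$. This functor is braided and ribbon, so the twist of $x=V_\lambda$ is still $\phi(\theta^\CA_\lambda)$, and the braiding on $\Pi(\gamma\otimes1)$ is multiplication by $\phi(u)$. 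By Lemma~\ref{lem:free}(b),(c), $\Pi(\gamma\otimes 1)$ is nonzero and spans the one-dimensional space $\Hom(\mathbf 1,x\otimes x)$. Hence $R^{xx}_{\mathbf 1}=\phi(u)$ in any basis, since the space is one-dimensional and the eigenvalue is basis independent. This gives $\varkappa_x=\theta_xR^{xx}_{\mathbf 1}=\epsilon$.

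At $q^{1/L}\mapsto1$ the same argument applies with the classical specialisation. The image of $\gamma$ is nonzero by Lemma~\ref{lem:free}(b) and spans $\Hom_G(\mathbf 1,V^{\rm cl}_\lambda\otimes V^{\rm cl}_\lambda)$, which is one-dimensional because $V^{\rm cl}_\lambda$ is irreducible and self-dual. The specialised ribbon structure is the symmetric one of $\mathrm{Rep}(G)$: the twist is trivial and the braiding is the flip. So $\epsilon$ equals the eigenvalue of the flip on the invariant bilinear form, which is $+1$ for a symmetric form and $-1$ for a skew form, i.e.\ the classical Frobenius--Schur indicator. One consistency check is needed here: \eqref{Rkappa} must hold in $\mathrm{Rep}(G)$ with these conventions, and I would cite \cite{Kitaev:2005hzj} for this.

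I expect the main obstacle to be the claim that the $q\to1$ specialisation of Sawin's ribbon structure is exactly the standard symmetric structure, with $\phi(\theta^\CA_\lambda)=1$ and the $R$-matrix reducing to the flip. I would first try to verify this from the explicit formulas: $\theta_\lambda=q^{\langle\lambda,\lambda+2\rho\rangle}$ specialises to $1$, and the quasi-$R$-matrix has leading term $q^{\sum H\otimes H}$, which tends to $1$, with higher terms vanishing. A secondary check is that the braided-compatible pivotal structure used in \eqref{Rkappa} agrees with the unitary one at the root of unity; this I would take from the remark preceding the lemma.
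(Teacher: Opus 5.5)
Your proposal is correct and follows the paper's proof essentially step for step. The sign $\varepsilon=\theta^{\CA}_\lambda u\in\{\pm1\}$ is fixed by every specialisation. Lemma~\ref{lem:free}(b,c) identifies it with $\theta_xR^{xx}_{\mathbf 1}=\varkappa_x$ at the root of unity. At $q=1$ it becomes the eigenvalue of the flip on the invariant line, i.e.\ the classical indicator.

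The one place the routes differ is the step you flag as the main obstacle. You propose reading $\theta^{\CA}_\lambda\mapsto 1$ off the explicit formula $q^{\langle\lambda,\lambda+2\rho\rangle}$, which tacitly rules out an extra central sign such as $\lambda(z)$ in the ribbon element; that sign is exactly the one at stake. The paper avoids relying on that formula: it writes $\theta^{\CA}_\lambda=\pm q^{m/L}$ and matches it against the Chern--Simons spins $e^{2\pi i h_\lambda}$ for all large $k$. This forces the rational exponent mismatch $A$ to vanish and the sign to be $+1$.
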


\begin{proof}
By Lemma~\ref{lem:free}, the braiding acts on the generator $\gamma$ of a free rank-one module and is invertible, so $c_{V_{\lambda,\CA},V_{\lambda,\CA}}\gamma=u\gamma$ for a unit $u\in \CA$. On its own, this discussion does not pin down something that survives specialisation. Indeed, a unit $\pm q^{m/L}$ takes different values, of different signs, under different ring homomorphisms $\CA\to\mathbb C$. By the final assertion of Lemma~\ref{lem:kappa}, however, $\varepsilon:=\theta_\lambda^\CA u$ is the constant $\pm1$, and ring homomorphisms fix $\pm1$. So $\varepsilon$, unlike $u$ itself, takes the same value after either specialization. At the root of unity, Lemma~\ref{lem:free}(b,c) shows that the image of $\gamma$ spans the vacuum channel of $V_\lambda\otimes V_\lambda$ in $\CC(\mathfrak g,k)$, so $u$ specializes there to $R^{xx}_{\mathbf 1}$, while $\theta^\CA_\lambda$ specializes to $\theta_x$. As a result, $\varepsilon=\theta_xR^{xx}_{\mathbf 1}=\varkappa_x$ by the first identity of Lemma~\ref{lem:kappa}.

At $q=1$ the twist specializes to $1$ and the braiding degenerates to the flip of tensor factors. Regarding the twist, note that $\theta_\lambda^\CA=\pm u^{-1}$ by Lemma~\ref{lem:kappa} and so $\theta_\lambda^\CA$ is a unit of $\CA$. Therefore, $\theta_\lambda^\CA=\mathrm{sgn}\bigl(\theta_\lambda^\CA\bigr)\,q^{m/L}$ for some integer $m$.\footnote{Neither $m$ nor the sign depends on the level, since $V_{\lambda,\CA}$ and its twist are defined over $\CA$. The level enters only through the choice of specialisation.} Fix $\lambda$, which lies in the level-$k$ alcove for all large $k$, and apply the level-$k$ specialisation for each such $k$. This maneuver sends $q^{1/L}\mapsto e^{i\pi/DL(k+h^\vee)}$, and it sends $\theta^\CA_\lambda$ to the topological spin of the Wilson line $V_\lambda=x$ in $\CC(\mathfrak g,k)$
\begin{equation}\label{eq:spinCS}
\theta_{x}=e^{2\pi ih_\lambda}~,    \qquad h_\lambda=\frac{\langle\lambda,\lambda+2\rho\rangle_{\rm long}}{2(k+h^\vee)}~,
\end{equation}
where $\langle\cdot,\cdot\rangle_{\rm long}$ is the form of \eqref{CSaction}. Comparing the two expressions for this image gives
\begin{equation}\label{eq:signEqn}
\mathrm{sgn}\bigl(\theta_\lambda^\CA\bigr)=e^{i\pi A/(k+h^\vee)}~,\qquad A:=\langle\lambda,\lambda+2\rho\rangle_{\rm long}-\frac{m}{DL}~,
\end{equation}
where the rational number $A$ depends on $\lambda$ but not on $k$. Squaring \eqref{eq:signEqn} gives $e^{2\pi iA/(k+h^\vee)}=1$, so that $A/(k+h^\vee)\in\mathbb Z$ for every admissible $k$. Were $A$ nonzero, we could take $k+h^\vee>|A|$ and obtain $0<|A/(k+h^\vee)|<1$, which is not an integer. Hence $A=0$, and \eqref{eq:signEqn} gives $\mathrm{sgn}(\theta^\CA_\lambda)=+1$. In particular, $\theta^\CA_\lambda$ specializes to $1$ under any $\phi$ with $\phi(q^{1/L})=1$.\footnote{Note that no limit is taken here. Each specialisation is a ring homomorphism, and the conclusion rests only on the integrality of $A/(k+h^\vee)$. As a by product, $A=0$ identifies the exponent, $m=DL\,\langle\lambda,\lambda+2\rho\rangle_{\rm long}$.} 
 
Regarding the braiding, note that $\check R=\tau_{\rm flip}\circ R$ \cite[Sec. 3.2.4]{rowell2005family}. At $q=1$, the $R$-matrix reduces to the identity ${\rm id}_{V^{\rm cl}_\lambda\otimes V^{\rm cl}_\lambda}$, since in the explicit form \cite[eq.~(1)]{sawin2006quantum} the Cartan prefactor $q^{\sum_i\check\alpha_i\otimes\lambda_i}$ becomes $1$, the leading term of the sum, with all $t_r=0$, is $1\otimes1$,\footnote{That the $t_r=0$ summand is present is forced by quasitriangularity: applying the counit to the first factor of $R$ must give $1$, whereas the counit annihilates $E^{(t)}_{\beta_r}$ for every $t\ge1$. Equivalently, without that summand $R$ would not be invertible.} while every remaining term carries a factor $(1-q_{\beta_r}^{-2})^{t_r}$ with $t_r\ge1$, which vanishes there. Therefore, after this specialization, $\check R$ equals the flip.
 
 Hence, by Lemma~\ref{lem:free}(b), the image of $\gamma$ spans the invariant line in $V^{\rm cl}_\lambda\otimes V^{\rm cl}_\lambda$, and $\varepsilon$ is the sign by which the flip acts on it (recall from above that the twist specializes to $1$ there). That sign is $+1$ if the invariant bilinear form on $V^{\rm cl}_\lambda$ is symmetric and $-1$ if alternating. In other words, it is the classical Frobenius-Schur indicator.
\end{proof}

Now that we have shown that the Chern-Simons Frobenius-Schur indicator is equal to the classical ${\rm Rep}(G)$ indicator, we can take advantage of known properties of the classical indicator to deduce an FS grading. The input we need is the following classical fact:

\begin{lem} \cite[Lem.~5.2]{adams2014real}\label{lem:adams}\footnote{See also references therein.} Let $G$ be a connected reductive complex group and $\pi$ an irreducible finite-dimensional self-dual representation with central character $\chi_\pi$. Then the Frobenius-Schur indicator of $\pi$ is $\varkappa^{\rm cl}(\pi)=\chi_\pi(z)$, where $z$ is defined as in \eqref{zcentral}.
\end{lem}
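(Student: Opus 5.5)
We will prove the statement by reducing to the principal $\mathfrak{sl}_2$-subalgebra of $\mathfrak g$, where Frobenius-Schur indicators of irreducibles are elementary. Let $\pi=V^{\rm cl}_\lambda$ have highest weight $\lambda$. Self-duality means $-w_0\lambda=\lambda$. By Schur's lemma there is a nonzero $G$-invariant bilinear form $B$ on $V_\lambda^{\rm cl}$, unique up to scalar. It satisfies $B(v,w)=\varepsilon\,B(w,v)$ with $\varepsilon=\varkappa^{\rm cl}(\pi)\in\{\pm1\}$. On the other side of the identity, $z$ is central by the discussion following \eqref{zcentral}, so it acts on $\pi$ by the scalar $\chi_\pi(z)=\lambda(z)=e^{2\pi i\langle\lambda,\rho^\vee\rangle}=(-1)^{\langle\lambda,2\rho^\vee\rangle}$. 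Here $\langle\lambda,2\rho^\vee\rangle\in\mathbb Z$ because $2\rho^\vee$ lies in the coroot lattice. The goal is therefore to show $\varepsilon=(-1)^{N}$ with $N:=\langle\lambda,2\rho^\vee\rangle$.

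We first choose a principal $\mathfrak{sl}_2$-triple $(e,h,f)\subset\mathfrak g$ with $h=2\rho^\vee$ and $e=\sum_i e_{\alpha_i}$, and let $\phi:SL_2\to G$ be the corresponding homomorphism. We then restrict $\pi$ along $\phi$. Since $\langle\alpha_i,\rho^\vee\rangle=1$, the $h$-eigenvalue on a weight space of weight $\mu$ is $\langle\mu,2\rho^\vee\rangle$. This eigenvalue is maximised uniquely at $\mu=\lambda$, because every other weight has the form $\lambda-\sum n_i\alpha_i$ with some $n_i>0$, which lowers the eigenvalue by $2\sum n_i$. The $\lambda$-weight space is one-dimensional, so the $h$-eigenvalue $N$ occurs with multiplicity one. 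Consequently the $SL_2$-irreducible $V_N$ of dimension $N+1$ occurs in $\phi^*\pi$ exactly once. Call this summand $W$; it is generated by the highest weight vector $v_\lambda$.

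Next we show that $B|_W$ is nondegenerate. Decompose $\phi^*\pi$ into $SL_2$-isotypic components. An invariant pairing between two irreducibles $V_m$ and $V_n$ vanishes unless $V_n\cong V_m^\vee\cong V_m$, so distinct isotypic components are $B$-orthogonal. Since $B$ is nondegenerate, its restriction to each isotypic component is nondegenerate. The $V_N$-isotypic component is $W$ alone, by the multiplicity-one statement above. Hence $B|_W$ is a nonzero $SL_2$-invariant form on $V_N$, and it inherits the symmetry $\varepsilon$ of $B$. The invariant form on $V_N\cong{\rm Sym}^N(\mathbb C^2)$ is the $N$-th symmetric power of the symplectic form on $\mathbb C^2$, so it has symmetry $(-1)^N$. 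Explicitly, in the basis $x^{N-j}y^j$ it pairs $x^{N-j}y^{j}$ with $x^{j}y^{N-j}$ with sign $(-1)^j$. Therefore $\varepsilon=(-1)^N=\chi_\pi(z)$.

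The step we expect to require the most care is the orthogonality and nondegeneracy argument: that $B$ does not vanish on the top $SL_2$-summand. This rests on the multiplicity-one property of the extremal $h$-eigenvalue together with the uniqueness of invariant forms on $SL_2$-irreducibles. A more concrete cross-check is available: evaluate $B(v_\lambda,\dot w_0 v_\lambda)$ through the $SL_2$-string from $v_\lambda$ down to the lowest weight vector $f^N v_\lambda\propto v_{w_0\lambda}$. Applying the relation $B(fu,v)=-B(u,fv)$ $N$ times produces the sign $(-1)^N$ in the comparison of $B(v_\lambda,f^Nv_\lambda)$ with $B(f^Nv_\lambda,v_\lambda)$. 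Reductivity and connectedness of $G$ enter only through the existence of the principal homomorphism and through $z$ being a well-defined central element; for the simply connected groups in \eqref{SdualCS} these are both immediate.
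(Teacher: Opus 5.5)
Your proof is correct, but it does not follow the paper's route, because the paper does not prove this lemma. It imports the statement from Adams and adds only two bridging steps: the transfer from the compact $G$ to $G_{\mathbb C}$ via Weyl's unitarian trick, and the evaluation $\chi_{V^{\rm cl}_\lambda}(z)=\lambda(z)=(-1)^{\langle\lambda,2\rho^\vee\rangle}$.

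You instead prove the core statement with the classical principal-$\mathfrak{sl}_2$ argument. The extremal $h$-eigenvalue $N=\langle\lambda,2\rho^\vee\rangle$ has multiplicity one, so $V_N$ occurs exactly once. Distinct $\mathfrak{sl}_2$-isotypic components are $B$-orthogonal, so $B|_W$ is nondegenerate. Hence $\varepsilon$ equals the symmetry $(-1)^N$ of the invariant form on $\mathrm{Sym}^N(\mathbb C^2)$. Each of these steps checks out. The only routine detail left implicit is that the triple closes with $f=\sum_i r_i f_{\alpha_i}$, where $2\rho^\vee=\sum_i r_i\alpha_i^\vee$.

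What your route buys:
\begin{itemize}
\item It is self-contained.
\item It is purely infinitesimal, so it applies verbatim to a compact connected $G$: a $G$-invariant bilinear form on a complex representation is automatically $\mathfrak g_{\mathbb C}$-invariant. This makes the paper's unitarian-trick paragraph unnecessary.
\item It shows where $z$ comes from. Since $h=2\rho^\vee$, you have $z=\exp(2\pi i\rho^\vee)=\exp(\pi i h)=\phi(-I)$, the image of the center of the principal $SL_2$. So $\chi_\pi(z)$ is literally the scalar by which $-I$ acts on $W\cong V_N$, namely $(-1)^N$, and the separate evaluation of $\lambda(z)$ becomes a tautology.
\end{itemize}

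What the citation buys is brevity. Adams's broader framework (real forms, the Chevalley involution) is not needed for the paper's application.
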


\noindent
To apply this result to $\pi=V^{\rm cl}_\lambda$, two points need to be checked. First, that the lemma covers our compact $G$ and second that its right-hand side is $\lambda(z)$ (its self-duality hypothesis is automatic here).

Note that Lemma~\ref{lem:adams} is stated in \cite{adams2014real} for a connected reductive complex group, whereas our $G$ is compact. The two settings agree, because one may pass to the complexification. For $G$ compact and simply connected, $G_{\mathbb C}$ is connected and semisimple, hence reductive, so Lemma~\ref{lem:adams} applies to it verbatim. Restriction along $G\hookrightarrow G_{\mathbb C}$ is an equivalence from the finite-dimensional holomorphic representations of $G_{\mathbb C}$ onto the finite-dimensional representations of $G$ (Weyl's unitarian trick). Being a restriction along a group homomorphism, this equivalence is symmetric monoidal, so it matches tensor products and duals. Being an equivalence, it is also fully faithful, so for every representation $V$, it identifies $\Hom_{G_{\mathbb C}}(V\otimes V,\mathbb C)$ with $\Hom_{G}(V\otimes V,\mathbb C)$, compatibly with the flip of the two tensor factors. Therefore, it matches invariant bilinear forms together with their symmetry type. Since $\varkappa^{\rm cl}$ is defined by that symmetry type, it is preserved under the restriction. Finally, the element $z$ of \eqref{zcentral} lies in $G$ itself, as \cite[Sec. 5]{adams2014real} mentions for every real form, so $\chi_\pi(z)$ denotes the same number in both categories.

It remains to identify $\chi_{V_\lambda^{\rm cl}}(z)$ with $\lambda(z)$. As $z$ is central, it acts on the irreducible $V_\lambda^{\rm cl}$ by a scalar, and that scalar may be read off on the highest weight space, where it is $\lambda(z)$. Since $\langle\lambda,2\rho^\vee\rangle\in\mathbb Z$, as noted below \eqref{zcentral}, we obtain $\lambda(z)=e^{2\pi i\langle\lambda,\rho^\vee\rangle}=(-1)^{\langle\lambda,2\rho^\vee\rangle}$. Then, in our notation
\begin{equation}
\varkappa^{\rm cl}(V_\lambda^{\rm cl})=\lambda(z)=(-1)^{\langle\lambda,2\rho^\vee\rangle}~.
\end{equation}

We are now ready to establish the existence of the FS grading (i.e., of multiplicativity of the FS indicator). To that end, recall that $G_k$ Wilson line fusion is given by the Kac-Walton formula
\begin{equation}\label{eq:KW}
  N^{(k)\,\nu}_{\lambda\mu}=\sum_{w\in W_\ell,\ w\cdot\nu\in P_+}\det(w)\,m^{w\cdot\nu}_{\lambda\mu}~,
\end{equation}
where $m$ are the classical multiplicities, $\ell:=k+h^\vee$, $\det(w)$ is the determinant of the finite Weyl part of $w$, and the sum is over $W_\ell=W\ltimes\ell Q^\vee$. This group acts via the shifted action $w\cdot\nu=w(\nu+\rho)-\rho$. Here $Q^\vee$ is regarded as a lattice of weights by means of the invariant form of \eqref{CSaction}, normalized so that long roots have squared length $2$.

Given this formula, we can determine the following useful fact about the fusion coefficients:

\begin{lem}\label{lem:mult}
If $N^{(k)\,\nu}_{\lambda\mu}\neq0$ then $\nu\equiv\lambda+\mu\pmod Q$.
\end{lem}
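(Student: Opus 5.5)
Since $N^{(k)\,\nu}_{\lambda\mu}$ is a finite signed sum over $W_\ell$, a nonzero value forces at least one nonzero summand. So the plan is to reduce the statement to the classical congruence for tensor product multiplicities, and then show that the shifted action of $W_\ell$ preserves classes modulo $Q$. Concretely, we will pick $w\in W_\ell$ with $w\cdot\nu\in P_+$ and $m^{w\cdot\nu}_{\lambda\mu}\neq0$, and prove two facts: (i) $m^{\kappa}_{\lambda\mu}\neq0$ implies $\kappa\equiv\lambda+\mu\pmod Q$, and (ii) $w\cdot\nu\equiv\nu\pmod Q$ for every $w\in W_\ell$. Combining them gives $\nu\equiv w\cdot\nu\equiv\lambda+\mu\pmod Q$.

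For (i), we use that every weight of $V^{\rm cl}_\lambda$ lies in $\lambda+Q$, since weights are obtained from the highest weight by subtracting simple roots. Hence all weights of $V^{\rm cl}_\lambda\otimes V^{\rm cl}_\mu$ lie in $\lambda+\mu+Q$. Any irreducible summand $V^{\rm cl}_\kappa$ has its highest weight $\kappa$ among these weights, so $\kappa\equiv\lambda+\mu\pmod Q$.

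For (ii), write $w=t_{\ell\beta}\,\bar w$ with $\bar w\in W$ and $\beta\in Q^\vee$, so that $w\cdot\nu=\bar w(\nu+\rho)-\rho+\ell\beta$. We split this into two contributions.
\begin{itemize}
\item[(a)] The finite Weyl group preserves classes modulo $Q$, because $s_i(\xi)=\xi-\langle\xi,\alpha_i^\vee\rangle\alpha_i$ with $\langle\xi,\alpha_i^\vee\rangle\in\mathbb Z$ for $\xi\in P$. Since $\rho\in P$, this gives $\bar w(\nu+\rho)-\rho\equiv\nu\pmod Q$.
\item[(b)] The translation part must lie in $Q$, i.e.\ $\ell\beta\in Q$ once $Q^\vee$ is embedded in weight space via the form of \eqref{CSaction}, normalized so that long roots have squared length $2$. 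In that normalization a coroot $\alpha^\vee=2\alpha/\langle\alpha,\alpha\rangle$ equals $\alpha$ for $\alpha$ long and $D\alpha$ for $\alpha$ short. So every coroot is an integer multiple of a root, whence $Q^\vee\subseteq Q$ and $\ell Q^\vee\subseteq Q$.
\end{itemize}

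The only point needing care is the normalization convention in step (b), checking that the identification of $Q^\vee$ with weights matches the one used in \eqref{eq:KW}. The rest is routine lattice bookkeeping.
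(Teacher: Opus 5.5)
Your proposal is correct and follows essentially the same route as the paper: the weights of $V^{\rm cl}_\lambda\otimes V^{\rm cl}_\mu$ lie in $\lambda+\mu+Q$, the finite Weyl part of the shifted action preserves $Q$-classes, and translations by $\ell Q^\vee$ act trivially on $P/Q$ because, with long roots of squared length $2$, each coroot $\alpha^\vee$ equals $\alpha$ or $D\alpha$. You spell out a few steps that the paper states in a line, namely the simple-reflection argument and the fact that the highest weight of each irreducible summand is a weight of the tensor product.
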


\begin{proof}
Classically, all weights of $V_\lambda^{\rm cl}\otimes V_\mu^{\rm cl}$ lie in $\lambda+\mu+Q$. The shifted action preserves $Q$-classes: for $w\in W$ one has $w(\nu)-\nu\in Q$ and $w(\rho)-\rho$ is a sum of roots. The translations also preserve $Q$-classes. Indeed, with the identification just fixed, a coroot $\alpha^\vee$ is sent to $\alpha$ when $\alpha$ is long and to $D\alpha$ when $\alpha$ is short, $D$ being the lacing number. In either case, the image lies in $Q$, so $Q^\vee\subseteq Q$, and translation by $\ell Q^\vee$ acts trivially on $P/Q$. Hence every term surviving in \eqref{eq:KW}, whatever its sign, sits in the class $\lambda+\mu+Q$.
\end{proof}

Now we are ready for the final result:

\begin{thm}\label{thm:A}
Let $\mathfrak g$ be simple with $-1\in W$, and $k\ge1$. Then for every Wilson line,
\begin{equation}\label{kappafinal}
\varkappa(V_\lambda)=\lambda(z)=(-1)^{\langle\lambda,2\rho^\vee\rangle}~,
\end{equation}
and $\varkappa$ is multiplicative on level-$k$ fusion.
\end{thm}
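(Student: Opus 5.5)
The plan is to assemble the preceding lemmas in two steps: first identify the Chern--Simons indicator with a central character, then show that central character is constant on fusion classes. Since $-1\in W$, we have $-w_0=\mathrm{Id}$, so every $V_\lambda$ with $\lambda\in\Lambda_k$ is self-dual and Lemma~\ref{lem:free} applies to it.

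For the first step, Lemma~\ref{lem:rigid} gives that $\varkappa(V_\lambda)$ in $\cC(\mathfrak g,k)$ equals the classical Frobenius--Schur indicator of $V^{\rm cl}_\lambda$. Lemma~\ref{lem:adams}, transported to the compact simply connected $G$ by the unitarian-trick argument given after its statement, identifies that classical indicator with $\chi_{V^{\rm cl}_\lambda}(z)$. Reading the action of the central element $z$ on the highest-weight line gives $\chi_{V^{\rm cl}_\lambda}(z)=\lambda(z)=(-1)^{\langle\lambda,2\rho^\vee\rangle}$, which is \eqref{kappafinal}.

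For multiplicativity, suppose $N^{(k)\,\nu}_{\lambda\mu}\neq0$. Lemma~\ref{lem:mult} gives $\nu-\lambda-\mu\in Q$. The map $\lambda\mapsto\lambda(z)$ is a character of $P$ that is trivial on $Q$ (shown below \eqref{zcentral}), so it factors through $P/Q$. Hence $\nu(z)=\lambda(z)\mu(z)$, that is, $\varkappa_\nu=\varkappa_\lambda\varkappa_\mu$. Since each $\varkappa\in\{\pm1\}$, this is the same as the flatness condition $\varkappa_\lambda\varkappa_\mu/\varkappa_\nu=1$ of Corollary~\ref{cor:mainSelfDual}, so the indicator $\mathbb Z_2$-grades level-$k$ fusion.

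Given the lemmas, the only delicate point is their range of validity. Lemma~\ref{lem:rigid} must hold uniformly for every $\lambda$ in the level-$k$ alcove at fixed $k$, including small $k$. Its proof chooses large auxiliary levels only to pin down the $\mathcal A$-unit $\theta^{\mathcal A}_\lambda$, which does not depend on $k$. The sign $\varepsilon=\theta^{\mathcal A}_\lambda u=\pm1$ then specializes rigidly to every level at which $\lambda$ is admissible. I would check explicitly that Lemma~\ref{lem:free}(c) holds at the given $k$, so that the image of $\gamma$ genuinely spans the vacuum channel there. Everything else is bookkeeping on $P/Q$.
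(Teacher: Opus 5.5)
Your proposal is correct and follows the paper's proof essentially verbatim. Lemmas~\ref{lem:kappa}, \ref{lem:free} and \ref{lem:rigid} identify $\varkappa(V_\lambda)$ with the classical indicator, Lemma~\ref{lem:adams} together with the highest-weight reading of $z$ gives $\lambda(z)$, and multiplicativity follows from Lemma~\ref{lem:mult} plus the triviality of $\lambda\mapsto\lambda(z)$ on $Q$; your remark that the large-level step in Lemma~\ref{lem:rigid} only pins down the $k$-independent unit $\theta^{\mathcal A}_\lambda$, so the result holds at every fixed $k$, is consistent with how that lemma is structured.
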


\begin{proof}
Lemmas~\ref{lem:kappa}, \ref{lem:free} and \ref{lem:rigid} identify $\varkappa(V_\lambda)$ with the classical indicator, which is $\lambda(z)$ by Lemma~\ref{lem:adams}. By our discussion around \eqref{zcentral}, that function is trivial on $Q$, and by Lemma~\ref{lem:mult} level-$k$ fusion preserves $Q$-classes, so $\varkappa_\lambda\varkappa_\mu=\varkappa_\nu$ whenever $N^{(k)\nu}_{\lambda\mu}\neq0$.
\end{proof}

Therefore, we have filled in the details to establish the result of \cite{simon2022straighteningfrobeniusschurindicator} for our TQFTs of interest: Chern-Simons theories with gauge groups $G\in G_{\rm sd}$ have an FS grading. As a result, by Corollary~\ref{cor:mainSelfDual}

\bigskip
\noindent
{\bf Implication 1.} Chern-Simons theories based on the gauge groups given in \eqref{SdualCS} all have gauges in which the corresponding $F$-symbols are real and the $R$-matrices are symmetric.

\bigskip
\noindent
More generally, for non-self-dual Chern-Simons theories based on the remaining compact simply connected gauge groups 
\begin{equation}\label{Gnsd}
G_{\rm nsd}=\left\{SU(r+1)~,\ {\rm Spin}(4m+2)~,\ E_6\right\}~,\ \ \ r>1~,
\end{equation}
we expect an analogous result to hold and for there to be gauges with real $F$-symbols and symmetric $R$-matrices. In fact, gauges with real $F$-symbols do indeed
exist in these cases, as established in \cite{Cesar} by different methods. We believe one route to obtaining a generalization of our result to $G_{\rm nsd}$ would be to define a generalization of the Kawanaka-Matsuyama indicator for compact simply connected Lie groups in place of the FS indicator and to appropriately generalize our above analysis.

\section{Examples of models with inherently complex \texorpdfstring{$F$}{F}-symbols}
\label{Complex}
We have seen the crucial roles that braiding and a flat $\mathbb{Z}_2$ charge conjugation symmetry play in arriving at unitary ribbon fusion categories that have gauges with real $F$-symbols.\footnote{Recall that we define a $\mathbb{Z}_2$ charge conjugation symmetry to be \lq\lq flat" if it is braided and satisfies $\lambda_a^{bc}={\sigma_b\sigma_c\over\sigma_a}=1$ at each fusion vertex.} In the case of a self-dual unitary ribbon category, the charge conjugation symmetry can be taken to be the trivial symmetry (i.e., to act as the identity functor), and flatness reduces to the existence of a grading for the Frobenius-Schur indicator, $\varkappa$: we have ${\varkappa_b\varkappa_c\over\varkappa_a}=1$ at each fusion vertex. 

If we consider categories that either lack a braiding or a flat charge conjugation, we can have inherently complex $F$-symbols. In this section we consider these two scenarios in turn. We refer the interested reader to our recent paper \cite{Paper1} for a more detailed analysis of two minimal ${\rm Rep}(G)$ categories that lack a braided $\mathbb{Z}_2$ charge conjugation and have inherently complex $F$-symbols. By definition, they also lack a flat charge conjugation symmetry. It is more subtle to find examples of categories with braided charge conjugation symmetries that are not flat and have inherently complex $F$-symbols. In this section, we find a minimal example exhibiting such a phenomenon, ${\rm Rep}(\mathbb{F}_3^2\rtimes Q_8)$. This category is self-dual (i.e., it has trivial charge conjugation) and no FS grading.

Let us first understand how relaxing braiding allows for complex $F$-symbols. To that end, consider the case of a general Abelian fusion category, ${\rm Vec}_G^{\omega}$, with fusion group $G$ and associator $[\omega]\in H^3(G,U(1))$. In such theories, the $F$-symbols are determined by the three input lines in Fig. \ref{fig:fr-moves}, and the pentagon equations imply that the $F$-symbols are three-cocycles, $\omega$. In this language, gauge transformations are shifts of $\omega$ by three-coboundaries, and the $F$-symbols are inherently complex if the following conditions are satisfied in group cohomology
\begin{equation}
[\omega^2]\ne[1]\ \Leftrightarrow\ [\omega]\ne\overline{[\omega]}~.
\end{equation}
Using the hexagon relations, one can show that if ${\rm Vec}_G^{\omega}$ admits a braiding, then $[\omega]=\overline{[\omega]}$ (see \cite{Paper1} for more details). Equivalently, if the $F$-symbols are inherently complex, then ${\rm Vec}_G^{\omega}$ does not admit a braiding.\footnote{Physically, this scenario should be thought of as describing non-genuine lines that bound topological surfaces implementing the $G$ 0-form symmetry.} In particular, we see that generic Abelian fusion categories are of this type.

A more interesting question is whether there are unitary ribbon fusion categories with inherently complex $F$-symbols. By our main result in this paper, such categories can have inherently complex $F$-symbols only if they lack a flat charge conjugation symmetry. While we have not shown that the absence of such a symmetry implies inherently complex $F$-symbols (this would be a direct converse of our main result), its absence is clearly a necessary condition.

One way to explore this question is to apply a result of Davydov \cite{MR3421083}: given a finite group, $G$, its category of irreps, ${\rm Rep}(G)$, has a braided charge conjugation automorphism if and only if $G$ has a class inverting automorphism.
Moreover, all braided charge conjugation automorphisms arise from class inverting automorphisms. Using this result, we studied the two simplest ${\rm Rep}(G)$ examples, by rank and categorical dimension, that lack a charge conjugation braided autoequivalence
\begin{equation}\label{Cexamples}
{\rm Rep}(\mathbb{Z}_7\rtimes\mathbb{Z}_3)~,\ \ \ {\rm Rep}(\mathbb{Z}_5\rtimes\mathbb{Z}_4)~,
\end{equation}
and explicitly showed that the corresponding $F$-symbols are inherently complex \cite{Paper1}.\footnote{Both categories have rank five, and their total categorical dimensions are twenty-one and twenty respectively.}\footnote{In fact, \cite{Cesar} substantially generalizes these results, showing that ${\rm Rep}(G)$ categories admit no gauges with real $F$-symbols for any non-abelian $G$ of odd order and determining criteria for inherently complex $F$-symbols in the case of ${\rm Rep}(\mathbb{F}_q^{+}\rtimes\mathbb{Z}_m)$.} Note that the second example above is a symmetric near group category (therefore, unlike the non-symmetric cases discussed in Sec. \ref{NSNGC}, this family of categories admits inherently complex $F$-symbols). By restriction, the same must be true for any of the (twisted) Drinfeld centers based on these groups: $\CZ^{\alpha}(\mathbb{Z}_7\rtimes\mathbb{Z}_3)$ and $\CZ^{\alpha'}(\mathbb{Z}_5\rtimes\mathbb{Z}_4)$, where $\alpha\in H^3(\mathbb{Z}_7\rtimes\mathbb{Z}_3, U(1))$ and $\alpha'\in H^3(\mathbb{Z}_5\rtimes\mathbb{Z}_4, U(1))$.

In the remainder of this section, we study an example of a unitary ribbon category, ${\rm Rep}(\mathbb{F}_3^2\rtimes Q_8)$, that has a braided charge conjugation symmetry but nonetheless lacks a flat charge conjugation symmetry: it is self-dual but lacks an FS grading.\footnote{As far as we are aware, this fact was first discussed in \cite{Ladisch}. We have explicitly verified it.} We will show ${\rm Rep}(\mathbb{F}_3^2\rtimes Q_8)$ has inherently complex $F$.

In fact, it turns out that ${\rm Rep}(\mathbb{F}_3^2\rtimes Q_8)$ has rank six and is a minimal ${\rm Rep}(G)$ example with inherently complex $F$ that has a braided but non-flat charge conjugation symmetry.\footnote{As in \cite{Paper1}, we define minimality by rank.} Our claim for minimality rests on the following facts:\footnote{Note that we do not rule out the possibility of smaller (by rank) fusion categories that are not of ${\rm Rep}(G)$ type.}\footnote{The same argument we discuss below, combined with the fact that all $G$ with $|G|<20$ have class-inverting automorphisms, also implies that the examples in \cite{Paper1} are the smallest-rank ${\rm Rep}(G)$-type categories with inherently complex $F$-symbols.}
\begin{enumerate}
\item There are no self-dual ${\rm Rep}(G)$ examples with inherently complex $F$-symbols and rank smaller than six. Indeed, such categories necessarily arise from \lq\lq ambivalent" groups: any $g\in G$ is conjugate to $g^{-1}$. However, using the relevant formula for the FS indicator,
\begin{equation}\label{eq:fs}
\nu_2(\chi)\;=\;\frac{1}{|G|}\sum_{g\in G}\chi(g^2)
\;=\;\begin{cases}
+1~, & \text{real (orthogonal) type}~,\\
\phantom{+}0~, & \text{complex type, } \chi\neq\overline{\chi}~,\\
-1~, & \text{quaternionic (symplectic) type}~.
\end{cases}
\end{equation}
one can explicitly check that all ambivalent groups with $|G|\le60$ have FS-graded ${\rm Rep}(G)$ categories. Moreover, all $G$ with $|G|>60$ have ${\rm Rep}(G)$ with rank at least six \cite{burnside1911theory,miller1910groups}.
\item All ${\rm Rep}(G)$ categories with $|G|\le60$ that admit a class-inverting automorphism and a charge conjugation acting non-trivially on objects have a flat $\mathbb{Z}_2$ charge conjugation with trivial Postnikov class and twisted FS indicator computed by the Kawanaka-Matsuyama formula \cite{kawanaka1990twisted}
\begin{equation}\label{eq:KM}
  \nu_\tau(\chi)\;=\;\frac{1}{|G|}\sum_{g\in G}\chi\big(g\,\tau(g)\big)~.
\end{equation}
Recall from our discussion around \eqref{eq:KM1} that $\tau\in{\rm Aut}(G)$ is a class-inverting involution and can be understood via the split extension $E=G\rtimes_\tau\langle t\rangle$ (we have checked that every $G$ in the range described has such a split extension). Note that $tgt^{-1}=\tau(g)$ and $t^2=1$. When $\tau={\rm Id}$, we recover the case of ambivalent $G$ and the FS formula in \eqref{eq:fs}. See Appendix \ref{KMAppendix} for details.
\end{enumerate}

Before discussing the example of ${\rm Rep}(\mathbb{F}_3^2\rtimes Q_8)$ in more detail, let us establish a few general results that will be useful in our analysis. First, we know from the discussion in Sec. \ref{RealGauge} that for a self-dual unitary ribbon category
\begin{equation}
F^{M_A}=\overline{F}~,
\end{equation}
for any $F$-symbol in the category. In particular, any $F$-symbol is gauge-equivalent to its complex conjugate. {\it As a result, unlike the case of the examples in \cite{Paper1} which lack a charge conjugation symmetry, no gauge-invariant combination of $F$-symbols with real coefficients can detect inherent complexity.}\footnote{Recall from \cite{Paper1} that, in the case of ${\rm Rep}(\mathbb{Z}_7\rtimes\mathbb{Z}_3)$ and ${\rm Rep}(\mathbb{Z}_5\rtimes\mathbb{Z}_4)$, $(F^{\chi\rho\rho}_\rho)_{\rho\rho}$ and $(F^{\chi\rho\rho}_{\rho})(F^{\rho\rho\chi}_{\rho})^{-1}$ are respective gauge-invariant detectors of inherent complexity.}

Instead, we will use the following fact:
\begin{lem}\label{StabLem}
Let $\CC$ be a unitary ribbon category in which every simple object is self-dual, so that the charge conjugation symmetry may be taken to be $J=\Id_{\CC}$. Fix a gauge, and let $M_A$ be the unitary vertex-basis gauge transformation of \eqref{eq:FstarMLFMR}, with components $(M_A)^{XY}_Z=\mathsf M^{XY}_Z$ determined by the local antiunitaries $A^{XY}_Z=\mathsf M^{XY}_ZK$, so that
\begin{equation}
F^{M_A}=\overline F~,\ \ \ (M_A)^{XY}_Z\,\overline{(M_A)^{XY}_Z}=\lambda^{XY}_Z\cdot\Id_{V^{XY}_Z}~.
\end{equation}
The second relation is the gauge symbol of $\overline R\circ R$ from Corollary~\ref{cor:R2Symbol}. Let $T\subseteq\Irr(\CC)^4$, and define $\CF:=\{F^{XYZ}_W:(X,Y,Z,W)\in T\}$ to be the corresponding collection of blocks of the $F$-symbol in this gauge. Finally, let ${\rm Stab}(\CF):=\{\Gamma: F^\Gamma=F\ \forall F\in\CF\}$ be its stabiliser inside the group of unitary vertex-basis gauge transformations. Suppose there is an admissible triple $(X,Y;Z)$ with
\begin{enumerate}[label=(\roman*)]
\item $\lambda^{XY}_Z=-1$, and
\item every $\Gamma\in{\rm Stab}(\CF)$ restricts to a scalar on $V^{XY}_Z$, i.e.\
$\Gamma^{XY}_Z=c(\Gamma)\cdot\Id_{V^{XY}_Z}$ with $c(\Gamma)\in U(1)$.
\end{enumerate}
Then there is no gauge transformation $H$ for which $F^H$ is real for every $F\in\CF$. In particular, $\CC$ has inherently complex $F$-symbols.
\end{lem}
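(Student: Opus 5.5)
We argue by contradiction. Suppose there is a unitary vertex-basis gauge transformation $H$ such that $F^H$ is real for every $F\in\CF$, that is, $\overline{F^H}=F^H$. The aim is to build from $H$ an element of ${\rm Stab}(\CF)$ whose restriction to $V^{XY}_Z$ contradicts hypotheses (i) and (ii).

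\textbf{Step 1: build a stabiliser element.} Conjugating a gauge transformation entrywise gives $\overline{F^H}=\overline{F}^{\,\overline H}$, where $\overline H$ has components $\overline{H^{ab}_c}$; this is immediate from \eqref{eq:higherMultiplicityGauge}. Using $\overline F=F^{M_A}$ and the composition law of gauge transformations, we get
\[
F^H=\overline{F^H}=\bigl(F^{M_A}\bigr)^{\overline H}=F^{\overline H M_A}
\]
for all $F\in\CF$. Hence
\[
\Gamma:=H^{-1}\,\overline H\, M_A
\]
lies in ${\rm Stab}(\CF)$. Here the local product is $\Gamma^{ab}_c=(H^{ab}_c)^{-1}\overline{H^{ab}_c}\,\mathsf M^{ab}_c$, and the order is fixed by the block convention $L_\Gamma F R_\Gamma^\dagger$. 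Since the fusion labels are not permuted, the composition law \eqref{eq:ATGTComp} reduces to componentwise matrix multiplication.

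\textbf{Step 2: use (ii).} By hypothesis (ii), on $V^{XY}_Z$ we have $\Gamma^{XY}_Z=c\,\Id$ with $c\in U(1)$. Writing $h=H^{XY}_Z$ and $\mathsf M=\mathsf M^{XY}_Z$, this says
\[
\overline h\,\mathsf M=c\,h,\qquad\text{equivalently}\qquad \mathsf M=c\,h^{T}h,
\]
using unitarity of $h$, so that $\overline h^{-1}=h^{T}$.

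\textbf{Step 3: use (i).} Compute $\mathsf M\overline{\mathsf M}$ from this expression:
\[
\mathsf M\overline{\mathsf M}=c\,\overline c\;h^{T}h\,h^{\dagger}\overline h=h^{T}\overline h=\overline{h^{\dagger}h}=\Id .
\]
This gives $\lambda^{XY}_Z=+1$, contradicting hypothesis (i). Therefore no such $H$ exists. Since $\CF$ is a subcollection of all $F$-symbols, a gauge making every $F$-symbol real would in particular make $\CF$ real, so $\CC$ has inherently complex $F$-symbols.

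\textbf{Main obstacle.} The delicate point is the bookkeeping of conventions, not the linear algebra. One has to confirm the order of composition and the placement of inverses in $\Gamma$ under the paper's convention, where the left factor is $\Gamma$ and the right factor is $\Gamma^{-1}$. It also has to be checked that $\mathsf M\overline{\mathsf M}$, as opposed to $\overline{\mathsf M}\mathsf M$, is the quantity that equals $\lambda\,\Id$. This causes no difficulty, because $\mathsf M\overline{\mathsf M}=\lambda\Id$ gives $\overline{\mathsf M}=\lambda\mathsf M^{-1}$ and hence $\overline{\mathsf M}\mathsf M=\lambda\Id$ as well, so the two are equivalent.

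A second check is that the same $M_A$ relates $F$ and $\overline F$ simultaneously for every block in $\CF$. This holds because $M_A$ is a single global vertex gauge transformation, by \eqref{eq:FstarMLFMR}. The scalar $c$ drops out of the computation whatever its value, so the argument does not depend on determining it.
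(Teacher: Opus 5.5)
Your proof is correct and follows the paper's argument. You form the stabiliser element $H^{-1}\overline H M_A$ exactly as the paper does, and the only difference is cosmetic: you rewrite hypothesis (ii) as $\mathsf M = c\,h^{T}h$ and compute $\mathsf M\overline{\mathsf M}=\Id$ to contradict (i), whereas the paper substitutes $\overline H = cHM_A^{-1}$ into its own complex conjugate to obtain $H=-H$.
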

\begin{proof}
To understand this statement, suppose to the contrary that $\CF$ can be made simultaneously real. That is, suppose there exists a single gauge transformation, $H$, such that $F^H$ is real for every $F\in\CF$. Then, for each such $F$
\begin{equation}\label{gaugeTrans}
F^H=\overline{F^H}=\overline F^{\overline H}=(F^{M_A})^{\overline H}=F^{\overline H M_A}~.
\end{equation}
Now, applying the $H^{-1}$ gauge transformation produces
\begin{equation}
(F^{H})^{H^{-1}}=F^{H^{-1}H}=F~,
\end{equation}
on the lefthand side of \eqref{gaugeTrans}. On the righthand side of that same equation, we find 
\begin{equation}
(F^{\overline H M_A})^{H^{-1}}=F^{H^{-1}\overline H M_A}~,
\end{equation}
and so
\begin{equation}
H^{-1}\overline H M_A\in{\rm Stab}(\CF)~.
\end{equation}
By hypothesis, we then have $H^{-1}\overline H M_A=c\cdot{\rm Id}_{V^{XY}_Z}$ when restricted to $V^{XY}_Z$. As a result, $\overline H=cHM_A^{-1}$. Then conjugating and substituting yields
\begin{eqnarray}
H&=&\bar c\cdot \overline H\overline{M_A^{-1}}=\bar c(cHM_A^{-1})\overline{M_A^{-1}}=\bar c(cHM_A^{-1})(\lambda^{XY}_Z)^{-1}{M_A}=|c|^2(\lambda^{XY}_Z)^{-1}H\cr&=&-H~,
\end{eqnarray}
where we have used the fact that $M_A\overline{M_A}=\lambda^{XY}_Z\cdot{\rm Id}_{V^{XY}_Z}$ and that $c$ is a phase as $H^{-1}\overline{H}M_A$ is unitary. We therefore arrive at a contradiction since $H\ne0$ is invertible. Therefore, $\CF$ cannot be made simultaneously real.
\end{proof}
As we will see, by judiciously choosing $\CF$ and showing that ${\rm Stab}|_{V^{XY}_Z}(\CF)$ reduces to a scalar, we will be able to prove inherent complexity in the case of ${\rm Rep}(\mathbb{F}_3^2\rtimes Q_8)$.

\subsection{A self-dual example: \texorpdfstring{${\rm Rep}(\mathbb{F}_3^2\rtimes Q_8)$}{F3xF3x|Q8}}
\label{ssec:FS2ObstructionExample}
The above discussion motivates a search for ambivalent groups whose representation categories lack an FS grading. Several examples of such groups were discussed in \cite{Ladisch}, namely an order-seventy-two group, $G:=\mathbb{F}_3^2\rtimes Q_8$, along with four order-sixty-four groups.\footnote{These latter groups have GAP IDs [64, 218], [64, 224], [64, 243], and [64, 245].} However, ${\rm Rep}(G)$ has rank six, while the other representation categories have rank at least nineteen. Since there are no ambivalent groups with order less than sixty-four that lack an FS grading, and since all groups of order larger than sixty have at least six irreps (see \cite{burnside1911theory,miller1910groups}), ${\rm Rep}(G)$ has minimal rank. We will now argue it has inherently complex $F$-symbols.\footnote{An infinite family of related examples, $(\mathbb{F}_p^2)^{+}\rtimes Q_8$ with $p>8$, is treated independently in \cite{Cesar} using other methods.}

To understand this statement, let us first construct $G$. The starting point is the observation that the quaternions, $Q_8=\{\pm1,\pm i,\pm j,\pm k\}$, embed in $SL(2,3)$ via
\begin{equation}
i\longmapsto\begin{pmatrix}0&-1\\1&0\end{pmatrix}~,\ \ \ 
j\longmapsto\begin{pmatrix}1&1\\1&-1\end{pmatrix}~,\ \ \
k=ij\longmapsto\begin{pmatrix}-1&1\\1&1\end{pmatrix}~ \ \ \ {\rm mod\ 3}~.
\end{equation}
Therefore, we have an action of $Q_8$ on $A=\mathbb{F}_3^2$, because each $v\in A$ can be written as a two-component column vector acted on from the left by the above matrices. Using this action, we define an order-seventy-two group, $\mathbb{F}_3^2\rtimes Q_8$, with multiplication 
\begin{equation}\label{sdp}
(v,q)(v',q'):=(v+q\,v',\;qq')~, \ \ \ v\in A~, \ \ \ q\in Q_8~.
\end{equation}

\begin{table}[t]
\begin{center}
\begin{tabular}{c|cccccc}
& $C_e$ & $C_v$ & $C_{-1}$ & $C_i$ & $C_j$ & $C_k$\\
\hline
$|C_x|$ & $1$ & $8$ & $9$ & $18$ & $18$ & $18$\\
\hline
$\chi_1$ & $1$ & $1$ & $1$ & $1$ & $1$ & $1$\\
$\chi_i$ & $1$ & $1$ & $1$ & $1$ & $-1$ & $-1$\\
$\chi_j$ & $1$ & $1$ & $1$ & $-1$ & $1$ & $-1$\\
$\chi_k$ & $1$ & $1$ & $1$ & $-1$ & $-1$ & $1$\\
$\sigma$ & $2$ & $2$ & $-2$ & $0$ & $0$ & $0$\\
$\rho$ & $8$ & $-1$ & $0$ & $0$ & $0$ & $0$\\
\end{tabular}
\end{center}
\caption{Character table for $G:=\mathbb{F}_3^2\rtimes Q_8$, with $C_v$ the conjugacy class of any $(v,1)$, and $C_{q\ne1}$ the conjugacy class of $(0,q)$. The group is ambivalent, and hence the characters are real.}
\label{chars}
\end{table}

From \eqref{sdp}, it is straightforward to see that we have six conjugacy classes (see Table \ref{chars} for details). The corresponding six irreducible representations can be understood as follows. Five of the 6 irreps are lifts from $Q_8$ via pullback along $\pi:G\to G/A\cong Q_8$: these are the four one-dimensional irreps---$\chi_1:=1$, $\chi_i$, $\chi_j$, and $\chi_k$---along with the two-dimensional irrep, $\sigma$. This latter representation is fixed by
\begin{equation}\label{eq:sigmamat}
\sigma(v,i)=\begin{pmatrix}\mathrm i&0\\0&-\mathrm i\end{pmatrix}~,\ \
\sigma(v,j)=\begin{pmatrix}0&1\\-1&0\end{pmatrix}~,\ \
\sigma(v,k)=\begin{pmatrix}0&\mathrm i\\ \mathrm i&0\end{pmatrix}~,\ \
\sigma(v,-1)=-\mathds{1}~.
\end{equation}
Note that these matrices are independent of $v$, and so we will sometimes be cavalier with this index. Note also that $\sigma(\pm1)$ and $\sigma(\pm j)$ are real, while $\sigma(\pm i)$ and $\sigma(\pm k)$ are purely imaginary, so that
\begin{equation}\label{eq:sigmareal}
\overline{\sigma(q)}=\pm\,\sigma(q)~,\qquad q\in Q_8~.
\end{equation}
In particular, $q\mapsto\overline{\sigma(q)}$ is again a faithful (pseudoreal) two-dimensional representation of $Q_8$, and the sign in \eqref{eq:sigmareal} cancels between the two outer factors of a conjugation,
\begin{equation}\label{eq:conjsame}
\sigma(p)\,X\,\sigma(p)^{-1}=\overline{\sigma(p)}\,X\,\overline{\sigma(p)}^{-1}~,\qquad
p\in Q_8~,\ \ X\in M_2(\mathbb{C})~.
\end{equation}
The one-dimensional irreps appear via
\begin{equation}\label{eq:pairing}
\sigma(g)^{-1}\,\sigma(0,q)\,\sigma(g)=\chi_q(g)\,\sigma(0,q)~,\qquad
\chi_q(v,r)=\begin{cases}+1~,&rq=qr~,\\-1~,&\text{else~,}\end{cases}
\end{equation}
where $\chi_q(v,r)$ is likewise independent of $v\in A$, with $q\in\left\{1,i,j,k\right\}$ and $g=(v,r)\in G$.

The final irrep, $\rho$, is eight-dimensional and is induced from $A$ (it can be induced from any of the non-trivial $A$ irreps). Its restriction to $A$ is
\begin{equation}\label{eq:restriction}
\rho|_{A}\cong\bigoplus_{q\in Q_8}\hat\chi_{w_q}~,
\end{equation}
where $\hat\chi_w(v):=\zeta_3^{\langle w,v\rangle}$, $\zeta_3:=\exp(2\pi i/3)$, and $\langle v,w\rangle:=v^Tw$ is the inner product on $A$. The $A$-weights are defined as follows
\begin{equation}\label{Aweights}
w_q:=(q^{-1})^{T}\begin{pmatrix}1\\0\end{pmatrix}~,
\end{equation}
and so we have
\begin{center}
\renewcommand{\arraystretch}{.75}
\begin{tabular}{c|cccccccc}
$q$ & $1$ & $-1$ & $i$ & $-i$ & $j$ & $-j$ & $k$ & $-k$\\
\hline
$w_q$
& $\begin{pmatrix}1\\0\end{pmatrix}$
& $\begin{pmatrix}2\\0\end{pmatrix}$
& $\begin{pmatrix}0\\1\end{pmatrix}$
& $\begin{pmatrix}0\\2\end{pmatrix}$
& $\begin{pmatrix}2\\2\end{pmatrix}$
& $\begin{pmatrix}1\\1\end{pmatrix}$
& $\begin{pmatrix}1\\2\end{pmatrix}$
& $\begin{pmatrix}2\\1\end{pmatrix}$
\end{tabular}~.\label{Aweighttable}
\end{center}
In what follows, we will find that it is useful to define the following map
\begin{equation}\label{eq:mu}
\mathsf{m}:Q_8\setminus\{-1\}\longrightarrow Q_8\setminus\{1\}~,\ \ \ w_{\mathsf{m}(u)}=w_1+w_u~.
\end{equation}
This map is well-defined and bijective since $w_1+w_u=0$ only for $u=-1$, and $w_1+w_u\ne w_1$. From the table of $A$-weights above, we see that
\begin{equation}\label{eq:mutable}
\mathsf{m}(1)=-1~,\qquad (i,\,-j,\,-k)\ \circlearrowleft~,\qquad
(-i,\,k,\,j)\ \circlearrowleft~.
\end{equation}
The space, $V_\rho$, on which $\rho$ acts has orthonormal basis $\{\psi_q\}_{q\in Q_8}$ and
\begin{equation}\label{eq:rho}
\rho(v,1)\psi_q=\hat\chi_{w_q}(v) \psi_q\ \ \ ~,\ \ \  \rho(0,r)\psi_q=\psi_{rq}~.
\end{equation}
The six irreps are summarized in Table \ref{chars}, and their fusion rules are given in Table \ref{fusion}.

\begin{table}[t]
\begin{center}
\begin{tabular}{c|cccccc}
$\otimes$ & $\chi_1$ & $\chi_i$ & $\chi_j$ & $\chi_k$ & $\sigma$ & $\rho$\\
\hline
$\chi_1$ & $\chi_1$ & $\chi_i$ & $\chi_j$ & $\chi_k$ & $\sigma$ & $\rho$\\
$\chi_i$ & $\chi_i$ & $\chi_1$ & $\chi_k$ & $\chi_j$ & $\sigma$ & $\rho$\\
$\chi_j$ & $\chi_j$ & $\chi_k$ & $\chi_1$ & $\chi_i$ & $\sigma$ & $\rho$\\
$\chi_k$ & $\chi_k$ & $\chi_j$ & $\chi_i$ & $\chi_1$ & $\sigma$ & $\rho$\\
$\sigma$ & $\sigma$ & $\sigma$ & $\sigma$ & $\sigma$ &
$\chi_1\oplus\chi_i\oplus\chi_j\oplus\chi_k$ & $2\rho$\\
$\rho$ & $\rho$ & $\rho$ & $\rho$ & $\rho$ & $2\rho$ &
$\chi_1\oplus\chi_i\oplus\chi_j\oplus\chi_k\oplus2\sigma\oplus7\rho$\\
\end{tabular}
\end{center}
\caption{Fusion rules for the irreducible representations of $G:=\mathbb{F}_3^2\rtimes Q_8$.}
\label{fusion}
\end{table}

Given the above discussion, we can compute the FS indicators via \eqref{eq:fs}. From the data in Table \ref{chars}, we see that the FS indicators are
\begin{equation}
\nu_2(1)=\nu_2(\chi_i)=\nu_2(\chi_j)=\nu_2(\chi_k)=\nu_2(\rho)=1~, \ \ \ \nu_2(\sigma)=-1~,
\end{equation}
and, from the fusion rules in Table \ref{fusion}, we see that ${\rm Rep}(G)$ lacks an FS grading because for $V^{\sigma\rho}_{\rho}$, $V^{\rho\sigma}_{\rho}$, and $V^{\rho\rho}_{\sigma}$ we have
\begin{equation}
\lambda^{\sigma\rho}_{\rho}=\lambda^{\rho\sigma}_{\rho}=\lambda^{\rho\rho}_{\sigma}=-1~.
\end{equation}
Therefore, in order to find $F$-symbols that are inherently complex, it is natural to study cases involving these vertices.

As we have discussed at the beginning of this section, unlike in the examples of \cite{Paper1}, in the case of ${\rm Rep}(\mathbb{F}_3^2\rtimes Q_8)$ we are not able to construct gauge-invariant inherent complexity detectors from expressions involving the $F$-symbols and real coefficients. Instead, we will study the following collection of $F$-symbols involving $V^{\sigma\rho}_{\rho}$ and $V^{\rho\sigma}_{\rho}$
\begin{equation}
\label{eq:exampleFSet}
\CF:=\left\{F^{\rho\sigma\rho}_{\chi}~,\ F^{\sigma\sigma\rho}_{\rho}~,\ F^{\sigma\rho\rho}_{\rho}~,\ F^{\rho\sigma\rho}_{\rho}\right\}~,
\end{equation}
where $\chi$ is any of the one-dimensional irreps of $G$, and argue that every $\Gamma\in{\rm Stab}(\CF)$ acts as a scalar on $\mathcal{K}:=V^{\sigma\rho}_{\rho}$. In other words,
\begin{equation}
{\rm Stab}(\CF)|_{\mathcal{K}}\subseteq U(1)\cdot\mathds{1}_2<U(2)~, \ \text{where}\ {\rm Stab}(\CF)|_{\mathcal{K}}:=\left\{\Gamma^{\sigma\rho}_{\rho}:\Gamma\in{\rm Stab}(\CF)\right\}~.
\end{equation}
By Lemma \ref{StabLem}, this result then implies that data in $\CF$ cannot be made simultaneously real.

Let us first discuss $F^{\rho\sigma\rho}_{\chi}$. The intertwiners we construct live in the fusion spaces $V_{\sigma\rho}^{\rho}$, $V_{\rho\sigma}^{\rho}$, and $V_{\rho\rho}^{\chi}$. However, $F^{\rho\sigma\rho}_{\chi}$ is an overlap involving the adjoint splitting spaces $V^{\sigma\rho}_{\rho}$, $V^{\rho\sigma}_{\rho}$, and $V^{\rho\rho}_{\chi}$, as in \eqref{eq:Ftrace} below.

We write a basis of intertwiners as
\begin{equation}
\varphi_t\in V_{\sigma\rho}^{\rho}~, \ \ \ \varphi'_s\in V_{\rho\sigma}^{\rho}~, \ \ \ \Phi_\chi\in V_{\rho\rho}^{\chi}~,
\end{equation}
where $t,s\in\mathbb{C}^2$ are vectors labelling the possible $\varphi_t$ and $\varphi'_s$ (recall that $N_{\sigma\rho}^{\rho}=N_{\rho\sigma}^{\rho}=2$ and that $V_{\rho\rho}^\chi$ is one dimensional). In particular, we have
\begin{equation}
\varphi_t:=\sum_{\alpha}t_{\alpha}\varphi_{\alpha}~, \ \ \ \varphi'_s:=\sum_{\beta}s_{\beta}\varphi'_{\beta}~,
\end{equation}
where $\varphi_{\alpha}$ and $\varphi'_{\beta}$ are orthonormal.

Given this discussion, we can think of $F^{\rho\sigma\rho}_{\chi}$ as an overlap between left and right splitting paths $\CL_{(\rho,\alpha)}, \CR_{(\rho,\mu)}\in{\rm Hom}(V_{\chi},V_\rho\otimes V_\sigma\otimes V_\rho)$
\begin{equation}\label{eq:Ftrace}
\left[F^{\rho\sigma\rho}_{\chi}\right]_{(\rho,\alpha)(\rho,\mu)}=\bigl\langle \mathcal R_{(\rho,\mu)}\big|\mathcal L_{(\rho,\alpha)}\bigr\rangle=\bigl\langle L_{(\rho,\alpha)}\big|R_{(\rho,\mu)}\bigr\rangle=\frac{1}{\dim V_\chi}\,{\rm Tr}_{V_\rho\otimes V_\sigma\otimes V_\rho}\Bigl(L_{(\rho,\alpha)}^{\dagger}\,R_{(\rho,\mu)}\Bigr)~,
\end{equation}
where $\mathcal R_{(\rho,\mu)}=R_{(\rho,\mu)}^\dagger$ and $\mathcal L_{(\rho,\alpha)}=L_{(\rho,\alpha)}^\dagger$ are built in terms of the daggers of the intertwiners (since the $F$-symbol is defined in terms of splitting spaces), and
\begin{equation}\label{LR}
L_{(\rho,\alpha)}(\psi_q\otimes x_a\otimes\psi_{q'})=\Phi_\chi\bigl(\varphi'_{\alpha}(\psi_q\otimes x_a)\otimes\psi_{q'}\bigr)~, \ \ \  R_{(\rho,\mu)}(\psi_q\otimes x_a\otimes\psi_{q'})=\Phi_\chi\bigl(\psi_q\otimes\varphi_\mu(x_a\otimes\psi_{q'})\bigr)~.
\end{equation}
Here $x_a\in V_\sigma$ and $\psi_q,\psi_{q'}\in V_\rho$ are orthonormal basis vectors. The second equality in \eqref{eq:Ftrace} uses the fact that $\dagger$ is antiunitary.

To evaluate the $F$-symbols, we need to fix the corresponding intertwiners via $G$-equivariance
\begin{equation}\label{Gequiv}
T\circ(\pi_X(g)\otimes\pi_Y(g))=\pi_Z(g)\circ T~,
\end{equation}
for $T\in V_{XY}^Z$. It suffices to impose this condition for $g\in A, Q_8<G$, since these elements generate $G$.

We first consider $\varphi_t\in V_{\sigma\rho}^{\rho}$.  A priori,
\begin{equation}
\varphi_t(x_a\otimes\psi_q)=\sum_{q''}c_{q''}(x_a)\,\psi_{q''}~,
\end{equation}
with each $c_{q''}$ linear in $x_a$. Now, impose \eqref{Gequiv} for $g=(v,1)\in A<G$, using $A\subseteq\ker\sigma$ on the left and \eqref{eq:rho} on both sides:
\begin{equation}
\varphi_t\bigl(x_a\otimes\rho(v)\psi_q\bigr)=\hat\chi_{w_q}(v)\sum_{q''}c_{q''}(x_a)\,\psi_{q''}~, \ \ \rho(v)\,\varphi_t(x_a\otimes\psi_q)=\sum_{q''}c_{q''}(x_a)\,\hat\chi_{w_{q''}}(v)\,\psi_{q''}~.
\end{equation}
Comparing coefficients of $\psi_{q''}$ gives
\begin{equation}\label{eq:separate}
c_{q''}(x)\,\left(\hat\chi_{w_q}(v)-\hat\chi_{w_{q''}}(v)\right)=0~,\ \ \ \forall\ v\in A~.
\end{equation}
By \eqref{eq:restriction} the eight $A$-weights, $w_q$, described in \eqref{Aweights} are pairwise distinct, so for $q''\ne q$ the characters $\hat\chi_{w_q}$ and $\hat\chi_{w_{q''}}$ differ, and some
$v$ makes the bracket nonzero. Therefore, $c_{q''}=0$ for every $q''\ne q$. What remains is
\begin{equation}
\varphi_t(x_a\otimes\psi_q)=c_q(x_a)\,\psi_q=\langle t(q),x_a\rangle\,\psi_q~,
\end{equation}
where $t(q)\in V_\sigma^{*}$ is a functional we will fix. To that end, for $(0,r)\in Q_8<G$, evaluate
both sides of \eqref{Gequiv} on $x_a\otimes\psi_q$:
\begin{equation}
\varphi_t\bigl(\sigma(0,r)x_a\otimes\psi_{rq}\bigr)=\langle t(rq),\sigma(0,r)x_a\rangle\,\psi_{rq}~,\ \ \ \rho(r)\varphi_t(x_a\otimes\psi_q)=\langle t(q),x_a\rangle\,\psi_{rq}~.
\end{equation}
As $\langle t(rq),\sigma(r)x_a\rangle=\langle t(rq)\sigma(r),x_a\rangle$, and $\left\{x_a\right\}$ is a basis,
\begin{equation}\label{eq:transportEquivariance}
t(rq)\,\sigma(r)=t(q)~.
\end{equation}
Now, setting $q=1$ in \eqref{eq:transportEquivariance} gives $t(r)=t\sigma(r)^{-1}$ with $t:=t(1)$. This relation satisfies \eqref{eq:transportEquivariance} for all $q$, since $t\sigma(rq)^{-1}\sigma(r)=t\sigma(q)^{-1}$. The solution space is $V_\sigma^{*}\cong\mathbb{C}^{2}$, which is consistent with $N^{\rho}_{\sigma\rho}=2$ (see Table \ref{fusion}). After appropriately normalizing, we obtain
\begin{equation}\label{varphifinal}
\varphi_t(x_a\otimes \psi_q)=\langle t\sigma(q)^{-1},x_a\rangle\,\psi_q~.
\end{equation}
Similar logic shows that 
\begin{equation}\label{varphiPfinal}
\varphi'_s(\psi_q\otimes x_a)=\langle s\sigma(q)^{-1},x_a\rangle\,\psi_q~.
\end{equation}

Finally, consider $\Phi_\chi$. Here the target is one-dimensional, $V_\chi\cong\mathbb{C}$, and, a priori we have
\begin{equation}\label{PhiStart}
\Phi_\chi(\psi_q\otimes\psi_{q'})=c_{q,q'}\in\mathbb{C}~,\ \ \  q~,\ q'\in Q_8~.
\end{equation}
Again, we impose \eqref{Gequiv} for $g=(v,1)\in A<G$ and evaluate the two sides on $\psi_q\otimes\psi_{q'}$, using \eqref{eq:rho} on the source and $A\subseteq\ker\chi$ on the target
\begin{equation}
\Phi_\chi\bigl(\rho(v)\psi_q\otimes\rho(v)\psi_{q'}\bigr)=\hat\chi_{w_q}(v)\,\hat\chi_{w_{q'}}(v)\,c_{q,q'} =\hat\chi_{w_q+w_{q'}}(v)\,c_{q,q'}~,\ \ \ \chi(v)\,\Phi_\chi\bigl(\psi_q\otimes\psi_{q'}\bigr)=c_{q,q'}~.
\end{equation}
Equating the two results above gives
\begin{equation}\label{eq:separate2}
c_{q,q'}\bigl[\hat\chi_{w_q+w_{q'}}(v)-1\bigr]=0~,\ \ \ \forall\ v\in A~.
\end{equation}
The bracket vanishes for every $v$ exactly when $\hat\chi_{w_q+w_{q'}}$ is trivial, so $w_q+w_{q'}=0$ (since the pairing is nondegenerate). Therefore
\begin{equation}
c_{q,q'}=c_q\,\delta_{q',-q}~, \ \ \ c_q:=c_{q,-q}~,
\end{equation}
and the sixty-four unknowns in \eqref{PhiStart} drop to eight. To constrain things further, impose \eqref{Gequiv} for $g=(0,r)\in Q_8$ on a surviving basis vector $\psi_q\otimes\psi_{-q}$. Using $\rho(r)\psi_q=\psi_{rq}$, and $\rho(r)\psi_{-q}=\psi_{-rq}$, the two sides are $c_{rq}$ and $\chi(r)c_q$, so
\begin{equation}\label{eq:cocycle}
c_{rq}=\chi(r)\,c_q~.
\end{equation}
Now, setting $q=1$ gives $c_r=\chi(r)c_1$, and this satisfies \eqref{eq:cocycle} for all $q$ because $\chi$ is a homomorphism: $c_{rq}=\chi(rq)c_1=\chi(r)\chi(q)c_1=\chi(r)c_q$. One free constant remains, which is consistent with $N^{\chi}_{\rho\rho}=1$. Normalizing $\Phi_\chi$ gives
\begin{equation}\label{ChiFinal}
\Phi_\chi(\psi_q\otimes\psi_{q'})=8^{-1/2}\delta_{q',-q}\chi(q)~.
\end{equation}

We are now ready to compute $F^{\rho\sigma\rho}_{\chi}$. To that end, plugging \eqref{varphifinal}, \eqref{varphiPfinal}, and \eqref{ChiFinal} into \eqref{LR} we obtain
\begin{eqnarray}
L_{(\rho,\alpha)}(\psi_q\otimes x_a\otimes\psi_{q'})&=&8^{-1/2}\,\delta_{q',-q}\,\chi(q)\,[\sigma(q)^{-1}]_{\alpha a}~,\cr
R_{(\rho,\mu)}(\psi_q\otimes x_a\otimes\psi_{q'}) &=&8^{-1/2}\,\delta_{q',-q}\,\chi(q)\,[\sigma(-q)^{-1}]_{\mu a}=-8^{-1/2}\,\delta_{q',-q}\,\chi(q)\,[\sigma(q)^{-1}]_{\mu a}~, \ \ \ \ \
\end{eqnarray}
Finally, substituting into \eqref{eq:Ftrace} we obtain the $F$-symbol
\begin{eqnarray}\label{Frsrc}
\left[F^{\rho\sigma\rho}_{\chi}\right]_{(\rho,\alpha)(\rho,\mu)}
&=&\sum_{q,a,q'}
\overline{L_{(\rho,\alpha)}\bigl(\psi_q\otimes x_a\otimes\psi_{q'}\bigr)}\;R_{(\rho,\mu)}\bigl(\psi_q\otimes x_a\otimes\psi_{q'}\bigr)\cr&=&-\frac18\sum_{q\in Q_8}\sum_{a=1}^{2} \overline{[\sigma(q)^{-1}]_{\alpha a}}\;[\sigma(q)^{-1}]_{\mu a} =-\frac18\cdot8\,\delta_{\alpha\mu}=-\delta_{\alpha\mu}~.
\end{eqnarray}

Before moving on to the remaining $F$-symbols in $\CF$, let us understand how \eqref{Frsrc} constrains ${\rm Stab}(\CF)$. To that end, the four gauge factors acting on $F^{\rho\sigma\rho}_{\chi}$ are
\begin{equation}
P'=\Gamma^{\rho\sigma}_{\rho}~, \ \ \ g=\Gamma^{\rho\rho}_{\chi}~,\ \ \ P=\Gamma^{\sigma\rho}_{\rho}~,
\end{equation}
with $g$ and $g^{-1}$ appearing and so
\begin{equation}
\left[F^{\rho\sigma\rho}_{\chi}\right]^\Gamma_{(\rho,\alpha)(\rho,\mu)} =\sum_{\alpha',\mu'}P'_{\alpha\alpha'}\;g\;\left[F^{\rho\sigma\rho}_{\chi}\right]_{(\rho,\alpha')(\rho,\mu')} \bigl[P^{-1}\bigr]_{\mu'\mu}\;g^{-1}=\bigl(P'F^{\rho\sigma\rho}_\chi P^{-1}\bigr)_{\alpha\mu}=-\left[P'P^{-1}\right]_{\alpha\mu}~.
\end{equation}
The stability condition becomes
\begin{equation}\label{PprP}
P'=P~.
\end{equation}

To obtain further constraints, let us consider $F^{\sigma\sigma\rho}_\rho$. The intertwiners we construct live in the fusion spaces $V_{\sigma\sigma}^\chi$, $V_{\chi\rho}^\rho$, and $V_{\sigma\rho}^\rho$. However, $F^{\sigma\sigma\rho}_{\rho}$ is an overlap involving the corresponding adjoint splitting spaces.

We write a basis of intertwiners on the first two spaces as
\begin{equation}
\phi_\chi\in V_{\sigma\sigma}^\chi~, \ \ \ \eta_\chi\in V_{\chi\rho}^\rho~,
\end{equation}
and recall the result in \eqref{varphifinal} for $\varphi_t\in V_{\sigma\rho}^\rho$.

Given this discussion, we can think of $F^{\sigma\sigma\rho}_{\rho}$ as a comparison between the splitting paths $\CL_\chi,\CR_{(\rho,\mu,\nu)}\in{\rm Hom}(V_\rho,V_\sigma\otimes V_\sigma\otimes V_\rho)$
\begin{equation}\label{eq:Ftrace2}
\bigl[F^{\sigma\sigma\rho}_{\rho}\bigr]_{(\chi)(\rho,\mu,\nu)}=\langle\CR_{(\rho,\mu,\nu)}|\CL_\chi\rangle=\bigl\langle L_{\chi}\big|\,R_{(\rho,\mu,\nu)}\bigr\rangle=\frac{1}{\dim V_\rho}{\rm Tr}_{V_\sigma\otimes V_\sigma\otimes V_\rho}\Bigl(L_{\chi}^\dagger R_{(\rho,\mu,\nu)}\Bigr)~,
\end{equation}
where $\CL_\chi=L_{\chi}^\dagger$, $\CR_{(\rho,\mu,\nu)}=R_{(\rho,\mu,\nu)}^\dagger$, and
\begin{eqnarray}\label{LR2}
L_{\chi}\bigl(x_a\otimes x_b\otimes\psi_q\bigr)=\eta_\chi\bigl(\phi_\chi(x_a\otimes x_b)\otimes\psi_q\bigr)~,\cr
R_{(\rho,\mu,\nu)}\bigl(x_a\otimes x_b\otimes\psi_q\bigr)=\varphi_\nu\bigl(x_a\otimes\varphi_\mu(x_b\otimes\psi_q)\bigr)~,
\end{eqnarray}
where $x_a, x_b\in V_\sigma$ and $\psi_q\in V_\rho$ are orthonormal basis vectors.

We first consider $\phi_\chi$. A priori, we have
\begin{equation}
\phi_\chi(x_a\otimes x_b)=\Psi_{ab}\in\mathbb{C}~,\qquad a,b\in\{1,2\}~,
\end{equation}
corresponding to four unknowns. Now, impose \eqref{Gequiv} for $g=(v,1)\in A<G$ and evaluate both sides on $x_a\otimes x_b$, using $A\subseteq\ker\sigma$ on the source and $A\subseteq\ker\chi$ on the target:
\begin{equation}
\phi_\chi\bigl(\sigma(v,1)x_a\otimes\sigma(v,1)x_b\bigr)=\phi_\chi(x_a\otimes x_b)=\Psi_{ab}~, \ \ \ \chi(v)\,\phi_\chi(x_a\otimes x_b)=\Psi_{ab}~.
\end{equation}
The two outcomes agree identically, so the $A$-condition is vacuous. Let us now take $g=(0,r)\in Q_8<G$. Since $\sigma(0,r)x_a=\sum_c\sigma(0,r)_{ca}x_c$, the two sides on $x_a\otimes x_b$ are
\begin{eqnarray}
\phi_\chi\bigl(\sigma(0,r)x_a\otimes\sigma(0,r)x_b\bigr)&=&\sum_{c,d}\sigma(0,r)_{ca}\,\sigma(0,r)_{db}\,\Psi_{cd}=\bigl[\sigma(0,r)^{T}\Psi\,\sigma(0,r)\bigr]_{ab}~,\cr \chi(r)\,\phi_\chi(x_a\otimes x_b)&=&\chi(r)\,\Psi_{ab}~,
\end{eqnarray}
and so
\begin{equation}\label{eq:formcond}
\sigma(r)^{T}\Psi\,\sigma(r)=\chi(r)\,\Psi~, \ \ \ r\in Q_8~.
\end{equation}

Define
\begin{equation}
\widehat\Psi:=\Omega^{-1}\Psi~,\ \ \ \Omega:=\begin{pmatrix}0&1\\-1&0\end{pmatrix}~,
\end{equation}
which is a bijection of $M_2(\mathbb{C})$. With $\Psi=\Omega\widehat\Psi$, the left side of \eqref{eq:formcond} becomes
$\sigma(r)^{T}\Omega\widehat\Psi\sigma(r)
=\Omega\,\sigma(r)^{-1}\widehat\Psi\sigma(r)$, so \eqref{eq:formcond} is equivalent to
\begin{equation}\label{conjSubst}
\sigma(r)^{-1}\widehat\Psi\,\sigma(r)=\chi(r)\,\widehat\Psi~.
\end{equation}
The point of this substitution is that it converts a congruence $\Psi\mapsto\sigma^{T}\Psi\sigma$ into a conjugation $\widehat\Psi\mapsto\sigma^{-1}\widehat\Psi\sigma$ which \eqref{eq:pairing} diagonalises.

Now, the four matrices, $\sigma(0,q)$, $q\in\{1,i,j,k\}$, are orthogonal under the Hilbert--Schmidt product. Indeed, for $q\ne q'$ in that set, ${\rm Tr}\bigl(\sigma(0,q)^{\dagger}\sigma(0,q')\bigr)={\rm Tr}\bigl(\sigma(0,q^{-1}q')\bigr)=0$, since $q^{-1}q'\notin\{\pm1\}$. The $\sigma(0,q)$ are therefore a basis for $M_2(\mathbb{C})$, and we may expand $\widehat\Psi=\sum_q m_q\,\sigma(q)$. By \eqref{eq:pairing} conjugation acts diagonally in this basis, $\sigma(r)^{-1}\sigma(q)\sigma(r)=\chi_q(r)\sigma(q)$, and therefore \eqref{conjSubst} becomes
\begin{equation}\label{eq:separate3}
m_q\bigl[\chi_q(r)-\chi(r)\bigr]=0~, \ \ \ \forall\ r\in Q_8~,
\end{equation}
the exact analogue of \eqref{eq:separate} and \eqref{eq:separate2}. The four characters, $\chi_q$, exhaust the one-dimensional irreps, so $m_q=0$ unless $\chi_q=\chi$ (i.e., unless $q=q_\chi$). Therefore, $\widehat\Psi\in\mathbb{C}\,\sigma(q_\chi)$, and there is one free constant (consistent with $N^{\chi}_{\sigma\sigma}=1$). Finally, after normalizing $\phi_\chi$ one obtains
\begin{equation}\label{phifinal}
\phi_{\chi}(x_a\otimes x_b)=2^{-1/2}\left(\Omega\,\sigma(q_\chi)\right)_{ab}~.
\end{equation}

Next, consider $\eta_{\chi}$. Let $e_\chi\in V_\chi$. Then, a priori, we have
\begin{equation}
\eta_\chi(e_\chi\otimes\psi_q)=\sum_{q''}m_{q'',q}\,\psi_{q''}~,\ \ \  q,q''\in Q_8~,
\end{equation}
which has sixty-four unknowns. Impose \eqref{Gequiv} for $g=(v,1)$ and use $A\subseteq\ker\chi$
\begin{equation}
\eta_\chi\bigl(\chi(v)e_\chi\otimes\rho(v)\psi_q\bigr)=\hat\chi_{w_q}(v)\sum_{q''}m_{q'',q}\,\psi_{q''}~,\ \ \ \rho(v)\,\eta_\chi(e_\chi\otimes\psi_q)=\sum_{q''}m_{q'',q}\,\hat\chi_{w_{q''}}(v)\,\psi_{q''}~.
\end{equation}
Comparing coefficients of $\psi_{q''}$ gives $m_{q'',q}\bigl[\hat\chi_{w_q}(v)-\hat\chi_{w_{q''}}(v)\bigr]=0$ for all $v\in A$. By \eqref{eq:restriction} the $A$-weights are pairwise distinct, so $m_{q'',q}=0$ for $q''\ne q$. Writing $m_q:=m_{q,q}$, the sixty-four unknowns drop to eight, and we find
\begin{equation}
\eta_\chi(e_\chi\otimes\psi_q)=m_q\,\psi_q~.
\end{equation}
Next impose equivariance for $g=(0,r)$
\begin{equation}
\eta_\chi\bigl(\chi(r)e_\chi\otimes\rho(r)\psi_q\bigr)=\chi(r)\,m_{rq}\,\psi_{rq}~,\ \ \ \rho(r)\,\eta_\chi(e_\chi\otimes\psi_q)=m_q\,\psi_{rq}~,
\end{equation}
and so $\chi(r)\,m_{rq}=m_q$. Therefore, $m_{rq}=\chi(r)^{-1}m_q=\chi(r)\,m_q$, with the last equality holding because $\chi$ takes values $\pm1$. Setting $q=1$ gives $m_r=\chi(r)m_1$, which satisfies the relation for all $q$, since $\chi$ is a homomorphism. One free constant remains, which is consistent with $N_{\rho}^{\chi\rho}=1$. Normalizing then gives
\begin{equation}\label{etafinal}
\eta_\chi(e_\chi\otimes\psi_q)=\chi(q)\psi_q~.
\end{equation}

To compute $F^{\sigma\sigma\rho}_{\rho}$ we need to compare the two paths in \eqref{LR2} by substituting in \eqref{phifinal}, \eqref{etafinal}, and \eqref{varphifinal}
\begin{eqnarray}\label{eq:LRssrcomp}
L_{\chi}\bigl(x_a\otimes x_b\otimes\psi_q\bigr)
&=&2^{-1/2}\,\bigl[\Omega\sigma(q_\chi)\bigr]_{ab}\,\chi(q)\,\psi_q~,\cr
R_{(\rho,\mu,\nu)}\bigl(x_a\otimes x_b\otimes\psi_q\bigr)
&=&[\sigma(q)^{-1}]_{\mu b}\,[\sigma(q)^{-1}]_{\nu a}\,\psi_q~.
\end{eqnarray}
Both are multiples of $\psi_q$, so \eqref{eq:Ftrace2} is a sum of products of coefficients. Writing $K_\chi:=\Omega\sigma(q_\chi)$, the $(a,b)$-sum in \eqref{eq:Ftrace2} is
\begin{equation}
\sum_{a,b}[\sigma(q)^{-1}]_{\nu a}\,\overline{[K_\chi]_{ab}}\,[\sigma(q)^{-1}]_{\mu b}=\Bigl[\sigma(q)^{-1}\,\overline{K_\chi}\,({\sigma(q)^{-1}})^{\,T}\Bigr]_{\nu\mu}=\bigl[\sigma(q)^{\dagger}\overline{K_\chi}\,\overline{\sigma(q)}\bigr]_{\nu\mu}=\overline{\chi(q)\,[K_\chi]_{\nu\mu}}~,
\end{equation}
using $\sigma(q)^{-1}=\sigma(q)^{\dagger}$ in the middle step and, in the last step, the complex conjugate of the equivariance condition, $\sigma(r)^{T}K_\chi\sigma(r)=\chi(r)K_\chi$, that fixed $\phi_\chi$. Hence
\begin{equation}\label{eq:Fssr}
\bigl[F^{\sigma\sigma\rho}_{\rho}\bigr]_{(\chi)(\rho,\mu,\nu)}
=\frac{2^{-1/2}}{8}\sum_{q\in Q_8}\overline{\chi(q)^{2}\,[K_\chi]_{\nu\mu}}
=2^{-1/2}\overline{\bigl[\Omega\sigma(q_\chi)\bigr]_{\nu\mu}}~.
\end{equation}
We can then write the $F$-symbol as a non-real matrix via
\begin{equation}
F^{\sigma\sigma\rho}_{\rho}=\frac{1}{\sqrt{2}}\begin{pmatrix}0&1&-1&0\\0&\mathrm i&\mathrm i&0\\-1&0&0&-1\\ -\mathrm i&0&0&\mathrm i\end{pmatrix}~,
\end{equation}
where rows and columns are ordered $\chi=\chi_1,\chi_i,\chi_j,\chi_k$ and $(\nu,\mu)=(1,1),(1,2),(2,1),(2,2)$ respectively so that the $\chi$-th row is the $2\times 2$ matrix $\overline{\Omega\sigma(q_\chi)}$.

Now let us understand the constraints on ${\rm Stab}(\CF)$. The four gauge factors acting on $F^{\sigma\sigma\rho}_\rho$ are
\begin{equation}\label{gaugeFact}
\Gamma^{\sigma\sigma}_{\chi}~,\ \ \ \Gamma^{\chi\rho}_{\rho}~,\ \ \ \Gamma^{\sigma\rho}_{\rho}=P~,\ \ \ \Gamma^{\sigma\rho}_{\rho}=P~.
\end{equation}
Setting $z_\chi:=\Gamma^{\sigma\sigma}_{\chi}\Gamma^{\chi\rho}_{\rho}$, the relevant gauge transformation becomes
\begin{equation}\label{FssrrT}
\bigl[F^{\sigma\sigma\rho}_{\rho}\bigr]^{\Gamma}_{(\chi)(\rho,\mu,\nu)}
=z_\chi\sum_{\mu',\nu'}
\bigl[F^{\sigma\sigma\rho}_{\rho}\bigr]_{(\chi)(\rho,\mu',\nu')}
\bigl[P^{-1}\bigr]_{\mu'\mu}\bigl[P^{-1}\bigr]_{\nu'\nu}
=z_\chi\,2^{-1/2}\Bigl[\bigl(P^{-1}\bigr)^{T}\overline{K_\chi} P^{-1}\Bigr]_{\nu\mu}~.
\end{equation}
Unlike the first case of $F^{\rho\sigma\rho}_\chi$, the two scalars do not cancel: the vertices $V_{\chi}^{\sigma\sigma}$ and $V_{\rho}^{\chi\rho}$ are different spaces, and their product, $z_\chi$, appears. Stability therefore amounts to the requirement that
\begin{equation}\label{eq:pauliconstraint}
P^{T}\bigl(\Omega\overline{\sigma(q_\chi)}\bigr)P=z_\chi\,\Omega\overline{\sigma(q_\chi)}~, \ \ \ \forall\ \chi~.
\end{equation}
Now, using the fact that $P^{T}\Omega=\det(P)\,\Omega P^{-1}$ the stability condition becomes
\begin{equation}\label{conjrel}
P^{-1}\overline{\sigma(q_\chi)}P=s_{q_\chi}\, \overline{\sigma(q_\chi)}~,\qquad s_{q_\chi}:=z_{\chi}/\det P~.
\end{equation}
Taking the determinant of both sides gives $s_{q_\chi}^2=1$, so $s_{q_\chi}=\pm1$. Applying \eqref{conjrel} to $\sigma(k)=\sigma(i)\sigma(j)$ gives $s_k=s_is_j$. Therefore, 
\begin{equation}\label{signP}
(s_i,s_j,s_k)\in\left\{(+,+,+),(+,-,-),(-,+,-),(-,-,+)\right\}~.
\end{equation}
Therefore, \eqref{conjrel} is the action of the following inner automorphism of $Q_8$
\begin{equation}\label{inner1}
P^{-1}\overline{\sigma(q_\chi)}P=\overline{\sigma(\kappa_p(q_\chi))}=\sigma(p)^{-1}\overline{\sigma(q_\chi)}\sigma(p)~,\ \ \ q_\chi\in Q_8~,
\end{equation}
where $\kappa_p(q)=p^{-1}qp$, and the choices of $p\in\left\{1,i,j,k\right\}$ are in one-to-one correspondence with the choices in \eqref{signP}. The second equality in \eqref{inner1} is \eqref{eq:conjsame} together with the fact that $\overline{\sigma(\,\cdot\,)}$ is a representation. Equation \eqref{inner1} then says says that $P\sigma(p)^{-1}$ commutes with $\overline{\sigma(q)}$ for every $q\in\left\{1,i,j,k\right\}$. By \eqref{eq:sigmareal} these differ from the $\sigma(q)$ only by signs, so they span $M_2(\mathbb{C})$, and $P\sigma(p)^{-1}$ is therefore a scalar. As a result,
\begin{equation}\label{Psigma}
P=z\sigma(p)~,
\end{equation}
for some $z\in U(1)$. If we can show $p=1$, then we are done ($p=-1$ also works, because it amounts to sending $z\to-z$).

Let us next consider $F^{\sigma\rho\rho}_\rho$.  The intertwiners we construct live in the fusion spaces $V_{\sigma\rho}^\rho$ and $V_{\rho\rho}^\rho$. However, $F^{\sigma\rho\rho}_\rho$ is an overlap involving the corresponding adjoint splitting spaces.

We have already constrained $\varphi_t\in V_{\sigma\rho}^\rho$ in \eqref{varphifinal}. It only remains to constrain the intertwiner
\begin{equation}
\nu_u\in V_{\rho\rho}^\rho~.
\end{equation}
A priori we have
\begin{equation}
\nu_u(\psi_q\otimes\psi_{q'})=\sum_{q''}\nu^{q''}_{q,q'}\,\psi_{q''}~,\ \ \ q,q',q''\in Q_8~.
\end{equation}
Now impose  \eqref{Gequiv} for $g=(v,1)$, using \eqref{eq:rho}
\begin{equation}
\nu_u\bigl(\rho(v,1)\psi_q\otimes\rho(v,1)\psi_{q'}\bigr)=\hat\chi_{w_q+w_{q'}}(v)\sum_{q''}\nu^{q''}_{q,q'}\,\psi_{q''}~,\ \ \  \rho(v)\,\nu_u(\psi_q\otimes\psi_{q'})=\sum_{q''}\nu^{q''}_{q,q'}\,\hat\chi_{w_{q''}}(v)\,\psi_{q''}~.
\end{equation}
Comparing coefficients of $\psi_{q''}$ gives the constraint
\begin{equation}\label{eq:separate4}
\nu^{q''}_{q,q'}\bigl[\hat\chi_{w_q+w_{q'}}(v)-\hat\chi_{w_{q''}}(v)\bigr]=0~, \ \ \ \forall v\in A~.
\end{equation}
Therefore, $\nu^{q''}_{q,q'}=0$ unless
\begin{equation}\label{eq:selection}
w_q+w_{q'}=w_{q''}~.
\end{equation}
This selection rule has two consequences. First, the pair $(q,q')$ must satisfy $q'\ne-q$, since $w_q+w_{-q}=0$ is not an $A$-weight. Writing $q'=qu$ with $u\in Q_8\setminus\{-1\}$, the surviving pairs are parametrised by $q$ and $u$. Second, $q''$ is then determined. Indeed, recalling \eqref{Aweights} and the definition in \eqref{eq:mu}, we see that
\begin{equation}
w_q+w_{qu}=(q^{-1})^{T}\bigl(w_1+w_u\bigr)
=(q^{-1})^{T}w_{\mathsf{m}(u)}=w_{q\,\mathsf{m}(u)}~, \ \ \ q''=q\,\mathsf{m}(u)~,
\end{equation}
and so
\begin{equation}
\nu_u(\psi_q\otimes\psi_{qu})=c_{q,u}\,\psi_{q\,\mathsf{m}(u)}~,\ \ \ \nu_u(\psi_q\otimes\psi_{-q})=0~.
\end{equation}

Next impose equivariance with respect to $g=(0,r)$
\begin{equation}
\nu_u\bigl(\rho(0,r)\psi_q\otimes\rho(0,r)\psi_{qu}\bigr)=\nu_u\bigl(\psi_{rq}\otimes\psi_{(rq)u}\bigr)=c_{rq,u}\,\psi_{rq\,\mathsf{m}(u)}~, \ \ \ \rho(r)\,\nu_u(\psi_q\otimes\psi_{qu})=c_{q,u}\,\psi_{rq\,\mathsf{m}(u)}~.
\end{equation}
Hence
\begin{equation}\label{eq:constancy}
c_{rq,u}=c_{q,u}=c_u~, \ \ \ \forall r\in Q_8~,
\end{equation}
is independent of $q$.

Seven free $c_u$ remain, one for each $u\in Q_8\setminus\{-1\}$, and so $N_{\rho}^{\rho\rho}=7$ as expected. Taking $c_u=1$ and
$c_{u'}=0$ for $u'\ne u$ gives
\begin{equation}\label{nufinal}
\nu_u(\psi_q\otimes\psi_{q'})=\delta_{q',qu}\,\psi_{q\,\mathsf{m}(u)}~,
\end{equation}
which is already properly normalized.

Given this discussion, we can think of $F^{\sigma\rho\rho}_{\rho}$ as the overlap of the two splitting paths $\CL_{(\rho,\alpha,u)},\CR_{(\rho,u',\nu)}\in{\rm Hom}(V_\rho,V_\sigma\otimes V_\rho\otimes V_\rho)$
\begin{eqnarray}\label{Fsrrs}
\left[F^{\sigma\rho\rho}_{\rho}\right]_{(\rho,\alpha,u)(\rho,u',\nu)}&=&\langle \CR_{(\rho,u',\nu)}|\CL_{(\rho,\alpha,u)}\rangle=\langle L_{(\rho,\alpha,u)}|R_{(\rho,u',\nu)}\rangle\cr&=&{1\over \dim V_\rho}{\rm Tr}_{V_\sigma\otimes V_\rho\otimes V_\rho}\left(L_{(\rho,\alpha,u)}^\dagger R_{(\rho,u',\nu)}\right)~,
\end{eqnarray}
where $\CL_{(\rho,\alpha,u)}=L_{(\rho,\alpha,u)}^\dagger$, $\CR_{(\rho,u',\nu)}=R_{(\rho,u',\nu)}^\dagger$, and
\begin{equation}
L_{(\rho,\alpha,u)}\bigl(x_a\otimes\psi_q\otimes\psi_{q'}\bigr)=\nu_{u}\bigl(\varphi_\alpha(x_a\otimes\psi_q)\otimes\psi_{q'}\bigr)~, \ \ R_{(\rho,u',\nu)}\bigl(x_a\otimes\psi_q\otimes\psi_{q'}\bigr)=\varphi_\nu\bigl(x_a\otimes\nu_{u'}(\psi_q\otimes\psi_{q'})\bigr)~.
\end{equation}
Using our results in \eqref{varphifinal} and \eqref{nufinal}, these quantities become
\begin{eqnarray}\label{eq:LRsrrcomp}
L_{(\rho,\alpha,u)}\bigl(x_a\otimes\psi_q\otimes\psi_{q'}\bigr)
&=&[\sigma(q)^{-1}]_{\alpha a}\;\delta_{q',qu}\;\psi_{q\mathsf{m}(u)}~,\cr
R_{(\rho,u',\nu)}\bigl(x_a\otimes\psi_q\otimes\psi_{q'}\bigr)
&=&\delta_{q',qu'}\;[\sigma(q\mathsf{m}(u'))^{-1}]_{\nu a}\;\psi_{q\mathsf{m}(u')}~.
\end{eqnarray}
Both are multiples of $\psi_{q\mathsf{m}(u)}$ on the common support, so \eqref{Fsrrs} is again a sum of products of coefficients. The two Kronecker deltas give $\sum_{q'}\delta_{q',qu}\delta_{q',qu'}=\delta_{uu'}$, and, on $u=u'$, the $a$-sum is
\begin{eqnarray}
\sum_{a}[\sigma(q\mathsf{m}(u))^{-1}]_{\nu a}\;\overline{[\sigma(q)^{-1}]_{\alpha a}}&=&\overline{\sum_{a}\overline{[\sigma(q\mathsf{m}(u))^{-1}]_{\nu a}}\;[\sigma(q)^{-1}]_{\alpha a}}=\overline{\bigl[\sigma(q)^{-1}\sigma(q\mathsf{m}(u))\bigr]_{\alpha\nu}}\cr&=&\overline{\bigl[\sigma(\mathsf{m}(u))\bigr]_{\alpha\nu}}~.
\end{eqnarray}
In arriving at this result, we pulled the complex conjugate out of the sum, used $\overline{\sigma(q)^{-1}}^{\,T}=\sigma(q)$ in the first step, and $\sigma(q\mathsf{m}(u))=\sigma(q)\sigma(\mathsf{m}(u))$ in the second step. The $q$-sum then contributes a factor eight, cancelling the $1/\dim V_\rho$, and so
\begin{equation}\label{eq:Fsrr}
\bigl[F^{\sigma\rho\rho}_{\rho}\bigr]_{(\rho,\alpha,u)(\rho,u',\nu)}=\delta_{uu'}\overline{\bigl[\sigma(\mathsf{m}(u))\bigr]_{\alpha\nu}}~.
\end{equation}

We can give a similar derivation for $F^{\rho\sigma\rho}_{\rho}$ using the intertwiners we have already found. The two paths in this case are daggers of the following
\begin{equation}
L_{(\rho,\alpha,u)}\bigl(\psi_q\otimes x_a\otimes\psi_{q'}\bigr)
=\nu_{u}\bigl(\varphi'_\alpha(\psi_q\otimes x_a)\otimes\psi_{q'}\bigr)~, \ \
R_{(\rho,\mu,u')}\bigl(\psi_q\otimes x_a\otimes\psi_{q'}\bigr)
=\nu_{u'}\bigl(\psi_q\otimes\varphi_\mu(x_a\otimes\psi_{q'})\bigr)~,
\end{equation}
and, by \eqref{varphifinal}, \eqref{varphiPfinal} and \eqref{nufinal},
\begin{eqnarray}\label{eq:LRrsrcomp}
L_{(\rho,\alpha,u)}\bigl(\psi_q\otimes x_a\otimes\psi_{q'}\bigr)
&=&[\sigma(q)^{-1}]_{\alpha a}\;\delta_{q',qu}\;\psi_{q\mathsf{m}(u)}~,\cr
R_{(\rho,\mu,u')}\bigl(\psi_q\otimes x_a\otimes\psi_{q'}\bigr)
&=&\delta_{q',qu'}\;[\sigma(q')^{-1}]_{\mu a}\;\psi_{q\mathsf{m}(u')}~.
\end{eqnarray}
Again the two Kronecker deltas give $\sum_{q'}\delta_{q',qu}\delta_{q',qu'}=\delta_{uu'}$. Therefore, the $a$-sum is
\begin{equation}
\sum_{a}[\sigma(qu)^{-1}]_{\mu a}\;\overline{[\sigma(q)^{-1}]_{\alpha a}}
=\overline{\sum_{a}\overline{[\sigma(qu)^{-1}]_{\mu a}}\;[\sigma(q)^{-1}]_{\alpha a}}
=\overline{\bigl[\sigma(q)^{-1}\sigma(qu)\bigr]_{\alpha\mu}}
=\overline{\bigl[\sigma(u)\bigr]_{\alpha\mu}}~.
\end{equation}
We have pulled the complex conjugation outside the sum, used $\overline{\sigma(g)^{-1}}^{\,T}=\sigma(g)$ in the first step, and $\sigma(qu)=\sigma(q)\sigma(u)$ in the second. Again, the $q$-sum contributes eight, cancelling the $1/\dim V_\rho$, and therefore
\begin{equation}\label{eq:Frsr}
\bigl[F^{\rho\sigma\rho}_{\rho}\bigr]_{(\rho,\alpha,u)(\rho,\mu,u')}=\delta_{uu'}\overline{\bigl[\sigma(u)\bigr]_{\alpha\mu}}~.
\end{equation}

We now conclude with the final constraints on ${\rm Stab}(\CF)$. First consider constraints arising from $F^{\sigma\rho\rho}_\rho$. The four gauge factors are $P=\Gamma^{\sigma\rho}_\rho$ on $\alpha$ and $\nu$ and 
\begin{equation}
\Xi:=\Gamma^{\rho\rho}_\rho~,
\end{equation}
on $u$ and $u'$. Substituting \eqref{eq:Fsrr} into the transformation, performing the sums forced by the two Kronecker deltas, and multiplying on the right by $P$ and $\Xi$, $F^{\Gamma}=F$ becomes 
\begin{equation}
\Xi[u,u']\,\bigl[P\,\overline{\sigma(\mathsf{m}(u'))}\bigr]_{\alpha\nu}=\Xi[u,u']\,\bigl[\overline{\sigma(\mathsf{m}(u))}\,P\bigr]_{\alpha\nu}~,\ \ \ \forall\alpha,\nu,u,u'~,
\end{equation}
so $\Xi[u,u']\ne0$ forces $P\overline{\sigma(\mathsf{m}(u'))}P^{-1}=\overline{\sigma(\mathsf{m}(u))}$. Using \eqref{Psigma}, this fact implies
\begin{equation}\label{FsrrrC}
\Xi[u,u']\ne0\Longrightarrow \sigma(p)\overline{\sigma(\mathsf{m}(u'))}\sigma(p)^{-1}=\overline{\sigma(\mathsf{m}(u))}~. 
\end{equation}

Finally, consider the constraints arising from $F^{\rho\sigma\rho}_\rho$. The four gauge factors are now $P'=P$ on $\alpha$ and $\mu$ and $\Xi$ on $u'$ and $u$. By logic similar to the above, we conclude that 
\begin{equation}\label{FrsrrC}
\Xi[u,u']\ne0\Longrightarrow \sigma(p)\overline{\sigma(u')}\sigma(p)^{-1}=\overline{\sigma(u)}~. 
\end{equation}

We now wish to marshal our constraints in \eqref{PprP}, \eqref{Psigma}, \eqref{FsrrrC}, and \eqref{FrsrrC} to prove that $P=z\sigma(p)$ is a scalar. In other words, we wish to show that $p\in\left\{\pm1\right\}$. By Lemma \ref{StabLem}, this conclusion is enough to show that the complex $F$-symbols we have found in $\CF$ are inherently complex (i.e., $\CF$ cannot be made simultaneously real).

To that end, suppose that $p\ne\pm1$. Since $\Xi$ is invertible (it is a unitary gauge transformation), it must have a non-zero entry in row $p$. Now, by \eqref{FrsrrC}, that entry must be in a column with $u'$ satisfying $\sigma(p)\overline{\sigma(u')}\sigma(p)^{-1}=\overline{\sigma(p)}$. By \eqref{eq:conjsame} this reads $\overline{\sigma(p\,u'\,p^{-1})}=\overline{\sigma(p)}$, and $\overline{\sigma(\,\cdot\,)}$ is faithful, so $u'=p$, and hence $\Xi[p,p]\ne0$. Therefore, \eqref{FsrrrC} says that $\sigma(p)\overline{\sigma(\mathsf{m}(p))}\sigma(p)^{-1}=\overline{\sigma(\mathsf{m}(p))}$, which by the same manipulation requires $\mathsf{m}(p)$ to lie in the centralizer $C_{Q_8}(p)=\left\{1,-1,p,-p\right\}$. From \eqref{eq:mutable}, we see that this does not happen for $p\ne\pm1$. Therefore, we see that any $\Gamma\in{\rm Stab}(\CF)$ has
\begin{equation}
\Gamma^{\sigma\rho}_\rho=c\cdot \mathds{1}~.
\end{equation}
As a result, by Lemma \ref{StabLem} there is no gauge in which $\CF$ can be made simultaneously real. We conclude that ${\rm Rep}(\mathbb{F}_3^2\rtimes Q_8)$ has inherently complex $F$-symbols.

\section{Conclusion and outlook}

We have identified a general mechanism underlying the existence of gauges with real $F$-symbols in unitary ribbon categories. A braided charge-conjugation symmetry relates the $F$-symbols to their complex conjugates by local gauge transformations on the trivalent fusion spaces. Then, the flatness condition on the charge conjugation symmetry (or, equivalently, the corresponding condition on the twisted Frobenius-Schur data) ensures that these local transformations can be chosen symmetric and hence trivialised by Takagi factorisation. Our main result is that a unitary ribbon category with a flat charge-conjugation symmetry admits a gauge in which all $F$-symbols are real and, simultaneously, all $R$-matrices are symmetric. In special cases, such as multiplicity-free categories, the conditions simplify further, and a braided $\mathbb{Z}_2$ charge conjugation symmetry is sufficient for the existence of a real $F$-symbol gauge. From an experimental perspective, this simultaneously reduces the complexity of reconstructing the fusion and braiding data: the $F$-symbols require only real amplitudes, while symmetric $R$-matrices contain fewer independent matrix elements.

This work suggests various interesting future directions:
\begin{itemize}
\item We have provided sufficient conditions for a gauge with real $F$-symbols. It would be interesting to establish necessary and sufficient conditions.
It seems likely that our sufficient conditions are stronger than necessary, because they also produce symmetry of the $R$-symbol. 
\item In all the Chern-Simons theories we encountered, there is a gauge with real $F$-symbols. Indeed, this statement holds in all self-dual Chern-Simons theories with compact and simply connected gauge groups (see Sec. \ref{ChernSimons}) as well as all the $K$-matrix theories (these are all Abelian; see Sec. \ref{implications}). On the other hand, we have seen, in addition to the inherently complex $F$-symbols discussed in \cite{Paper1}, the inherently complex case of the ${\rm Rep}(\mathbb{F}_3^2\rtimes Q_8)$ $F$-symbols in Sec. \ref{ssec:FS2ObstructionExample}. It would be interesting to understand if a real $F$-symbol gauge is a general property of Chern-Simons theories. If so, inherent complexity of $F$-symbols would be an invariant way to argue that a particular TQFT does not admit a Chern-Simons description (such questions are related to the existence of \lq\lq exotic" symmetries \cite{Hong:2007ty}).
\item Clearly, a gauge with real $F$-symbols is not invariant under 0-form gauging. More mathematically, it is not an invariant of the Witt group. To see this, note that the trivial theory, ${\rm Vec}$, has real $F$-symbols ($F^{111}_1=1$), but gauging a $G=\mathbb{F}_3^2\rtimes Q_8$ symmetry leads to a $G$ discrete gauge theory with complex $F$-symbols (by the results of Sec. \ref{ssec:FS2ObstructionExample}, it has a ${\rm Rep}(G)$ subcategory with inherently complex $F$-symbols). It would be interesting to understand if there is a refinement of the Witt group that has invariants that capture inherent complexity of $F$-symbols.
\item In Sec. \ref{ssec:FS2ObstructionExample} and App. \ref{KMAppendix}, we saw the role the Kawanaka-Matsuyama indicator played: it is the twisted FS index for ${\rm Rep}(G)$-type categories. Flatness then leads to a gauge with real $F$-symbols while non-flatness can lead to inherently complex $F$-symbols. It would be interesting to understand a generalized Kawanaka-Matsuyama indicator that applies beyond these classes of categories.
\item More generally, it would be interesting to understand if there are universal gauge invariants that can tell us if a theory has a gauge with real $F$-symbols. One possibility might be the Whitehead link (see \cite{Bonderson:2018ryx} for an appearance of this link in a potentially related context).
\end{itemize}

\ack{We thank L.~Byles, J.F.~Martins, D.~Nikshych, B.~Rayhaun, E.~Rowell, J.~Slingerland, K.~Walker, Z.~Wang, and X.~Yang for many helpful discussions. We also thank C.~Galindo for helpful discussions about the relation between \cite{Cesar} and the present work. We are grateful to the Isaac Newton Institute for Mathematical Sciences at Cambridge University for support and hospitality during the programme \lq\lq Quantum Field Theory With Boundaries, Impurities, and Defects,” where this project was initiated. We were partly supported by EPSRC grant {no.}~EP/Z000580/1. {M.B.}'s work was partly supported by STFC under the grant “Amplitudes, Strings and Duality” and by a Simons Fellowship. {P.H.}~was supported by EPSRC Programme Grant {no.}~EP/W007509/1. {J.K.P.} was partially supported by EPSRC grants nos.~EP/W524372/1 and UKRI1337:Anyons24. No new data were generated or analysed in this study.}

\newpage

\begin{appendix}
\section{The Kawanaka-Matsuyama formula and \texorpdfstring{$\sigma$}{sigma}}\label{KMAppendix}
In this appendix, we restrict to the case $\CC={\rm Rep}(G)$ in order to better understand \eqref{eq:KM} and its relation to $\sigma$. To that end, let us choose representatives, $(V_\chi,\rho_\chi)$, with orthonormal bases.\footnote{As an aside on notation, we will often name objects in ${\rm Rep}(G)$ by their underlying spaces (so we will use $V$ as shorthand for $(V,\rho)$).} In particular, for $\tau$ a class-inverting involution, we claim that
\begin{equation}\label{sigmaApp}
\sigma_\chi=p_\chi\psi_{\chi^{\vee}}^{-1}=\nu_\tau(\chi)~,
\end{equation}
where $\sigma_\chi$ is the data appearing in Corollary \ref{cor:R2Symbol} when the Postnikov class is trivial, and $\nu_\tau$ is given by the Kawanaka-Matsuyama (KM) formula \cite{kawanaka1990twisted}
\begin{equation}\label{eq:KMApp}
\nu_\tau(\chi)\;=\;\frac{1}{|G|}\sum_{g\in G}\chi\big(g\,\tau(g)\big)~.
\end{equation}
We take our categorical charge conjugation to be $J=\tau^*$ and so $J^2(V_\chi,\rho_\chi)=(V_\chi,\rho_\chi\circ\tau\circ\tau)=(V_\chi,\rho_\chi)$, which means that $J^2={\rm Id}$.  This fact implies that we can use a symmetry gauge transformation to set $\eta={\rm Id}$. The remaining factor $\beta_\chi$ in $\sigma_\chi=\beta_\chi p_\chi\psi^{-1}_{\chi^{\vee}}$ records the components of $\eta$ after transport along the identifications $J(\chi)\cong\chi^{\vee}$, which we will describe in greater detail below. Normalising those identifications sets $\beta_\chi=1$, which is why \eqref{sigmaApp} contains only two factors (see the discussion in Sec. \ref{ssec:mainTheorem} for further details).

To prove \eqref{sigmaApp}, we first do some groundwork and fix our right duality functor, $\pi$, as follows
\begin{equation}
\pi(V_\chi)=V_\chi^\vee:=V_\chi^*~,\ \ \ \rho_\chi^\vee(g)=\overline{\rho_\chi(g)}~,
\end{equation}
in the dual basis. The duality data is $e_{V_\chi}(f\otimes v)=f(v)$ and $c_{V_\chi}(1)=\sum_iv_i\otimes v^i$, which satisfy the zig-zags and are balanced.\footnote{Explicitly, the zig-zag identities are the two relations
\begin{equation}\label{zigzag}
(\mathrm{id}_V\otimes e_V)\circ(c_V\otimes\mathrm{id}_V)=\mathrm{id}_V~, \ \ \  (e_V\otimes\mathrm{id}_{V^\vee})\circ(\mathrm{id}_{V^\vee}\otimes c_V)=\mathrm{id}_{V^\vee}~.
\end{equation}
For the choice above, they hold on the nose: writing $c_V(1)=\sum_i v_i\otimes v^i$ and $e_V(f\otimes v)=f(v)$, the first sends $v\mapsto\sum_i v_i\otimes v^i\otimes v\mapsto\sum_i v_i\,v^i(v)=v$, and analogously for the second relation. The data are called \lq\lq balanced" when the two ways of closing a loop agree with the quantum dimension,
\begin{equation}
  e_V\circ e_V^{\dagger}=d_V=c_V^{\dagger}\circ c_V~.
\end{equation}
Here $d_V=\dim V$, which fixes the relative scale of $e_V$ and $c_V$, since the zig-zags alone only determine the pair up to $(e_V,c_V)\mapsto(\alpha e_V,\alpha^{-1}c_V)$.}

The following useful facts hold for every object of ${\rm Rep}(G)$ (in particular, they do not require a choice of identification $J(\chi)\cong\chi^{\vee}$): {\bf (a)} $J(V^\vee,\rho^\vee)=J(V,\rho)^\vee$, {\bf(b)} the duality comparison of Lemma \ref{lem:JPiSquare}, $\psi_V=\big((J(e_V)\circ\theta^{\dagger}_{\pi V,V})\otimes\mathrm{id}\big)\circ(\mathrm{id}\otimes c_{J(V)})=\mathrm{id}_{V^\vee}$, and {\bf(c)} $p_V=1$. 

Let us first establish {\bf(a)}. To that end, both $J(V^\vee,\rho^\vee)$ and $J(V,\rho)^\vee$ have underlying space $V^{*}$, and so we need only compare the actions
\begin{equation}
J(V^\vee,\rho^\vee)=\big(V^{*},\,\rho^{\vee}\!\circ\tau\big)~,\ \ \ J(V,\rho)^{\vee}=\big(V^{*},\,(\rho\circ\tau)^{\vee}\big)~.
\end{equation}
For every $g\in G$ and $f\in V^{*}$,
\begin{equation}
\big(\rho^{\vee}\!\circ\tau\big)(g)\,f\;=\;f\circ\rho\big(\tau(g)^{-1}\big)
\;=\;f\circ\rho\big(\tau(g^{-1})\big)\;=\;\big((\rho\circ\tau)^{\vee}\big)(g)\,f~,
\end{equation}
using only the fact that $\tau$ is a homomorphism. Hence $J(V^\vee,\rho^\vee)=J(V,\rho)^{\vee}$ as objects.

Next let us establish {\bf (b)}. To that end, by {\bf(a)} the components of the 2-cell of Lemma \ref{lem:JPiSquare}, $\psi_V:J(\pi V)\to\pi(JV)$, has source and target the same object, so a zig-zag can close on the nose. Three substitutions reduce the displayed composite. Strict monoidality of $\tau^{*}$ gives
\begin{equation}
  \theta_{\pi V,V}=\id\ \Longrightarrow\ \theta^{\dagger}_{\pi V,V}=\id~.
\end{equation}
Next, $J$ is the identity on underlying spaces and linear maps. Moreover, for all $g\in G$
\begin{equation}
e_V\circ\big(\rho^{\vee}(\tau(g))\otimes\rho(\tau(g))\big)=e_V~,
\end{equation}
because $e_V$ is equivariant at every group element and, in particular, at $\tau(g)$. So $J(e_V)=e_V$ as a linear map, and likewise $c_{J(V)}=c_V$. Substituting all three,
\[
  \psi_V
  =\big((J(e_V)\circ\theta^{\dagger}_{\pi V,V})\otimes\id\big)\circ\big(\id\otimes c_{J(V)}\big)
  =\big(e_V\otimes\id_{V^{\vee}}\big)\circ\big(\id_{V^{\vee}}\otimes c_V\big)
  =\id_{V^{\vee}}~,
\]
the last step being the second zig-zag identity in \eqref{zigzag}. So the square commutes strictly.

Finally, let us establish {\bf(c)}. Recall that the canonical pivotal $\varphi:\Id_{\CC}\Rightarrow(\cdot)^{\vee\vee}$ is our $\mu$ read backwards, $\varphi_V=\mu_V^{-1}$, and is given by
\begin{equation}\label{eq:pivotalApp}
\varphi_V=\big(c_V^{\dagger}\otimes\mathrm{id}_{V^{\vee\vee}}\big)\circ
\big(\mathrm{id}_V\otimes c_{V^{\vee}}\big)~.
\end{equation}
We evaluate this on the duality data fixed above. Write $\{v_i\}$ for the chosen orthonormal basis of $V$, $\{v^i\}$ for the dual basis of $V^{*}$, and $\{v_i^{**}\}$ for the double-dual basis of $V^{**}$, so that $v_i^{**}(v^j)=\delta_i^j$. Then
\begin{equation}
c_V(1)=\sum_i v_i\otimes v^i~,\ \ \ c_{V^{\vee}}(1)=\sum_j v^j\otimes v_j^{**}~,
\end{equation}
and, since $\{v_i\}$ is orthonormal, the adjoint $c_V^{\dagger}:V\otimes V^{*}\to\mathbb{C}$ acts by $c_V^{\dagger}(v_k\otimes v^l)=\delta_{kl}$. Feeding $v_k$ into \eqref{eq:pivotalApp},
\begin{equation}
v_k\;\longmapsto\;\sum_j v_k\otimes v^j\otimes v_j^{**}
\;\longmapsto\;\sum_j c_V^{\dagger}\big(v_k\otimes v^j\big)\,v_j^{**}
\;=\;\sum_j\delta_{kj}\,v_j^{**}\;=\;v_k^{**}~.
\end{equation}
Hence $\varphi_V$ is the canonical map $v\mapsto\big(f\mapsto f(v)\big)$, and $\mu_V=\varphi_V^{-1}$ is its inverse. But this is precisely the identification of $V^{**}$ with $V$ to which the choice $\pi(V)=V^{*}$ already commits us, so $\mu$ introduces no additional scalar, and $p_V=1$.

To make explicit contact between $\sigma_\chi$ and group-theoretical quantities, we first need to prove one more general result. To that end, let $F,\widetilde F:\CC\to\CD$ be unital monoidal functors and let $m:F\Rightarrow \widetilde F$ be a monoidal natural isomorphism (soon we will apply this to the case $F=J$). Let $\psi_x^F:F(x^\vee)\to F(x)^\vee$ be the canonical duality comparison 2-cell (the unique morphism with $e_{F(x)}\circ(\psi^F_x\otimes\mathrm{id}_{Fx})=F(e_x)\circ\theta^F_{x^\vee,x}$). Then
\begin{equation}\label{eq:transport}
\psi^{\widetilde F}_x\circ m_{x^\vee}\;=\;(m_x^\vee)^{-1}\circ\psi^F_x~.
\end{equation}
To understand this statement, note that both sides of \eqref{eq:transport} are morphisms $F(x^\vee)\to\widetilde F(x)^\vee$. Since $e_{\widetilde F(x)}$ is non-degenerate,\footnote{To understand this point, recall that the assignment $k\mapsto e_{\widetilde F(x)}\circ(k\otimes\id)$ is a bijection, $\Hom\big(F(x^\vee),\widetilde F(x)^\vee\big)\to\Hom\big(F(x^\vee)\otimes\widetilde F(x),\mathds 1\big)$, with inverse $h\mapsto(h\otimes\id)\circ(\id\otimes c_{\widetilde F(x)})$ by the zig-zag identities in \eqref{zigzag}.} to prove \eqref{eq:transport} it suffices to compare
\begin{equation}\label{compare}
e_{\widetilde F(x)}\circ\Big(\big(\psi^{\widetilde F}_x\circ m_{x^\vee}\big)\otimes\id\Big)\stackrel{?}{=}e_{\widetilde F(x)}\circ\Big(\big((m^\vee_x)^{-1}\circ\psi^F_x\big)\otimes\id\Big)~.
\end{equation}
To that end, first consider the lefthand side
\begin{equation}
e_{\widetilde F(x)}\circ\Big(\big(\psi^{\widetilde F}_x\circ m_{x^\vee}\big)\otimes\id\Big)=\Big[e_{\widetilde F(x)}\circ\big(\psi^{\widetilde F}_x\otimes\id\big)\Big]\circ\big(m_{x^\vee}\otimes\id\big)=\widetilde F(e_x)\circ\theta^{\widetilde F}_{x^\vee,x}\circ\big(m_{x^\vee}\otimes\id_{\widetilde F(x)}\big)~.
\end{equation}
Now insert $\id_{\widetilde F(x)}=m_x\circ m_x^{-1}$, and monoidality of $m$, $\theta^{\widetilde F}\circ(m_u\otimes m_v)=m_{u\otimes v}\circ\theta^F$, can be applied
\begin{equation}
\theta^{\widetilde F}\circ\big(m_{x^\vee}\otimes\id\big)=\theta^{\widetilde F}\circ\big(m_{x^\vee}\otimes m_x\big)\circ\big(\id\otimes m_x^{-1}\big)=m_{x^\vee\otimes x}\circ\theta^F_{x^\vee,x}\circ\big(\id\otimes m_x^{-1}\big)~.
\end{equation}
Finally, naturality of $m$ at the morphism $e_x:x^\vee\otimes x\to\mathds 1$ gives $\widetilde F(e_x)\circ m_{x^\vee\otimes x}=m_{\mathds 1}\circ F(e_x)$, and $m_{\mathds 1}=\id$ by unitality. Hence
\begin{equation}\label{eq:LHSvalue}
e_{\widetilde F(x)}\circ\Big(\big(\psi^{\widetilde F}_x\circ m_{x^\vee}\big)\otimes\id\Big)=F(e_x)\circ\theta^F_{x^\vee,x}\circ\big(\id\otimes m_x^{-1}\big)~.
\end{equation}
Now consider the righthand side of \eqref{compare},
\begin{eqnarray}
e_{\widetilde F(x)}\circ\Big(\big((m^\vee_x)^{-1}\circ\psi^F_x\big)\otimes\id\Big)&=&\Big[e_{\widetilde F(x)}\circ\big((m^\vee_x)^{-1}\otimes\id\big)\Big]\circ\big(\psi^F_x\otimes\id\big)\cr&=&e_{F(x)}\circ\big(\id\otimes m_x^{-1}\big)\circ\big(\psi^F_x\otimes\id\big)~,
\end{eqnarray}
where, in the final equality, we have used the defining property of the dual morphism with inverses substituted.

The $\id\otimes m_x^{-1}$ and $\psi^F_x\otimes\id$ factors act on different tensor legs, so they may be combined and then commuted
\begin{equation}
\big(\id\otimes m_x^{-1}\big)\circ\big(\psi^F_x\otimes\id\big)=\psi^F_x\otimes m_x^{-1}=\big(\psi^F_x\otimes\id\big)\circ\big(\id\otimes m_x^{-1}\big)~,
\end{equation}
and so
\begin{equation}\label{eq:RHSvalue}
e_{\widetilde F(x)}\circ\Big(\big((m^\vee_x)^{-1}\circ\psi^F_x\big)\otimes\id\Big)=F(e_x)\circ\theta^F_{x^\vee,x}\circ\big(\id\otimes m_x^{-1}\big)~,
\end{equation}
and we have established \eqref{compare} and hence \eqref{eq:transport} by the fact that $e_{\widetilde F(x)}$ is non-degenerate.

At this point, the reader may be under the misapprehension that $\sigma_\chi=1$. This is not so: facts {\bf(a)}-{\bf(c)} above concern the functor $J$, whereas the factors appearing in \eqref{sigmaApp} belong to the \lq\lq transported" functor $\widetilde J$ to be introduced momentarily. The two differ precisely by the identifications $J(\chi)\cong\chi^\vee$. To actually compute $\sigma_\chi\ne1$, we need to choose a normalization $j_\chi\in{\rm Hom}(J(V_\chi),V_\chi^\vee)$.\footnote{Since $J(V_\chi)$ and $V_\chi^\vee$ are simple and have the same character $\overline\chi$, this space is one-dimensional by Schur's lemma, and requiring $j_\chi$ to be unitary fixes it up to a phase.}

In particular, we define the \lq\lq transported" quantities
\begin{equation}
\widetilde J(x)=x^\vee~,\ \ \ \widetilde J(f):=j_y\circ J(f)\circ j_x^{-1}~,\ \ \
\widetilde\theta_{x,y}:=j_{x\otimes y}\circ\theta^J_{x,y}\circ\big(j_x\otimes j_y\big)^{-1}~,
\end{equation}
where $f\in{\rm Hom}(x,y)$. The second and third definitions make $j$ a monoidal natural isomorphism $J\Rightarrow\widetilde J$. We can also define a transported contracting map $\widetilde\eta:\widetilde J^2\Rightarrow{\rm Id}$ via
\begin{equation}
\widetilde\eta:=\eta\circ(j\cdot j)^{-1}=(j\cdot j)^{-1}~, \ \ \ (j\cdot j)_x:=j_{x^\vee}\circ J(j_x)~,
\end{equation}
where, in the definition of $\widetilde\eta$, we have noted that $\eta={\rm Id}$ (as discussed above). In our fixed coordinate basis, we write $U_\chi^{\dagger}$ as the matrix corresponding to the morphism $j_\chi$. By equivariance of $j_\chi$, this matrix satisfies
\begin{equation}\label{eq:Udef}
\rho_\chi(\tau(g))=U_\chi\,\overline{\rho_\chi(g)}\,U_\chi^{\dagger}~,\ \ \ \forall\ g\in G~.
\end{equation}
In particular, $\tau$ is class-inverting iff for every $\chi$ there is a unitary $U_\chi$ satisfying \eqref{eq:Udef}, and $U_\chi$ is unique up to a phase by Schur's lemma.\footnote{To understand this claim in more detail, note $\rho_\chi\circ\tau$ and $\overline{\rho_\chi}$ are irreducible with characters $\chi\circ\tau$ and $\overline\chi$ respectively. They are equivalent for all $\chi$ iff $\chi\circ\tau=\overline\chi$. Both being unitary, the intertwiner may be taken unitary, and Schur gives uniqueness up to $U(1)$.}

We are now ready to relate our abstract quantities entering $\sigma_\chi$ with group theoretical data. In particular, we claim
\begin{equation}\label{eq:sigmainvariant}
\beta_\chi=\big(U_{\bar\chi}^{\dagger}U_\chi^{\dagger}\big)^{-1}~,\ \ \ p_\chi=1~,\ \ \ \widetilde\psi^{-1}_{\chi^\vee}=U_\chi^{\dagger}\,\overline{U_{\bar\chi}}~,
\end{equation}
and that their product is therefore
\begin{equation}\label{eq:sigmaproduct}
\sigma_\chi=\beta_\chi\,p_\chi\,\widetilde\psi^{-1}_{\chi^\vee}=U_\chi\overline{U_\chi}=:\varepsilon_\chi~,
\end{equation}
independently of how $j_{\chi^\vee}$ is fixed relative to $j_\chi$.\footnote{Ultimately, we will show that $\varepsilon_\chi$ is just the KM indicator evaluated on $\chi$.}

First note that $\beta_\chi=(U_{\bar\chi}^{\dagger}U_\chi^{\dagger})^{-1}$ follows directly from the definition of $\widetilde\eta$ and the fact that $J$ acts as the identity on linear maps. Next consider $\widetilde\psi$. Here we apply our result in \eqref{eq:transport} at the object $\chi^\vee$, with $F=J$, $\widetilde F=\widetilde J$, and $m=j$.\footnote{The hypothesis that $j$ be monoidal holds because $\widetilde\theta$ was \emph{defined} by conjugating $\theta$.} Using $\chi^{\vee\vee}=\chi$ and
$\psi^J={\rm Id}$, we obtain
\begin{equation}
\widetilde\psi_{\chi^\vee}=(j^\vee_{\chi^\vee})^{-1}\circ\psi^J_{\chi^\vee}\circ j_\chi^{-1}=(j^\vee_{\chi^\vee})^{-1}\circ j_\chi^{-1}~,\ \ \ \Rightarrow\ \widetilde\psi^{-1}_{\chi^\vee}=j_\chi\circ j^\vee_{\chi^\vee}~.
\end{equation}
Therefore, $\widetilde\psi^{-1}_{\chi^\vee}=U_\chi^{\dagger}\overline{U_{\bar\chi}}$. Note the evaluation point: the $\psi$-stage is right-whiskered by $\pi$, so it is read at $\chi^\vee$ and the second label map appearing is $U_{\bar\chi}$, not $U_\chi$.

Before evaluating the product, we derive two properties of the $U_\chi$ matrices that follow from Schur's lemma. Applying \eqref{eq:Udef} twice and using $\tau^2={\rm id}$ yields
\begin{equation}
\rho(g)=\rho(\tau^2(g))=U\overline{\rho(\tau(g))}U^{\dagger}=U\overline U\,\rho(g)\,\overline{U^{\dagger}}U^{\dagger}=\big(U\overline U\big)\,\rho(g)\,\big(U\overline U\big)^{-1}~,
\end{equation}
and so $U\overline U$ commutes with the irreducible $\rho$. Schur's lemma then gives $U_\chi\overline{U_\chi}=\varepsilon_\chi\mathds 1$ with $|\varepsilon_\chi|=1$ by unitarity. Note that $\varepsilon_\chi$ is unchanged under the residual phase freedom, $U_\chi\mapsto cU_\chi$, so it is well defined. Moreover $\overline U=\varepsilon U^{\dagger}$. Now, conjugating yields $U=\overline\varepsilon\,U^{T}$, and transposing gives
\begin{equation}\label{eq:Usym}
U_\chi^{T}=\overline\varepsilon_\chi\,U_\chi~.
\end{equation}
Substituting \eqref{eq:Usym} back into $U=\overline\varepsilon\,U^{T}$ gives $U=\overline\varepsilon^{\,2}U$, so that
\begin{equation}\label{eq:epssign}
\varepsilon_\chi\in\{\pm1\}~,\ \ \ U_\chi^{T}=\varepsilon_\chi U_\chi~.
\end{equation}
In other words, $\varepsilon_\chi=+1$ or $-1$ according to whether $U_\chi$ is symmetric or antisymmetric. This discussion is consistent with the abstract result of the main text that $\sigma_x\in\{\pm1\}$. What remains for us to do is to associate $\varepsilon_\chi$ with the KM indicator.

With \eqref{eq:epssign} in hand, the form of the product in \eqref{eq:sigmaproduct} follows, and two natural normalisations give the same answer. If $U_{\bar\chi}=U_\chi^{\dagger}$, then $\beta_\chi=(UU^{\dagger})^{-1}=1$ and
\begin{equation}
\widetilde\psi^{-1}_{\chi^\vee}=U^{\dagger}\overline{U^{\dagger}}=U^{\dagger}U^{T}=\varepsilon_\chi\,U^{\dagger}U=\varepsilon_\chi~.
\end{equation}
If instead $U_{\bar\chi}=\overline{U_\chi}$, then $\widetilde\psi^{-1}_{\chi^\vee}=U^{\dagger}U=1$ and
\begin{equation}
\beta_\chi=\big(U^{T}U^{\dagger}\big)^{-1}=\big(\varepsilon_\chi\,UU^{\dagger}\big)^{-1}=\varepsilon_\chi^{-1}=\varepsilon_\chi~.
\end{equation}
The last equality holds because $\varepsilon_\chi=\pm1$. Either way, 
\begin{equation}\label{sigmaRes}
\sigma_\chi=\varepsilon_\chi=U_\chi\overline{U_\chi}~.
\end{equation}
The two normalisations differ by $U_{\bar\chi}\mapsto\varepsilon_\chi U_{\bar\chi}$ and move the indicator between $\beta$ and $\widetilde\psi$ without changing the product; only $\sigma_\chi$ is convention-independent. The first is the choice that sets $\beta_\chi=1$, which is what allows us to display \eqref{sigmaApp} with only two factors.

We are now ready to prove our main result: that $\varepsilon_\chi=\nu_\tau(\chi)$ is the KM indicator in \eqref{eq:KMApp}. To that end, insert \eqref{eq:Udef} into \eqref{eq:KMApp} and expand
\begin{equation}
\nu_\tau(\chi)=\frac1{|G|}\sum_g\operatorname{tr}\!\big(\rho(g)U\overline{\rho(g)}U^{\dagger}\big)=\frac1{|G|}\sum_g\sum_{i,j,k,l}\rho(g)_{ij}U_{jk}\overline{\rho(g)_{kl}}\,U^{\dagger}_{li}~.
\end{equation}
Schur orthogonality, $|G|^{-1}\sum_g\rho(g)_{ij}\overline{\rho(g)_{kl}}=n^{-1}\delta_{ik}\delta_{jl}$ for $\rho$ irreducible unitary of degree $n=n_\chi$, collapses this expression to
\begin{equation}
\nu_\tau(\chi)=\frac1n\sum_{i,j}U_{ji}U^{\dagger}_{ji}=\frac1n\operatorname{tr}\big(U^{T}U^{\dagger}\big)=\frac1n{\rm Tr}\big(\varepsilon_\chi\,UU^{\dagger}\big)=\varepsilon_\chi~,
\end{equation}
as desired. In particular, from \eqref{sigmaRes}, {\it we have proven \eqref{sigmaApp} and equated the KM index with our abstract $\sigma_\chi$ for the case of $\CC={\rm Rep}(G)$.}

\section{Chern-Simons theory proof}\label{CSApp}
In this appendix we prove an important structural result for our discussion in Sec. \ref{ChernSimons} on Chern-Simons theories. In particular, we show that

\smallskip
\noindent
{\bf Lemma \ref{lem:free}.} {\it Specialize to the self-dual cases discussed above. Let $\lambda$ lie in the level-$k$ alcove and set $H=\Hom(\mathbf 1,V_{\lambda,\CA}\otimes V_{\lambda,\CA})$.
Then
\begin{enumerate}
\item[\bf(a)] $H$ is free of rank one over $\CA$, with generator $\gamma$
\item[\bf(b)] $\gamma$ is primitive, hence has nonzero image under any ring homomorphism $\CA\to\mathbb C$ sending $q^{1/L}$ to a root of unity (including the case $q\mapsto1$). These are the only specializations used in this section.
\item[\bf(c)] the image of $H$ in $\cC(\mathfrak g,k)$ is again one dimensional, spanned by the image
of $\gamma$
\end{enumerate}}

\begin{proof}
First, we define $M:=V_{\lambda,\CA}\otimes V_{\lambda,\CA}$ and take $H:=\Hom(\mathbf 1,M)\subseteq M$ to be identified with the submodule of invariants.\footnote{Two remarks on which algebra we are working over. To begin with, $V_{\lambda,\CA}$ is an object of the ribbon category of \cite[Thm.~3]{sawin2006quantum}. More precisely, it is a $U^{\dagger}_{\CA}$-module that is free of finite rank over $\CA$, being a finite direct sum of free weight spaces \cite[Sec. 1]{sawin2006quantum}. Therefore, so too is $M$. Next, we compute $H$ below as the $U^{\rm res}_{\CA}$-invariants of $M$, and this agrees with the $U^{\dagger}_{\CA}$-invariants, so that the braiding and the twist do restrict to $H$. Indeed, $U^{\dagger}_{\CA}$ is generated, densely, by $U^{\rm res}_{\CA}$ together with the weight projections $\delta_\nu$, which act on a vector of weight $\nu'$ by $1$ if $\nu=\nu'$ and by $0$ otherwise, and whose counit is $1$ if $\nu=0$ and $0$ otherwise \cite[Sec. 1]{sawin2006quantum}. Since $K_iv=v$ over $\CA$ already forces $v$ to have weight $0$ (the variable $q^{1/L}$ is free), each $\delta_\nu$ acts on $H$ by its counit.}

Crucially in what follows, $M$ is a free $\CA$-module. To understand this point, note that a Weyl module over the integral form, $U^{\rm res}_\CA(\mathfrak g)$, is obtained by applying the algebra to a highest weight vector, $V_{\lambda,\CA}:=U^{\rm res}_\CA(\mathfrak g)\cdot v$, and is the direct sum of its weight spaces. Each such space is free and of finite rank over $\CA$, with the same dimensions as in the classical Weyl module \cite[Sec. 1]{sawin2006quantum}. Hence $V_{\lambda,\CA}$ is free over $\CA$, of rank the classical dimension (this is the precise sense in which its character does not depend on $q$). Since a tensor product of free modules is free, $M$ is also a free $\CA$-module.

\noindent
{\bf(a)} {\it Freeness.} $H$ is the joint kernel of the operators $E^{(n)}_i$, $F^{(n)}_i$ and $K_i-1$, of which only finitely many act non-vacuously, since $E^{(n)}_i$ shifts weight by $n\alpha_i$, and $M$ has finitely many weight spaces. Each is $\CA$-linear, so $H$ is the kernel of a single $\CA$-linear map $M\to M^{\oplus r}$.

Next note that $H$ is saturated in $M$. Indeed, if $fm\in H$ with $f\in \CA$ nonzero, then for each defining operator, $T$, we have $0=T(fm)=fT(m)$ by $\CA$-linearity. Now, since $M$ is torsion-free (being free over a domain), $T(m)=0$. Therefore, $m\in H$.

Next we claim that $\dim_K(H\otimes_\CA K)=1$, where $K:=\mathbb{Q}(q^{1/L})$. Since $M$ is torsion-free, $H\otimes_\CA K$ is the $K$-span of $H$ inside $M_K:=M\otimes_\CA K$, and this span is precisely the space of invariants of $M_K$. Any element of $M_K$ can be written as $m/f$ with $m\in M$ and $0\neq f\in \CA$, and $T(m)/f=0$ forces $T(m)=0$ for each defining operator $T$. To understand this point, note that over $K$ the category is semisimple, and the tensor product of Weyl modules decomposes with the same multiplicities as in the classical case \cite[Sec. 1]{sawin2006quantum}. So the multiplicity of the trivial module in $V_{\lambda,K}\otimes V_{\lambda,K}$ equals the classical one. Recalling that we have defined $V^{\rm cl}_\lambda$ for the irreducible representation of $G$ of highest weight $\lambda$, that classical multiplicity is
\begin{equation}
\dim\Hom_G(\mathbf 1,\,V^{\rm cl}_\lambda\otimes V^{\rm cl}_\lambda)=\dim\Hom_G\bigl((V^{\rm cl}_\lambda)^{*},V^{\rm cl}_\lambda\bigr)=1~.
\end{equation}
The first equality is duality, and the second holds because $V^{\rm cl}_\lambda$ is self-dual. Therefore, the right-hand side is $\End_G(V^{\rm cl}_\lambda)$ and has dimension one by Schur's lemma.

Now fix a basis $e_1,\dots,e_N$ of the free module $M$ and choose any nonzero $h\in H$, say $h=\sum_i a_ie_i$ with $a_i\in \CA$. Since $\CA=\mathbb Z[q^{\pm1/L}]$ is a localisation of $\mathbb Z[t]$, it is a unique factorisation domain, so the $a_i$ have a greatest common divisor $d$. Define
\begin{equation}
  \gamma:=\tfrac1d\,h=\sum_i(a_i/d)\,e_i\ \in M~,
\end{equation}
whose coordinates have greatest common divisor $1$. Then $\gamma\in H$, because $d\gamma=h\in H$ and $H$ is saturated. We claim $H=\CA\gamma$. To see this take $h'\in H$. As $H$ has rank one, $h'$ and $\gamma$ are proportional over $K$, say $h'=(p/r)\gamma$ with $p,r\in \CA$ coprime. Then $rh'=p\gamma$, so $r$ divides $p\,(a_i/d)$ for every $i$. Since $r$ is coprime to $p$, it divides every $a_i/d$, and these have greatest common divisor $1$. Therefore, $r$ is a unit and $h'\in \CA\gamma$. Hence $H$ is free of rank one with generator $\gamma$.

\smallskip\noindent
{\bf(b)} {\it Primitivity and non-vanishing.} By construction, the coordinates of $\gamma$ have greatest common divisor $1$. Let $\phi:\CA\to\mathbb C$ be either of our specialisations, so $\phi(q^{1/L})=\zeta$ for a root of unity $\zeta$, of order $m$ (the case $q\mapsto1$ is $m=1$). Then $\ker\phi$ is principal, generated by the $m$-th cyclotomic polynomial $\Phi_m$. Indeed $\Phi_m$ is irreducible over $\mathbb Q$, so it is the minimal polynomial of $\zeta$, and any $f$ vanishing at $\zeta$ is divisible by it there. Moreover, $\Phi_m$ is monic, so division by it stays within $\mathbb Z[q^{\pm1/L}]$, and the quotient has coefficients in $\CA$. If $\phi(\gamma)$ were zero, every coordinate of $\gamma$ would lie in $\ker\phi$, so $\Phi_m$ would divide all of them, contradicting the fact that their greatest common divisor is $1$. Hence $\phi(\gamma)\neq0$, and the image of $\gamma$ spans the specialized channel, which is one dimensional for either specialization. Indeed, for a general module, the dimension of the invariant subspace can jump under specialization, but here it cannot exceed one. The reason is that, at $q=1$, the module $V^{\rm cl}_\lambda$ is irreducible and self-dual. At the non-trivial root of unity, the specialized Weyl module, $V_{\lambda,\zeta}$, is simple for $\lambda$ in the (open) level-$k$ alcove by the linkage principle \cite[Cor.~6]{sawin2006quantum},\footnote{This fact is not immediate from \cite[Cor.~6]{sawin2006quantum}, which asserts only that a composition factor, $L^{\mu}$, of the Weyl module of highest weight $\lambda$ satisfies $\mu\le\lambda$ and $\mu=\sigma(\lambda+\rho)-\rho$ for some $\sigma$ in the affine Weyl group at level $k$. To deduce simplicity, note that such a $\mu$ is dominant, and that $\mu\le\lambda$ forces $\langle\mu+\rho,\theta^\vee\rangle\le\langle\lambda+\rho,\theta^\vee\rangle<k+h^\vee$, because $\langle\alpha,\theta^\vee\rangle\ge0$ for every positive root $\alpha$. Hence $\mu$ lies in the level-$k$ alcove $\Lambda_k$ of \eqref{WilsonGk}, as does $\lambda$. Now the weights of $\Lambda_k$ lie in the \emph{interior} of the principal alcove, since the inequality in \eqref{WilsonGk} is strict and $\langle\lambda+\rho,\alpha_i^\vee\rangle\ge1$ for $\lambda\in P_+$. That alcove is a fundamental domain for the action \cite[Lem.~1]{sawin2006quantum}, and the affine Weyl group permutes the alcoves simply transitively \cite[Sec. 4]{sawin2006quantum}. Two interior weights in one orbit therefore coincide, so $\mu=\lambda$, and $L^{\lambda}$ is the only composition factor. We conclude that $V_{\lambda,\zeta}$ is simple.} and is isomorphic to its dual (the dual is simple with highest weight $-w_0\lambda=\lambda$). In either case, rigidity gives $\Hom(\mathbf 1,V\otimes V)\cong\Hom(V^{\vee},V)$, which has dimension at most one by Schur's lemma. Since $\phi(\gamma)$ is a nonzero invariant vector, the dimension is exactly one.

\smallskip\noindent
{\bf(c)} {\it Survival of the channel.} Let $\Pi$ denote the quotient functor from the category of tilting modules to $\CC(\mathfrak g,k)$. By \cite[Thm.~5]{sawin2006quantum}, the functor $\Pi$ is \emph{full}, the simple objects of $\CC(\mathfrak g,k)$ are the images of the tilting modules with highest weight in the alcove, and each such image is non-null (since $\dim_q V_{\lambda,\zeta}\neq0$ there).

Two facts follow. First, $V_\lambda=\Pi(V_{\lambda,\zeta})$ is a simple object, and it is self-dual by hypothesis. Therefore,
\begin{equation}
\Hom(\mathbf 1,V_\lambda\otimes V_\lambda)\cong\End(V_\lambda)=\mathbb C~,
\end{equation}
is one dimensional by Schur's lemma. Second, fullness means $\Pi$ is surjective on morphism spaces, so the specialized channel, one dimensional and spanned by the image of $\gamma$ by (b), maps onto it. A surjection between one-dimensional spaces is an isomorphism, so the image of $\gamma$ is nonzero and spans $\Hom(\mathbf 1,V_\lambda\otimes V_\lambda)$.
\end{proof}

\end{appendix}

\newpage
\bibliography{chetdocbib}

@article{byles2026,
      title={Reconstructing non-Abelian braiding and fusion without anyon transport},
      author={Lucy Byles and Matthew D. Horner and Benjamin T. H. Varcoe and Jiannis K. Pachos},
      year={2026},
      eprint={2608.04103},
      archivePrefix={arXiv},
      primaryClass={quant-ph},
      url={https://arxiv.org/abs/2608.04103},
}

@article{Cesar,
    author = "Galindo, Cesar",
    title = "{On Split Forms of Fusion Categories (to Appear)}",
}

@article{Schafer-Nameki:2023jdn,
    author = "Schafer-Nameki, Sakura",
    title = "{ICTP lectures on (non-)invertible generalized symmetries}",
    eprint = "2305.18296",
    archivePrefix = "arXiv",
    primaryClass = "hep-th",
    doi = "10.1016/j.physrep.2024.01.007",
    journal = "Phys. Rept.",
    volume = "1063",
    pages = "1--55",
    year = "2024"
}

@article{Moore:1988qv,
    author = "Moore, Gregory W. and Seiberg, Nathan",
    title = "{Classical and Quantum Conformal Field Theory}",
    reportNumber = "IASSNS-HEP-88-39",
    doi = "10.1007/BF01238857",
    journal = "Commun. Math. Phys.",
    volume = "123",
    pages = "177",
    year = "1989"
}

@article{Bonderson:2018ryx,
    author = "Bonderson, Parsa and Delaney, Colleen and Galindo, C{\'e}sar and Rowell, Eric C. and Tran, Alan and Wang, Zhenghan",
    title = "{On invariants of Modular categories beyond modular data}",
    eprint = "1805.05736",
    archivePrefix = "arXiv",
    primaryClass = "math.QA",
    doi = "10.1016/j.jpaa.2018.12.017",
    journal = "J. Pure Appl. Algebra",
    volume = "223",
    pages = "4065--4088",
    year = "2019"
}

@article{Hong:2007ty,
    author = "Hong, Seung-moon and Rowell, Eric and Wang, Zhenghan",
    title = "{On exotic modular tensor categories}",
    eprint = "0710.5761",
    archivePrefix = "arXiv",
    primaryClass = "math.GT",
    doi = "10.1142/S0219199708003162",
    journal = "Commun. Contemp. Math.",
    volume = "10",
    number = "supp01",
    pages = "1049--1074",
    year = "2008"
}

@article{Moore:1991ks,
    author = "Moore, Gregory W. and Read, N.",
    title = "{Nonabelions in the fractional quantum Hall effect}",
    doi = "10.1016/0550-3213(91)90407-O",
    journal = "Nucl. Phys. B",
    volume = "360",
    pages = "362--396",
    year = "1991"
}

@article{adams2014real,
  title={The real Chevalley involution},
  author={Adams, Jeffrey},
  journal={Compositio Mathematica},
  volume={150},
  number={12},
  pages={2127--2142},
  year={2014},
  publisher={London Mathematical Society}
}

@article{rowell2005family,
  title={On a family of non-unitarizable ribbon categories},
  author={Rowell, Eric C},
  journal={Mathematische Zeitschrift},
  volume={250},
  number={4},
  pages={745--774},
  year={2005},
  publisher={Springer}
}

@article{sawin2006quantum,
  title={Quantum groups at roots of unity and modularity},
  author={Sawin, Stephen F},
  journal={Journal of Knot Theory and Its Ramifications},
  volume={15},
  number={10},
  pages={1245--1277},
  year={2006},
  publisher={World Scientific}
}

@article{street1993braided,
  title={Braided tensor categories},
  author= {Joyal, A and Street, Ross},
  journal={Advances in Math},
  volume={102},
  number={1},
  pages={20--78},
  year={1993}
}

@article{Feiguin:2006ydp,
    author = "Feiguin, Adrian and Trebst, Simon and Ludwig, Andreas W. W. and Troyer, Matthias and Kitaev, Alexei and Wang, Zhenghan and Freedman, Michael H.",
    title = "{Interacting anyons in topological quantum liquids: The golden chain}",
    eprint = "cond-mat/0612341",
    archivePrefix = "arXiv",
    doi = "10.1103/PhysRevLett.98.160409",
    journal = "Phys. Rev. Lett.",
    volume = "98",
    pages = "160409",
    year = "2007"
}

@article{kawanaka1990twisted,
  title={A twisted version of the Frobenius-Schur indicator and multiplicity-free permutation representations},
  author={Kawanaka, Noriaki and Matsuyama, Hiroshi},
  journal={Hokkaido Mathematical Journal},
  volume={19},
  number={3},
  pages={495--508},
  year={1990},
  publisher={Hokkaido University, Department of Mathematics}
}

@article{Ladisch,
  author       = {Frieder Ladisch},
  title        = {Finite groups in which every character has real values: grading the representations},
  howpublished = {MathOverflow},
  year         = {2011},
  note         = {\href{https://mathoverflow.net/questions/53126/finite-groups-in-which-every-character-has-real-values-grading-the-representati}{MathOverflow}}
}

@book{burnside1911theory,
  title     = {Theory of Groups of Finite Order},
  author    = {Burnside, William},
  edition   = {2nd},
  publisher = {Cambridge University Press},
  year      = {1911},
}

@article{miller1910groups,
  title={Groups involving only a small number of sets of conjugate operators},
  author={Miller, George Abram},
  journal={Arch. Math. und Phys.},
  volume={17},
  pages={199--204},
  year={1911}
}

@article{BarkeshliBondersonChengWang2019,
  author  = {Maissam Barkeshli and Parsa Bonderson and Meng Cheng
             and Zhenghan Wang},
  title   = {Symmetry Fractionalization, Defects, and Gauging of
             Topological Phases},
  journal = {Physical Review B},
  volume  = {100},
  pages   = {115147},
  year    = {2019},
  doi     = {10.1103/PhysRevB.100.115147}
}

@article{EtingofNikshychOstrik2010,
  author  = {Pavel Etingof and Dmitri Nikshych and Victor Ostrik},
  title   = {Fusion Categories and Homotopy Theory},
  journal = {Quantum Topology},
  volume  = {1},
  pages   = {209--273},
  year    = {2010}
}

@article{simon2022straighteningfrobeniusschurindicator,
      title={Straightening Out the Frobenius-Schur Indicator},
      author={Steven H. Simon and Joost K. Slingerland},
      year={2022},
      eprint={2208.14500},
      archivePrefix={arXiv},
      primaryClass={hep-th},
      url={https://arxiv.org/abs/2208.14500},
}

@article{Paper1,
    author = "Buican, Matthew and Huston, Peter and Pachos, Jiannis K.",
    title = "{Anyons and Inherently Complex F-symbols}",
    eprint = "2607.10181",
    archivePrefix = "arXiv",
    primaryClass = "cond-mat.str-el",
    month = "7",
    year = "2026"
}

@article{KNBalasubramanian:2025vum,
    author = "K. N. Balasubramanian, Mahesh and Buican, Matthew and Delcamp, Clement and Radhakrishnan, Rajath",
    title = "{Gauging non-invertible symmetries in (2+1)d topological orders}",
    eprint = "2507.01142",
    archivePrefix = "arXiv",
    primaryClass = "hep-th",
    doi = "10.1088/1751-8121/ae82d0",
    journal = "J. Phys. A",
    volume = "59",
    number = "28",
    pages = "285401",
    year = "2026"
}

@book{etingof2015tensor,
  title={Tensor categories},
  author={Etingof, Pavel and Gelaki, Shlomo and Nikshych, Dmitri and Ostrik, Victor},
  volume={205},
  year={2015},
  publisher={American Mathematical Soc.}
}

@article{Davydov:2010kfz,
    author = "Davydov, Alexei and Mueger, Michael and Nikshych, Dmitri and Ostrik, Victor",
    title = "{The Witt group of non-degenerate braided fusion categories}",
    eprint = "1009.2117",
    archivePrefix = "arXiv",
    primaryClass = "math.QA",
    month = "9",
    year = "2010"
}

@article{Kong:2024ykr,
    author = "Kong, Liang and Zhang, Zhi-Hao and Zhao, Jiaheng and Zheng, Hao",
    title = "{Higher condensation theory}",
    eprint = "2403.07813",
    archivePrefix = "arXiv",
    primaryClass = "cond-mat.str-el",
    month = "3",
    year = "2024"
}

@article{Gaiotto:2014kfa,
    author = "Gaiotto, Davide and Kapustin, Anton and Seiberg, Nathan and Willett, Brian",
    title = "{Generalized Global Symmetries}",
    eprint = "1412.5148",
    archivePrefix = "arXiv",
    primaryClass = "hep-th",
    doi = "10.1007/JHEP02(2015)172",
    journal = "JHEP",
    volume = "02",
    pages = "172",
    year = "2015"
}

@article {MR2312110,
    AUTHOR = {Freedman, Michael H. and Wang, Zhenghan},
     TITLE = {Large quantum {F}ourier transforms are never exactly realized
              by braiding conformal blocks},
   JOURNAL = {Phys. Rev. A (3)},
  FJOURNAL = {Physical Review. A. Third Series},
    VOLUME = {75},
      YEAR = {2007},
    NUMBER = {3},
     PAGES = {032322, 5},
      ISSN = {1050-2947,1094-1622},
   MRCLASS = {81P68 (81T45)},
  MRNUMBER = {2312110},
       DOI = {10.1103/PhysRevA.75.032322},
       URL = {https://doi.org/10.1103/PhysRevA.75.032322},
}

@article{Leinaas:1977fm,
    author = "Leinaas, J. M. and Myrheim, J.",
    title = "{On the theory of identical particles}",
    doi = "10.1007/BF02727953",
    journal = "Nuovo Cim. B",
    volume = "37",
    pages = "1--23",
    year = "1977"
}

@article{Wilczek:1982wy,
    author = "Wilczek, Frank",
    title = "{Quantum Mechanics of Fractional Spin Particles}",
    reportNumber = "NSF-ITP-82-56",
    doi = "10.1103/PhysRevLett.49.957",
    journal = "Phys. Rev. Lett.",
    volume = "49",
    pages = "957--959",
    year = "1982"
}

@article{Kitaev:2005hzj,
    author = "Kitaev, Alexei",
    title = "{Anyons in an exactly solved model and beyond}",
    eprint = "cond-mat/0506438",
    archivePrefix = "arXiv",
    doi = "10.1016/j.aop.2005.10.005",
    journal = "Annals Phys.",
    volume = "321",
    number = "1",
    pages = "2--111",
    year = "2006"
}

@article{siehler2000braided,
  title={Braided near-group categories},
  author={Siehler, Jacob A},
  year={2000},
  month = "nov",
  eid = {arXiv:math/0011037},
  doi = {10.48550/arXiv.math/0011037},
  archivePrefix = {arXiv},
  eprint = {math/0011037}
}

@article{siehler2003near,
  title={Near-group categories},
  author={Siehler, Jacob},
  journal={Algebraic \& Geometric Topology},
  volume={3},
  number={2},
  pages={719--775},
  year={2003},
  publisher={Mathematical Sciences Publishers}
}

@article{ng2024recovering,
       author = {{Ng}, Siu-Hung and {Rowell}, Eric C and {Wen}, Xiao-Gang},
        title = "{Recovering R-symbols from modular data}",
         year = 2024,
        month = "aug",
          eid = {arXiv:2408.02748},
          doi = {10.48550/arXiv.2408.02748},
archivePrefix = {arXiv},
       eprint = {2408.02748},
}

@article{Cordova:2017vab,
    author = "Cordova, Clay and Hsin, Po-Shen and Seiberg, Nathan",
    title = "{Global Symmetries, Counterterms, and Duality in Chern-Simons Matter Theories with Orthogonal Gauge Groups}",
    eprint = "1711.10008",
    archivePrefix = "arXiv",
    primaryClass = "hep-th",
    doi = "10.21468/SciPostPhys.4.4.021",
    journal = "SciPost Phys.",
    volume = "4",
    number = "4",
    pages = "021",
    year = "2018"
}

@article{BylesForbesPachos2024,
doi = {10.1088/1367-2630/ae4aca},
url = {https://doi.org/10.1088/1367-2630/ae4aca},
year = {2026},
month = {apr},
publisher = {IOP Publishing},
volume = {28},
number = {4},
pages = {044501},
author = {Byles, Lucy and Forbes, Ewan and Pachos, Jiannis K},
title = {Demonstration of magic state power of {$D(S_3)$} anyons with two qudits},
journal = {New Journal of Physics}
}

@article{thornton2011braided,
  title={On braided near-group categories},
  author={Thornton, Josiah},
eprint       = {1102.4640},
  archivePrefix= {arXiv},
  year={2011}
}

@article{Buican:2020who,
    author = "Buican, Matthew and Li, Linfeng and Radhakrishnan, Rajath",
    title = "{$a\times b=c$ in $2+1$D TQFT}",
    eprint = "2012.14689",
    archivePrefix = "arXiv",
    primaryClass = "hep-th",
    reportNumber = "QMUL-PH-20-37",
    doi = "10.22331/q-2021-06-04-468",
    journal = "Quantum",
    volume = "5",
    pages = "468",
    year = "2021"
}

@article{Lee:2018eqa,
    author = "Lee, Yasunori and Tachikawa, Yuji",
    title = "{A study of time reversal symmetry of abelian anyons}",
    eprint = "1805.02738",
    archivePrefix = "arXiv",
    primaryClass = "hep-th",
    reportNumber = "IPMU-18-0074",
    doi = "10.1007/JHEP07(2018)090",
    journal = "JHEP",
    volume = "07",
    pages = "090",
    year = "2018"
}

@article{Quinn:1998un,
    author = "Quinn, Frank",
    editor = "Hass, Joel and Scharlemann, Martin",
    title = "{Group categories and their field theories}",
    eprint = "math/9811047",
    archivePrefix = "arXiv",
    doi = "10.2140/gtm.1999.2.407",
    journal = "Geom. Topol. Monographs",
    volume = "2",
    pages = "407--453",
    year = "1999"
}

@article{Kapustin:2010hk,
    author = "Kapustin, Anton and Saulina, Natalia",
    title = "{Topological boundary conditions in abelian Chern-Simons theory}",
    eprint = "1008.0654",
    archivePrefix = "arXiv",
    primaryClass = "hep-th",
    doi = "10.1016/j.nuclphysb.2010.12.017",
    journal = "Nucl. Phys. B",
    volume = "845",
    pages = "393--435",
    year = "2011"
}

@article{ardonne2021classification,
  title={Classification of metaplectic fusion categories},
  author={Ardonne, Eddy and Finch, Peter E and Titsworth, Matthew},
  journal={Symmetry},
  volume={13},
  number={11},
  pages={2102},
  year={2021},
  publisher={MDPI}
}

@article{Kaidi:2021gbs,
    author = "Kaidi, Justin and Komargodski, Zohar and Ohmori, Kantaro and Seifnashri, Sahand and Shao, Shu-Heng",
    title = "{Higher central charges and topological boundaries in 2+1-dimensional TQFTs}",
    eprint = "2107.13091",
    archivePrefix = "arXiv",
    primaryClass = "hep-th",
    doi = "10.21468/SciPostPhys.13.3.067",
    journal = "SciPost Phys.",
    volume = "13",
    number = "3",
    pages = "067",
    year = "2022"
}

@article{morrison2012non,
  title={Non-cyclotomic fusion categories},
  author={Morrison, Scott and Snyder, Noah},
  journal={Transactions of the American Mathematical Society},
  volume={364},
  number={9},
  pages={4713--4733},
  year={2012}
}

@article{kirillov1989representations,
  title={Representations of the algebra Uq (sl (2)), q-orthogonal polynomials and invariants of links},
  author={Kirillov, AN and Reshetikhin, N Yu},
  journal={Infinite dimensional Lie algebras and groups},
  volume={7},
  pages={285--339},
  year={1989},
  publisher={World Scientific Singapore}
}

@article{Barkeshli:2014cna,
    author = "Barkeshli, Maissam and Bonderson, Parsa and Cheng, Meng and Wang, Zhenghan",
    title = "{Symmetry Fractionalization, Defects, and Gauging of Topological Phases}",
    eprint = "1410.4540",
    archivePrefix = "arXiv",
    primaryClass = "cond-mat.str-el",
    doi = "10.1103/PhysRevB.100.115147",
    journal = "Phys. Rev. B",
    volume = "100",
    number = "11",
    pages = "115147",
    year = "2019"
}

@article{Kobayashi:2025ykb,
    author = "Kobayashi, Ryohei and Barkeshli, Maissam",
    title = "{Soft symmetries of topological orders}",
    eprint = "2501.03314",
    archivePrefix = "arXiv",
    primaryClass = "cond-mat.str-el",
    month = "1",
    year = "2025"
}

@article{MR3421083,
    AUTHOR = {Davydov, Alexei},
     TITLE = {Unphysical diagonal modular invariants},
   JOURNAL = {J. Algebra},
  FJOURNAL = {Journal of Algebra},
    VOLUME = {446},
      YEAR = {2016},
     PAGES = {1--18},
      ISSN = {0021-8693,1090-266X},
   MRCLASS = {81R10},
  MRNUMBER = {3421083},
MRREVIEWER = {Sorin\ D\u asc\u alescu},
       DOI = {10.1016/j.jalgebra.2015.09.007},
       URL = {https://doi.org/10.1016/j.jalgebra.2015.09.007}
}

@article{mignard2021modular,
  title={Modular categories are not determined by their modular data},
  author={Mignard, Micha{\"e}l and Schauenburg, Peter},
  journal={Letters in Mathematical Physics},
  volume={111},
  number={3},
  pages={60},
  year={2021},
  publisher={Springer}
}

@article{MR4133163,
	author = {Penneys, David},
	fjournal = {Higher Structures},
	journal = {High. Struct.},
	mrclass = {18M05 (46L37)},
	mrnumber = {4133163},
	note = {\mathscinet{MR4133163} \arxiv{1808.00323}},
	number = {2},
	pages = {22--56},
	title = {Unitary dual functors for unitary multitensor categories},
	volume = {4},
	year = {2020}}

@article{MR2677836,
	author = {Etingof, Pavel and Nikshych, Dmitri and Ostrik, Victor},
	doi = {10.4171/QT/6},
	fjournal = {Quantum Topology},
	issn = {1663-487X},
	journal = {Quantum Topol.},
	mrclass = {18D10 (55S35)},
	mrnumber = {2677836 (2011h:18007)},
	mrreviewer = {Juan Mart{\'{\i}}n Mombelli},
	note = {With an appendix by Ehud Meir, \mathscinet{MR2677836} \doi{10.4171/QT/6} \arxiv{0909.3140}},
	number = {3},
	pages = {209--273},
	title = {Fusion categories and homotopy theory},
	url = {http://dx.doi.org/10.4171/QT/6},
	volume = {1},
	year = {2010}}

@article{MR1976459,
	author = {Ostrik, Victor},
	doi = {10.1007/s00031-003-0515-6},
	fjournal = {Transformation Groups},
	issn = {1083-4362},
	journal = {Transform. Groups},
	mrclass = {18D10 (16D90 17B10 18E10)},
	mrnumber = {MR1976459 (2004h:18006)},
	mrreviewer = {Volodymyr V. Lyubashenko},
	note = {\mathscinet{MR1976459} \arXiv{math/0111139}},
	number = {2},
	pages = {177--206},
	title = {Module categories, weak {H}opf algebras and modular invariants},
	url = {http://dx.doi.org/10.1007/s00031-003-0515-6},
	volume = {8},
	year = {2003}}

@article{Selinger2011,
    author  = "Selinger, P.",
    title   = "{A Survey of Graphical Languages for Monoidal Categories}",
    journal = "Lect. Notes Phys.",
    volume  = "813",
    pages   = "289--355",
    year    = "2011",
    eprint  = "0908.3347",
    archivePrefix = "arXiv",
    primaryClass  = "math.CT",
    doi     = "10.1007/978-3-642-12821-9_4"
}

@book{MR1797619,
	address = {Providence, RI},
	author = {Bakalov, Bojko and Kirillov, Jr., Alexander},
	isbn = {0-8218-2686-7},
	mrclass = {18D10 (17B37 57M27 57R56 81R50 81T40 81T45)},
	mrnumber = {MR1797619 (2002d:18003)},
	mrreviewer = {J. Stasheff},
	note = {\mathscinet{MR1797619}},
	publisher = {American Mathematical Society},
	series = {University Lecture Series},
	title = {Lectures on tensor categories and modular functors},
	volume = {21},
	year = {2001}}

@article{MR1444286,
	author = {Longo, R. and Roberts, J. E.},
	coden = {KTHEEO},
	doi = {10.1023/A:1007714415067},
	fjournal = {$K$-Theory. An Interdisciplinary Journal for the Development, Application, and Influence of $K$-Theory in the Mathematical Sciences},
	issn = {0920-3036},
	journal = {$K$-Theory},
	mrclass = {46L37 (18D10 46M15 81T05)},
	mrnumber = {1444286},
	mrreviewer = {Sze-Kai Tsui},
	note = {\mathscinet{MR1444286} \doi{10.1023/A:1007714415067} \arxiv{funct-an/9604008}},
	number = {2},
	pages = {103--159},
	title = {A theory of dimension},
	url = {http://dx.doi.org/10.1023/A:1007714415067},
	volume = {11},
	year = {1997}}

@article{MR2091457,
	author = {Yamagami, Shigeru},
	fjournal = {Journal of Operator Theory},
	issn = {0379-4024},
	journal = {J. Operator Theory},
	mrclass = {46L05 (18D15 46L37)},
	mrnumber = {2091457 (2005f:46109)},
	mrreviewer = {Teodor Banica},
	note = {\mathscinet{MR2091457}},
	number = {1},
	pages = {3--20},
	title = {Frobenius duality in {$C^*$}-tensor categories},
	volume = {52},
	year = {2004}}

@article{Galindo_2014,
    title={On Braided and Ribbon Unitary Fusion Categories},
    volume={57},
    DOI={10.4153/CMB-2013-017-5},
    number={3},
    journal={Canadian Mathematical Bulletin},
    author={Galindo, César},
    year={2014},
    pages={506–510}}

@book{MR1292673,
	address = {Berlin},
	author = {Turaev, Vladimir G.},
	isbn = {3-11-013704-6},
	mrclass = {57M25 (16W30 17B37 57N10 81R50)},
	note = {\mathscinet{MR1292673}},
	publisher = {Walter de Gruyter \& Co.},
	series = {de Gruyter Studies in Mathematics},
	title = {Quantum invariants of knots and 3-manifolds},
	volume = {18},
	year = {1994}}

@article{10.1063/1.4895764,
    author = {Davydov, Alexei},
    title = {Bogomolov multiplier, double class-preserving automorphisms, and modular invariants for orbifolds},
    journal = {Journal of Mathematical Physics},
    volume = {55},
    number = {9},
    pages = {092305},
    year = {2014},
    month = {09},
    issn = {0022-2488},
    doi = {10.1063/1.4895764},
    url = {https://doi.org/10.1063/1.4895764},
    note = {\doi{10.1063/1.4895764} \arxiv{1312.7466}}
    }

\end{document}